\newcommand{\size}[1]{\left| #1 \right|}
\newcommand{\E}{\mathbb{E}}
\newcommand{\remove}[1]{}
\newcommand{\R}{\mathbb{R}}
\newcommand{\N}{\mathbb{N}}
\newcommand{\cM}{\mathcal{M}}
\newcommand{\cS}{\mathcal{S}}
\newcommand{\cT}{\mathcal{T}}
\newcommand{\cL}{\mathcal{L}}
\newcommand{\cA}{\mathcal{A}}
\newcommand{\cC}{\mathcal{C}}
\newcommand{\cD}{\mathcal{D}}
\newcommand{\cE}{\mathcal{E}}
\newcommand{\cP}{\mathcal{P}}
\newcommand{\cW}{\mathcal{W}}
\newcommand{\Oh}{\mathcal{O}}
\newcommand{\cQ}{\mathcal{Q}}
\newcommand{\tOh}{\widetilde{{\mathcal O}}}
\newcommand{\cX}{\mathcal{X}}
\newcommand{\cY}{\mathcal{Y}}
\newcommand{\eps}{\varepsilon}
\newcommand{\pr}{\mathrm{Pr}}
\newcommand{\complain}[1]{\textcolor{black}{#1}}
\theoremstyle{plain}
\newtheorem{theo}{Theorem}[section]
\newtheorem{lem}[theo]{Lemma}
\newtheorem{pre}[theo]{Proposition}
\newtheorem{coro}[theo]{Corollary}
\newtheorem{cl}[theo]{Claim}
\theoremstyle{definition}
\newtheorem{defi}[theo]{Definition}
\newtheorem{rem}{Remark}
\newtheorem{obs}[theo]{Observation}
\newcommand{\dhash}{D^{\#}}
\newcommand{\learn}{{\sc Test-and-Learn}\xspace}
\newcommand{\aprox}{{\sc Approx-Centers}\xspace}
\newcommand{\findpi}{{\sc Find-Permutation}\xspace}
\newcommand{\algadpt}{{\sc Alg-Adaptive}\xspace}
\newcommand{\vc}{\mathsf{vc}}
\newcommand{\fen}{\mathrm{FE}}
\newcommand{\sen}{\mathrm{SE}}
\newcommand{\gen}{\mathrm{GE}}
\title{{\bf Testing of Index-Invariant Properties in the\\
Huge Object Model}}
\author{}
\author{Sourav Chakraborty\footnote{Indian Statistical Institute, Kolkata, India. Email: chakraborty.sourav@gmail.com.}
\and
Eldar Fischer\footnote{Technion - Israel Institute of Technology, Israel. Email: eldar@cs.technion.ac.il. Research supported in part by an Israel Science
Foundation grant number 879/22. }
\and
Arijit Ghosh\footnote{Indian Statistical Institute, Kolkata, India. Email: arijitiitkgpster@gmail.com.}
\and
Gopinath Mishra\footnote{University of Warwick, UK. Email: gopianjan117@gmail.com. Research supported in part by the Centre for Discrete Mathematics and its Applications (DIMAP) and by EPSRC award EP/V01305X/1.}
\and
Sayantan Sen\footnote{Indian Statistical Institute, Kolkata, India. Email: sayantan789@gmail.com.}
}
\date{}
\begin{document}
  \maketitle
\newcommand{\mycite}[1]{{\cite{#1}}}

\pagenumbering{arabic} 



\begin{abstract}
The study of distribution testing has become ubiquitous in the area of
property testing, both for its theoretical appeal, as well as for its
applications in other fields of Computer Science, and in various real-life statistical tasks.

The original distribution testing model relies on samples drawn
independently from the distribution to be tested. However, when testing
distributions over the $n$-dimensional Hamming cube $\left\{0,1\right\}^{n}$ for a large $n$, even reading a few samples is
infeasible. To address this, Goldreich and Ron [ITCS 2022] have defined a
model called the \emph{huge object model}, in which the samples may only be queried in a few places.

In this work, we initiate a study of a general class of properties in the huge object model,
those that are invariant under a permutation of the indices of the
vectors in $\left\{0,1\right\}^{n}$, while still not being necessarily fully symmetric as
per the definition used in traditional distribution testing.

We  prove that every index-invariant property satisfying a
bounded VC-dimension restriction admits a property tester with a number of
queries independent of $n$. To complement this result, we argue that satisfying only index-invariance or only a VC-dimension bound is insufficient to guarantee a tester whose query complexity is independent of $n$. Moreover, we prove that the dependency of sample and query complexities of our tester on the VC-dimension is essentially tight. As a second part of this work, we address the question of the
number of queries required for non-adaptive testing. We show that it can be at most quadratic in the number
of queries required for an adaptive tester in the case of index-invariant properties. This is in contrast with the tight (easily provable) exponential gap between adaptive and non-adaptive testers for general non-index-invariant properties. Finally, we provide an index-invariant property for which the quadratic gap between adaptive and non-adaptive query complexities for testing is almost tight.
\end{abstract}

\newpage

\tableofcontents

\newpage


\section{Introduction}

The field of distribution testing is currently ubiquitous in property testing, see the books and surveys of \mycite{DBLP:books/cu/Goldreich17, bhattacharyya2022property, DBLP:journals/eatcs/Fischer01,DBLP:journals/ftml/Ron08, DBLP:journals/fttcs/Ron09, DBLP:conf/propertytesting/CzumajS10,DBLP:journals/siamdm/RubinfeldS11,Canonne:Survey:ToC,CanonneTopicsDT2022} for reference.
Distribution testing has also found numerous applications in other areas of research, including topics that have real life applications~\cite{CM19,MPC20, DBLP:conf/nips/Canonne0S20,PM21}.

In the original model of distribution testing, a distribution $D$ defined over some set $\Omega$ can be accessed by obtaining independent samples from $D$, and the goal is to approximate various interesting properties of $D$. This model has been studied extensively over the last two decades, and many interesting results and techniques have emerged.

The majority of distribution testing research centers on the goal of minimizing the number of samples required to test for various properties of the underlying distribution.
If the domain of the distribution is structured (for example, if the domain is the $n$-dimensional Hamming cube $\{0,1\}^n$), then designing efficient testers brings its own challenges. A number of papers have studied the problem of testing properties of distributions defined over the $n$-dimensional Hamming cube (see \cite{DBLP:conf/colt/AliakbarpourBR16,DBLP:conf/colt/CanonneDKS17,DBLP:journals/toct/BhattacharyyaC18,BGMV20,DBLP:conf/soda/CanonneCKLW21,chen2021learning,DBLP:journals/corr/abs-2204-08690}). With the rise of big data (translating to $n$ being very large), even reading all the bits in the representation of the samples might be very expensive. To address this issue, recently Goldreich and Ron~\cite{GoldreichR21a} studied distribution testing in a different setting. 

In their model, called the \emph{huge object model}, the distribution $D$ is supported over the $n$-dimensional Hamming cube $\{0,1\}^n$, and the tester will obtain $n$-length Boolean strings as samples. However, as reading the sampled strings in their entirety might be infeasible when $n$ is large, the authors in \cite{GoldreichR21a} considered query access to the samples along with standard sampling access. {Note that without loss of generality, the number of samples will be upper-bounded by the number of queries.} Thus, a desirable goal in this model is to optimize the number of queries for testing a given property, with respect to the Earth Mover Distance notion that befits this model. \cite{GoldreichR21a} studied various natural properties like support size estimation, uniformity, identity, equality, and ``grainedness''~\footnote{A distribution $D$ over $\{0,1\}^n$ is said to be $m$-grained if the probability mass of any element in its support is a multiple of $1/m$, where $m \in \N$.} in this model, providing upper and lower bounds on the sample and query complexities for these properties.


In this paper, we study the sample and query complexities of a very natural class of properties, which we call the \emph{index-invariant properties}, in the huge object model of distribution testing.

\paragraph{Index-Invariant Distribution Properties:}

In general, a distribution property is a collection of distributions over a fixed domain $\Omega$~\footnote{We use the phrases ``a distribution is in the property'' and ``a distribution has the property'' interchangeably to mean the same thing.}. Often the property in question has some other ``symmetry''. For example, a property is called \emph{label-invariant} if any changes in the labels of the domain do not affect whether the distribution is in the property or not. 
Many of the well studied properties, such as uniformity, entropy estimation, support size estimation, and grainedness,  are label-invariant properties.  Label-invariant properties have been studied extensively in literature~\cite{batu2005complexity, paninski2008coincidence,goldreich2011testing, valiant2011testing, diakonikolas2014testing,chan2014optimal,acharya2015optimal, valiant2017estimating,batu2017generalized, diakonikolas2017sharp}.

In some cases, the distribution property is not fully label-invariant, but still has a certain amount of symmetry. For illustration, consider the following  examples: 

\begin{enumerate}
    \item {\bf Property {{\sc Monotone}}:} Any distribution $D$ over $\{0,1\}^n$ satisfies the  $\mbox{{{\sc Monotone}}} $ property if
$$\mbox{$\mathbf{X} \preceq \mathbf{Y}$ implies $D(\mathbf{X}) \leq D(\mathbf{Y})$, for any $\mathbf{X},\mathbf{Y} \in \{0,1\}^n$},$$
where for two vectors $\mathbf{X},\mathbf{Y} \in \{0,1\}^n$,  $\mathbf{X} \preceq \mathbf{Y}$ if $x_i \leq y_i$ holds for every $i \in [n]$.

\item {\bf Property {\sc Log-Super-Modularity}:} Any distribution $D$ over $\{0,1\}^n$ satisfies the property $\mbox{{{\sc Log-Super-}}}$ $\mbox{{{\sc Modularity}}}$ if
$$\mbox{$D(\mathbf{U})D(\mathbf{V}) \leq D(\mathbf{U} \wedge \mathbf{V})D(\mathbf{U} \vee \mathbf{V})$, for any $\mathbf{U},\mathbf{V} \in \{0,1\}^n$},$$ where the Boolean $\wedge$ and $\vee$ operations over the vectors are performed coordinate-wise.

\color{black}

\item {\bf Property {\sc Low-affine-dimension}:} A distribution $D$ over $\{0,1\}^n$ is said to satisfy the $\mbox{{{\sc Low-affine-dimension}}}$ property, with parameter $d \in \N$, if the {\em affine dimension}\footnote{A set $S \subseteq \R^{n}$ has {\em affine dimension} $k$ if the dimension of the smallest {\em affine set} in $\R^{n}$ that contains $S$ is $k$.} 
of the support of $D$ is at most $d$.
\end{enumerate}

Note that for the properties described above, a distribution satisfies the above properties even after the indices $\{1,\ldots,n\}$ of the vectors in $\{0,1\}^n$ are permuted by a permutation $\sigma$ defined over $[n]$. To capture this structure in the properties, we introduce the notion of \emph{index-invariant} properties. 

\color{black}
\begin{defi}[{\bf Index-invariant property}]
Let us assume that $D:\{0,1\}^n \rightarrow [0,1]$ is a distribution over the $n$-dimensional Hamming cube $\{0,1\}^n$. For any permutation $\sigma:[n]\rightarrow [n]$, let $D_\sigma$ be the distribution such that $D(w_1,\ldots,w_n)=D_\sigma(w_{\sigma(1)},\ldots,w_{\sigma(n)})$ for all $(w_1,\ldots,w_n)\in \{0,1\}^n$.  A distribution property ${\cal P}$ is said to be \emph{index-invariant} when $D$ is in ${\cal P}$ if and only if $D_\sigma$ is in ${\cal P}$, for any distribution $D$ and any permutation $\sigma$. 
\end{defi}

Informally speaking, index-invariant properties refer to those properties that are invariant under the permutations of the indices $\{1,\ldots,n\}$. 
Note that this set of properties differs from the more common notion of label-invariant properties, since the total number of possible labels, for distributions over all $n$-length Boolean vectors, is $2^{n}$. However, we are considering only permutations over $[n]$, thus in total only $n!$ permutations instead of $2^{n}!$ permutations.

\subsection{Our Results}

In this paper, as already mentioned, we study the sample and query complexities (in the huge object model) of index-invariant properties. We primarily focus on two problems. First, we study the connection between the query complexity for testing an index-invariant property and the VC-dimension of the non-trivial support of the distributions in the property. Secondly, we study the relationship between the query complexities of the adaptive and non-adaptive testers for index-invariant properties, along with their non-index-invariant counterparts.

One important and technical difference between the huge object model and the standard distribution property testing model is the use of \emph{Earth Mover Distance} ($\mbox{EMD}$) for the notion of ``closeness'' and ``farness'', instead of the more prevalent $\ell_1$ or variation distance. Thus, in the rest of the paper, by an $\eps$-tester for any property $\cP$ of distributions over $\{0,1\}^n$, we mean an algorithm that given sample and query access (to the bits of the sampled vectors) to a distribution distinguishes (with probability at least $2/3$) the case where the distribution $D$ is in the property $\cP$ from the case where the EMD of $D$ from any distribution in $\cP$ is at least $\eps$, where $\eps >0$ is a proximity parameter.

\subsection*{Testing by learning of bounded VC-dimension properties (constant query testable properties):}


We prove that a large class of distribution properties are all testable with a number of queries independent of $n$, using the \emph{testing by learning paradigm}~\cite{diakonikolas2007testing, DBLP:conf/icalp/GopalanOSSW09,servedio2010testing}, where the distributions are supported over the $n$-dimensional Hamming cube $\{0,1\}^n$. More specifically, we prove that every distribution whose support has a bounded VC-dimension can be \emph{efficiently} learnt up to a permutation, leading to efficient testers for index-invariant distribution properties that admit a global VC-dimension bound. Our main result regarding the learning of distributions in the huge object model is the following theorem.

\begin{theo}[{\bf Main learning result (informal)}]\label{theo:main}
For any fixed constant $d \in \N$, given sample and query access to an unknown distribution $D$ over $\{0,1\}^n$  and a proximity parameter $ \eps > 0$, there exists an algorithm that makes $\mathrm{poly}(\frac{1}{\eps})$ queries~\footnote{The degree of the polynomial in $\frac{1}{\eps}$ depends on the parameter $d$.}, and either outputs the full description of a distribution or {\sc fail} satisfying the following conditions:

\begin{enumerate}
    \item[(i)] If the support of $D$ is of VC-dimension at most $d$, then with probability at least $2/3$, the algorithm outputs a full description of a distribution $D'$ such that $D$ is $\eps$-close to $D'_\sigma$ for some permutation $\sigma:[n]\rightarrow [n]$.
    
    \item[(ii)]  For any $D$, the algorithm will not output a distribution $D'$ such that $D'_\sigma$ is $\eps$-far from $D$  for all permutations $\sigma:[n]\rightarrow [n]$, with probability more than $1/3$. However, if the VC-dimension of the support of $D$ is more than $d$, the algorithm may output {\sc Fail} with any probability.
\end{enumerate}
\end{theo}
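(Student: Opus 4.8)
The plan is to exploit the index-invariance together with the bounded VC-dimension to reduce the learning problem to a finite-dimensional one. The key observation is that if the support $S$ of $D$ has VC-dimension at most $d$, then by the Sauer--Shelah lemma, restricted to any set $I$ of $m$ coordinates, the set of distinct patterns $\{(\mathbf{X}_i)_{i \in I} : \mathbf{X} \in S\}$ has size at most $O(m^d)$. Since the property is index-invariant, the precise identity of the coordinates does not matter: what matters is only the ``pattern structure'' of the support, i.e.\ a weighted partition of $[n]$ into coordinate-classes together with, on each support string, an assignment of a bit to each class, and a probability mass on each string. So the first step is to argue that any distribution with VC-dimension at most $d$ is $\eps$-close (in EMD, up to a permutation) to a ``succinct'' distribution described by: (a) a bounded number $k = k(d,\eps)$ of support strings, (b) a partition of $[n]$ into a bounded number of coordinate-blocks with prescribed (rounded) relative sizes, and (c) for each support string, a bitvector over the blocks. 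Rounding the masses to multiples of $\mathrm{poly}(\eps)$ and merging low-probability strings handles (a); the bounded VC-dimension (via Sauer--Shelah) is what makes (b) and (c) have size independent of $n$, after we merge coordinate-blocks that are ``small'' in aggregate into a negligible remainder.

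Second step: design the learning algorithm. Draw $q = \mathrm{poly}(1/\eps)$ samples $\mathbf{X}^{(1)}, \dots, \mathbf{X}^{(q)}$, and on a random set $I$ of $\mathrm{poly}(1/\eps)$ coordinates, query all of them on all sampled strings. This gives an empirical ``sketch'': the $q \times |I|$ Boolean matrix of sampled bits. From this matrix one reads off (i) an estimate of the number of distinct ``heavy'' support strings and their approximate masses (two samples that agree on all of $I$ are treated as the same string, which is justified because with bounded VC-dimension, distinct heavy strings differ on a constant fraction — or at least a detectable fraction — of coordinates, so a random $I$ separates them; this is the delicate point, see below), and (ii) for each such string the pattern of agreements/disagreements across the heavy strings restricted to $I$, which by Sauer--Shelah is drawn from a pool of only $\mathrm{poly}(|I|)$ possibilities and hence can be estimated and extrapolated to a global block structure. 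Output the succinct distribution $D'$ obtained by reconstituting a full distribution over $\{0,1\}^n$ from this sketch (spreading the estimated block-sizes over actual coordinates arbitrarily). Part (i) of the theorem then follows from standard Chernoff/VC-uniform-convergence bounds showing the empirical sketch is $\eps$-faithful, combined with the structural approximation of the first step.

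Third step: establish the soundness condition (ii), which must hold for \emph{every} $D$, with no VC assumption. Here the point is that whenever the algorithm does output a distribution $D'$, it does so only because the empirical sketch ``looked like'' a bounded-VC distribution — but we must ensure that in that event $D'$ is genuinely (EMD up to permutation) close to $D$. The natural way is to have the algorithm perform an explicit verification pass: after forming the candidate $D'$, draw fresh samples and estimate the EMD between $D$ and the permutation-orbit of $D'$ (or a proxy for it, such as agreement statistics of sampled strings against $D'$'s support on a random coordinate set), and output $\textsc{Fail}$ unless this estimate is below, say, $\eps/2$. Because EMD up to permutation can be lower-bounded from such coordinate-restricted agreement statistics for index-invariant targets, a $D'$ that is $\eps$-far from every $D_\sigma$ will be rejected with probability at least $2/3$; amplify to $2/3$ globally. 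This verification step is allowed to fail with arbitrary probability when $D$ has large VC-dimension, which is exactly what the theorem permits.

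The main obstacle I expect is the separation argument in the second step: asserting that two \emph{distinct} heavy support strings are distinguished by a random coordinate subset of size $\mathrm{poly}(1/\eps)$. Two strings in $\{0,1\}^n$ can differ in as few as one coordinate, so a random small $I$ will not see the difference — meaning the algorithm will merge them. The resolution must be that such a merge is harmless: merging two support strings that differ on a $o(\eps/k)$ fraction of coordinates changes the distribution by only $o(\eps)$ in EMD, so we only ever need to separate strings that differ on an $\Omega(\eps/k)$-fraction of coordinates, and those \emph{are} separated by a random $I$ of the right size with high probability. Making this precise — interleaving the ``merge close strings'' structural step with the sampling step, and controlling how the (unknown) pairwise Hamming distances among heavy strings interact with the random coordinate sample and the VC bound on the number of patterns — is the technical heart of the proof.
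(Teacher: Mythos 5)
Your overall architecture — cluster/merge nearby support strings, estimate masses on representatives, read off a coarse coordinate-block structure from a random coordinate subset, and reconstitute a distribution up to an index permutation — is the same blueprint the paper follows. The reconstruction-by-column-pattern step you describe matches the paper's \aprox subroutine, which estimates, for each pattern $J \in \{0,1\}^{t_1}$ over the sampled representatives, the fraction of columns equal to $J$ and then rebuilds vectors of the correct multiset of column types. However, there are two concrete gaps in the proposal.

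First, the bound on the number of clusters. You claim that ``rounding the masses to multiples of $\mathrm{poly}(\eps)$ and merging low-probability strings'' bounds the number of support strings $k$. This is not true: a distribution can spread mass uniformly over $\mathrm{poly}(n)$ strings of bounded VC dimension (e.g., the uniform distribution over all weight-$1$ vectors has VC dimension $1$ and $n$ support points, each of mass $1/n$), and neither rounding nor dropping low-mass points reduces the count to a function of $d$ and $\eps$ alone. Sauer--Shelah, which you invoke, bounds the number of distinct \emph{projections onto a fixed small coordinate set}, not the number of support points nor the number of column equivalence classes. The tool that closes this gap is Haussler's packing theorem: a set of VC dimension $d$ has an $\alpha$-cover of size at most $e(d+1)(2e/\alpha)^d$. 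This is what the paper uses (Corollary~\ref{coro:cover}), and it is what guarantees that after merging strings within Hamming distance $\alpha$ into cluster representatives, the number $r$ of representatives is $\mathrm{poly}(1/\eps)$ for constant $d$ — which in turn makes the $2^r$ many column patterns a finite alphabet on which the random projection $R$ can concentrate. Your third paragraph senses this issue (``merging close strings'') but never supplies the packing bound on the number of resulting classes, so the parameter $k$ in your argument is left unbounded.

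Second, the soundness condition (ii). You propose a post-hoc verification pass that estimates ``the EMD between $D$ and the permutation-orbit of $D'$.'' Estimating EMD up to a permutation from $\mathrm{poly}(1/\eps)$ samples and queries is not an available primitive — indeed, it is essentially the problem the whole theorem is trying to solve, and the paper proves (Theorem~\ref{theo:lb-vc_intro}) that even for bounded-VC targets the query cost depends exponentially on $d$. Your assertion that this EMD can be ``lower-bounded from coordinate-restricted agreement statistics'' is doing real work here and is not justified; as stated, the verification step is circular. The paper's route is different and avoids this entirely: soundness is woven into the clustering step. Concretely, the algorithm outputs \textsc{Fail} whenever more than a $3\zeta$ fraction of the fresh samples $\cT$ fail to be assigned (at projected distance $\le 2\delta$) to a sampled representative (Step (v) of \learn), and Lemma~\ref{cl:concfar}(ii) shows that if $D$ is not well-clustered around the sample $\cS$ then this trigger fires with high probability. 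When the trigger does \emph{not} fire, $D$ is provably $(3\delta,5\zeta)$-clustered around $\cS$, and Lemma~\ref{cl:distsigma} then bounds $d_{EM}(D, D'_\sigma)$ directly, with no separate EMD estimation required. You would need to replace your verification pass with a structural abort condition of this kind for part (ii) to go through.
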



In fact, our result holds for a general class of \emph{clusterable} properties (stated in Theorem~\ref{theo:mainthm1} and Corollary~\ref{coro:mainthm1}) that also covers the VC-dimension case as stated in the above theorem. Note that the above theorem corresponds to the learnability of any distribution when the VC-dimension of its support is bounded. 
As a corollary, it implies that any index-invariant distribution property admitting a global VC-dimension bound is testable with a constant number of queries, depending only on the proximity parameter $\eps$ and the VC-dimension $d$.
The corollary is stated as follows: 

\begin{coro}[{\bf Testing (informal)}]\label{coro:testvc}
Let $\cP$  be an index-invariant property such that any distribution $D\in \cP$ has VC-dimension at most $d$, where $d$ is some constant. There exists an algorithm, that has sample and query access to an unknown distribution $D$ over $\{0,1\}^n$, takes a proximity parameter $\eps > 0$, and distinguishes whether $D \in \cP$ or $D$ is $\eps$-far from $\cP$ with probability at least $2/3$, by making only $\mathrm{poly}(\frac{1}{\eps})$  queries.
\end{coro}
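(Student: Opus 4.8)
The corollary is an immediate consequence of Theorem \ref{theo:main} (the learning result). The plan is to reduce testing to learning: run the learning algorithm from Theorem \ref{theo:main} with proximity parameter $\eps/2$ (or some constant fraction of $\eps$), obtain either a candidate distribution $D'$ or $\textsc{fail}$, and then decide membership in $\cP$ based on whether $D'_\sigma$ is close to $\cP$ for some permutation $\sigma$. Since the description of $D'$ is fully in hand and $\cP$ is a fixed property, the final comparison ``is $D'$ within $\eps/2$ (in EMD, up to a permutation) of some distribution in $\cP$?'' is a computation that involves no further queries to $D$ — so the query complexity remains $\mathrm{poly}(1/\eps)$.

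Here is the reasoning in more detail. Suppose first that $D \in \cP$. Then the support of $D$ has VC-dimension at most $d$, so by part (i) of Theorem \ref{theo:main}, with probability at least $2/3$ the algorithm outputs $D'$ with $D$ being $\eps/2$-close to $D'_\sigma$ for some $\sigma$; equivalently $D'$ is $\eps/2$-close (up to a permutation) to $D \in \cP$, hence $D'$ is $\eps/2$-close to $\cP$ (using that $\cP$ is index-invariant, closeness to $\cP$ is itself a permutation-invariant notion). So in this case we accept. Conversely, suppose $D$ is $\eps$-far from $\cP$. By part (ii) of Theorem \ref{theo:main}, with probability at least $2/3$ the algorithm does not output any $D'$ with $D'_\sigma$ being $\eps/2$-far from $D$ for all $\sigma$; so whenever it does output some $D'$, that $D'$ satisfies $D'_{\sigma}$ is $\eps/2$-close to $D$ for some $\sigma$. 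But then, by the triangle inequality for EMD together with index-invariance of $\cP$, $D'$ must be $\eps/2$-far from $\cP$ (otherwise $D$ would be $\eps$-close to $\cP$, a contradiction). So in this case, whenever the algorithm outputs $D'$ we reject, and if it outputs $\textsc{fail}$ we also reject; either way we reject with probability at least $2/3$.

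The main subtlety — and it is minor — is bookkeeping with the EMD triangle inequality and with index-invariance: one needs that ``$\eps$-far from $\cP$'' is preserved under applying a permutation, which follows directly from the definition of an index-invariant property together with the fact that EMD is invariant under simultaneously permuting the coordinates of both distributions. There is no real obstacle here; the only thing to be careful about is that part (ii) of Theorem \ref{theo:main} is a \emph{one-sided} guarantee (it controls false outputs but permits $\textsc{fail}$ with arbitrary probability when VC-dimension exceeds $d$), which is exactly why we must treat $\textsc{fail}$ as a ``reject'' and why the completeness direction (case $D \in \cP$) relies on the VC-dimension bound holding for all members of $\cP$. Finally, the success probability is the $2/3$ inherited from Theorem \ref{theo:main}; if a cleaner constant is desired one can run the whole procedure a constant number of times and take a majority vote, which does not affect the $\mathrm{poly}(1/\eps)$ bound.
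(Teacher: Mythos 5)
Your proposal is correct and matches the paper's proof essentially line for line: the paper likewise runs the learning algorithm (the formal version, Theorem~\ref{theo:learn-vc} with $\beta=0$ and $\alpha$ chosen so the learning guarantee is $\eps/2$), treats \textsc{Fail} as a reject, and accepts exactly when the learned $D'$ is within EMD $\eps/2$ of $\cP$, using the same triangle-inequality and index-invariance bookkeeping in both the completeness and soundness directions. The only cosmetic difference is that you invoke the informal statement of Theorem~\ref{theo:main} directly rather than its formal counterpart, which is immaterial.
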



It turns out that our tester for testing VC-dimension property takes $\exp(d)$ samples, and performs $\exp(\exp(d))$ queries for VC-dimension $d$. We show that this bound is tight, in the sense that there exists an index-invariant property with VC-dimension $d$ such that any tester for the property requires an exponential number of samples and a doubly-exponential number of queries on $d$.

\begin{theo}\label{theo:lb-vc_intro}
Let $d,n \in \N$. There exists an index-invariant property $\cP_{\vc}$ with VC-dimension at most $d$ such that any (non-adaptive) tester for $\cP_{\vc}$ requires $2^{\Omega(d)}$ samples and $2^{2^{d - \Oh(1)}}$ queries.
\end{theo}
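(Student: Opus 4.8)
\emph{Proof idea.} The plan is to produce, for the purpose of a Yao's-principle argument, a single index-invariant property $\cP_{\vc}$ together with a pair of distributions $\mathcal{D}_{\mathrm{Y}}$ (supported on members of $\cP_{\vc}$) and $\mathcal{D}_{\mathrm{N}}$ (supported on distributions that are $\Omega(1)$-far in EMD from $\cP_{\vc}$) such that distinguishing them requires $2^{\Omega(d)}$ samples, and, separately, requires $2^{2^{d-\Oh(1)}}$ queries. Every distribution appearing will be \emph{uniform over a set $S$ of at most $2^d$ vectors} in $\{0,1\}^n$, for $n$ a suitable function of $d$ (morally $n=2^{2^d}$); this already gives the VC-dimension bound for free, since if $S$ shatters $k$ coordinates it must contain the $2^k$ distinct vectors realizing the $2^k$ patterns on those coordinates, so $\size{S}\le 2^d$ forces VC-dimension at most $d$. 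Thus the whole task is to make ``uniform over a $2^d$-set'' hard to test in a clever way.

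The distinguishing information is hidden in what one may call the \emph{coordinate-type profile}. Writing $S$ as a $2^d\times n$ Boolean matrix whose rows are the support vectors and whose columns are the $n$ coordinates, each column is a ``type'' in $\{0,1\}^{2^d}$; since the support is an unordered set (row permutations are free) and the property is index-invariant (column permutations are free), the property can depend only on the multiset of column types, up to a relabeling of the $2^d$ type-coordinates. We let $\cP_{\vc}$ demand that this multiset of $n$ types be (a relabeling of) a member of a prescribed, mildly pseudorandom family of multisets concentrated on $2^{2^d/2}$ distinct types; $\mathcal{D}_{\mathrm{Y}}$ draws such a profile, while $\mathcal{D}_{\mathrm{N}}$ draws, in the same format, a profile concentrated on $2^{2^d/2}$ types that are ``spread'' incompatibly. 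Using the description of EMD between two uniform empirical measures as a minimum-cost perfect matching, $\mathrm{EMD}(D,D')=\frac{1}{2^d}\min_{M}\sum_{(v,v')\in M}\frac{d_H(v,v')}{n}$, the farness requirement reduces to a geometric statement about the two type-multisets: every matching between the type-multiset of a $\mathcal{D}_{\mathrm{N}}$-instance (after any relabeling of the $2^d$ type-coordinates) and that of any $\cP_{\vc}$-instance must move a constant fraction of the $2^d$ bits of a typical type. A Gilbert--Varshamov / covering-radius estimate — a size-$2^{2^d/2}$ subset of $\{0,1\}^{2^d}$ cannot $(\delta 2^d)$-cover more than a $2^{-\Omega(2^d)}$ fraction of the cube for a small constant $\delta$ — delivers exactly this, giving $\mathrm{EMD}=\Omega(1)$. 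This is the step that reconciles bounded VC-dimension with farness: with only $2^d$ support points, each of mass $2^{-d}$, farness is attainable only if the ``repair'' is forced to be global, and the geometric separation of the two profiles in the $\{0,1\}^{2^d}$ Hamming cube is what forces it.

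Indistinguishability is then obtained by coupling the transcript of a non-adaptive tester under $\mathcal{D}_{\mathrm{Y}}$ with that under $\mathcal{D}_{\mathrm{N}}$. A tester that has sampled only $k<2^d$ of the support vectors sees nothing beyond a projection of the type-multiset onto $k$ of the $2^d$ type-coordinates; since both profiles are concentrated on pseudorandom sets of the same size $2^{2^d/2}$, such a projection is statistically close to the same object (a near-uniform multiset on $\{0,1\}^k$) in both cases, so to extract any signal the tester must read some column on essentially all $2^d$ of its rows — a coupon-collector/birthday argument then forces $s=2^{\Omega(d)}$ samples. Granting that, each fully read column costs $2^d$ queries, and separating the two profiles amounts to a collision / support-type distinction over a domain of size $2^{2^d}$ on which both induced distributions are near-uniform over $2^{2^d/2}$-size subsets; a standard collision lower bound then shows that $2^{2^d/c}$ fully read columns are required for some constant $c$, hence $q\ge 2^{2^d/c}\cdot 2^d=2^{2^{d-\Oh(1)}}$ queries. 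The two bounds hold simultaneously because the obstructions are orthogonal: few samples means few rows and few queries means few fully read columns, while the planted signal is visible only to a tester with both many rows and many columns.

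The step I expect to be the genuine obstacle is designing the prescribed family of type-multisets so that three properties hold at once: (i) every $\mathcal{D}_{\mathrm{N}}$-profile is $\Omega(1)$-EMD-far from \emph{every} member of $\cP_{\vc}$ after \emph{every} relabeling of the $2^d$ type-coordinates — a naive union bound over all size-$2^{2^d/2}$ codes is hopelessly lossy, so one needs either a net argument over the relevant code space or a sufficiently rigid (yet still ``invisible'') family; (ii) the profiles carry no low-complexity structure — such as linearity or a weight bias — that a tester could detect from a sub-$2^d$ set of rows or from partially read columns, which rules out the tempting choice of linear subspaces, whose rank collapse is exposed after $2^{d}{+}O(1)$ column samples; and (iii) the $\mathcal{D}_{\mathrm{Y}}$- and $\mathcal{D}_{\mathrm{N}}$-profiles remain mutually indistinguishable at the ``column-sampling'' level despite being EMD-far. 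Reconciling (i)--(iii), most plausibly by planting a rigid combinatorial gadget inside otherwise random codes, is the technical crux; with such a family in hand, the VC-dimension bound, the EMD farness, and both lower bounds follow as sketched above.
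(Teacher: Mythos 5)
You correctly adopt the paper's framework — view the support as a $2^d\times n$ Boolean matrix whose columns are ``types,'' get the VC bound for free from support size $\le 2^d$, characterize EMD as a minimum-cost matching over a row permutation, and drive indistinguishability by birthday/collision arguments — but the concrete hard family you propose is genuinely different, and the three difficulties you flag at the end are real obstructions to your route, not deferred details. Your $\mathcal{D}_{\mathrm{Y}}$ and $\mathcal{D}_{\mathrm{N}}$ are concentrated on two \emph{distinct} pseudorandom sets of column types of the same size $2^{2^d/2}$. Then (a) farness must hold after \emph{every} row relabeling, and a union bound over the $(2^d)!\approx 2^{d\cdot 2^d}$ relabelings swamps the $2^{-\Omega(2^d)}$ failure probability from the covering estimate; you offer no replacement. (b) If the tester reads one column on all $2^d$ rows, it recovers the type, and since your Yes-type set and No-type set are disjoint, a single fully-read column identifies the case; avoiding this requires that no partial read and no small collection of full reads reveals the type set, which is in direct tension with (a), which wants the two type sets well separated and rigid. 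You would also need a single pair $(\mathcal{D}_{\mathrm{Y}},\mathcal{D}_{\mathrm{N}})$ to carry both the sample and the query bound simultaneously, which is an additional demand you do not justify.

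The paper sidesteps all of this by making the No-instance \emph{nested inside} the Yes-instance rather than disjoint from it. Fix one $k\times\ell$ matrix $A$ with $k=2^d$, $\ell=2^{2^{d-10}}$, whose columns are pairwise $1/3$-separated, and let $\cP_{\vc}$ be the orbit under index permutations of the uniform distribution on the $n/\ell$-blow-up of $A$'s rows. For the query lower bound, $D_{no}$ is the $n/\ell'$-blow-up (with $\ell'=2^{2^{d-20}}$) of a random $\ell'$-subset of $A$'s \emph{columns}: every column in the No-instance is still a column of $A$, so conditioned on the queried index set not hitting two indices from the same blow-up class (a birthday event forcing $2^{2^{d-\Oh(1)}}$ queries), the transcripts under $D_{yes}$ and $D_{no}$ are \emph{identical}, not merely statistically close — a fully-read column gives zero signal. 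Farness is a per-class counting argument: for any row permutation $\pi$, the $\ell$ distinct columns of the Yes-matrix $L^\pi$ remain pairwise $1/3$-separated, so in each of the $\ell'$ No-classes of size $n/\ell'$ at most $n/\ell$ indices can be matched to a nearby column, leaving a $1-\ell'/\ell$ fraction at distance $\ge 1/6$; this holds uniformly over $\pi$, with no union bound over permutations. The sample lower bound uses a \emph{second} No-distribution, built by blowing up a random $2^{d-20}$-subset of $A$'s \emph{rows}, again nested inside the Yes-construction and handled by a birthday argument over repeated samples. So: your framework and birthday ideas are right, but the missing idea is the subset/nested construction, which is what makes farness provable without a union bound and indistinguishability exact.
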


\color{black}
Note that from a result in \cite{GoldreichR21a}, it follows that there exists an index-invariant property $\cP$ such that any distribution $D \in \cP$ has VC-dimension $d$ and any algorithm that has sample access to a distribution $D$ over $\{0,1\}^n$ requires $\Omega(2^d/ d)$ samples~\footnote{Let $\cP$ be the distribution property of having support size at most $2^d$. Note that the VC-dimension of any member of $\cP$ is at most $d$. By \cite{GoldreichR21a}, for any small enough $\eps$, an $\eps$-test for this property requires at least $\Omega\left(2^d/d\right)$ samples.}, but Theorem~\ref{theo:lb-vc_intro} proves the lower bound on both sample and query complexities for the same property.

Theorem~\ref{theo:main} assumes that the properties are index-invariant and have bounded VC-dimension. A natural question in this regard is whether the bounded VC-dimension and index-invariance assumptions are necessary for a property to be constantly testable. We answer this question in the negative. Theorem~\ref{theo:lb-vc_intro} implies that bounded VC-dimension is necessary for a property to be constantly testable even if the property is index-invariant. The following proposition rules out the possibility that only the {bounded} VC-dimension assumption is good enough for a property to be testable by making a constant number of queries. 

\begin{pre}[{\bf Necessity of index-invariance (informal)}]\label{theo:non-index}
There exists a non-index-invariant property $\cP$ such that any distribution $D \in \cP$ has VC-dimension $O(1)$ and the following holds. There exists a fixed $\eps>0$, such that distinguishing whether $D \in \cP$ or $D$ is $\eps$-far from $\cP$ requires $\Omega(n)$  queries, where the distributions in the property $\cP$ are defined over the $n$-dimensional Hamming cube $\{0,1\}^n$.
\end{pre}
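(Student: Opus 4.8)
The plan is to realize $\cP$ using a binary linear code whose membership is not locally testable, encoded as a family of point distributions so that the VC-dimension stays trivial while index-invariance is destroyed. Fix a linear code $C\subseteq\{0,1\}^n$ of dimension $n/2$ whose dual $C^{\perp}$ has minimum distance at least $\delta n$ for an absolute constant $\delta>0$; a uniformly random linear code of dimension $n/2$ has this property with probability $1-2^{-\Omega(n)}$ by the Gilbert--Varshamov bound, so fix such a $C$. For $c\in\{0,1\}^n$ write $P_c$ for the distribution supported entirely on $c$, and set $\cP:=\{P_c : c\in C\}$. Every member of $\cP$ has support of size one and hence VC-dimension $0$, so the VC bound holds. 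Moreover $\cP$ is not index-invariant: $(P_c)_\sigma$ is the point distribution on the coordinate-permuted string $\sigma(c)$, so index-invariance of $\cP$ would force $C$ to be closed under all coordinate permutations, i.e.\ to be a union of Hamming weight classes; but the only linear codes of that form are $\{0^n\}$, the span of the all-ones vector, the even-weight code, and $\{0,1\}^n$, none of dimension $n/2$ for $n\ge 4$.

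Next I would fix the proximity parameter by a volume argument. Since $|C|=2^{n/2}$, the union of the Hamming balls of radius $\delta' n$ around the codewords has size at most $2^{n/2}\cdot 2^{H(\delta')n}$, with $H$ the binary entropy function, which is $o(2^n)$ as soon as $H(\delta')<1/2$; so for such a (small absolute) constant $\delta'$, all but a $2^{-\Omega(n)}$ fraction of strings $c'$ satisfy $d(c',C)\ge\delta' n$. For any such $c'$, since $\cP$ consists only of point distributions, the optimal transport from $P_{c'}$ to $P_c$ simply relocates the single unit of mass, so $\mathrm{EMD}(P_{c'},P_c)=d(c',c)/n\ge d(c',C)/n\ge\delta'$ for every $c\in C$; hence $P_{c'}$ is $\delta'$-far from $\cP$. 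Set $\eps:=\delta'/2$.

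For the lower bound I would invoke Yao's principle with a \emph{yes} ensemble that draws $c$ uniformly from $C$ and presents $P_c$, and a \emph{no} ensemble that draws $c'$ uniformly from $\{0,1\}^n$ conditioned on $d(c',C)\ge 2\eps n$ and presents $P_{c'}$; by the previous paragraph every distribution in the no ensemble is $\eps$-far from $\cP$, and the conditioning changes the distribution by only $2^{-\Omega(n)}$. Because every $D$ in sight is a point distribution, all samples drawn equal the same string, so a run of a tester using $q$ queries only ever reveals the restriction of that string to a set $T$ of at most $q$ coordinates. The crucial observation is that for a linear code whose dual has minimum distance exceeding $q$, the projection of a uniformly random codeword onto any $q$ fixed coordinates is uniform on $\{0,1\}^q$; a short induction on the queries --- the $(t{+}1)$-st probed coordinate is a function of the first $t$ answers, and $t+1\le q<\delta n$ throughout --- then shows that the full transcript (queries together with answers) has exactly the same distribution whether the underlying string is drawn from $C$ or from all of $\{0,1\}^n$. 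Since the conditioned no ensemble is $2^{-\Omega(n)}$-close to the latter, any tester making fewer than $\delta n$ queries accepts the two ensembles with probabilities differing by $2^{-\Omega(n)}$, contradicting the required $\tfrac13$-separation; hence $\Omega(n)$ queries are necessary.

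The EMD computation and the volume bound are routine. The step that needs care is the transcript-coupling argument for \emph{adaptive} testers, where one must verify that although the probed coordinate set $T$ is built adaptively, its size never exceeds the query budget, so the hypothesis $q<\mathrm{dist}(C^{\perp})$ is enough to keep every conditional next-answer distribution uniform; this is precisely the textbook argument that a random linear code is not locally testable, transported to the point-distribution encoding, and it is where I would expect the bulk of the bookkeeping to live.
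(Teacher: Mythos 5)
Your proof is correct, and it follows the same skeleton as the paper's: encode a hard-to-test property of strings as a family of point distributions over $\{0,1\}^n$ (this is exactly the paper's $1_{\cP}$ construction from Section~\ref{sec:exponentialgap}), so that VC-dimension collapses to $0$ while the testing lower bound of the string property is inherited, and EMD between two point masses is just normalized Hamming distance. Where you diverge is that the paper proves the proposition in one paragraph by citing known maximally-hard-to-test string properties (the works of Ben-Eliezer--Fischer--Levi--Rothblum and of Ben-Sasson--Harsha--Raskhodnikova) and invoking the general transfer Lemma~\ref{lem:stringprel} ($q_A(\eps)\le Q_A(\eps)$), whereas you instantiate the hard string property concretely as a random linear code $C$ of dimension $n/2$ with linear dual distance, and then give a self-contained Yao argument using the fact that the projection of a uniform codeword onto fewer than $\mathrm{dist}(C^\perp)$ coordinates is uniform (with the needed induction to handle adaptivity). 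You also explicitly verify non-index-invariance of $1_C$, which the paper leaves implicit. The trade-off is the expected one: the paper's proof is shorter but outsources the hard part to citations, while yours is longer but fully self-contained; both are sound. One minor presentational nit: your no-ensemble conditions on $d(c',C)\ge 2\eps n$ so its members are actually $2\eps$-far (not merely $\eps$-far), which is fine but the factor of two is unused; you could equivalently condition on $d(c',C)\ge \eps n$ and set $\eps=\delta'$ directly.
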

The above proposition is formally stated and proved at the end of Subsection~\ref{sec:lbgen}.

\remove{However, even for simple tasks, like distinguishing between two given distributions, we show that the number of queries required (in the huge object model) is a function of the VC-dimension of the distributions.

\begin{pre}[{\bf Necessity of bounded VC-dimension to test with bounded number of queries (informal)}]\label{cor:lb} 
There exist two distributions  $D^{yes}$ and $D^{no}$ such that the following conditions hold:
\begin{description}
\item[(i)] The VC-dimension of both distributions $D^{yes}$ and $D^{no}$ is at most $d$;

\item[(ii)] For any permutation $\sigma:[n]\rightarrow [n]$, the EMD between $D^{yes}$ and $D^{no}_\sigma$  is at least $\frac{1}{10}$;

\item[(iii)] Any algorithm, that has sample and query access to either $D^{yes}$ or $D^{no}$, and distinguishes whether it is $D^{yes}$ or $D^{no}$, must perform $\Omega(d)$ queries.
\end{description}

\end{pre}
}

\color{black}

\subsection*{Separation between adaptive and non-adaptive testers:}


Until now, all the upper bounds that we have discussed are designed for non-adaptive algorithms.
The question how adaptivity helps in designing efficient testers is interesting in its own right.
In the standard model of distribution testing, since the model is inherently non-adaptive, there is essentially no gap between adaptive and non-adaptive testers. However, in the related model of conditional sampling of distributions~\cite{DBLP:journals/siamcomp/ChakrabortyFGM16, DBLP:journals/siamcomp/CanonneRS15}, there is a super-exponential separation (constant vs. $\mbox{poly}(\log n)$) between the complexities of these two types of testers~\cite{acharya2018chasm}.

In the context of graph testing in the dense graph model, it is known that the gap between the query complexities of adaptive and non-adaptive algorithms is at most quadratic~\cite{goldreich2003three}, which has recently been proved to be tight~\cite{goldreich2022non}. {However, for bounded-degree graphs, the gap between the query complexities for some properties is constant vs. $\Omega(\sqrt{n})$, where $n$ denotes the number of vertices of the graph~\cite{goldreich1997property}}. For testing of functions, there is an exponential separation between the complexity of these two types of testers~\cite{ron2015exponentially}.

Thus, a natural question to study in this huge object model is about the gap between the query complexities of non-adaptive and adaptive algorithms. When considering general properties, there can be an exponential gap in the query complexities between non-adaptive and adaptive testers (see Theorem~\ref{lem:nonindexlb}).

However, for index-invariant properties, this gap can be at most quadratic, as stated in the following theorem.

\color{black}

\begin{theo}[{\bf Connection between adaptive and non-adaptive testers for index-invariant properties}]\label{theo:lb-main}
For any index-invariant property $\cP$, there is at most a quadratic gap between the query complexities of adaptive and non-adaptive testers. 
\end{theo}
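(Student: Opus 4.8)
The plan is to take an arbitrary adaptive $\eps$-tester $T$ for an index-invariant property $\cP$ making $q$ queries, and simulate it by a non-adaptive tester making $\mathrm{poly}(q)$ (ideally $\tilde{O}(q^2)$) queries. The starting observation is that an adaptive tester that draws $s$ samples and makes $q$ queries in total queries each sampled string in at most $q$ coordinates; the essential information it extracts is, for each sample, a partial string in $\{0,1,\ast\}^n$ with at most $q$ non-$\ast$ entries, together with the (at most $q$) positions it looked at. For a non-adaptive simulation we must fix the query positions in advance. The key point enabled by index-invariance is that \emph{which} coordinates we look at does not matter — only the pattern of agreements and disagreements among the sampled strings on the queried coordinates matters, because permuting $[n]$ preserves membership in $\cP$ and (by a transportation argument) preserves EMD up to the induced coupling. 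So a natural non-adaptive strategy is: draw $O(q)$ samples (or however many the adaptive tester could possibly draw), pick a single set $Q$ of $\mathrm{poly}(q)$ coordinates uniformly at random, query every sample on all of $Q$, and then internally ``replay'' the adaptive tester $T$, answering each of its adaptive queries using the already-revealed bits of $Q$.

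The main steps I would carry out are: (1) \textbf{Bounding the relevant coordinates.} Argue that with $s \le q$ samples, the adaptive tester's decision depends only on the restriction of the samples to a set of at most $q$ ``active'' coordinates, plus the equality pattern of the samples on those coordinates. Since there are at most $\binom{s}{2} \le q^2$ pairwise relations and each active coordinate contributes to the transcript only through which samples share which bit there, the ``profile'' the tester sees lives in a space whose description needs only $\mathrm{poly}(q)$ coordinates. (2) \textbf{Random coordinate subset suffices.} Show that a uniformly random subset $Q \subseteq [n]$ of size $\Theta(q^2 \log q)$ (or similar) hits, with high probability, a ``representative'' set of coordinates: for the simulation to faithfully reproduce any adaptive query trajectory, we need that whatever sub-cube structure the adaptive tester would have explored on $\le q$ coordinates is reproducible within $Q$ up to a relabeling. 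Formalize ``representative'' via the equality/inequality pattern of the sampled strings: with $s$ samples the coordinates partition into at most $2^s \le 2^q$ equivalence classes by the ($s$-bit) column they induce, and $Q$ needs to contain enough coordinates from each non-negligible class; a coupon-collector / Chernoff argument gives the $\tilde O(q^2)$ bound, and index-invariance lets us relabel so that $T$'s adaptive choices land inside $Q$. (3) \textbf{Replaying $T$ and correctness.} Run $T$ with its internal randomness; whenever $T$ asks for bit $j$ of sample $i$, answer using a coordinate of $Q$ in the same equivalence class that has not yet been used for sample $i$ — this is a valid coupling because the true distribution $D$, restricted to the random index set and then relabeled, is distributed exactly as $D_\sigma$ restricted to $Q$ for a suitable random $\sigma$, and $\cP$-membership and $\eps$-farness are invariant under $\sigma$. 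Hence the simulated tester accepts/rejects with the same probability (up to the small failure probability of step (2)) as $T$, so it is a valid non-adaptive $\eps$-tester with $\tilde O(q^2)$ queries.

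The hard part will be step (2) together with the coupling in step (3): making precise the sense in which a random size-$\tilde O(q^2)$ coordinate set is ``rich enough'' that \emph{every} possible adaptive decision tree of depth $q$ over $s \le q$ samples can be embedded into it after a relabeling, and simultaneously that the embedding is measure-preserving so the acceptance probabilities match. The subtlety is that the adaptive tester may, based on early answers, decide to probe coordinates whose induced column has tiny but nonzero mass; we must argue that such coordinates cannot meaningfully affect the outcome (they contribute at most $o(\eps)$ to any EMD computation), so it is safe for the simulator to answer inconsistently or ignore them — i.e., we only need $Q$ to capture the ``heavy'' column types, of which there are at most $O(q/\eps)$-ish, and light types can be handled by a truncation argument. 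Pinning down this truncation, and checking that the resulting distortion in EMD is below the proximity parameter, is where the real work lies; everything else is a fairly mechanical simulation argument riding on index-invariance.
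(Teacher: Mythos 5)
Your high-level instinct is right — index-invariance lets you randomize away the choice of which coordinates to read — but the concrete plan in steps (2) and (3) is where you would get stuck, and you already sense it. The coupon-collector/truncation route does not cleanly deliver a quadratic bound: you propose $|Q| = \Theta(q^2\log q)$ while querying all $s \leq q$ samples on all of $Q$, which is already $\tOh(q^3)$ queries, and shrinking $Q$ to size $O(q)$ kills the coupon-collector argument (there may be up to $2^s$ column types, and an adversarial adaptive tester may deliberately chase light ones after seeing early answers). Worse, the replay step is not well-defined as stated: when $T$ asks for index $j$ of sample $i$, the simulator has no handle on the ``equivalence class of $j$'' — that class is a property of the hidden distribution at the true coordinate $j$, which the simulator never sees. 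Trying to patch this with a measure-preserving embedding of the whole depth-$q$ decision tree into $Q$, plus a truncation argument for light classes, is exactly the hard open end you flag at the end of your writeup; it is not a routine cleanup.

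The paper gets the clean $q^2$ bound from two observations you did not quite reach. First, a semi-adaptivization step: replace $\cA$ by an $\cA'$ that, whenever $\cA$ would read bit $j$ of one sample, reads bit $j$ of \emph{all} $s$ samples (reusing cached answers). This leaves the output distribution unchanged, still uses $s$ samples, costs at most $qs \leq q^2$ queries, and — crucially — makes the adaptivity live only in which index-column to open next, not in which sample. Second, since $\cP$ is index-invariant one may first apply a uniformly random permutation $\sigma$ to $[n]$ and run $\cA'$ on $D_\sigma$ without affecting correctness. The punchline is Observation~\ref{obs:sigmauni}: conditioned on the entire history, $\sigma(\ell_k)$ is uniform over the not-yet-queried indices, for every $k$. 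So the actual positions that $\cA'$ ends up reading on $D_\sigma$ are distributed \emph{exactly} as a uniformly random non-repeating sequence of length $q$, regardless of the adaptive choices. A non-adaptive tester can therefore pre-draw $r_1,\ldots,r_q$, query each $r_k$ on all $s$ samples up front, and feed $r_k$ to $\cA'$ in place of its $k$-th index choice; the induced transcript has the same distribution as $\cA'$ on $D_\sigma$, so correctness and the $sq \leq q^2$ bound follow with no coupon collector, no truncation, and no bookkeeping over column types. The idea you were missing is that after a random $\sigma$ the adaptively chosen index positions \emph{are already} uniformly random — there is no need to argue that a random set is ``rich enough'' to embed an arbitrary decision tree.
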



We also prove that the above gap is almost tight, in the sense that there exists an index-invariant property which can be $\eps$-tested using $\widetilde{\Oh}(n)$ adaptive queries, while $\widetilde{\Omega}(n^2)$ non-adaptive queries are required to $\eps$-test it. 

\begin{theo}[{\bf Near-tightness of the connection between adaptive and non-adaptive testers for index-invariant properties}]\label{theo:lb-main_new_intro}
There exists an index-invariant property $\cP_{\mathrm{Gap}}$ that can be $\eps$-tested adaptively using $\tOh(n)$ queries for any $\eps \in (0,1)$, while there exists an $\eps \in (0,1)$ for which $\widetilde{\Omega}(n^{2})$ queries are necessary for any non-adaptive $\eps$-tester.
\end{theo}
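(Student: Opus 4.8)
The plan is to construct $\cP_{\mathrm{Gap}}$ as a family of distributions whose supports encode a hidden ``pointer'' (equivalently, address-book) structure on the coordinate set $[n]$, described only up to relabelling of coordinates, so that the resulting property is index-invariant by design. Concretely, each distribution in $\cP_{\mathrm{Gap}}$ will be (close to) uniform over a collection of $\widetilde{\Theta}(n)$ strings; inside every string a constant-size block of ``control'' bits, identifiable from the string alone by a rule that commutes with coordinate permutations, will point to the coordinates carrying the rest of the structure, and membership in $\cP_{\mathrm{Gap}}$ will amount to a global consistency condition on these pointers. For the NO side we perturb the consistency condition so that every nearby distribution still fails it, which guarantees $\eps$-farness in EMD for a suitable constant $\eps$.

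For the adaptive upper bound I would use a two-phase ``bootstrapping'' tester. In the first phase, draw a constant number of samples and read each of them in full; this costs $\tOh(n)$ queries and, with high probability, yields a ``handle'' on the hidden structure, namely the location of a small set of relevant coordinates of the distribution. In the second phase, draw $\tOh(1/\eps)$ further samples and, using the handle, probe only $\tOh(1)$ carefully chosen coordinates of each, costing $\tOh(n)$ more queries overall. Completeness is immediate from the construction; for soundness, the standard argument is that if $D$ is $\eps$-far from $\cP_{\mathrm{Gap}}$ in EMD, then once the handle is fixed a constant fraction of samples must violate one of the local checks, which is caught by a union bound over the second-phase samples with probability at least $2/3$.

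For the non-adaptive lower bound I would invoke Yao's principle, pairing a distribution on YES-instances with a distribution on $\eps$-far NO-instances that no non-adaptive tester using a sub-quadratic number of queries (up to polylogarithmic factors) can tell apart. A non-adaptive tester must commit, for each sample it draws, to the set of coordinates it probes before seeing any bits; the construction is arranged so that the only way to observe a difference between the two ensembles on a given sample is to query, within that sample, a specific pair of ``linked'' coordinates, and since the linkage is hidden among $\Theta(n^2)$ equally likely possibilities, this occurs for a sample probed in $k$ places with probability only $\Oh(k^2/n^2)$ (a birthday-type bound), while $\widetilde{\Omega}(n)$ such hits are actually needed to accumulate enough evidence. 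Summing over all samples via a hybrid/coupling argument shows that any sub-quadratic allocation of queries across samples and coordinates leaves the two ensembles within statistical distance $o(1)$, contradicting the existence of a non-adaptive $\eps$-tester; together with the upper bound and Theorem~\ref{theo:lb-main} this yields near-tightness of the quadratic relationship.

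The hard part is the interplay of three conflicting demands. First, the hidden structure must be index-invariant and yet expose a handle the adaptive tester can read off a single string by a permutation-commuting rule, which is what forces the delicate ``control block plus pointer'' design. Second, the adaptive verification must cost only $\tOh(1)$ per later sample: a naive strategy that tries to traverse or reconstruct the whole structure would already cost $\widetilde{\Theta}(n^2)$, so the consistency condition has to be locally checkable once the handle is known. Third, and most delicate, the non-adaptive lower bound must defeat every non-adaptive strategy, including degenerate ones that take an enormous number of samples with very few queries each; ruling these out requires an information-theoretic accounting showing that the YES/NO signal lives entirely in pairwise coordinate correlations within individual samples, so that no distribution of a sub-quadratic query budget over samples and coordinates can surface it.
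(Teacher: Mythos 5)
Your high-level architecture---an index-invariant property defined as the permutation closure of a structured base property, an adaptive tester that first reads a few full samples to recover the hidden coordinate order and then probes cheaply, and a non-adaptive lower bound via Yao showing that a random coordinate permutation hides the structure---is exactly the paper's scaffolding. However, there is a genuine gap in your sketch, and it is an internal inconsistency between your adaptive soundness argument and your non-adaptive lower bound argument.

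For the adaptive tester you argue soundness by claiming that if $D$ is $\eps$-far then ``a constant fraction of samples must violate one of the local checks,'' caught by a union bound over $\tOh(1/\eps)$ second-phase samples. But if the NO instances are locally detectable with constant probability per sample, then only $O(1/\eps)$ informative samples are ever needed, and your own birthday-type bound would then give at best an $\tOh(n)$ non-adaptive lower bound, not $\tOh(n^2)$: with $k$ queries per sample and a $k^2/n^2$ hit probability, reaching $\Theta(1/\eps)$ hits costs only $\Theta(n)$ queries by taking $k=\Theta(n)$ over $O(1)$ samples. Conversely, your lower bound asserts ``$\widetilde\Omega(n)$ such hits are actually needed to accumulate enough evidence,'' but you give no source for that sample-complexity requirement, and your local-check construction would not supply one. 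The paper resolves this by making the underlying sub-problem \emph{globally} hard even with unbounded per-sample access: inside $\cP_{\mathrm{Gap}}$ it embeds the support-size estimation problem over $[2n]$, for which Valiant--Valiant gives a $\Theta(n/\log n)$ sample lower bound (Theorem~\ref{theo:support_lb_overview}). Every sample in a NO instance looks individually valid; the deficiency is a property of the aggregate support, which is why $\tOh(n)$ samples are unavoidable even adaptively. The adaptive tester therefore does not do a constant number of per-sample local checks in phase two: it takes $\tOh(n)$ samples, decodes $\Oh(\log n)$ bits from each via the known permutation, and feeds them to the support-size estimator (Lemma~\ref{pre:suppest}), which is how the total stays $\tOh(n)$. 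Your stated phase-two budget of ``$\tOh(1/\eps)$ samples, $\tOh(1)$ queries each'' is arithmetically inconsistent with the $\tOh(n)$ claim and does not match the mechanism needed.

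The non-adaptive mechanism is also different from what you describe. The hiding is not a pair of ``linked'' coordinates among $\Theta(n^2)$ possibilities: it is an error-correcting-code-style secret sharing ($\sen$, $\gen$, $\fen$, Lemma~\ref{lem:restrictionuniform}) where the restriction of an encoded sample to any query set that hits fewer than $\zeta k=\Theta(\log n)$ indices inside every block $C_j$ is \emph{exactly} uniform. Under a uniformly random index permutation, a query set of size $o(n/\mathrm{polylog}\,n)$ misses that threshold in every block with high probability, so such a sample carries zero information. Combined with the $\Omega(n/\log n)$ sample lower bound, this forces $\widetilde\Omega(n)$ samples each with $\widetilde\Omega(n)$ queries. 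Your birthday-pair intuition would not by itself produce the quadratic bound, and it is not the mechanism the construction relies on.
\par
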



\subsection*{Using EMD as the distance metric in conjunction with the notion of index-invariance:}

Recall that here we will use the Earth Mover Distance (EMD) as the distance metric defining $\eps$-testing, in contrast to the stronger variation distance, the commonly studied distance measure in distribution testing literature. As discussed in \cite{GoldreichR21a}, this is essential when we restrict ourselves to querying the samples obtained from the distribution. To illustrate this, consider two (say very sparse) distributions $D_1$ and $D_2$ whose supports are disjoint, yet admit a bijection such that every string from $\mbox{Supp}(D_1)$ is mapped to a string  from $\mbox{Supp}(D_2)$ that is very close to it in terms of the Hamming distance. The variation distance between $D_1$ and $D_2$ would be large, and yet we would not be able to distinguish the two distributions without querying some samples in their entirety, that is, without using $\Theta(n)$ queries per sample. The EMD metric is the one incorporating the Hamming distance between strings (which comes to play when we are not performing many queries to the samples) into the notion of variation distance.

Another question involves what general statements can be said about testers in this model. If we do not restrict ourselves to properties satisfying any sort of invariance, then very little can be proved on testers in general, just as is the case with general string property testing under the Hamming distance (in fact, string testing can be reduced to testing in the huge object model~\footnote{We will use this reduction for proving exponential separation between adaptive and non-adaptive testers for non-index-invariant properties (see Subsection~\ref{sec:lbgen}).}). On the other hand, if we were to restrict ourselves to label-invariant properties only, it would appear that we lose much of the rich structure offered by the ability to define distributions over strings. We believe that index-invariance is a natural middle-of-the-road restriction for the formulation of general statements about testing in the huge object model.

\color{black}

\subsection{Organization of the paper}

In Section~\ref{sec:overview}, we present a brief overview of our results. We present the related definitions in the preliminaries section (Section~\ref{sec:prelim}). We present the results about learning and testing clusterable distributions in Section~\ref{sec:learncluster}. After that, in Section~\ref{sec:vc}, we move on to present algorithms for testing properties with bounded VC-dimension. We present lower bound results for bounded VC-dimension testing in Section~\ref{sec:bounded-vc}. Later, we show the tight exponential separation between the query complexities of adaptive and non-adaptive algorithms for non-index-invariant properties in Section~\ref{sec:exponentialgap}. In Subsection~\ref{sec:quadratic_gap2}, we prove that for index-invariant properties, there is at most a quadratic gap between the query complexities of adaptive and non-adaptive testers. Finally, in Subsection~\ref{sec:quadratic_gap}, Subsection~\ref{sec:quadratic_determine_permutation}, Subsection~\ref{sec:quadratic_adaptive_ub}, and in Subsection~\ref{sec:quadratic_nonadapt_lb}, we prove that the quadratic gap between adaptive and non-adaptive testers for index-invariant properties is almost tight, ignoring poly-logarithmic factors. In Appendix~\ref{sec:prelim_prob}, we state some useful concentration inequalities that are used in our proofs.

\section{Technical overview of our results}\label{sec:overview}

In this section, we provide a brief overview of our results. We start by explaining our upper bounds.

\subsection{Overview of our upper bound results for index-invariant bounded VC-dimension property}

In our main upper bound result, we prove a learning result for a general class of distributions that covers the case of learning distributions with bounded VC-dimension. We say that a distribution $D$ is $(\zeta,\delta,r)$-clusterable if we can partition the $n$-dimensional Hamming cube $\{0,1\}^n$ into $r+1$ parts $\cC_0,\ldots,\cC_r$, such that $D(\cC_0) \leq \zeta$ and the diameter of $\cC_i$ is at most $\delta$ for every $i \in [r]$ (see Definition~\ref{defi:cluster}). The main upper bound result (Theorem~\ref{theo:mainthm1}), that leads to Theorem~\ref{theo:main}, is the design of an algorithm for learning a $(\zeta,\delta,r)$-clusterable distribution up to permutations. That is, given sample and query access to a $(\zeta,\delta,r)$-clusterable distribution, we want to output a distribution $D'$ such that the Earth Mover Distance between $D$ and $D_\sigma'$ is small for some permutation $\sigma:[n]\rightarrow [n]$, by performing number of queries independent of $n$.

\paragraph*{The idea of learning $(\zeta,\delta,r)$-clusterable distributions:}
The formal algorithm is presented in Algorithm~\ref{alg:testcluster--} in Section~\ref{sec:learncluster} as \learn. The algorithm starts by taking $t_1=\Oh(\frac{r}{\zeta} \log \frac{r}{\zeta})$ samples from the input distribution $D$. Let us denote them as $\cS=\{\mathbf{X}_1,\ldots,\mathbf{X}_{t_1}\}$. If $D$ is $(\zeta,\delta,r)$-clusterable, consider its clusters $\cC_0,\ldots,\cC_r$ as described above. We say that a cluster $C_i$ is \emph{large} if the probability mass of $\cC_i$ is more than $\frac{\zeta}{10 r}$, that is, $D(\cC_i) \geq \frac{\zeta}{10 r}$. As the size of $\cS$ is sufficiently large, we know that $\cS$ intersects every large cluster with probability at least $99/100$ (see Lemma~\ref{lem:hit}). In order to estimate the masses of $\cC_i$, for each $i \in [t_1]$, we take another set of random samples $\cT=\{\mathbf{Y}_1,\ldots,\mathbf{Y}_{t_2}\}$ from $D$ where $t_2=\Oh(\frac{t_1^2}{\zeta^2} \log t_1)$, and assign each of the vectors in $\cT$ to some vector in $\cS$ depending on their Hamming distance. However, since computing the exact distances between the vectors in $\cS$ and $\cT$ requires $\Omega(n)$ queries, we use random sampling.

We take a random set of indices $R \subset [n]$ of suitable size, and project the vectors in $\cS$ and $\cT$ on $R$ to estimate their pairwise distances up to an additive factor of $\delta$. $R$ not only preserves the distances between all pairs of vectors between $\cS$ and $\cT$, but also the distances of a large fraction of the vectors in $\{0,1\}^n$ from all the vectors in $\cS$ (see Lemma~\ref{lem:dist}). Based on the estimated distances, we assign each 
vector of $\mathbf{T} \in \cT$ to a vector in $\mathbf{S} \in \cS$ such that the projected 
distance between them is at most $2\delta$. If there exists no such vector in $\cS$ corresponding to a vector $\mathbf{T}\in \cT$, then the 
vector $\mathbf{T}$ remains unassigned. Let us denote the fraction of vectors in $\cT$ that are assigned to $\mathbf{X}_i$ as $w_i$, for every $i \in [t_1]$. Let $w_0$ be the fraction of vectors in $\cT$ that are not assigned to any vector in $\cS$. If $D$ is $(\zeta, \delta,r)$-clusterable, then $w_0\leq 3\zeta$ holds 
with high probability. These $w_i$'s preserve the weights of some approximate clustering 
(which may not be the original one from	which we started, but is close to it in some sense), see Lemma~\ref{lem:mass} for the details.

Consider a distribution $D^*$ supported over $\cS$ such that 
$D(\mathbf{X}_i)\geq w_i$ for every $i \in [t_1]$. Using a number of technical 
lemmas, we prove that the EMD between $D$ and $D^*$ is \emph{small}. Note that we still can not 
report $D^*$ as the output distribution, since to do so, we need to know the exact vectors in $\cS$, which requires $\Omega(n)$ queries. To bypass this barrier, we use the provision that we are allowed to output any permutation of the distribution.  More specifically, we construct vectors $\mathbf{S}_1,\ldots,  \mathbf{S}_{t_1} \in \{0,1\}^n$ such that $d_H(\mathbf{X}_i,\sigma(\mathbf{S}_i))$ is small for every $i \in [t_1]$ and some permutation $\sigma:[n]\rightarrow [n]$. This is possible using the projections of the vectors in $\cS$ to the random set of indices $R$ for estimating the number of indices of each ``type'' with respect to $\mathcal{S}$ (see Lemma~\ref{lem:struct}). {Finally, we output the distribution $D'$ supported over the newly constructed vectors $\mathbf{S}_1,\ldots,\mathbf{S}_{t_1}$ such that $D'(\mathbf{S}_i)= D^{*}(\mathbf{X}_i)$ for every $i \in [t_1]$. The guarantee on the Hamming distance between $\mathbf{X}_i$ and $\sigma(\mathbf{S}_i)$ provides a bound on the EMD between $D'_{\sigma}$ and $D^*$, and with the above mentioned EMD bound between $D^*$ and $D$, we are done. To keep the discussion simple, we will not explain here the idea of the proof of Theorem~\ref{theo:main}(ii), which relies on a sort of converse to the above method of approximating cluster weights.}


\paragraph*{How learning $(\zeta,\delta,r)$-clusterable distribution implies Theorem~\ref{theo:main}:}

Let us define a distribution to be $(\alpha,r)$-clusterable if it is $(0,\alpha,r)$-clusterable. The learning of $(\zeta,\delta,r)$-clusterable distribution implies a learning result for any distribution that is close to being $(\alpha,r)$-clusterable (see Corollary~\ref{coro:mainthm1}) due to a technical lemma (see Lemma~\ref{lem:cluster12}). If the support of a distribution has bounded VC-dimension, using standard results in VC theory, we can show that it is also $(\alpha,r)$-clusterable, where $r$ is a function of $\alpha$ and $d$. Thus the learning result of $(\alpha,r)$-clusterable distributions implies a result allowing the learning of distributions with bounded VC-dimension.

\subsection{Overview of the lower bound result for index-invariant bounded VC-dimension properties}
To prove Theorem~\ref{theo:lb-vc_intro}, let us define the property $\cP_{\vc}$. Let $k=2^d$, $\ell=2^{2^{d-10}}$ and $\ell'=2^{2^{d-20}}$. Consider a matrix $A$ of dimension $k \times \ell$ whose column vectors are $1/3$-far from each other. Let $\mathbf{V}_1,\ldots,\mathbf{V}_k \in \{0,1\}^n$ be $k$ vectors that are formed by blowing up the row vectors of $A$ in $\{0,1\}^{\ell}$ to $\{0,1\}^n$ by repeating each bit of the vectors $n/\ell$ times, and $D_A$ be the uniform distribution over the support $\{\mathbf{V}_1,\ldots,\mathbf{V}_k\}$. Our property $\cP_{\vc}$ is the collection of all distribution that can be obtained from $D_A$ by permuting the indices. Let $D_{yes}$ be the distribution obtained from $D_A$ by randomly permuting the indices. Note that $D_{yes} \in \cP_{\vc}$. As the support size of any distribution in  $\cP_{\vc}$ is at most $2^d$, the VC-dimension of $\cP_{\vc}$ is at most $d$.

To prove the lower bound on the query complexity, let us define the distribution $D_{no}$. Let us take $\ell'$  columns of $A$ uniformly at random to form a matrix $B$ of dimension $k \times \ell'$, and $\mathbf{W}_1,\ldots,\mathbf{W}_k \in \{0,1\}^n$ be $k$ vectors that are formed by blowing up the row vectors of $B$ in $\{0,1\}^{\ell'}$ to $\{0,1\}^n$ by repeating each bit of the vectors $n/\ell'$ times. Let $D_B$ be the uniform distribution over the support $\{\mathbf{W}'_1,\ldots,\mathbf{W}'_k\}$. $D_{no}$ is the distribution obtained from $D_{B}$ by randomly permuting the indices. We show that the Earth Mover Distance between $D_{no}$ and any distribution in $\cP_{\vc}$ is at least $1/8$ (see Lemma~\ref{cl:emd-vc-low}). Observe that $D_{yes}$ divides the index set $[n]$ into $\ell$ equivalence classes and $D_{no}$ divides the index set into $\ell'$ equivalence classes. The query complexity lower bound follows from the fact that, unless we query $2^{2^{d-\Oh(1)}}$ indices, we do not hit two indices from the same equivalence class, irrespective of whether the distribution is $D_{yes}$ or $D_{no}$ (see Lemma~\ref{cl:main-vc-low}).

To prove the lower bound on the sample complexity, let us define another distribution $D'_{no}$. Let us take $k'=2^{d-20}$  rows of $A$ uniformly at random to form a matrix $B'$ of dimension $k' \times \ell$. Let $\mathbf{W}'_1,\ldots,\mathbf{W}'_{k'} \in \{0,1\}^n$ be $k'$ vectors that are formed by blowing up the row vectors of $B'$ in $\{0,1\}^{\ell}$ to $\{0,1\}^n$ by repeating each bit of the vectors $n/\ell$ times. Let $D_{B'}$ be the uniform distribution with support $\{\mathbf{W}'_1,\ldots,\mathbf{W}'_{k'}\}$. $D'_{no}$ is the distribution obtained from $D_{B'}$ by randomly permuting the indices. We show that the Earth Mover Distance between $D'_{no}$ and any distribution in $\cP_{\vc}$ is at least $1/8$ (see Lemma~\ref{cl:emd-vc-low_sample}). The sample complexity lower bound follows from the fact that, unless we take $2^{\Omega(d)}$ samples, all the samples are distinct with probability $1-o(1)$, irrespective of whether the distribution is $D_{yes}$ or $D_{no}$ (see Lemma~\ref{lem:sample}).

\color{black}

\subsection{Overview of adaptive vs non-adaptive query complexity results}

Finally, we explore the relationship between adaptive and non-adaptive testers in the huge object model. It turns out that there is a tight (easy to prove) exponential separation between the query complexities of adaptive and non-adaptive testers for non-index-invariant properties. Roughly, the simulation of an adaptive algorithm by a non-adaptive one follows from unrolling the decision tree of the adaptive algorithm. This is formally proved in Lemma~\ref{pre:adaptiveexp}. Moreover, we show that this separation is tight. For this purpose, we consider a property of strings $\cP_{Pal}$, which exhibits an exponential gap between adaptive and non-adaptive testing in the string testing model. We show how to transform a string property $\cP$ to a distribution property $1_{\cP}$ such that the query bounds on adaptive and non-adaptive testing carry over. Thus, the separation result between adaptive and non-adaptive algorithms for $\cP_{Pal}$ carries over to $1_{\cP_{Pal}}$ (see Theorem~\ref{lem:nonindexlb}). This technique, employed for a maximally hard to test string property, is also used for proving Proposition~\ref{theo:non-index}.

In contrast to the non-index-invariant properties, we prove that there can be at most a quadratic gap between the query complexities of adaptive and non-adaptive algorithms for testing index-invariant properties. The proof is very close in spirit to the proof of the quadratic relation between adaptive and non-adaptive testing of graphs in the dense model~\cite{goldreich2003three}. Given an adaptive algorithm $\cA$ with sample complexity $s$ and query complexity $q$, the main idea is to first simulate a \emph{semi-adaptive} algorithm $\cA'$ that queries $q$ indices from each of the $s$ samples and decides accordingly. Note that the sample complexity of $\cA'$ remains $s$, whereas the query complexity becomes $qs$. Once we have the semi-adaptive algorithm $\cA'$, we now simulate a non-adaptive algorithm $\cA''$. As the property we are testing is index-invariant, we can first apply a uniformly random permutation $\sigma$ over $[n]$, and then run the semi-adaptive algorithm $\cA'$ over $D_{\sigma}$ instead of $D$, where $D$ is the input distribution to be tested. This makes the tester completely non-adaptive. Its correctness follows from the index-invariance of the property we are testing.



\paragraph*{Quadratic separation between adaptive and non-adaptive testers:}

Before proceeding to present an overview of our quadratic separation result, let us first recall the support estimation result of Valiant and Valiant~\cite{valiant2010clt}, which will be crucially used in our proof. Roughly speaking, the result states that in the standard sampling model, given a distribution $D$ over $[2n]$, in order to distinguish whether $D$ has support size at most $n$, or $D$ is far from all distributions with support size at least $n$, $\Theta(n/\log n)$ samples are required.

\begin{theo}[Support Estimation bound, restatement of Corollary $9$ of Valiant-Valiant~\cite{valiant2010clt}]\label{theo:support_lb_overview}
Given a distribution $D$ over $[2n]$, that can be accessed via independent samples and a proximity parameter $\eps \in (0,1/8)$, in order to distinguish, with probability at least $\frac{3}{4}$, whether $D$ has support size at most $n$ or $D$ has at least $(1+\eps)n$ elements with non-negligible weights in its support, $\Theta(\frac{n}{\log n})$ samples from $D$ are necessary and sufficient.
\end{theo}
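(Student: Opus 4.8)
The statement is a restatement, for our sampling setting, of a result already in the literature, so the plan is to derive it directly from Corollary~9 of Valiant and Valiant~\cite{valiant2010clt}; the only genuine work is to reconcile the two formulations. First I would recall their precise statement: they consider distributions over a domain of size $2n$ all of whose nonzero probabilities are at least $1/(2n)$ (this is exactly what is meant here by ``elements with non-negligible weights''), and they show that distinguishing ``support size at most $n$'' from ``support size at least $(1+\eps)n$'' requires $\Theta(n/\log n)$ independent samples. Matching the terminology — identifying ``non-negligible weight'' with the $\ge 1/(2n)$ threshold used in their construction, and checking that the claimed range $\eps \in (0,1/8)$ lies inside the regime in which their bound is stated — then yields the theorem essentially verbatim.

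For the two directions I would proceed as follows. For the upper bound (that $O(n/\log n)$ samples suffice), I would invoke their estimator (the linear-programming / ``unseen'' estimator underlying the corollary): given $O(n/\log n)$ samples it returns, with probability at least $3/4$, an estimate of the support size accurate to within an additive $\tfrac{\eps}{2} n$, and thresholding this estimate at $(1+\tfrac{\eps}{2})n$ separates the two cases. For the lower bound (that $\Omega(n/\log n)$ samples are necessary), I would invoke their moment-matching construction: a pair of distribution families, one supported on at most $n$ heavy elements and one on at least $(1+\eps)n$ heavy elements, whose fingerprint expectations agree up to a high order, so that by their central limit theorem for generalized multinomial distributions the induced fingerprint distributions are $o(1)$-close in total variation under any $o(n/\log n)$ samples; this makes the two cases statistically indistinguishable, ruling out any tester (in particular any $3/4$-correct one) with that many samples. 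The constant $3/4$ in the theorem is harmless: on the upper-bound side it follows by standard median-of-independent-runs amplification from their constant success probability, and on the lower-bound side their indistinguishability bound already precludes even success probability bounded away from $1/2$.

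The one point requiring care — and the step I would treat as the main obstacle — is the bookkeeping around what ``non-negligible weight'' means and how it interacts with the quantifier ``at least $(1+\eps)n$ elements''. In the huge-object application this quantity is used as a black box for distributions all of whose atoms have comparable, bounded-below weight, so one must make sure the version cited is the one in which both the ``support $\le n$'' and the ``support $\ge (1+\eps)n$'' instances consist only of such atoms (rather than a version permitting an arbitrary negligible tail), and that the additive-$\eps n$ accuracy of the estimator is stated in the same units. Once this alignment is pinned down, nothing further is needed. If instead a fully self-contained proof were desired, the genuinely hard part would be reproving the lower bound, which requires the full Valiant--Valiant CLT for generalized multinomial distributions together with the Chebyshev-polynomial moment-matching construction, and is well beyond a short argument; this is precisely why we take the route of citing the result as a restatement.
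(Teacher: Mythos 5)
The paper gives no proof of this statement at all: it is presented purely as a restatement of Corollary~9 of Valiant--Valiant~\cite{valiant2010clt}, and your proposal takes exactly the same route, namely citing that corollary and reconciling the terminology (the ``non-negligible weight'' threshold and the range of $\eps$) with the version stated here. Your additional discussion of the upper- and lower-bound mechanisms in~\cite{valiant2010clt} and of the amplification to success probability $3/4$ is accurate background but not something the paper itself argues; the paper simply invokes the result, as you do.
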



To construct the index-invariant property that provides a quadratic separation between the query complexities of adaptive and non-adaptive testers, we will use the above result. Let $D^{\mathrm{Supp}}_{yes}$ and $D^{\mathrm{Supp}}_{no}$ be the pair of hard distributions corresponding to the support estimation lower bound, from which we define our pair $D_{yes}$ and $D_{no}$ of hard distributions for our property. We will construct a huge object distribution property over a slightly larger domain $[N]$ with $N= \Oh(n\log n)$, where we will encode the elements of the support of the distributions $D_{yes}^{\mathrm{Supp}}$ and $D_{no}^{\mathrm{Supp}}$. Additionally, we will include a set of ``ordering'' vectors to both $D_{yes}$ and $D_{no}$ that encode a permutation $\sigma:[n] \rightarrow [n]$. Our property will be defined as a permutation of a non-index-invariant property along with an encoding of the permutation itself.


For the non-index-invariant property, we use an encoding of the elements of $\{0,1\}^n$ that can be decoded only if a sample from a family of special small sets is read in its entirety. For constructing hard distributions, we consider (encodings of) $2n$ special elements of $\{0,1\}^n$, and use over them the hard distributions corresponding to Theorem~\ref{theo:support_lb_overview}.

The encoding vectors of $D_{yes}$ and $D_{no}$ are designed in such a fashion, that if we can know the index ordering (and thus the identity of the above mentioned small sets), the support size estimation problem becomes relatively easy. However, without knowing the ordering vectors, estimating the size of the support becomes harder. More specifically, {if we already know the index ordering}, then support size estimation can be done using $\mbox{poly}(\log n)$ queries from each sample, over the $\widetilde{\Oh}(n)$ samples that are {sufficient} for solving the support estimation problem.


On the other hand, an important feature of our property ensures that unless some of the special sets are successfully hit while querying a sampled vector, which is a low probability event without prior knowledge of the encoded index ordering unless we perform $\widetilde{\Omega}(n)$ queries to that vector, then the queries do not provide any useful information about the sampled vector to the tester. This is achieved by the encoding procedure of the vectors, which is motivated from~\cite{DBLP:conf/innovations/Ben-EliezerFLR20}. 
However, it is not deployed here the same way as~\cite{DBLP:conf/innovations/Ben-EliezerFLR20}, since the surrounding proofs here are quite different (as well as the end-goal).

Since an adaptive algorithm can first learn the ordering vectors by performing $\widetilde{\Oh}(n)$ queries (as it takes $\mbox{poly}(\log n)$ samples to hit all the order encoding vectors), the adaptive tester requires $\widetilde{\Oh}(n)$ queries in total. However, for non-adaptive testers, since we have to perform all queries simultaneously, the tester would have to make $\widetilde{\Omega}(n)$ queries to each sampled vector to be able to utilize the support estimation procedure (since as explained above, fewer queries would give no useful information about the sample to the tester). As a result, $\widetilde{\Omega}(n^2)$ non-adaptive queries are required {following the lower bound result in Theorem~\ref{theo:support_lb_overview}}.

Another technical challenge is to construct the property in such a fashion that allows the crafting of ``wrong distributions'' which remain far from the property, even if we permute the support vectors. This is due to the fact that just replacing the vectors defining the index ordering does not require a change of large Earth Mover Distance. Thus we need the distributions to remain far from the property even if we reorder them. We ensure this by designing the hard distributions such that the support vectors of the distributions are far from each other. This in turn allows us to prove that the distribution $D_{no}$ will remain far from the property, as the size of its support is too large. 
The arguments involving only the mutual Hamming distance between the vectors in the support and the size of the support are invariant with respect to the index ordering, and are thus not affected by the possibility of ``cheaply'' changing the index ordering vectors.


\color{black}

\section{Preliminaries}\label{sec:prelim}
For an integer $n$, we will denote the set $\{1, \ldots, n\}$ as $[n]$.
Given two vectors $\mathbf{X}$ and $\mathbf{Y}$ in $\{0,1\}^{n}$, we 
will denote by $d_{H}(\mathbf{X}, \mathbf{Y})$ the normalized Hamming distance between $\mathbf{X}$ and $\mathbf{Y}$, that is, 
$$
d_{H}(\mathbf{X}, \mathbf{Y}) := \frac{\size{\left\{i \in [n]\,:\, \mathbf{X}_{i} \neq \mathbf{Y}_{i}\right\}}}{n}.
$$

Unless stated otherwise, all the distance measures that we will be considering in this paper will be the normalized distances. For two vectors $\mathbf{X}, \mathbf{Y} \in \{0,1\}^n$, $\delta_H(\mathbf{X}, \mathbf{Y}) = n \cdot d_H(\mathbf{X},\mathbf{Y})$ will be used to denote the absolute Hamming distance between $\mathbf{X}$ and $\mathbf{Y}$ in the few places where we will need to refer to it. {When we write $\tOh(\cdot)$, it suppresses a poly-logarithmic term in $n$ and the inverse of the proximity parameter.}

{We will also need the following observation from \cite{alon2017testing} which roughly states that given a sequence of non-negative real numbers that sum up to an integer $n$, there is a procedure that by choosing the floor or ceiling of these real numbers, one can obtain another sequence of integers that sum up to $n$. This observation will be used in our proof.}

\begin{obs}[{\bf Restatement of \cite[Lemma~4.8]{alon2017testing}}]\label{obs:integral}
{Let $T, n \in \N$. Given $T$ non-negative real numbers $\alpha_1, \ldots, \alpha_T$ such that $\sum\limits_{i=1}^T \alpha_i =n$, there exists a procedure of choosing $T$ integers $\beta_1, \ldots, \beta_T$ such that $\beta_i \in \{\lfloor \alpha_i \rfloor, \lceil \alpha_i \rceil\}$ for every $i \in [T]$ and $\sum\limits_{i=1}^T \beta_i =n$.}
\end{obs}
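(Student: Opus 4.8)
The plan is to prove this by a straightforward greedy rounding argument; the only real content is bookkeeping with fractional parts. I would start from the ``all floors'' assignment $\beta_i := \floor{\alpha_i}$ for every $i \in [T]$. This trivially satisfies $\beta_i \in \{\floor{\alpha_i}, \ceil{\alpha_i}\}$, but it may undershoot the target. Set the deficit $k := n - \sum_{i=1}^{T} \floor{\alpha_i}$. Since $\sum_{i=1}^{T}\alpha_i = n$ is an integer and each $\floor{\alpha_i}$ is an integer, $k$ is an integer; and since $\alpha_i \ge \floor{\alpha_i}$ for every $i$, we have $k \ge 0$. So $k \in \N \cup \{0\}$.

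The second step is to check that enough coordinates can be rounded up. Let $F := \{i \in [T] : \alpha_i \text{ is not an integer}\}$. Then $k = \sum_{i=1}^{T}(\alpha_i - \floor{\alpha_i}) = \sum_{i \in F}(\alpha_i - \floor{\alpha_i})$, and every summand lies strictly between $0$ and $1$; hence $k < \size{F}$ whenever $F \ne \emptyset$, and $\size{F} \ge k$ in all cases (the case $k = 0$ being trivial). Now pick any subset $S \subseteq F$ with $\size{S} = k$, and set $\beta_i := \ceil{\alpha_i}$ for $i \in S$ and $\beta_i := \floor{\alpha_i}$ for $i \notin S$. For $i \in S$ we have $\ceil{\alpha_i} = \floor{\alpha_i} + 1$ since $\alpha_i$ is non-integral, so $\sum_{i=1}^{T}\beta_i = \sum_{i=1}^{T}\floor{\alpha_i} + \size{S} = (n - k) + k = n$, and by construction each $\beta_i \in \{\floor{\alpha_i}, \ceil{\alpha_i}\}$, as required.

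There is no genuine obstacle here. The single point that warrants a moment's care is the claim in the second step: one must verify both that the deficit $k$ is a non-negative integer (which uses that $n$ is an integer) and that it does not exceed the number of non-integral $\alpha_i$, so that the set $S$ of indices to round up actually exists. Everything else is immediate, which is why the original source states it as a short lemma.
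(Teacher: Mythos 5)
Your proof is correct. Note, however, that the paper does not supply its own argument for this observation: it is stated as a restatement of Lemma~4.8 of Alon, Ben-Eliezer, and Fischer (2017) and invoked as a black box, so there is nothing internal to compare against. Your construction is the standard one for this rounding fact: start from $\beta_i = \lfloor \alpha_i \rfloor$, observe the deficit $k = n - \sum_i \lfloor \alpha_i \rfloor$ is a non-negative integer, check that $k$ is strictly less than the number of non-integral $\alpha_i$ (since $k$ equals a sum of fractional parts each lying in $(0,1)$), and round up $k$ of those coordinates. Each step is verified correctly, including the two small points you flag explicitly (integrality/non-negativity of $k$, and $k \le |F|$ so the choice of $S$ is possible). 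This is a complete and self-contained proof of exactly the statement the paper cites.
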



\subsection{Definitions and relations of various distance measures of distributions}

We will first define $\ell_{1}$ distance between two distributions. 

\begin{defi}[{\bf $\ell_1$ distance and variation distance between two distributions}]
    Let $D_1$ and $D_2$ be two probability distributions over  a set $S$. The $\ell_1$ distance between $D_1$ and $D_2$ is defined as
    $$||D_1-D_2||_1 = \sum_{a \in S} |D_1(a) - D_2(a)|.$$

 
The variation distance between $D_1$ and $D_2$ is defined as:
$$d_{TV}(D_1,D_2) = \frac{1}{2} \cdot ||D_1-D_2||_1.$$
\end{defi}

\color{black}

Throughout this paper, the Earth Mover Distance ({\sc EMD}) is the central metric for testing ``closeness'' and ``farness'' of a distribution from a given property. It is formally defined below.  

\begin{defi}[{\bf Earth Mover Distance ({\sc EMD})}]
Let $D_1$ and $D_2$ be two probability distributions over $\{0,1\}^n$. The EMD between $D_1$ and $D_2$ is denoted by $d_{EM}(D_1, D_2)$, and defined as the solution to the following linear program:
\begin{align*}
    \mbox{Minimize} &\sum_{\mathbf{X},\mathbf{Y} \in \{0,1\}^n} f_{\mathbf{X}\mathbf{Y}} d_H(\mathbf{X},\mathbf{Y})&\\
    \mbox{Subject to} &\sum\limits_{\mathbf{Y} \in \{0,1\}^n} f_{\mathbf{X}\mathbf{Y}} = D_1(\mathbf{X}), &\forall \ \mathbf{X} \in \{0,1\}^n \\
    &\sum\limits_{\mathbf{X} \in \{0,1\}^n} f_{\mathbf{X}\mathbf{Y}} = D_2(\mathbf{Y}), &\forall \ \mathbf{Y} \in \{0,1\}^n\\
    &~~~~~~~~~~0 \leq f_{\mathbf{X}\mathbf{Y}} \leq 1, &\forall \, \mathbf{X}, \mathbf{Y} \in \{0,1\}^n
\end{align*}
\end{defi}

\noindent
Intuitively, the variable $f_{\mathbf{X}\mathbf{Y}}$ stands for the amount of probability mass transferred from $\mathbf{X}$ to $\mathbf{Y}$.

Directly from the definitions of $d_{EM}(D_{1}, D_{2})$ and $d_{H}(\mathbf{X}, \mathbf{Y})$, we get the following simple yet useful observation connecting $\ell_{1}$ distance and EMD between two distributions. 

\begin{obs}[{\bf Relation between EMD and $\ell_1$ distance}]
Let $D_1$ and $D_2$ be two distributions over the $n$-dimensional Hamming cube $\{0,1\}^n$. Then we have the following relation between the Earth Mover Distance and $\ell_1$ distance between $D_1$ and $D_2$:
$$d_{EM}(D_1, D_2) \leq \frac{||D_1-D_2||_1}{2}.$$
\end{obs}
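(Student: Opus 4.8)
The plan is to exhibit a single feasible solution $\{f_{\mathbf{X}\mathbf{Y}}\}$ to the linear program defining $d_{EM}(D_1,D_2)$ whose objective value is at most $\|D_1-D_2\|_1/2$; since $d_{EM}(D_1,D_2)$ is the \emph{minimum} of the objective over all feasible solutions, this immediately gives the claimed inequality. The construction is the ``keep the common mass, move only the discrepancy'' transport plan.

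First I would route as much mass as possible along the diagonal: set $f_{\mathbf{X}\mathbf{X}} := \min\{D_1(\mathbf{X}), D_2(\mathbf{X})\}$ for every $\mathbf{X} \in \{0,1\}^n$. Since $d_H(\mathbf{X},\mathbf{X}) = 0$, this portion contributes nothing to the objective. It remains to move the leftover supply on the $D_1$-side, namely $D_1(\mathbf{X}) - \min\{D_1(\mathbf{X}),D_2(\mathbf{X})\}$ for each $\mathbf{X}$, to the leftover demand on the $D_2$-side, namely $D_2(\mathbf{Y}) - \min\{D_1(\mathbf{Y}),D_2(\mathbf{Y})\}$ for each $\mathbf{Y}$. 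Writing $A = \{\mathbf{X} : D_1(\mathbf{X}) > D_2(\mathbf{X})\}$ and $B = \{\mathbf{Y} : D_2(\mathbf{Y}) > D_1(\mathbf{Y})\}$, the total leftover supply is $\sum_{\mathbf{X}\in A}(D_1(\mathbf{X})-D_2(\mathbf{X}))$ and the total leftover demand is $\sum_{\mathbf{Y}\in B}(D_2(\mathbf{Y})-D_1(\mathbf{Y}))$. Because $D_1$ and $D_2$ are both probability distributions, the positive and negative parts of $D_1-D_2$ have equal total, so both quantities equal $\tfrac12\|D_1-D_2\|_1$. Hence any transportation of the leftover supply to the leftover demand is well-defined; pick one greedily, defining $f_{\mathbf{X}\mathbf{Y}}$ for $\mathbf{X}\in A$, $\mathbf{Y}\in B$, and set all remaining variables to $0$. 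By construction the marginal constraints $\sum_{\mathbf{Y}} f_{\mathbf{X}\mathbf{Y}} = D_1(\mathbf{X})$ and $\sum_{\mathbf{X}} f_{\mathbf{X}\mathbf{Y}} = D_2(\mathbf{Y})$ hold exactly, and each $f_{\mathbf{X}\mathbf{Y}} \le \tfrac12\|D_1-D_2\|_1 \le 1$, so $0 \le f_{\mathbf{X}\mathbf{Y}} \le 1$; the solution is feasible.

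Finally, I would bound the objective: every term with $\mathbf{X}\ne\mathbf{Y}$ has $d_H(\mathbf{X},\mathbf{Y})\le 1$, so
\[
\sum_{\mathbf{X},\mathbf{Y}} f_{\mathbf{X}\mathbf{Y}}\, d_H(\mathbf{X},\mathbf{Y}) \;\le\; \sum_{\mathbf{X}\in A,\ \mathbf{Y}\in B} f_{\mathbf{X}\mathbf{Y}} \;=\; \tfrac12\|D_1-D_2\|_1,
\]
which is exactly the total leftover mass moved. Therefore $d_{EM}(D_1,D_2)\le\tfrac12\|D_1-D_2\|_1$. There is essentially no obstacle here; the only points requiring a moment's care are verifying that the leftover supply and demand totals coincide (so that a valid transportation of the leftover exists) and that the assembled $f$ satisfies the marginal equalities exactly. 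One could also phrase the whole argument probabilistically via a maximal coupling of $D_1$ and $D_2$, but the explicit LP-feasibility argument above is the most direct.
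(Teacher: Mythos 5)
Your proof is correct, and it spells out precisely the standard coupling (``keep the common mass, move only the discrepancy'') argument that the paper treats as immediate from the definitions of $d_{EM}$ and $d_H$, offering no written proof of its own. Nothing to correct.
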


Now we formally define the notions of ``closeness'' and ``farness'' of two distributions with respect to the Earth Mover Distance.

\begin{defi}[{\bf Closeness and farness with respect to {\sc EMD}}]\label{defi:closefardist}
Given two proximity parameters $\eps_1$ and $\eps_2$ with $0 \leq \eps_1 < \eps_2 \leq 1$, two distributions $D_1$ and $D_2$ over the $n$-dimensional Hamming cube $\{0,1\}^n$ are said to be \emph{$\eps_1$-close} if $d_{EM}(D_1, D_2) \leq \eps_1$, and \emph{$\eps_2$-far} if $d_{EM}(D_1, D_2) \geq \eps_2$.
\end{defi}

Now we proceed to define the notion of distribution properties over the Hamming cube below.

\begin{defi}[{\bf Distribution property over the Hamming cube}]\label{defi:distprophamming}
Let $\cD$ denote the set of all distributions over the $n$-dimensional Hamming cube $\{0,1\}^n$. A \emph{distribution property} $\cP$ is a topologically closed subset of $\cD$.~\footnote{We put this restriction to avoid formalism issues. In particular, the investigated distribution properties that we know of (such as monotonicity and being a k-histogram) are topologically closed.} A distribution $D \in \cP$ is said to be \emph{in the property} or to {\em satisfy} the property. Any other distribution is said to be \emph{not in the property} or \emph{to not satisfy} the property.


\end{defi}

Now we are now ready to define the notion of distance of a distribution from a property.


\begin{defi}[{\bf Distance of a distribution from a property}]\label{defi:distprop}
The distance of a distribution $D$ from a property $\cP$ is the minimum Earth Mover Distance between $D$ and any distribution in $\cP$.~\footnote{The assumption that $\cP$ is closed indeed makes it a minimum rather than an infimum.} For $\eps \in [0,1]$, a distribution $D$ is said to be \emph{$\eps$-close} to $\cP$ if the distance of  $D$ from $\cP$ is at most $\eps$. Analogously, for $\eps \in [0,1]$, a distribution $D$ is said to be \emph{$\eps$-far} from $\cP$ if the distance of $D$ from $\cP$ is more than $\eps$.

\end{defi}

\subsection{Formal definitions of various kinds of property testers}

First we define our query model below.

\begin{defi}[{\bf Query to sampled vectors}]
Let $\cA$ be a tester with a set of sampled vectors $\mathbf{V}_1,\ldots, \mathbf{V}_s$,  drawn independently from an input  distribution $D$ over $\{0,1\}^n$, where $\mathbf{V}_i = (v_{i,1}, \ldots, v_{i,n})$ for every $i \in [s]$. In order to perform a query, the tester will provide $i$ and $j$, and will receive $v_{i,j}$ as the answer to the query.
\end{defi}


In the following, we formally describe the notion of a tester.

\begin{defi}[{\bf $\eps$-test}]
Let $\eps \in (0,1)$ be a proximity parameter, and $\delta \in (0,1)$. A probabilistic algorithm $\cA$ is said to \emph{$\eps$-test} a property $\cP$ with probability at least $1-\delta$, if any input in $\cP$ is accepted by $\cA$ with probability at least $1-\delta$, and any input that is $\eps$-far from $\cP$ is rejected by $\cA$ with probability at least $1 - \delta$. Unless explicitly stated, we assume that $\delta={1}/{3}$.
\end{defi}

\color{black}


Now we define two different types of testers, {\em adaptive} testers and {\em non-adaptive} testers, which will be used throughout the paper. We begin by describing the adaptive testers. Informally, adaptive testers correspond to algorithms that perform queries depending on the answers to previous queries. Formally:

\begin{defi}[{\bf Adaptive tester}]
Let $\cP$ be a property over $\{0,1\}^++$.
An adaptive tester for $\cP$ with query complexity $q$ and sample complexity $s$ is a randomized algorithm $\cA$ that $\eps$-tests $\cP$ by performing the following:

\begin{itemize}
    \item $\cA$ first draws some random coins and samples $s$ vectors from the unknown distribution $D$, denoted by $S=\{ \mathbf{V}_1,\ldots, \mathbf{V}_s\}$.
    
    \item $\cA$ then queries the $j_1$-th index of $\mathbf{V}_{i_1}$, for some $j_1 \in [n]$ and $i_1 \in [s]$ depending on the random coins.
    
    \item Suppose that $\cA$ has executed $k$ steps and has queried the $j_{\ell}$-th index of $ \mathbf{V}_{j_{\ell}}$, where $1 \leq \ell \leq k$. At the $(k+1)$-th step, depending upon the random coins and the answers to the queries till the $k$-th step, $\cA$ will perform a query for the $j_{k+1}$-th bit of $\mathbf{V}_{i_{k+1}}$, where $j_{k+1} \in [n]$ and  $i_{k+1} \in S$.
    
    
    \item After $q$ steps, $\cA$ reports {\sc Accept} or {\sc Reject} depending on the random coins and the answers to all $q$ queries.
\end{itemize}

\end{defi}


Now we define the more restricted \emph{non-adaptive testers}. Informally, non-adaptive testers decide the set of queries to be performed on the input even before performing the first query. Formally:

\begin{defi}[{\bf Non-adaptive tester}]\label{defi:non-adaptive-tester}
Let $\cP$ be a property over $\{0,1\}^n$.
A non-adaptive tester for $\cP$ with query complexity $q$ and sample complexity $s$ is a randomized algorithm $\cA$ that $\eps$-tests $\cP$ by performing the following:
\begin{itemize}
    \item $\cA$ tosses some random coins, and depending on the answers constructs a sequence of subsets of indices $J_1, \ldots, J_s \subset [n]$ such that $\sum\limits_{i=1}^s J_i\leq q$.

    \item $\cA$ takes $s$ samples  $ \mathbf{V}_1,\ldots, \mathbf{V}_s$ from the unknown distribution $D$.

    \item $\cA$ queries for the coordinates of $\mathbf{V}_i$ that are in $J_i$, for each $i \in [s]$.
    
    \item $\cA$ reports either {\sc Accept} or {\sc Reject} based on the answers from the queries to the vectors, that is, $\mathbf{V}_1\mid_{J_1}, \mathbf{V}_2\mid_{J_2}, \ldots, \mathbf{V}_s\mid_{J_s}$, and the random coins.
    
    
\end{itemize}

\end{defi}


\subsection{Distributions and properties with bounded VC-dimension}

Now we move on to define a class of properties using the notion of the VC-dimension of the support of a distribution. Before proceeding to define the class of properties, let us recall the notions of \emph{shattering} and \emph{VC-dimension}.

Let $V$ be a collection of vectors from $\{0, 1\}^n$. For a
sequence of indices $I = (i_1, \ldots, i_k)$, with $1 \leq i_j \leq n$, let $V \mid_I$ denote the set of {\em projections} of $V$ onto $I$, that is,
$$V\mid_I = \{(v_{i_1}, \ldots, v_{i_k}): (v_1, \ldots, v_n) \in V \}.$$
If $V \mid_I = \{0,1\}^k$, then it is said that $V$ \emph{shatters} the index sequence $I$. The \emph{VC-dimension} of $V$ is the size of the largest index sequence $I$ that is shattered by $V$. VC-dimension was introduced by Vapnik and Chervonenkis \cite{vapnik2015uniform} in the context of learning theory, and has found numerous applications in other areas like 
approximation algorithms, discrete and computational geometry, discrepancy theory, see~\cite{M99,PA95,DBLP:books/daglib/0018467,C00}.



We now give a natural extension of VC-dimension to distributions. 

\begin{defi}[{\bf Distribution with  VC-dimension $d$}]
Let $d,\, n \in \N$ and $D$ be a distribution over $\{0,1\}^n$. We say that $D$ has VC-dimension at most $d$ if the support of $D$ has {VC-dimension} at most $d$. A distribution $D$ is said to be \emph{$\beta$-close to VC-dimension $d$} if there exists a distribution $D_{0}$ with VC-dimension $d$ such that $d_{EM}(D,D_0)\leq \beta$, where $\beta \in (0,1)$.
\end{defi}


%
%
%

Analogously, we can also define the notion of a $(\beta, d)$-VC-dimension property.

\begin{defi}[{\bf $(\beta, d)$-VC-dimension property}]\label{defi:closevcdimprop}
Let $d, n \in \N$ and $\beta \in (0,1)$.
A property $\cP$ over $\{0,1\}^n$ is said to be a \emph{$(\beta, d)$-VC-dimension property}
if for any distribution $D\in \cP$, $D$ is $\beta$-close to VC-dimension $d$. When $\beta=0$, we say that the VC-dimension of $\cP$ is $d$. We also say that a $(0,d)$-VC-dimension property is a \emph{bounded VC-dimension property}.
\end{defi}

We now give examples of bounded VC-dimension properties.
\begin{description}
    \item[Property {\sc Chain}:]
        For any distribution $D \in \mbox{{\sc Chain}}$, the support of $D$ can be written as a sequence $\mathbf{X}_1,\ldots,\mathbf{X}_t \in \{0,1\}^n$ such that any two
    vectors with non-zero probability are comparable, that is, $$D(\mathbf{X}_i) >0 \ \mbox{and} \ D(\mathbf{X}_j) > 0 \ \mbox{implies either} \ \mathbf{X}_i \preceq \mathbf{X}_j \ \mbox{or} \ \mathbf{X}_j \preceq \mathbf{X}_i \mbox{, for every} \ i,j \in [t]. $$

    \item[Property {\sc Low-affine-dimension}:] 
        A distribution $D$ over $\{0,1\}^n$ is said to satisfy the {\sc Low- affine-dimension} property, with parameter $d \in \N$, if the {\em affine dimension}\footnote{A set $S \subseteq \R^{n}$ has {\em affine dimension} $k$ if the dimension of the smallest {\em affine set} in $\R^{n}$ that contains $S$ is $k$.} 
of the support of $D$ is at most $d$.
\end{description}    
Observe that the VC-dimension of {\sc Chain} is $1$, and the VC-dimension of {\sc Low-affine-dimension} is $d$. ~\footnote{In fact, the property {\sc Low-affine-dimension} is a sub-property of ``support size is at most $2^d$'', which has VC-dimension $d$.} Moreover, note that both {\sc Chain} and {\sc Low-affine-dimension} are examples of index-invariant properties.



\subsection{Yao's lemma for the huge object model}

Our lower bound proofs crucially use Yao's lemma~\cite{yao1977probabilistic}. Informally, it states that for any two distributions $D_1$ and $D_2$ such that $D_1$ satisfies some property, and $D_2$ is far from the property, if the variation distance between $D_1$ and $D_2$ with respect to $q$ queries is small, then $D_1$ and $D_2$ remain indistinguishable with respect to $q$ queries. In order to formally state the lemma, we need the following definitions.

\begin{defi}[{\bf Restriction}]
Let $D$ be a distribution over a collection of functions $f:\cD \rightarrow \{0, 1\}$, and $Q$ be a subset of the domain $\cD$ of $D$. The restriction $D \mid_{Q}$ of $D$ to $Q$ is the distribution over functions of the form $g: Q \rightarrow \{0, 1\}$, which is obtained from choosing a
random function $f: \cD \rightarrow \{0, 1\}$ according to the distribution $D$, and then setting $g = f \mid_Q$, where $f \mid_Q$ denotes the
restriction of $f$ to $Q$.

\end{defi}

The following is the version of Yao's Lemma which is used for non-adaptive testers in the classical setting. The crucial observation that makes this lemma work is the observation that the deterministic version of a non-adaptive tester in the classical setting is characterized by a set of possible responses to a fixed query set $Q \subset \cD$.


\begin{lem}[{\bf Yao's lemma for non-adaptive testers, see~\cite{fischer2004art}}]
Let $\eps \in (0,1)$ be a parameter and $q \in \N$ be an integer.
Suppose there exists a distribution $D_{yes}$ on inputs over $\cD$ that satisfy a given property $\cP$, and a distribution $D_{no}$ on inputs that are $\eps$-far from satisfying the property. Moreover, assume that for any set of queries $Q \subset \cD$ of size $q$, the variation distance between $D_{yes}\mid_Q$ and $D_{no}\mid_Q$ is less than $\frac{1}{3}$. Then it is not possible for a non-adaptive tester performing $q$ (or less) queries to $\eps$-test $\cP$.
\end{lem}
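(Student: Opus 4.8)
The plan is to establish the contrapositive: assuming a non-adaptive $\eps$-tester $\cA$ performing at most $q$ queries exists, I will derive that some query set $Q$ of size $q$ must have $d_{TV}(D_{yes}\mid_Q, D_{no}\mid_Q) \ge \frac13$. First I would invoke the standard averaging argument (Yao's minimax principle): since $\cA$ is a randomized tester, there is a fixing of its internal randomness yielding a \emph{deterministic} non-adaptive tester $\cA'$ that is correct with probability at least $2/3$ when the input is drawn from the mixture $\frac12 D_{yes} + \frac12 D_{no}$ — here ``correct'' means accepting if the input came from $D_{yes}$ and rejecting if it came from $D_{no}$, using that every input in the support of $D_{yes}$ satisfies $\cP$ and every input in the support of $D_{no}$ is $\eps$-far from $\cP$. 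A crucial point, flagged in the excerpt, is that a deterministic non-adaptive tester is fully characterized by a \emph{fixed} query set $Q \subseteq \cD$ with $|Q| \le q$ together with a decision function $\phi$ mapping the vector of responses (a function $g : Q \to \{0,1\}$) to $\{\textsc{Accept}, \textsc{Reject}\}$.

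Next I would unwind what correctness of $\cA' = (Q, \phi)$ on the mixture means in terms of the restricted distributions. The probability that $\cA'$ answers correctly is
\[
\tfrac12 \Pr_{g \sim D_{yes}\mid_Q}[\phi(g) = \textsc{Accept}] + \tfrac12 \Pr_{g \sim D_{no}\mid_Q}[\phi(g) = \textsc{Reject}] \ \ge\ \tfrac23.
\]
Writing $A = \phi^{-1}(\textsc{Accept}) \subseteq \{0,1\}^Q$, this says $D_{yes}\mid_Q(A) + D_{no}\mid_Q(\overline{A}) \ge \tfrac43$, i.e. $D_{yes}\mid_Q(A) - D_{no}\mid_Q(A) \ge \tfrac13$. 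Since the total variation distance is the maximum over all events of this signed difference, we get $d_{TV}(D_{yes}\mid_Q, D_{no}\mid_Q) \ge \tfrac13$, contradicting the hypothesis. (If $|Q| < q$, pad it with arbitrary extra indices; restricting to a larger query set can only increase the variation distance, so the bound still applies to a set of size exactly $q$.)

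I don't anticipate a serious obstacle here — this is the classical Yao argument — but the one point requiring care is the \emph{characterization step}: justifying that after fixing randomness, a non-adaptive tester really does reduce to a single fixed $Q$ and a deterministic decision map. In the huge object model the subtlety is that Definition~\ref{defi:non-adaptive-tester} allows the query sets $J_1,\ldots,J_s$ to depend on the random coins; once the coins are fixed these become a fixed family, whose union is the fixed set $Q$ of size at most $q$, and the response vector $(\mathbf{V}_1\!\mid_{J_1},\ldots,\mathbf{V}_s\!\mid_{J_s})$ is a deterministic function of $\mathbf{V}_1\!\mid_{Q},\ldots,\mathbf{V}_s\!\mid_{Q}$. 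This is precisely the ``crucial observation'' noted before the lemma statement, so I would only need to spell it out briefly rather than prove anything substantial.
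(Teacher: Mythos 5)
Your proof is correct and follows the standard Yao minimax argument; the paper does not actually prove this lemma but defers to the cited reference~\cite{fischer2004art}, and your reconstruction matches both that classical argument and the paper's remark preceding the lemma that a deterministic non-adaptive tester is characterized by a fixed query set $Q$ together with a decision map on the responses. The averaging step (fixing the coins to get a deterministic $\cA'=(Q,\phi)$), the rewriting of $2/3$ mixture-correctness as $D_{yes}\mid_Q(A)-D_{no}\mid_Q(A)\ge\frac13$ for $A=\phi^{-1}(\textsc{Accept})$, the translation to total variation distance, and the padding remark when $|Q|<q$ are all carried out correctly.
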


In this paper, we will prove lower bounds against non-adaptive distribution testers in the huge object model. Hence, $D_{yes}$ and $D_{no}$, rather than being distributions over functions from $\cD$ to $\{0,1\}$, are distributions over distributions over $\{0,1\}^n$ (since the basic input object is a distribution over $\{0,1\}^n$).

The deterministic version of a non-adaptive tester in this setting is characterized by a set of possible responses to a sequence of queries $\mathcal{J}=(J_1, \ldots, J_s)$ to the samples. We call $s$ the \emph{length} of $\mathcal{J}$, and call $q=\sum_{i=1}^s J_i$, the \emph{size} of $\mathcal{J}$.

Given a distribution $D$ over distributions over $\{0,1\}^n$, we denote by $D\mid_{\mathcal{J}}$ the distribution over $\{0,1\}^q$ that results from first picking a distribution $\widehat{D}$ over $\{0,1\}^n$ according to $D$, then taking $s$ independent samples $\mathbf{X}_1, \ldots, \mathbf{X}_s$ according to $\widehat{D}$, and finally constructing the sequence $\mathbf{X}_1 \mid_{J_1}, \ldots, \mathbf{X}_s \mid_{J_s}$. The huge object model version of Yao's lemma for non-adaptive testers is the following one.

\begin{lem}[{\bf Yao's lemma for non-adaptive testers in the huge object model}]
Let $\eps \in (0, 1)$ be a parameter and $q,s \in \N$ be two integers. Suppose there exists a distribution $D_{yes}$ over distributions over $\{0,1\}^n$ that satisfy a given property $\cP$, and a distribution $D_{no}$ over distributions over $\{0,1\}^n$ that are $\eps$-far from satisfying the property $\cP$. Moreover, assume that for any query sequence $\mathcal{J}$ of length $s$ and size $q$, the variation distance between $D_{yes} \mid_{\mathcal{J}}$ and $D_{no} \mid_{\mathcal{J}}$ is less than $1/3$. Then it is not possible for a non-adaptive tester that takes at most $s$ samples and performs at most $q$ queries to $\eps$-test $\cP$.
\end{lem}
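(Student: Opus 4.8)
The plan is to reduce the huge object model version of Yao's lemma to the classical non-adaptive version already stated in the excerpt. The key observation is that a deterministic non-adaptive tester in the huge object model, once the query sequence $\mathcal{J} = (J_1, \ldots, J_s)$ is fixed, is exactly a function that receives as its input the concatenated vector $\mathbf{X}_1\mid_{J_1}, \ldots, \mathbf{X}_s\mid_{J_s} \in \{0,1\}^q$ and outputs {\sc Accept} or {\sc Reject}. So the ``visible'' part of the input behaves precisely like a classical non-adaptive query to the domain $\cD' = \{(i,j) : i \in [s], j \in J_i\}$ of size $q$, and the object being queried is distributed according to $D\mid_{\mathcal{J}}$. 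First I would set up this correspondence carefully: show that for any randomized non-adaptive tester $\cA$ taking at most $s$ samples and $q$ queries, we may, by Yao's minimax principle (the averaging argument), assume that $\cA$ is deterministic after fixing its coins, and that fixing the coins also fixes $\mathcal{J}$; thus it suffices to rule out all deterministic non-adaptive testers for each fixed $\mathcal{J}$ of length $\le s$ and size $\le q$.

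Next I would invoke the hypothesis: for every such $\mathcal{J}$, $d_{TV}(D_{yes}\mid_{\mathcal{J}}, D_{no}\mid_{\mathcal{J}}) < 1/3$. Since any deterministic tester's output is a function of the $q$-bit string it sees, the total variation distance bounds the difference in its acceptance probabilities: $|\Pr[\cA \text{ accepts} \mid D_{yes}\mid_{\mathcal{J}}] - \Pr[\cA \text{ accepts} \mid D_{no}\mid_{\mathcal{J}}]| \le d_{TV}(D_{yes}\mid_{\mathcal{J}}, D_{no}\mid_{\mathcal{J}}) < 1/3$. Averaging over the coins of $\cA$ (which determine both $\mathcal{J}$ and the decision function), the same bound holds for the randomized tester. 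But a valid $\eps$-tester must accept $D_{yes}$-distributed inputs (which are in $\cP$) with probability $\ge 2/3$ and accept $D_{no}$-distributed inputs (which are $\eps$-far from $\cP$) with probability $\le 1/3$, forcing a gap of at least $1/3$ in acceptance probabilities — a contradiction. One subtlety to handle cleanly: here $D_{yes}$ and $D_{no}$ are distributions over distributions, so ``$D_{yes}$-distributed input is in $\cP$'' means every distribution in the support of $D_{yes}$ is in $\cP$, and likewise every distribution in the support of $D_{no}$ is $\eps$-far; I would state this as the precise meaning of the hypothesis and note that it lets us conclude the tester must accept with probability $\ge 2/3$, resp.\ $\le 1/3$, when the input object is drawn through $D_{yes}$, resp.\ $D_{no}$.

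The main obstacle, and the only place requiring genuine care rather than bookkeeping, is the formalization of ``deterministic non-adaptive tester is characterized by its responses to a fixed query sequence $\mathcal{J}$'' in a way that correctly accounts for the two-stage randomness (choosing $\mathcal{J}$, then sampling and querying) and for the fact that the input is itself a random distribution. The cleanest route is: (1) fix the coins of $\cA$; this determines a query sequence $\mathcal{J}$ and a decision function $g : \{0,1\}^{|\mathcal{J}|} \to \{\text{{\sc Accept}}, \text{{\sc Reject}}\}$; (2) observe that when the input is drawn as ``pick $\widehat D$, then sample $\mathbf{X}_1, \ldots, \mathbf{X}_s \sim \widehat D$, then read $\mathbf{X}_i\mid_{J_i}$'', the string fed to $g$ is distributed exactly as $D\mid_{\mathcal{J}}$ by the very definition of $D\mid_{\mathcal{J}}$ given in the excerpt; (3) conclude $\Pr[\cA \text{ accepts}] = \Pr_{y \sim D\mid_{\mathcal{J}}}[g(y) = \text{{\sc Accept}}]$, and apply the TV bound and then re-average over the coins. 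I would write this as a short lemma-style argument, emphasizing that this is the exact analogue of the classical Yao's lemma stated just above, with $\cD$ replaced by the (coin-dependent) query set and the input object replaced by the projected-sample string.
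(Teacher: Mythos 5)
The paper states this lemma without proof, appealing to the analogy with the classical Yao lemma it cites just above and to the discussion preceding the statement (which defines $D\mid_{\mathcal{J}}$ and observes that a deterministic non-adaptive tester in this model is characterized by a fixed query sequence $\mathcal{J}$ and a decision function on $\{0,1\}^q$). Your proposal supplies precisely that argument, and it is correct: fixing the tester's coins determines $\mathcal{J}$ and a function $g:\{0,1\}^{q}\to\{\textsc{Accept},\textsc{Reject}\}$, the string fed to $g$ is distributed as $D\mid_{\mathcal{J}}$ by definition, the TV hypothesis bounds the acceptance gap for each fixed coin outcome by less than $1/3$, and averaging over coins contradicts the gap of at least $1/3$ forced by the fact that every distribution in the support of $D_{yes}$ is in $\cP$ and every one in the support of $D_{no}$ is $\eps$-far. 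One small point you leave implicit: the hypothesis only constrains $\mathcal{J}$ of length exactly $s$ and size exactly $q$, while the tester may use fewer samples or queries; this is handled by padding any shorter $\mathcal{J}'$ to such a $\mathcal{J}$ and noting that $D\mid_{\mathcal{J}'}$ is a marginal of $D\mid_{\mathcal{J}}$, under which total variation distance cannot increase.
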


\color{black}

\color{black}

\color{black}

\section{Learning clusterable distributions}\label{sec:learncluster}
\remove{\begin{defi}[{\bf $(\alpha, r)$-clusterable distribution}]\label{defi:cluster1}

Let $\alpha \in (0,1)$ and $r,n \in \N$. A distribution $D$ over $\{0,1\}^n$ is said to be $(\alpha,r)$-clusterable if there exists a partition $\cC_1, \ldots, \cC_s$ of the support of $D$ such that
\begin{description}
    \item[(i)]   $s \leq r$;
    \item[(ii)] For any $1 \leq i \leq s$,  $d_H(\mathbf{U},\mathbf{V})\leq \alpha$  for any $\mathbf{U},\mathbf{V} \in \cC_i$;
\end{description}

\end{defi}

\begin{defi}[{\bf Being $\beta$-EMD-close to $(\alpha,r)$-clusterable}]\label{defi:cluster2}
A distribution $D$ is said to $\beta$-EMD-close to be $(\alpha, r)$-clusterable if there exists a distribution $D_0$ such that 
\begin{description}
    \item[(i)] $D_0$ is $(\alpha, r)$-clusterable.
    \item[(ii)] $d_{EM}(D_0, D)\leq \beta$.
\end{description}
\end{defi}
\color{black}
}

{In this section, we define the notion of a $(\zeta,\delta,r)$-clusterable distribution formally (see Definition~\ref{defi:cluster}), and prove that such distributions can be learnt (up to permutation) efficiently in Theorem~\ref{theo:mainthm1}. Intuitively, a distribution $D$ defined over $\{0,1\}^n$ is called $(\zeta,\delta,r)$-clusterable if we can remove a subset of the support vectors of $D$ whose probability mass is at most $\zeta$, and we can partition the remaining vectors in the support of $D$ into at most $r$ parts, each with diameter at most $\delta$. Theorem~\ref{theo:mainthm1} states that, given a distribution $D$ over $\{0,1\}^n$, we can learn it (up to permutation) if it is $(\zeta,\delta,r)$-clusterable, and otherwise, we either report {\sc Fail} or learn the input distribution (up to permutation). Note that learning the distribution up to permutation is sufficient  to provide testing algorithms for index-invariant properties with bounded VC-dimension, which will be discussed in Section~\ref{sec:vc}.}

\begin{defi}[{\bf $(\zeta, \delta, r)$-clusterable and $(\alpha,r)$-clusterable distribution}]\label{defi:cluster}

\vspace{4 pt}

\begin{enumerate}
    \item[(i)] Let $\zeta,\delta \in (0,1)$ and $r,n \in \N$. A distribution $D$ over $\{0,1\}^n$ is called  \emph{$(\zeta, \delta,r)$-clusterable} if there exists a partition $\cC_0, \ldots, \cC_s$ of $\{0,1\}^n$ such that  
 $D(\cC_0) \leq \zeta$,   $s \leq r$, and 
    for every $1 \leq i \leq s$,  $d_H(\mathbf{U},\mathbf{V})\leq \delta$  for any $\mathbf{U},\mathbf{V} \in \cC_i$.
    
    \item[(ii)]  For $\alpha \in (0,1)$ and $r \in \N$, a distribution $D$ over $\{0,1\}^n$ is called \emph{$(\alpha,r)$-clusterable} if it is \emph{$(0,\alpha,r)$-clusterable}. For $\beta \in (0,1)$, a distribution $D$ is called \emph{$\beta$-close to being $(\alpha,r)$-clusterable} if there exists an $(\alpha,r)$-clusterable distribution $D_0 $ such that $d_{EM}(D,D_0)\leq \beta$.
\end{enumerate}
\end{defi}


\begin{theo}[{\bf Learning $(\zeta,\delta,r)$-clusterable distributions}]\label{theo:mainthm1}
There exists a (non-adaptive) algorithm \learn, as described in Algorithm~\ref{alg:testcluster--}, that has sample and query access to an unknown distribution $D$ over $\{0,1\}^n$ for $n \in \N$, takes parameters $\zeta, \delta, r$ as inputs such that, $\zeta,\delta \in (0,1)$ and $\eps=17(\delta+\zeta)<1$~\footnote{The constant $17$ is arbitrary, and can be improved to a smaller constant. We did not try to optimize.} and $r \in \N$, makes a number of queries that only depends on $\zeta,\delta$ and $r$, and either reports a full description of a distribution over $\{0,1\}^n$ or reports {\sc Fail}, satisfying both of the following conditions:

\begin{enumerate}
    \item[(i)] If $D$  $(\zeta, \delta, r)$-clusterable, then with probability at least $\frac{2}{3}$, the algorithm outputs a full description of a distribution $D'$ over $\{0,1\}^n$ such that $d_{EM}(D, D'_{\sigma}) \leq \eps$ for some permutation $\sigma:[n] \rightarrow [n]$.
    
    \item[(ii)] For any $D$, the algorithm will not output a distribution $D'$ such that $d_{EM}(D,D'_\sigma) > \eps$ for every permutation $\sigma:[n]\rightarrow [n]$, with probability more than $\frac{1}{3}$. However, if the distribution $D$  is not $(\zeta,\delta, r)$-clusterable, the algorithm may output {\sc Fail} with any probability.
\end{enumerate}
\end{theo}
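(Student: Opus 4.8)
\textbf{The plan} is to instantiate \learn\ (Algorithm~\ref{alg:testcluster--}) as the non-adaptive procedure outlined in the overview and then glue together the technical lemmas it is built from. Concretely, the algorithm draws $\cS=\{\mathbf{X}_1,\dots,\mathbf{X}_{t_1}\}$ with $t_1=\Oh(\tfrac{r}{\zeta}\log\tfrac{r}{\zeta})$ and $\cT=\{\mathbf{Y}_1,\dots,\mathbf{Y}_{t_2}\}$ with $t_2=\Oh(\tfrac{t_1^2}{\zeta^2}\log t_1)$ samples, picks a uniformly random index set $R\subseteq[n]$ of size $\mathrm{poly}(2^{t_1},1/\delta)$, queries every vector of $\cS\cup\cT$ on $R$, assigns each $\mathbf{Y}_j$ to an $\mathbf{X}_i$ whose $R$-projected distance to it is at most $2\delta$ (leaving $\mathbf{Y}_j$ unassigned if none exists), and records $w_i$ (the fraction of $\cT$ assigned to $\mathbf{X}_i$) and $w_0$ (the unassigned fraction). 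It outputs {\sc Fail} if $w_0$ exceeds a fixed small multiple of $\zeta$; otherwise it reconstructs vectors $\mathbf{S}_1,\dots,\mathbf{S}_{t_1}$ from the $R$-projections and outputs the distribution $D'$ on them with $D'(\mathbf{S}_i)$ equal to the renormalized $w_i$. Since $t_1,t_2$ and $|R|$ depend only on $\zeta,\delta,r$, the query bound is immediate, and the substance of the proof is verifying conditions (i) and (ii).

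\textbf{Completeness (i).} Suppose $D$ is $(\zeta,\delta,r)$-clusterable via $\cC_0,\dots,\cC_s$, and call $\cC_i$ \emph{large} if $D(\cC_i)\ge\zeta/(10r)$, so the non-large clusters carry mass $\le\zeta/10$. Condition on the high-probability events that $\cS$ hits every large cluster (Lemma~\ref{lem:hit}) and that $R$ estimates all $\cS$--$\cT$ pairwise distances up to $\pm\delta$ while mis-estimating the distances to $\cS$ of at most an $\Oh(\zeta)$ fraction of the mass of $D$ (Lemma~\ref{lem:dist}). Then every $\mathbf{Y}_j$ lying in a large cluster is assigned (its true distance to that cluster's representative is $\le\delta$, hence projected distance $\le2\delta$), so the expected unassigned mass is $\le D(\cC_0)+\zeta/10+\Oh(\zeta)$; since $t_2$ is large enough for a Chernoff bound over the $t_1+1$ recorded fractions, with high probability $w_0\le3\zeta$ (so {\sc Fail} does not trigger) and $\sum_i|w_i-\E[w_i]|\le\zeta$. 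Lemma~\ref{lem:mass} then yields a distribution $D^{*}$ on $\cS$ with $D^{*}(\mathbf{X}_i)\ge w_i$ and $d_{EM}(D,D^{*})=\Oh(\delta+\zeta)$. Finally, classifying each index $j$ by its \emph{type} $(X_{1,j},\dots,X_{t_1,j})\in\{0,1\}^{t_1}$, estimating each type's frequency from $R$, and rounding the estimates to integers summing to $n$ via Observation~\ref{obs:integral}, produces $\mathbf{S}_1,\dots,\mathbf{S}_{t_1}$ together with a \emph{single} permutation $\sigma$ satisfying $d_H(\mathbf{X}_i,\sigma(\mathbf{S}_i))\le\delta$ for all $i$ at once (Lemma~\ref{lem:struct}). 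Transporting $D^{*}$ to $D'_\sigma$ along $\mathbf{X}_i\mapsto\sigma(\mathbf{S}_i)$ costs $\le\delta$, so $d_{EM}(D,D'_\sigma)\le d_{EM}(D,D^{*})+\delta\le17(\delta+\zeta)=\eps$, the constant $17$ generously swallowing every error term.

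\textbf{Soundness (ii).} For an \emph{arbitrary} $D$, condition on the analogous high-probability events, all of which hold for any $D$: $R$ preserves $\cS$--$\cT$ distances up to $\pm\delta$; the type-frequency estimates are accurate enough for Lemma~\ref{lem:struct} to produce $\sigma$ with $d_H(\mathbf{X}_i,\sigma(\mathbf{S}_i))\le\delta$; at most an $\Oh(\zeta)$ fraction of the mass of $D$ has mis-estimated distances to $\cS$; and each of $w_0,w_1,\dots,w_{t_1}$ is within $\zeta/t_1$ of its mean (Chernoff over $\cT$, which is i.i.d.\ from $D$). If the algorithm does not output {\sc Fail}, then $w_0$, hence the expected unassigned mass $q_0$, is $\Oh(\zeta)$. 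The expected mass routed to $\mathbf{X}_i$, call it $q_i$, is exactly the $D$-mass that the algorithm's rule assigns to $\mathbf{X}_i$, and on the good fraction of $\{0,1\}^n$ that mass sits within $3\delta$ of $\mathbf{X}_i$ (projected distance $\le2\delta$ plus $\le\delta$ estimation error); transporting it there, dumping the $q_0$ unassigned mass anywhere on $\cS$, and correcting the $\Oh(\zeta)$ gap between $(q_i)$ and the output weights again gives $d_{EM}(D,D^{*})=\Oh(\delta+\zeta)$, whence $d_{EM}(D,D'_\sigma)\le\eps$ exactly as before. When $D$ fails to be $(\zeta,\delta,r)$-clusterable nothing forces $w_0$ to be small, so the algorithm may always {\sc Fail}, which is permitted; and Corollary~\ref{coro:mainthm1} and Theorem~\ref{theo:main} then follow by combining this with Lemma~\ref{lem:cluster12} and standard VC-dimension bounds.

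\textbf{Main obstacle.} The subtle part is soundness: the output must be certified with no structural assumption on $D$, so the argument runs in reverse --- the sole handle is the {\sc Fail} check on $w_0$, and one must show that a small $w_0$, together with the distance-preservation guarantees of $R$ for the \emph{mass} of $D$ (not merely for the finite samples $\cS,\cT$), already forces $D$ to be EMD-close up to permutation to the reconstructed distribution. Managing the three independent error sources --- cluster-weight estimation from $\cT$, distance estimation from $R$, and type-count reconstruction from $R$ --- and keeping every $\Oh(\delta+\zeta)$ term safely below $\eps=17(\delta+\zeta)$ is the real content, and is precisely what Lemmas~\ref{lem:hit}, \ref{lem:dist}, \ref{lem:mass}, and \ref{lem:struct} are engineered to deliver.
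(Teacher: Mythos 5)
Your proposal is correct and follows essentially the same route as the paper: condition on the four high-probability events (hitting large clusters, $R$ preserving pairwise and whole-space distances, $R$ preserving type-frequencies, $\cT$ preserving weights), use the {\sc Fail} check on $w_0$ as the sole structural handle, and chain the approximations through the reconstructed distribution on $\cS$ and then through the vectors $\mathbf{S}_1,\dots,\mathbf{S}_{t_1}$ produced by \aprox. The only cosmetic difference is that the paper packages the transport argument through a dedicated formalism (``$(\eta,\xi)$-clustered around a sequence'' plus Lemmas~\ref{lem:stepv}, \ref{lem:distbyvec}, and \ref{lem:comparebywt}) and an explicit case split on whether $D$ is $(3\delta,5\zeta)$-clustered around $\cS$, whereas you argue the same EMD bound directly from the mass of $D$ assigned to each $\mathbf{X}_i$; you also attribute the EMD bound to Lemma~\ref{lem:mass} rather than to the paper's Lemma~\ref{lem:stepv}, but the underlying reasoning is identical, so this is just a misnaming.
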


The algorithm corresponding to learning $(\zeta,\delta,r)$-clusterable distributions is described in Algorithm~\ref{alg:testcluster--} as \learn. It calls a subroutine \aprox, as described in Algorithm~\ref{alg:approx}.


\begin{rem}
    The sample complexity of \learn is polynomial in $r$, and the query complexity of \learn is exponential in $r$. Moreover, for the case of query complexity, the exponential dependency in $r$ is required. In particular, in Section~\ref{sec:bounded-vc}, we construct a distribution with support size $r$ that requires $2^{\Omega(r)}$ queries to test for the property of being a permutation thereof.
\end{rem}

\color{black}

\begin{algorithm}

\SetAlgoLined

\caption{\learn}
\label{alg:testcluster--}
\KwIn{Sample and Query access to a distribution $D$ over $\{0,1\}^n$, and parameters $\zeta,\delta, r$ with $\zeta ,\delta \in (0,1)$ and $r \in \N$. }
\KwOut{Either reports a full description of a distribution over $\{0,1\}^n$ or {\sc Fail}, satisfying (i) and (ii) as stated in Theorem~\ref{theo:mainthm1}.}
\begin{description}
\item[(i)] Take $t_1=\Oh(\frac{r}{\zeta} \log \frac{r}{\zeta})$ samples $\mathcal{S} = \mathbf{X}_1, \dots, \mathbf{X}_{t_1}$ from $D$.

\item[(ii)] Take $t_2=\Oh(\frac{t_1^2}{\zeta^2} \log t_1)$ samples $\cT=\mathbf{Y}_1, \dots, \mathbf{Y}_{t_2}$ from $D$.

\item[(iii)]  Pick a random subset $R \subset [n]$ with $\size{R}=\Oh(\frac{4^{t_1}}{\delta^2\zeta} \log \frac{r}{\delta\zeta})$. Query the indices corresponding to $R$ in each sample of $\cS$, to obtain the sequence of vectors $\mathcal{S}_x = \mathbf{x}_1, \dots, \mathbf{x}_{t_1}$,  where $\mathbf{x}_{i} = \mathbf{X}_i\mid_R$ for each $i \in [t_1]$. Also,  query the indices corresponding to $R$ in each sample in $\cT$, to obtain the sequence of vectors $\mathcal{T}_y = \mathbf{y}_1, \dots, \mathbf{y}_{t_2}$, where $\mathbf{y}_{j} = \mathbf{Y}_j\mid_R$ for every $j \in [t_2]$.

\item[(iv)] For each $j \in \{ 1, \ldots, t_2\}$,  if there exists an $i \in [t_1]$ such that $d_H(\mathbf{y}_j, \mathbf{x}_i)\leq 2 \delta$, assign $\mathbf{y}_j$ to $\mathbf{x}_i$, breaking ties by assigning $\mathbf{y}_j$ to the vector in $\mathcal{S}_x$ with the minimum index.

If for some $\mathbf{y}_j$ no suitable $\mathbf{x}_i$ is found, then $\mathbf{y}_j$ remains unassigned.

\item[(v)] If the total number of unassigned vectors in $\mathcal{T}_y$ is more than $3\zeta t_2$, then output {\sc Fail}.

\item[(vi)] For every $i \in \{1, \dots, t_1\}$, the weight of $\mathbf{x}_i$ is defined as 
$$w_i=w(\mathbf{x}_i)=\frac{\mbox{Number of vectors in } \mathcal{T}_y \mbox{ assigned to } \mathbf{x}_i}{t_2}.$$

\item[(vii)] Use \aprox (as described in Algorithm~\ref{alg:approx}) with $R$ and $\mathbf{x}_1, \ldots, \mathbf{x}_{t_1}$ to obtain $\mathbf{S}_1, \ldots, \mathbf{S}_{t_1} \in \{0,1\}^n$ (as stated in Lemma~\ref{lem:approxguarantee}).

    
    
\item[(viii)] Construct and return any distribution $D'$ over $\{0,1\}^n$ such that 
\begin{itemize}
    \item For each $i = 1, \dots, t_1$,  $D'(\mathbf{S}_i) \geq w(\mathbf{x}_i)$.
   
   \item $\sum\limits_{i=1}^{t_1} D'(\mathbf{S}_i)=1$.
   
   \item $D'(\mathbf{S})=0$ for every $\mathbf{S} \in \{0,1\}^n \setminus \{\mathbf{S}_1, \ldots, \mathbf{S}_{t_1}\}$.
\end{itemize}
\end{description}
\end{algorithm}

\begin{algorithm}

\SetAlgoLined

\caption{\aprox}
\label{alg:approx}
\KwIn{A random subset $R \subseteq [n]$ with $\size{R}=\Oh(\frac{4^{t_1}}{\delta^2\zeta} \log \frac{r}{\delta\zeta})$, and a sequence of vectors $\mathbf{x}_1, \ldots,\mathbf{x}_{t_1} \in \{0,1\}^{\size{R}}$ drawn from the distribution $D\mid_{R}$.}
\KwOut{Sequence of vectors $\mathbf{S}_1, \ldots, \mathbf{S}_{t_1}$ such that with probability at least $99/100$ over the random choice of $R$, for every $i \in [t_1]$, $d_H(\sigma(\mathbf{X}_i), \mathbf{S}_i) \leq \delta/10$, where $\sigma: [n] \rightarrow [n]$ is a permutation.}
\LinesNumbered
\begin{description}
     \item[(i)] For each $i \in R$,  construct the vector $\mathbf{C}_i \in \{0,1\}^{t_1}$ such that $\mathbf{C}_i(j) = \mathbf{x}_j(i)$.

    \item[(ii)] For any $J\in \{0,1\}^{t_1}$, determine 
    $\gamma_J=\frac{|\{i\in R\mid \mathbf{C}_i = J\}|}{\size{R}}$.

    \item[(iii)] Apply Observation~\ref{obs:integral}, to obtain for any $J\in \{0,1\}^{t_1}$ an approximation $\Gamma_J$, such that $\Gamma_J \in \{\lfloor \gamma_J \cdot n\rfloor, \lceil \gamma_J \cdot n\rceil \}$ and $\sum\limits_{J\in \{0,1\}^{t_1}}\Gamma_J = n$.
    
    \item[(v)]
    Construct a matrix $A$ of dimension $t_1 \times n$ by putting $\Gamma_J$ many $J$ column vectors, for each $J \in \{0,1\}^{t_1}$. 
    
    \item[(vi)] Return the row vectors of $A$ as $\mathbf{S}_1,\ldots,\mathbf{S}_{t_1}$.


    
    

\end{description}

\end{algorithm}

\color{black}

{To prove the correctness of \learn (which we will do in Section~\ref{sec:event} and Section~\ref{sec:lem-inetr}), we will need the notion of an $(\eta,\xi)$-clustered distribution around a sequence of vectors $\cS$ (see Definition~\ref{def:concdist}), and an associated observation (see Observation~\ref{obs:extra})}. 

\begin{defi}[{\bf $(\eta, \xi)$-clustered distribution around a sequence}]\label{def:concdist}
Let $\eta ,\xi \in (0,1)$ and $n \in \N$.  Also, for $\mathbf{X} \in \{0,1\}^n$, let  $\mbox{{\sc NGB}}_{\eta}({\bf X})$ denote the set of vectors in $\{0,1\}^n$ that are at a distance of at most $\eta$ from ${\bf X}$. Let $\cS= \{\mathbf{S}_1, \ldots, \mathbf{S}_{t}\}$ be a sequence of $t$ vectors in $\{0,1\}^n$ and define $\mbox{{\sc NGB}}_{\eta}(\cS)=\bigcup\limits_{S \in \cS} \mbox{{\sc NGB}}_{\eta}(S)$. Then:

\begin{enumerate}
\item[(i)]  A distribution $D$ over $\{0,1\}^n$ is called \emph{$(\eta,\xi)$-clustered around $\cS$  with weights} $w_0, \ldots, w_{t} \in [0,1]$ satisfying $\sum\limits_{i=0}^{t} w_i =1$ and $w_0 \leq \xi$, if {there exist $t$ pairwise disjoint sets $\cC_i$, such that $\cC_i \subseteq \mbox{{\sc \mbox{{\sc NGB}}}}_{\eta}(\mathbf{S}_i)$ and $D(\cC_i) \geq w_i$ for every $i \in [t]$}.



\item[(ii)] A distribution $D$ over $\{0,1\}^n$ is called  \emph{$(\eta,\xi)$-clustered around $\cS$} if $D$ is $(\eta,\xi)$-clustered around $\cS$ with weights $w_0, \ldots, w_t \in [0,1]$, for some $w_0, \ldots, w_t$ such that $\sum\limits_{i=0}^t w_i =1$ and $w_0 \leq \xi$.

\end{enumerate}
\end{defi}
\begin{obs}\label{obs:extra}
Let $D$ be any distribution over $\{0,1\}^n$ and $\cS$ be a sequence of vectors in $\{0,1\}^n$ such that $\mbox{{\sc NGB}}_{\eta}(\cS) \geq 1-\xi$. Then  $D$ is $(\eta,\xi)$-clustered around $\cS$.
\end{obs}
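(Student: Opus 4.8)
The plan is to produce the required clustering by \emph{peeling off} the $\eta$-neighborhoods of $\mathbf{S}_1,\ldots,\mathbf{S}_t$ one at a time so that they become pairwise disjoint, and then to let the weight $w_0$ absorb whatever mass lies outside $\mbox{{\sc NGB}}_\eta(\cS)$. Concretely, writing $\cS=\{\mathbf{S}_1,\ldots,\mathbf{S}_t\}$, I would set $\cC_1 = \mbox{{\sc NGB}}_{\eta}(\mathbf{S}_1)$ and, for $2 \le i \le t$, $\cC_i = \mbox{{\sc NGB}}_{\eta}(\mathbf{S}_i)\setminus \bigcup_{j<i}\mbox{{\sc NGB}}_{\eta}(\mathbf{S}_j)$. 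By construction the $\cC_i$ are pairwise disjoint, each satisfies $\cC_i \subseteq \mbox{{\sc NGB}}_{\eta}(\mathbf{S}_i)$, and $\bigcup_{i=1}^{t}\cC_i = \mbox{{\sc NGB}}_{\eta}(\cS)$.

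Next I would choose the weights. Put $w_i := D(\cC_i)$ for each $i \in [t]$ and $w_0 := 1 - \sum_{i=1}^{t} w_i$. Since the $\cC_i$ are disjoint with union $\mbox{{\sc NGB}}_{\eta}(\cS)$, we get $\sum_{i=1}^{t} w_i = D(\mbox{{\sc NGB}}_{\eta}(\cS))$, hence $w_0 = 1 - D(\mbox{{\sc NGB}}_{\eta}(\cS))$. Reading the hypothesis as $D(\mbox{{\sc NGB}}_{\eta}(\cS)) \ge 1-\xi$ (see the remark below), this gives $0 \le w_0 \le \xi$; all $w_i$ lie in $[0,1]$ and $\sum_{i=0}^{t} w_i = 1$. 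Finally $D(\cC_i) \ge w_i$ holds with equality for every $i \in [t]$, so by Definition~\ref{def:concdist}(i) the distribution $D$ is $(\eta,\xi)$-clustered around $\cS$ with weights $w_0,\ldots,w_t$, and therefore $(\eta,\xi)$-clustered around $\cS$ in the sense of Definition~\ref{def:concdist}(ii).

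There is essentially no real obstacle: the only nontrivial clause in Definition~\ref{def:concdist}(i), namely disjointness of the $\cC_i$, is obtained for free from the peeling construction, and the slack weight $w_0$ is precisely designed to carry the residual mass $1 - D(\mbox{{\sc NGB}}_{\eta}(\cS))$. The one point worth flagging is a typo in the statement: ``$\mbox{{\sc NGB}}_{\eta}(\cS) \geq 1-\xi$'' should be ``$D(\mbox{{\sc NGB}}_{\eta}(\cS)) \geq 1-\xi$'', i.e., $D$ places mass at least $1-\xi$ on the union of the $\eta$-balls around the vectors of $\cS$; with $\mbox{{\sc NGB}}_{\eta}(\cS)$ being a set (as defined just above), no other reading is meaningful.
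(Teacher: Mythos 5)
Your proof is correct and takes essentially the same approach as the paper: the same peeling construction $\cC_i = \mbox{{\sc NGB}}_\eta(\mathbf{S}_i)\setminus\bigcup_{j<i}\mbox{{\sc NGB}}_\eta(\mathbf{S}_j)$, the same choice of weights $w_i=D(\cC_i)$, and the same slack weight $w_0$. Your remark about the typo is also apt (the paper itself writes $w_0 = 1 - \mbox{{\sc NGB}}_\eta(\cS)$ with the same implicit $D(\cdot)$ missing).
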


\begin{proof}
Let us partition $\mbox{NGB}_\eta(\cS)$ into $t$ parts such that $\cC_i = \mbox{NGB}_\eta(\mathbf{X}_i) \setminus \bigcup\limits_{j=1}^{i-1}\mbox{NGB}_\eta(\mathbf{X}_j)$ for every $i\in [t]$. For every $i \in [t]$, note that $\cC_i \subseteq \mbox{NGB}_\eta(\mathbf{X}_i)$, and let us define $w_i=D(\cC_i)$. Also, set $w_0=1-\sum\limits_{i=1}^{t}w_i$, and observe that $w_0=1-\mbox{{\sc NGB}}_{\eta}(\cS)\leq \xi$. This shows that $D$ is $(\eta, \xi)$-clustered around $\cS$ with weights $w_0,\ldots,w_{t}$, and we are done.
\end{proof}

{The correctness proof of \learn is in Subsection~\ref{sec:lem-inetr}. Leading to it, in Subsection~\ref{sec:event}, we consider some important lemmas and define a set of events. These lemmas, and the events whose probability they bound from below, provide the infrastructure for the proof of \learn in Subsection~\ref{sec:lem-inetr}.}

\color{black}




\subsection{Useful lemmas and events to prove the correctness of \learn}\label{sec:event}

{The central goal of this section is to define an event {\sc GOOD} and show that $\pr\left(\mbox{{\sc GOOD}}\right) \geq 2/3$. The event {\sc GOOD} is defined in such a fashion that, if it holds, then the algorithm \learn produces the desired output as stated in Theorem~\ref{theo:mainthm1}. Note that this bounds the error probability of \learn. The event {\sc GOOD} is formally defined in Definition~\ref{defi:eventgood}.}
{To define the event {\sc GOOD}, we first consider four lemmas: Lemma~\ref{lem:hit}, Lemma~\ref{lem:dist}, Lemma~\ref{lem:mass} and  Lemma~\ref{lem:struct}.}

{We will first state a lemma (Lemma~\ref{lem:hit}) which says that, with high probability, the first set of samples $\cS$ (obtained in Step (i) of \learn) intersects all the large clusters when $D$ is $(\zeta, \delta, r)$-clusterable.}



\begin{lem}[{\bf Hitting large clusters}]\label{lem:hit}
Assume that the input distribution $D$ over $\{0,1\}^n$ is $(\zeta, \delta, r)$-clusterable with the clusters $\cC_1, \ldots, \cC_r$. The cluster $\cC_i$ is said to be \emph{large} if $D(\cC_i)\geq \frac{\zeta}{10 r}$. With probability at least $99/100$, the sequence of vectors $\cS=\{\mathbf{X}_1,\ldots {\mathbf{X}_{t_1}}\}$ (found in Step (i) of \learn) contains at least one vector from every large cluster.
\end{lem}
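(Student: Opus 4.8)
The statement is a routine coupon-collector / union-bound argument, so the plan is straightforward. First I would observe that there are at most $r$ large clusters (since the clusters $\cC_1,\dots,\cC_r$ are a partition, there are at most $r$ of them in total, hence at most $r$ large ones). For a fixed large cluster $\cC_i$, by definition $D(\cC_i) \geq \frac{\zeta}{10r}$, so a single sample $\mathbf{X}_j$ drawn from $D$ fails to land in $\cC_i$ with probability at most $1 - \frac{\zeta}{10r}$. Since the $t_1$ samples in $\cS$ are independent, the probability that \emph{none} of them lands in $\cC_i$ is at most $\left(1 - \frac{\zeta}{10r}\right)^{t_1} \leq e^{-t_1 \zeta/(10r)}$.

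Next I would choose the hidden constant in $t_1 = \Oh\!\left(\frac{r}{\zeta}\log\frac{r}{\zeta}\right)$ large enough that $e^{-t_1\zeta/(10r)} \leq \frac{1}{100r}$; concretely, taking $t_1 \geq \frac{10r}{\zeta}\ln(100r)$ suffices, and this is indeed $\Oh\!\left(\frac{r}{\zeta}\log\frac{r}{\zeta}\right)$ (using $\log(100r) = \Oh(\log\frac{r}{\zeta})$ when $\zeta < 1$). Then a union bound over the at most $r$ large clusters gives that the probability that some large cluster is missed by $\cS$ is at most $r \cdot \frac{1}{100r} = \frac{1}{100}$. Hence with probability at least $\frac{99}{100}$, $\cS$ contains at least one vector from every large cluster, as claimed.

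There is essentially no obstacle here — the only mild point of care is bookkeeping the constant inside the $\Oh(\cdot)$ for $t_1$ so that the per-cluster failure probability beats $\frac{1}{100r}$ rather than just $\frac{1}{100}$, which is needed for the union bound to close. One should also note explicitly that this lemma only makes a claim conditioned on $D$ being $(\zeta,\delta,r)$-clusterable (so that the clusters $\cC_1,\dots,\cC_r$ are well-defined), which matches the hypothesis in the statement; nothing is asserted about $\cC_0$, and small clusters are deliberately allowed to be missed.
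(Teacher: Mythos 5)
Your proof is correct and follows essentially the same route as the paper's: bound the per-large-cluster miss probability by $(1-\zeta/(10r))^{t_1}$, choose the hidden constant in $t_1=\Oh(\frac{r}{\zeta}\log\frac{r}{\zeta})$ so this is at most $\frac{1}{100r}$, and take a union bound over the at most $r$ large clusters. You are also right to track the $\frac{1}{100r}$ target explicitly so the union bound closes; the paper's displayed ``$\leq \frac{99}{100r}$'' is a typo for this same quantity.
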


\begin{proof}
Consider any large cluster $\cC_i$. As $D(\cC_i) \geq \frac{\zeta}{10r}$, the probability that no vector in $\cS$ belongs to $\cC_i$ is at most $(1 - \frac{\zeta}{10r})^{\size{\cS}}\leq \frac{99}{100 r}$. This follows  for a suitable choice of the hidden coefficient since $\size{\cS}=t_1=\Oh\left(\frac{r}{\zeta} \log \frac{r}{\zeta}\right)$. Since there are at most $r$ large clusters, using the union bound, the lemma follows.
\end{proof}


{Recall that \learn obtains a second set of sample vectors $\cT$ in Step (ii), takes a random set of indices $R \subset [n]$ without replacement in Step (iii), and tries to assign each vector in $\cT$ to some vector in $\cS$, based on the distance between the vectors when projected to the indices of $R$. Intuitively, the step of assigning vectors performs as desired if $R$ preserves the distances between the vectors in $\cS$ and $\cT$. For technical reasons, we also need $R$ to preserve most (but not all) distances between $\cS$ and the entirety of $\{0,1\}^n$. The following lemma says that indeed $R$ achieves this with high probability.}


\begin{lem}[{\bf Distance preservation}]\label{lem:dist}
Let us consider the input distribution $D$ over $\{0,1\}^n$, and $\cS=\{\mathbf{X}_1,\ldots,\mathbf{X}_{t_1}\} $ and $R \subset [n]$ drawn in Step (i) and (iii) of \learn.  $R$ is said to be \emph{distance preserving}
if the following conditions hold:
\begin{itemize}
\item[(i)] $\size{d_H (\mathbf{S},\mathbf{T})-d_H (\mathbf{S} \mid_R , \mathbf{T} \mid_R)}\leq \delta$ for every $\mathbf{S} \in \cS$ and $\mathbf{\bf T} \in \cT$.

\item[(ii)] Let $\mathcal{W} \subseteq \{0,1\}^n$ be such that, for every $\mathbf{W} \in \mathcal{W}$, $\size{d_H (\mathbf{W},\mathbf{S})-d_H (\mathbf{W} \mid_R , \mathbf{S} \mid_R)}\leq \delta$. Then $D(\mathcal{W})\geq 1- \frac{\zeta}{ t_1}$.
\end{itemize}
The set $R$ chosen in Step (iii) of \learn is distance preserving with probability 
at least $99/100$.
\end{lem}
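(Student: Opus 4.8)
The plan is to prove each of the two conditions by a Chernoff/Hoeffding bound for sampling without replacement, followed by a union bound over the relevant collection of pairs, and then combine the two failure events.

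For condition (i), fix a pair $\mathbf{S}\in\cS$ and $\mathbf{T}\in\cT$. The quantity $d_H(\mathbf{S}\mid_R,\mathbf{T}\mid_R)$ is exactly the empirical mean over the sampled coordinates of the indicator $\mathbbm{1}[\mathbf{S}_i\neq \mathbf{T}_i]$, whose true mean over all $n$ coordinates is $d_H(\mathbf{S},\mathbf{T})$. Since $R$ is a uniformly random subset of $[n]$ of size $|R|$ (chosen without replacement), Hoeffding's inequality for sampling without replacement (stated in Appendix~\ref{sec:prelim_prob}) gives
$$
\pr\Big(\big|d_H(\mathbf{S},\mathbf{T})-d_H(\mathbf{S}\mid_R,\mathbf{T}\mid_R)\big|>\delta\Big)\le 2\exp\!\big(-2\delta^2|R|\big).
$$
There are $t_1\cdot t_2$ such pairs. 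Plugging in $|R|=\Oh\!\big(\tfrac{4^{t_1}}{\delta^2\zeta}\log\tfrac{r}{\delta\zeta}\big)$ and recalling $t_2=\mathrm{poly}(t_1,1/\zeta,\log t_1)$, the bound $2\exp(-2\delta^2|R|)$ is at most $\tfrac{1}{200\, t_1 t_2}$ for a suitable choice of the hidden constant, so the union bound over all pairs shows condition (i) fails with probability at most $1/200$.

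For condition (ii), first fix a single vector $\mathbf{W}\in\{0,1\}^n$ and a single $\mathbf{S}\in\cS$; the same Hoeffding bound gives $\pr(|d_H(\mathbf{W},\mathbf{S})-d_H(\mathbf{W}\mid_R,\mathbf{S}\mid_R)|>\delta)\le 2\exp(-2\delta^2|R|)$. The subtle point is that we cannot union bound over all $2^n$ vectors $\mathbf{W}$; instead we argue in expectation and then use Markov. Let $\mathrm{Bad}(R)=\{\mathbf{W}:\ \exists \mathbf{S}\in\cS,\ |d_H(\mathbf{W},\mathbf{S})-d_H(\mathbf{W}\mid_R,\mathbf{S}\mid_R)|>\delta\}$. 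By a union bound over the $t_1$ vectors of $\cS$, for each fixed $\mathbf{W}$ we have $\pr_R(\mathbf{W}\in\mathrm{Bad}(R))\le 2t_1\exp(-2\delta^2|R|)$. Hence $\E_R[D(\mathrm{Bad}(R))]=\sum_{\mathbf{W}} D(\mathbf{W})\pr_R(\mathbf{W}\in\mathrm{Bad}(R))\le 2t_1\exp(-2\delta^2|R|)$. Choosing the hidden constant in $|R|$ large enough makes this at most $\tfrac{\zeta}{200\, t_1}$, so by Markov's inequality $\pr_R\big(D(\mathrm{Bad}(R))>\tfrac{\zeta}{t_1}\big)\le \tfrac{1}{200}$; on the complementary event, taking $\cW=\{0,1\}^n\setminus\mathrm{Bad}(R)$ witnesses condition (ii). Finally, a union bound over the two failure events of probability at most $1/200$ each yields that $R$ is distance preserving with probability at least $99/100$ (in fact $\ge 1-1/100$), as claimed.

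The main obstacle is condition (ii): one must resist the temptation to union-bound over all of $\{0,1\}^n$ and instead carry the probability through an expectation with the weights $D(\mathbf{W})$, then convert to a high-probability statement via Markov — this is why the statement only promises $D(\cW)\ge 1-\zeta/t_1$ rather than $\cW=\{0,1\}^n$, and why the extra factor of $1/\zeta$ (and the $1/t_1$) appear inside $|R|$. One should also be careful that $R$ is sampled without replacement, so the concentration inequality invoked must be the without-replacement version (Hoeffding–Serfling), though for an upper tail one could alternatively pass to the with-replacement setting; either way the stated size of $R$ comfortably absorbs the constants.
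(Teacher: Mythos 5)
Your proof is correct and follows essentially the same route as the paper: a union bound over the $t_1 t_2$ pairs via the without-replacement Hoeffding bound for part (i), and the expectation-plus-Markov argument over the $D$-mass of the bad set for part (ii) (the paper frames the per-vector bound as a union over $\cS$ inside the expectation, just as you do). The only superficial difference is that the paper routes the one-pair concentration step through its Observation~\ref{cl:distchernoff} rather than invoking Hoeffding--Serfling directly, but the underlying inequality and the resulting constants are the same.
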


\begin{proof}
For (i), consider a particular $\mathbf{S} \in \cS$ and $\mathbf{T} \in \cT$. Applying 
Observation~\ref{cl:distchernoff} with $K=R$, $\mathbf{U}=\mathbf{S}$ and $\mathbf{V}=\mathbf{T}$, the probability that $\size{d_H(\mathbf{S},\mathbf{T})-d_H(\mathbf{S} \mid_R,\mathbf{T} \mid_R)} \leq \delta$ is at least 
$1-\frac{\zeta}{200 t_1^2 t_2}$. Applying the union bound over all possible choices over $(\mathbf{S},\mathbf{T})$
pairs, we have Part (i) with probability at least $199/200.$

To prove (ii), let us consider an arbitrary vector $\mathbf{V} \in \{0,1\}^n$. Similarly to (i), we know that $\size{d_H(\mathbf{V},\mathbf{S})-d_H(\mathbf{V} \mid_R,\mathbf{S} \mid_R)} \leq \delta$ holds with probability at least $1-\frac{\zeta}{200 t_1^2 t_2}$. Applying the union bound, we can say that the same holds over all $\mathbf{S} \in \cS$ with probability at least $1 - \frac{\zeta}{200t_1}$. So, the expected value of $D(\{0,1\}^n \setminus \mathcal{W})$ is at most $\frac{\zeta}{200 t_1}$. By Markov's inequality, the probability that Part (ii) holds, that is, $D( \{0,1\}^n \setminus \mathcal{W}) \leq \frac{\zeta}{t_1}$ is at least $199/200$. 
Putting everything together, we have the result.
\end{proof}

\color{black}
{By Lemma~\ref{lem:hit}, we know that $\cS$ intersects with all large clusters with high probability, and we are trying to assign the vectors in $\cT$ to some vectors in $\cS$ based on their projected distances on the indices of $R$. To learn the input distribution, we want the second set of sample vectors $\cT$ to preserve the mass of all the large clusters, and it is enough for us to approximate it, as well as be able to detect the case where approximation is impossible and we should output {\sc Fail}. The following lemma takes care of this.}

\begin{lem}[{\bf Weight representation}]\label{lem:mass}
Let us consider the input distribution $D$ over $\{0,1\}^n$ to \learn, $\cS=\{\mathbf{X}_1,\ldots,\mathbf{X}_{t_1}\} $ in Step (i),   $\cT=\{\mathbf{Y}_1,\ldots,\mathbf{Y}_{t_2}\} $ in Step (ii), and consider fixed $t_1$ pairwise disjoint subsets $\cC=\{\cC_1,\ldots,\cC_{t_1}\}$ of $\{0,1\}^n$. $\cT$ is said to be \emph{weight preserving} for $\cS$ and $\cC$ if
\begin{enumerate}
    
\item[(i)]   $\frac{\size{\cT ~\cap~ \mbox{{\sc NGB}}_{\delta}(\cS) }}{\size{\cT}}  \geq D(\mbox{{\sc NGB}}_{ \delta}(\cS))-{\zeta}$.

\item[(ii)]  $ \frac{\size{\cT ~\cap ~\mbox{{\sc NGB}}_{3\delta}(\cS) }}{\size{\cT}}  \leq D(\mbox{{\sc NGB}}_{ 3\delta}(\cS))+ {\zeta}$.
        
\item[(iii)] {for every $i \in [t_1]$, $\frac{\size{\cT ~\cap~ \cC_i}}{\size{\cT}} \leq D( \cC_i)+\frac{\zeta}{t_1}$.}
\end{enumerate} 

Then with probability at least $99/100$, $\cT$ is weight reserving for $\cS$ and $\cC$.
\end{lem}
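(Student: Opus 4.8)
The plan is to prove each of the three items by a standard concentration-plus-union-bound argument on the random sample set $\cT = \{\mathbf{Y}_1,\ldots,\mathbf{Y}_{t_2}\}$, exploiting that each $\mathbf{Y}_j$ is drawn i.i.d.\ from $D$ and that $t_2 = \Oh(\frac{t_1^2}{\zeta^2}\log t_1)$ is chosen large enough to make the relevant deviations smaller than $\zeta$ (resp.\ $\zeta/t_1$). Concretely, for item (i), note that $\cS$, and hence the set $\mbox{{\sc NGB}}_{\delta}(\cS) \subseteq \{0,1\}^n$, is fixed once Step (i) of \learn has been executed, so conditioned on $\cS$ the indicator random variables $\mathbbm{1}[\mathbf{Y}_j \in \mbox{{\sc NGB}}_{\delta}(\cS)]$ are i.i.d.\ Bernoulli with mean exactly $D(\mbox{{\sc NGB}}_{\delta}(\cS))$. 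Their empirical average is $\size{\cT \cap \mbox{{\sc NGB}}_{\delta}(\cS)}/\size{\cT}$, so by the additive Chernoff/Hoeffding bound the probability that it falls below $D(\mbox{{\sc NGB}}_{\delta}(\cS)) - \zeta$ is at most $\exp(-2\zeta^2 t_2)$, which is far below $1/300$ for our choice of $t_2$. Item (ii) is the same computation applied to the (also fixed) set $\mbox{{\sc NGB}}_{3\delta}(\cS)$, bounding the probability of exceeding the mean by more than $\zeta$.

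For item (iii), I would again condition on $\cS$ and on the fixed disjoint sets $\cC = \{\cC_1,\ldots,\cC_{t_1}\}$; for each individual $i \in [t_1]$, the empirical frequency $\size{\cT \cap \cC_i}/\size{\cT}$ concentrates around $D(\cC_i)$, and the additive Hoeffding bound gives that it exceeds $D(\cC_i) + \zeta/t_1$ with probability at most $\exp(-2(\zeta/t_1)^2 t_2)$. Because $t_2 = \Oh(\frac{t_1^2}{\zeta^2}\log t_1)$, this is at most $\frac{1}{300 t_1}$ for a suitable choice of the hidden constant, so a union bound over the $t_1$ values of $i$ costs only $\frac{1}{300}$. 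Finally a union bound over the three failure events (items (i), (ii), and the aggregate of item (iii)) keeps the total failure probability below $\frac{1}{100}$, giving the claimed $99/100$ success probability.

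The one point that deserves care — and is really the only obstacle — is the order of quantifiers and what is ``fixed'' versus ``random.'' The statement is phrased as: once $\cS$ and the $\cC_i$'s are given (they may themselves depend on earlier randomness, but crucially not on $\cT$), then $\cT$ is weight preserving for them with high probability. So the argument must be read as first exposing the randomness that produces $\cS$ (Step (i)) and whatever process specifies $\cC$, and only then drawing $\cT$ in Step (ii); this independence is exactly what licenses treating $D(\mbox{{\sc NGB}}_{\delta}(\cS))$, $D(\mbox{{\sc NGB}}_{3\delta}(\cS))$, and the $D(\cC_i)$ as fixed numbers inside the Chernoff bound. No non-trivial geometry of the Hamming cube enters; the sets involved are arbitrary subsets of $\{0,1\}^n$ and only their $D$-measures matter. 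The remaining work is the routine plugging-in of $t_2$ into the Hoeffding tail bounds (using the concentration inequalities collected in Appendix~\ref{sec:prelim_prob}) to verify the constants, which I would not spell out in detail.
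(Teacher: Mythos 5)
Your proposal is correct and mirrors the paper's own proof: both treat $\cS$ and the $\cC_i$ as fixed before drawing $\cT$, introduce indicator random variables for membership of each $\mathbf{Y}_j$ in the relevant set, apply Hoeffding's inequality from the appendix, and use a union bound over $i\in[t_1]$ for item (iii) and over the three items for the final $99/100$ bound. Your extra care about the order in which the randomness of $\cS$ (and $\cC$) versus $\cT$ is exposed is exactly the subtlety the paper addresses in the remark following Definition~\ref{defi:eventgood}, so no gap there.
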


\begin{proof}
\color{black}


To prove (i), let $Z_j$ be the indicator random variable such that $Z_j=1$ if and only if $\mathbf{Y}_j$ is in $\mbox{{\sc NGB}}_{\delta}(\cS)$, where $j \in [t_2]$. Observe that $\size{\cT \cap NGB_{\delta}(\cS)}=\sum\limits_{j=1}^{t_2} Z_j$. As $\pr(Z_j=1)=D(\mbox{{\sc NGB}}_{\delta}(\cS))$, the expected value of $\frac{\size{\cT \cap \mbox{{\sc NGB}}_{\delta}(\cS)}}{\size{\cT}}$ is also $D(\mbox{{\sc NGB}}_{\delta}(\cS))$. Applying Hoeffding's inequality (see Lemma~\ref{lem:hoeffdingineq}), we conclude that (i) holds with probability at least $299/300$.
     
Proving (ii) is similar to (i). Again     applying  Hoeffding's inequality (Lemma~\ref{lem:hoeffdingineq}), we can show that (ii) holds with probability at least $299/300.$

In order to prove (iii), we proceed in similar fashion as (i), and after applying Hoeffding's inequality (Lemma~\ref{lem:hoeffdingineq}), we apply the union bound over all $j \in [t_1]$ to get the desired result.
\end{proof}


{Consider the weights $w_1, \ldots, w_{t_1}$ obtained in Step (vi) of \learn. To argue that these weights are good enough to report the desired distribution $D'$ (if we know the vectors in $\cS$ exactly), we consider the following observation which says that there exist $t_1$ pairwise disjoint subsets $\cC_1^*,\ldots,\cC_{t_1}^*$ such that $w_i$ is the fraction of vectors in $\cT$ that are in $\cC_i^*$ for every $i \in [t_1]$.} Also, let us define $\cC^*=\{\cC_1^*,\ldots,\cC_{t_1}^*\}$.

\begin{obs}\label{obs:crucial}
Let us consider assigning each vector in $\{0,1\}^n$ either to some $S \in \cS$ or not assigning to any vector in $\cS$, using the same procedure that has been used to assign the set of vectors in $\cT$ in Steps (iii) and (iv) of \learn. Let $\cC^*_i \subseteq \{0,1\}^n$ be the set of all vectors that are assigned to $\mathbf{X}_i$, for every $i\in [t_1]$. Then, for every $i \in [t_1]$, we have  $w_i = \frac{\size{\cT ~ \cap~ \cC_i^*}}{\size{\cT}}$.
\end{obs}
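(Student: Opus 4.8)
The plan is to observe that the assignment rule applied in Step~(iv) of \learn depends only on the projection of a sample onto $R$, together with the fixed projected vectors $\mathbf{x}_1,\dots,\mathbf{x}_{t_1}$; it never uses the fact that the sample came from $\cT$ rather than being an arbitrary point of $\{0,1\}^n$. Concretely, for a vector $\mathbf{V}\in\{0,1\}^n$ the hypothetical assignment described in the observation sends $\mathbf{V}$ to $\mathbf{X}_i$ precisely when $i$ is the smallest index in $[t_1]$ with $d_H(\mathbf{V}\mid_R,\mathbf{x}_i)\le 2\delta$, and leaves $\mathbf{V}$ unassigned when no such index exists. Thus $\cC_i^* = \{\mathbf{V}\in\{0,1\}^n : i=\min\{i'\in[t_1] : d_H(\mathbf{V}\mid_R,\mathbf{x}_{i'})\le 2\delta\}\}$, and these sets are pairwise disjoint, which is exactly the structure asserted.

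First I would record that, since $\mathbf{y}_j=\mathbf{Y}_j\mid_R$ for every $j\in[t_2]$, the condition ``$\mathbf{y}_j$ is assigned to $\mathbf{x}_i$ in Step~(iv)'' is literally the same condition as ``$\mathbf{Y}_j\in\cC_i^*$''. Hence the set of indices $j\in[t_2]$ for which $\mathbf{y}_j$ is assigned to $\mathbf{x}_i$ coincides with the set of indices $j$ for which $\mathbf{Y}_j\in\cC_i^*$, so the number of vectors of $\cT$ assigned to $\mathbf{x}_i$ equals $\size{\cT\cap\cC_i^*}$, where we read this as $\size{\{j\in[t_2]:\mathbf{Y}_j\in\cC_i^*\}}$ (that is, repeated samples are counted with multiplicity). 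Dividing by $\size{\cT}=t_2$ and comparing with the definition of $w_i$ in Step~(vi) gives $w_i=\size{\cT\cap\cC_i^*}/\size{\cT}$, as claimed.

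There is essentially no obstacle here beyond bookkeeping: the only points needing a word of care are that the tie-breaking rule (``minimum index'') is applied identically in the hypothetical assignment of all of $\{0,1\}^n$ and in Step~(iv), so that the two notions of ``assigned to $\mathbf{X}_i$'' actually coincide rather than merely overlap, and that $\cT$ must be treated as an indexed family (multiset) when forming $\cT\cap\cC_i^*$, so that the counting identity is exact. With these conventions fixed, the statement follows directly by unwinding the definitions, and no probabilistic argument is required.
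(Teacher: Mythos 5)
Your proof is correct and follows the same route as the paper, which simply says that the identity follows from the definition of $\cC_i^*$. You spell out the bookkeeping (the tie-breaking rule and the multiset convention) more explicitly, but the underlying argument is the same unwinding of definitions.
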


\begin{proof}
This follows from the definition of $\cC_i^*$.
\end{proof}

{Note that $\cC^*$ is formed following the procedure that \learn performs to assign the vectors of $\cT$ to the vectors in $\cS$. So, a vector far away from $\mathbf{X}_i \in \cS$ might be assigned $\mathbf{X}_i$, and $w_i$ is considered in this case. This is not a problem as the mass on $\cC_i^*$ is close to being bounded by the total mass of the vectors in $\mbox{{\sc NGB}}_{3\delta}(\mathbf{X}_i)$. This follows from the fact that the set $R$ is distance preserving (see Part (ii) of Lemma~\ref{lem:dist}) with high probability. Now let us define $\cC^{**}=\{\cC_i^* \cap \mbox{{\sc NBG}}_{3\delta}(\mathbf{X}_i):i \in [t_1]\}$. Finally, we will upper bound $w_i$ by $D(\cC^{**}_i)$ in the following observation. This will be useful for proving the correctness of \learn in Section~\ref{sec:lem-inetr}.}

\begin{obs}\label{obs:crucial-2}
Let us assume that $R$ is distance preserving and $\cT$ is weight representative of $\cS$ and $\cC^*$. Then for every $i \in [t_1]$, 
$  w_i \leq D(\cC_i^*)+\frac{\zeta}{t_1}\leq D(\cC_i^{**})+\frac{2\zeta}{t_1}$, where we define $\cC_i^{**}=\cC_i^* \cap \mbox{{\sc NBG}}_{3\delta}(\mathbf{X}_i)$.
\end{obs}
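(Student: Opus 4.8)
The plan is to chain together two inequalities. The first inequality, $w_i \le D(\cC_i^*) + \zeta/t_1$, is essentially immediate: by Observation~\ref{obs:crucial} we have $w_i = \size{\cT \cap \cC_i^*}/\size{\cT}$, and since $\cT$ is weight representative of $\cS$ and $\cC^*$, part~(iii) of Lemma~\ref{lem:mass} (applied with the family $\cC = \cC^*$) gives $\size{\cT \cap \cC_i^*}/\size{\cT} \le D(\cC_i^*) + \zeta/t_1$ for every $i \in [t_1]$. So the first step is just to cite Observation~\ref{obs:crucial} and Lemma~\ref{lem:mass}(iii).

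The second inequality, $D(\cC_i^*) \le D(\cC_i^{**}) + \zeta/t_1$ where $\cC_i^{**} = \cC_i^* \cap \mbox{{\sc NGB}}_{3\delta}(\mathbf{X}_i)$, is where the distance-preservation hypothesis on $R$ enters. The idea is that $\cC_i^* \setminus \cC_i^{**}$ consists of vectors $\mathbf{V}$ that were assigned to $\mathbf{X}_i$ in Step~(iv) of \learn — so their \emph{projected} distance satisfies $d_H(\mathbf{V}\mid_R, \mathbf{x}_i) \le 2\delta$ — but whose \emph{true} distance is $d_H(\mathbf{V}, \mathbf{X}_i) > 3\delta$. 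For such a vector the discrepancy $\size{d_H(\mathbf{V},\mathbf{X}_i) - d_H(\mathbf{V}\mid_R,\mathbf{X}_i\mid_R)}$ exceeds $\delta$. But Lemma~\ref{lem:dist}(ii), applied with $\mathcal{W}$ the set of all $\mathbf{W} \in \{0,1\}^n$ whose distance to every vector of $\cS$ is $\delta$-preserved by $R$, tells us $D(\{0,1\}^n \setminus \mathcal{W}) \le \zeta/t_1$ (here I use that $R$ is distance preserving). Since $\cC_i^* \setminus \cC_i^{**} \subseteq \{0,1\}^n \setminus \mathcal{W}$, we get $D(\cC_i^* \setminus \cC_i^{**}) \le \zeta/t_1$, hence $D(\cC_i^*) = D(\cC_i^{**}) + D(\cC_i^* \setminus \cC_i^{**}) \le D(\cC_i^{**}) + \zeta/t_1$.

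Combining the two displayed inequalities yields $w_i \le D(\cC_i^*) + \zeta/t_1 \le D(\cC_i^{**}) + 2\zeta/t_1$, as claimed.

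The only mildly delicate point — and the step I would be most careful about — is the set-theoretic claim $\cC_i^* \setminus \cC_i^{**} \subseteq \{0,1\}^n \setminus \mathcal{W}$. One must verify that a vector assigned to $\mathbf{X}_i$ but genuinely farther than $3\delta$ really has its $\cS$-distances mis-estimated by more than $\delta$ on $R$: the assignment rule guarantees $d_H(\mathbf{V}\mid_R,\mathbf{X}_i\mid_R) \le 2\delta < 3\delta - \delta < d_H(\mathbf{V},\mathbf{X}_i) - \delta$, so the gap at coordinate set $R$ with respect to the particular vector $\mathbf{X}_i \in \cS$ is strictly larger than $\delta$, placing $\mathbf{V}$ outside $\mathcal{W}$. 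Everything else is a routine splitting of the measure.
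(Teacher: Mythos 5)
Your proposal is correct and follows the same two-step decomposition as the paper: the first inequality via Observation~\ref{obs:crucial} plus Lemma~\ref{lem:mass}(iii), and the second via the containment $\cC_i^* \setminus \cC_i^{**} \subseteq \{0,1\}^n \setminus \mathcal{W}$ together with Lemma~\ref{lem:dist}(ii). You spell out the containment in more detail than the paper does (the paper compresses it to ``by the definition of $\cC_i^*$ and by Lemma~\ref{lem:dist}(ii)''), which is a helpful expansion; the only blemish is a cosmetic one in the displayed chain ``$2\delta < 3\delta - \delta < d_H(\mathbf{V},\mathbf{X}_i) - \delta$,'' where the first strict inequality is actually an equality ($3\delta - \delta = 2\delta$) --- the conclusion is unaffected since $d_H(\mathbf{V},\mathbf{X}_i) > 3\delta$ gives $d_H(\mathbf{V},\mathbf{X}_i) - \delta > 2\delta \geq d_H(\mathbf{V}\mid_R,\mathbf{X}_i\mid_R)$.
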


\begin{proof}
As $R$ is distance preserving, consider $\cC^*=\{\cC_1^*,\ldots,\cC_{t_1}^*\}$ as guaranteed by Observation~\ref{obs:crucial}. Now, as $\cT$ is weight representative of $\cS$ and $C^*$ and $w_i=\frac{\size{\cT~\cap~\cC_i^*}}{\size{T}}$ for every $i \in [t_1]$, by Lemma~\ref{lem:mass} (iii), $w_i\leq D(\cC_i^{*})+\frac{\zeta}{t_1}$. By the definition of $\cC_i^*$ and by Lemma~\ref{lem:dist} (ii), $D(\cC_i^* \setminus \mbox{{\sc NGB}}_{3\delta}(\mathbf{X}_i)) \leq \frac{\zeta}{t_1}$, that is, $D(\cC_i^{*}) \leq D(\cC_i^{**})+\frac{\zeta}{t_1}$.
\end{proof}

{Note that the above observation only gives upper bounds on the set of weights $w_1, \ldots,w_{t_1}$. As Lemma~\ref{lem:mass} provides upper as well as lower bounds on the mass around $\cS$, this will not be a problem.}
\color{black}

{Consider the distribution $D^*$ supported over $\cS$ such that 
$D(\mathbf{X}_i)\geq w_i$ for every $i \in [t_1]$, which we can view as an approximation of $D$. Note that we still can not 
report $D^*$ as the output distribution, since in order to do so, we need to perform $\Omega(n)$ queries to know the exact vectors of $\cS$. Instead we will report a distribution $D'$ such that $D'_\sigma$ is close to $D^*$ for some permutation $\sigma:[n] \rightarrow [n]$. The idea is to construct a new set of vectors $\mathbf{S}_1, \ldots, \mathbf{S}_{t_1}$ in Step (vii) such that the Hamming distance between $\mathbf{X}_i$ and $\sigma(\mathbf{S}_i)$ is small for every $i \in [t_1]$ for some permutation $\sigma:[n]\rightarrow [n]$. Lemma~\ref{lem:struct} implies that this is possible from the projection of the vectors in $\cS$ onto the indices of $R$ (the implication itself will be proved later in Lemma~\ref{lem:approxguarantee}).} Before proceeding to Lemma~\ref{lem:struct}, we need the following definition and observation.

\begin{defi}
Given any sequence of vectors $\cS= \{\mathbf{X}_1, \ldots, \mathbf{X}_{t_1}\} \subseteq \{0,1\}^n$ and $j\in [n]$, we define the vector $C_j^\cS\in \{0,1\}^{t_1}$ as $$\mbox{for every } i\in [t_1],\ C_j^\cS (i) = \mathbf{X}_i(j).$$ 
For any $J\in \{0,1\}^{t_1}$, we define $$\alpha_J = \frac{|\{j\in [n]\mid C_j^{\mathcal{S}} = J\}|}{n}.$$
\end{defi}

Intuitively, let us consider a matrix $M$ of order $t_1\times n$ such that the $i$-th row vector corresponds to the vector $\mathbf{X}_i$. Then observe that $C_j^{\cS}$ represents the $j$-th column vector of the matrix $M$ and $\alpha_J$ denotes the fraction of column vectors of $M$ that are identical to $J$.



\begin{lem}[{\bf Structure preservation}]\label{lem:struct}
Let us consider the input distribution $D$ over $\{0,1\}^n$,  $\cS=\{\mathbf{X}_1,\ldots,\mathbf{X}_{t_1}\} $ and $R \subset [n]$ drawn in Step (i) and (iii) of \learn. Also, let us consider the values of $\Gamma_J$ found in Step (iii) of \aprox (called from Step (vii) of \learn). The set $R$ is said to be \emph{structure preserving} for $\cS$ if $\size{\alpha_J-\frac{\Gamma_J}{n}}\leq \frac{\delta}{10 \cdot 2^{t_1}}$ holds for every $J \in \{0,1\}^{t_1}$. Then the set $R$ chosen in Step (iii) of \learn is structure preserving for $\cS$ with probability at least $99/100$.
\end{lem}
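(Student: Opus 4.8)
\textbf{Proof plan for Lemma~\ref{lem:struct} (Structure preservation).}

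The plan is to analyze, for a fixed target pattern $J \in \{0,1\}^{t_1}$, the quantity $\gamma_J = |\{i \in R : \mathbf{C}_i = J\}|/\size{R}$ computed in Step (ii) of \aprox, and compare it to the true fraction $\alpha_J = |\{j \in [n] : C_j^{\cS} = J\}|/n$. Observe that $\mathbf{C}_i = J$ (for $i \in R$) happens exactly when the column $C_i^{\cS}$ of the matrix whose rows are $\mathbf{X}_1,\ldots,\mathbf{X}_{t_1}$ equals $J$; so $\gamma_J$ is precisely the empirical frequency, over the random sample $R$, of an index $j$ whose column type is $J$, and its expectation (over the random choice of $R$) is $\alpha_J$. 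Since $R$ is chosen as a subset of $[n]$ of size $\size{R} = \Oh\!\left(\frac{4^{t_1}}{\delta^2 \zeta}\log\frac{r}{\delta\zeta}\right)$, a standard concentration bound (Hoeffding's inequality, Lemma~\ref{lem:hoeffdingineq}, for sampling without replacement, or Observation~\ref{cl:distchernoff} applied appropriately) gives that $\size{\gamma_J - \alpha_J} \le \frac{\delta}{20 \cdot 2^{t_1}}$ with probability at least $1 - \frac{1}{100 \cdot 2^{t_1}}$; indeed $\size{R} = \Omega\!\left(\frac{4^{t_1}}{\delta^2}\log 2^{t_1}\right)$ is exactly what is needed for a deviation of order $\delta/2^{t_1}$ with failure probability $2^{-\Omega(t_1)}$ for a single $J$.

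Next I would handle the rounding step: Step (iii) of \aprox sets $\Gamma_J \in \{\lfloor \gamma_J \cdot n\rfloor, \lceil \gamma_J \cdot n\rceil\}$ via Observation~\ref{obs:integral}, so $\size{\Gamma_J - \gamma_J n} \le 1$, hence $\size{\frac{\Gamma_J}{n} - \gamma_J} \le \frac{1}{n}$. Assuming $n$ is large enough that $\frac{1}{n} \le \frac{\delta}{20 \cdot 2^{t_1}}$ (which holds in the regime of interest, since $t_1, \delta$ are fixed and $n$ is the growing parameter; otherwise the statement is vacuous or can be absorbed into constants), the triangle inequality gives $\size{\alpha_J - \frac{\Gamma_J}{n}} \le \size{\alpha_J - \gamma_J} + \size{\gamma_J - \frac{\Gamma_J}{n}} \le \frac{\delta}{20\cdot 2^{t_1}} + \frac{\delta}{20\cdot 2^{t_1}} = \frac{\delta}{10 \cdot 2^{t_1}}$, which is exactly the required bound. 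Finally, a union bound over all $2^{t_1}$ choices of $J \in \{0,1\}^{t_1}$ shows that the bound holds simultaneously for all $J$ with probability at least $1 - 2^{t_1}\cdot\frac{1}{100 \cdot 2^{t_1}} = 1 - \frac{1}{100} = \frac{99}{100}$, proving that $R$ is structure preserving for $\cS$ with the claimed probability.

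The main obstacle I anticipate is getting the concentration parameters to line up correctly: we need a per-pattern failure probability of roughly $2^{-t_1}/100$ (to survive the union bound over $2^{t_1}$ patterns) while simultaneously achieving an additive accuracy of $\delta/2^{t_1}$, and this is precisely why the sample size carries the $4^{t_1}/\delta^2$ factor — Hoeffding gives failure probability $\exp(-2\size{R}\varepsilon'^2)$, so with $\varepsilon' \asymp \delta/2^{t_1}$ we get failure probability $\exp(-\Omega(\size{R}\delta^2/4^{t_1}))$, and choosing $\size{R} \asymp \frac{4^{t_1}}{\delta^2}\log\frac{r}{\delta\zeta}$ (with the extra $\log(r/\delta\zeta)$ factor and absorbing $\frac{1}{\zeta}$) comfortably beats $2^{t_1}/100$. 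A secondary technical point is that $R$ is sampled without replacement, so one should invoke the without-replacement version of Hoeffding's inequality (which only improves concentration relative to the i.i.d.\ case), or equivalently use the same setup as in the proof of Lemma~\ref{lem:dist}. These are routine once the parameters are tracked carefully.
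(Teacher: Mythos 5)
Your proof is correct and takes essentially the same route as the paper's: a per-pattern Hoeffding bound for sampling without replacement (Lemma~\ref{lem:hoeffdingineq_without_replacement}) with deviation parameter $\delta/(20\cdot 2^{t_1})$, a union bound over the $2^{t_1}$ patterns $J$, the observation via Observation~\ref{obs:integral} that rounding adds at most $1/n$ to the error, and a triangle inequality under the assumption $n > 20\cdot 2^{t_1}/\delta$. The paper's own proof is, if anything, less explicit than yours about how the exponent $\exp(-\Omega(|R|\delta^2/4^{t_1}))$ must dominate the union-bound factor $2^{t_1}$.
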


\begin{proof}
Consider any particular $J\in \{0,1\}^{t_1}$ and $\gamma_J$ determined by Step (ii) of \aprox. By applying Hoeffding's bound for sampling without replacement (Lemma~\ref{lem:hoeffdingineq_without_replacement}), we obtain, for any $\eta >0$, $$\Pr\left[|\gamma_J - \alpha_J| \geq \frac{\eta}{20}\right] \leq e^{-2\eta^2|R|/400}.$$
By substituting the value of $|R|$ (for a suitable choice of the hidden coefficient) and $\eta=\frac{\delta}{2^{t_1}}$, and using the union bound over all possible $J \in \{0,1\}^{t_1}$, we conclude that with probability at least $99/100$,
for all $J \in \{0,1\}^{t_1}$, $|\gamma_J - \alpha_J|\leq \frac{\delta}{20 \cdot 2^{t_1}}$. 

Note that \aprox constructs $\Gamma_J$'s from $\gamma_j$'s by applying Observation~\ref{obs:integral}. From the way Observation~\ref{obs:integral} generates $\Gamma_J$'s from $\gamma_j$'s, we conclude that for all $J \in \{0,1\}^{t_1}$, $|\gamma_J- \frac{\Gamma_J}{n}| \leq \frac{1}{n}$, completing the proof, assuming that $n$ is larger than $\frac{20 \cdot 2^{t_1}}{\delta}$.
\end{proof}


Now we are ready to define the event {\sc GOOD}.
\begin{defi}[{\bf Definition of the event $\mbox{{\sc GOOD}}$}]\label{defi:eventgood}

Let us define an event $\mbox{{\sc GOOD}}$ as $\cE_1 \wedge \cE_2 \wedge \cE_3 \wedge \cE_4$, where 
\begin{enumerate}
    \item[(i)] $\cE_1:$ If $D$ is $(\zeta, \delta, r)$-clusterable with the clusters $\cC_1, \ldots, \cC_r$, then $\cS=\{\mathbf{X}_1,\ldots {\mathbf{X}_{t_1}}\}$ (found in Step (i) of \learn) contains at least one vector from every large cluster.
    
    \item[(ii)] $\cE_2:$ $R$ (picked in Step (ii) of \learn) is distance preserving\remove{ between $\cS$ and $\cT$}.
    
    \item[(iii)] $\cE_3:$ $R$ is structure preserving for $\cS$.
   
    \item[(iv)] $\cE_4$:  $\cT$ is weight preserving for $\cS$ and $\cC^{*}$,  where $\cC^*=\{\cC_1^*,\ldots,C_{t_1}^*\}$ is as defined in Observation~\ref{obs:crucial}. 
\end{enumerate}
\end{defi}

Note that the event $\cE_1$ follows from Lemma~\ref{lem:hit}, $\cE_2$ follows from Lemma~\ref{lem:dist}, $\cE_3$ follows from Lemma~\ref{lem:struct}, and $\cE_4$ follows from Lemma~\ref{lem:mass}. Thus, from the respective guarantees of the aforementioned lemmas, we can say that $\pr(\cE_1), \pr(\cE_2), \pr(\cE_3), \pr(\cE_4)\geq \frac{99}{100}$.  
{To address a subtle point, note that Lemma~\ref{lem:dist} gives a probability lower bound on $R$ being distance preserving for any choice of $\cT$, and hence the lower bound also holds for $\cT$ sampled according to the distribution. Similarly, Lemma~\ref{lem:mass} provides a probability lower bound on $\cT$ being weight representative for any choice of $R$ (which affects $C^{*}$) regardless of whether $R$ is distance preserving, and hence the lower bound also holds for the $R$ chosen at random by the algorithm. So, we have the following lemma.}

\begin{lem}\label{eqn:cevent}
    $\pr\left(\mbox{{\sc GOOD}}\right)\geq \frac{2}{3}.$ 
\end{lem}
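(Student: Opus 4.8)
The plan is to combine the four individual event probability bounds via a union bound, being careful about the subtle conditioning issues that arise because $\cT$ and $R$ are sampled jointly with the algorithm's other random choices. First I would recall that Lemma~\ref{lem:hit} gives $\pr(\cE_1) \geq \frac{99}{100}$, Lemma~\ref{lem:struct} gives $\pr(\cE_3) \geq \frac{99}{100}$, and that these two events depend only on $\cS$ and $R$. The events $\cE_2$ and $\cE_4$ are slightly more delicate: Lemma~\ref{lem:dist} shows that for \emph{every fixed} choice of $\cT$, the random set $R$ is distance preserving with probability at least $\frac{99}{100}$, so by averaging over the (independent) random draw of $\cT$ we still get $\pr(\cE_2) \geq \frac{99}{100}$; symmetrically, Lemma~\ref{lem:mass} shows that for \emph{every fixed} $R$ (which determines $\cC^*$ via the assignment procedure of Observation~\ref{obs:crucial}), the random draw of $\cT$ is weight preserving for $\cS$ and $\cC^*$ with probability at least $\frac{99}{100}$, so averaging over $R$ gives $\pr(\cE_4) \geq \frac{99}{100}$.

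Once all four marginal bounds $\pr(\cE_i) \geq \frac{99}{100}$ are in hand, the conclusion is immediate: by the union bound,
\[
\pr\left(\overline{\mbox{{\sc GOOD}}}\right) = \pr\left(\overline{\cE_1} \cup \overline{\cE_2} \cup \overline{\cE_3} \cup \overline{\cE_4}\right) \leq \sum_{i=1}^{4} \pr\left(\overline{\cE_i}\right) \leq \frac{4}{100} < \frac{1}{3},
\]
hence $\pr(\mbox{{\sc GOOD}}) \geq 1 - \frac{4}{100} \geq \frac{2}{3}$.

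The only genuine obstacle — and it is more a matter of careful bookkeeping than of real difficulty — is making sure the marginal bounds on $\cE_2$ and $\cE_4$ are legitimately applicable despite $\cT$ and $R$ being correlated through the algorithm's execution. The key point, already flagged in the paragraph preceding the lemma statement, is that Lemma~\ref{lem:dist} is proved as a ``for all $\cT$'' statement (it never uses any property of $\cT$ beyond it being a fixed collection of $t_2$ vectors), and likewise Lemma~\ref{lem:mass} is a ``for all $R$'' statement (it treats $\cC^*$ as an arbitrary fixed family of pairwise disjoint sets). Therefore I would phrase the argument using the law of total probability: $\pr(\cE_2) = \E_{\cT}\left[\pr(\cE_2 \mid \cT)\right] \geq \E_{\cT}\left[\frac{99}{100}\right] = \frac{99}{100}$, and symmetrically for $\cE_4$ conditioning on $R$. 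No independence between $\cE_2$ and $\cE_4$ is needed since we only use the union bound, not a product bound. This completes the proof of Lemma~\ref{eqn:cevent}.
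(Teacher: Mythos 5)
Your proof is correct and follows essentially the same approach as the paper: bound each $\pr(\cE_i)$ by $99/100$ using the corresponding lemma, handle the $R$--$\cT$ dependency by noting the ``for all $\cT$'' / ``for all $R$'' nature of Lemmas~\ref{lem:dist} and~\ref{lem:mass} (which the paper also flags explicitly just before the lemma statement), and finish with a union bound. The explicit law-of-total-probability phrasing is a cleaner way to spell out what the paper states informally, but it is the same argument.
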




\subsection{Proof of Theorem~\ref{theo:mainthm1} (Correctness of \learn)}\label{sec:lem-inetr}

In the first three lemmas below (Lemma~\ref{cl:concfar}, Lemma~\ref{lem:stepv} and Lemma~\ref{lem:approxguarantee}), we prove the correctness of the internal steps of the algorithm. These lemmas are stated under the conditional space that the event $\mbox{{\sc GOOD}}$ defined in Definition~\ref{defi:eventgood} occurs. Using these lemmas along with Lemma~\ref{lem:distbyvec}, which helps us combine them, allows us to prove Theorem~\ref{theo:mainthm1}.


\begin{lem}[{\bf Guarantee till Step (v) of \learn}]\label{cl:concfar}
Assume that the $\mbox{event {\sc GOOD}}$ holds.
\begin{enumerate}
    \item[(i)]  If $D$ is $(\zeta, \delta, r)$-clusterable, then $D$ is $(\delta,2\zeta)$-clustered around $\mathcal{S}$, and the fraction of samples in $\mathcal{T}_y$ that are not assigned to any vector in $\mathcal{S}_x$ will be at most $3\zeta$. That is, \learn does not output $\mbox{{\sc Fail}}$ in Step (v) and proceeds to Step (vi).
    
    \item[(ii)] If $D$ is not $(3\delta,5\zeta)$-clustered around $\mathcal{S}$, then the fraction of samples in $\mathcal{T}_y$ that are not assigned to any vector in $\mathcal{S}_x$ will be at least  $3\zeta$. That is, \learn outputs $\mbox{{\sc Fail}}$ and does not proceed to Step (vi).
\end{enumerate}
\end{lem}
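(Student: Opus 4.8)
The plan is to derive both parts from two facts that hold under $\mbox{{\sc GOOD}}$: the distance-preservation of $R$ (event $\cE_2$, from Lemma~\ref{lem:dist}(i)), which guarantees that a true Hamming distance and its $R$-projection differ by at most $\delta$, and the weight-representativeness of $\cT$ for $\cS$ (event $\cE_4$, from Lemma~\ref{lem:mass}(i)--(ii)), which guarantees that $\size{\cT\cap\mbox{{\sc NGB}}_{c\delta}(\cS)}/t_2$ approximates $D(\mbox{{\sc NGB}}_{c\delta}(\cS))$ up to an additive $\zeta$ for $c\in\{1,3\}$. Observation~\ref{obs:extra} then upgrades a lower bound on $D(\mbox{{\sc NGB}}_\eta(\cS))$ to the assertion that $D$ is clustered around $\cS$ at scale $\eta$.

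For Part (i), assume $D$ is $(\zeta,\delta,r)$-clusterable with clusters $\cC_0,\dots,\cC_s$, $s\le r$. By $\cE_1$ every \emph{large} cluster (mass at least $\zeta/10r$) contains a point of $\cS$, and since its diameter is at most $\delta$ it is contained in $\mbox{{\sc NGB}}_\delta(\cS)$; the part of the cube not covered by the large clusters consists of $\cC_0$ (mass at most $\zeta$) together with at most $r$ small clusters (total mass below $\zeta/10$), so $D(\mbox{{\sc NGB}}_\delta(\cS))\ge 1-\tfrac{11}{10}\zeta\ge 1-2\zeta$, and Observation~\ref{obs:extra} gives that $D$ is $(\delta,2\zeta)$-clustered around $\cS$. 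Next I would check that any sample $\mathbf{Y}_j$ with $\mathbf{Y}_j\in\mbox{{\sc NGB}}_\delta(\cS)$ is assigned in Step (iv): picking $\mathbf{X}_i\in\cS$ with $d_H(\mathbf{Y}_j,\mathbf{X}_i)\le\delta$, event $\cE_2$ gives $d_H(\mathbf{y}_j,\mathbf{x}_i)\le d_H(\mathbf{Y}_j,\mathbf{X}_i)+\delta\le 2\delta$. Thus the unassigned vectors of $\cT_y$ sit among the samples lying outside $\mbox{{\sc NGB}}_\delta(\cS)$, whose count is $t_2-\size{\cT\cap\mbox{{\sc NGB}}_\delta(\cS)}\le\tfrac{21}{10}\zeta\, t_2$ by $\cE_4$ (Lemma~\ref{lem:mass}(i)); since $\tfrac{21}{10}\zeta<3\zeta$, Step (v) does not output {\sc Fail} and \learn proceeds to Step (vi).

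For Part (ii), I would argue by contrapositive: suppose at most $3\zeta t_2$ vectors of $\cT_y$ are unassigned, so at least $(1-3\zeta)t_2$ are assigned. If $\mathbf{y}_j$ is assigned to $\mathbf{x}_i$ then $d_H(\mathbf{y}_j,\mathbf{x}_i)\le 2\delta$, whence by $\cE_2$, $d_H(\mathbf{Y}_j,\mathbf{X}_i)\le d_H(\mathbf{y}_j,\mathbf{x}_i)+\delta\le 3\delta$, so $\mathbf{Y}_j\in\mbox{{\sc NGB}}_{3\delta}(\cS)$. Hence $\size{\cT\cap\mbox{{\sc NGB}}_{3\delta}(\cS)}\ge(1-3\zeta)t_2$, and $\cE_4$ (Lemma~\ref{lem:mass}(ii)) gives $D(\mbox{{\sc NGB}}_{3\delta}(\cS))\ge 1-4\zeta\ge 1-5\zeta$, so Observation~\ref{obs:extra} yields that $D$ is $(3\delta,5\zeta)$-clustered around $\cS$, contradicting the hypothesis. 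Therefore more than $3\zeta t_2$ vectors are unassigned, so \learn outputs {\sc Fail} in Step (v) and does not proceed to Step (vi).

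I do not expect a genuine obstacle once the {\sc GOOD}-event lemmas and Observation~\ref{obs:extra} are available; the points needing care are applying the two-sided bound of $\cE_2$ in the correct direction in each half --- ``projected $\le$ true $+\,\delta$'' to certify that a sample inside $\mbox{{\sc NGB}}_\delta(\cS)$ gets assigned (Part (i)), and ``true $\le$ projected $+\,\delta$'' to certify that an assigned sample lies in $\mbox{{\sc NGB}}_{3\delta}(\cS)$ (Part (ii)) --- and checking that the accumulated $O(\zeta)$ slacks land on the right side of the $3\zeta$ threshold, which the stated constants comfortably allow. Note that event $\cE_3$ (structure preservation) is not needed here; it is used only for Steps (vi)--(viii).
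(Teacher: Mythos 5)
Your proposal is correct and follows essentially the same route as the paper's proof: $\cE_1$ establishes the $(\delta,2\zeta)$-clustering of $D$ around $\cS$, $\cE_2$ converts membership in a neighborhood of $\cS$ into an assignment (and conversely) at the right scale, and $\cE_4$ (via Lemma~\ref{lem:mass}) transfers the mass bounds to the empirical counts in $\cT_y$; the only difference is cosmetic, namely that you argue Part~(ii) via its contrapositive while the paper argues it directly, and your Part~(i) slack accounting ($\tfrac{11}{10}\zeta + \zeta < 3\zeta$) is a bit more explicit than the paper's.
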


\begin{proof}



\begin{enumerate}
    \item[(i)] 
    
    
    For the first part, as $\cE_1$ holds (see Lemma~\ref{lem:hit}), the set $\cS$ contains at least one vector from every large cluster. Now, if we consider the $\delta$-neighborhood of $\cS$, that is, $\mbox{\sc NGB}_{\delta}(\cS)$, we infer  that all vectors in large clusters are in $\mbox{\sc NGB}_{\delta}(\cS)$. By the definition of a large cluster, the mass on the vectors that are not in any large cluster is at most $2 \zeta$. Hence, we conclude that $D(\mbox{\sc NGB}_{\delta}(\cS))\geq (1 - 2 \zeta)$. Thus, by Observation~\ref{obs:extra}, $D$ is $(\delta, 2\zeta)$-clustered around $\cS$. For the second part, as the event $\cE_4$ holds (see Lemma~\ref{lem:mass}(i)),  $\cT$ is weight representative for $\cS$. This follows since $D$ is $(\delta, 2\zeta)$-clustered, and in particular is $(3\delta, 5\zeta)$-clustered around $\cS$. Thus, $\frac{\size{\cT \cap \mbox{{\sc NGB}}_{\delta}(\cS) }}{\size{\cT}}  \geq D(\mbox{{\sc NGB}}_{ \delta}(\cS))-{\zeta}$. Also, as the event $\cE_2$ holds (see Lemma~\ref{lem:dist}), $R$ is distance preserving between $\cS$ and $\cT$, meaning that if $\mathbf{Y}_i$ in $\cC_j$, then $\mathbf{y}_i$ is assigned to $\mathbf{x}_j$. Hence, $$\sum_{i=1}^{t_1}w_i \geq  \frac{\size{\cT \cap \mbox{{\sc NGB}}_{\delta}(\cS)}}{\size{\cT}} \geq D(\mbox{{\sc NGB}}_{\delta}(\cS))-\zeta \geq 1-3\zeta.$$ That is, $w_0\leq 3\zeta$, and the algorithm \learn does not report {\sc Fail} and proceeds to Step (vi).

\item[(ii)] Since the distribution $D$ is not $(3\delta, 5\zeta)$-clustered around $\cS$, by Observation~\ref{obs:extra}, $D(\mbox{{\sc NGB}}_{3\delta}(\cS)) < 1-5 \zeta$. As the event $\cE_4$ holds (see Lemma~\ref{lem:mass} (ii)), $\frac{\size{\cT \cap \mbox{{\sc NGB}}_{3\delta}(\cS) }}{\size{\cT}}  \leq D(\mbox{{\sc NGB}}_{ 3 \delta}(\cS))+{\zeta} \leq 1-4\zeta$. Also, as the event $\cE_2$ holds (see Lemma~\ref{lem:dist}), $R$ is distance preserving between $\cS$ and $\cT$. This implies that $$\sum_{i=1}^{t_1}w_i \leq  \frac{\size{\cT \cap \mbox{{\sc NGB}}_{3\delta}(\cS)}}{\size{\cT}} \leq D(\mbox{{\sc NGB}}_{3\delta}(\cS))+\zeta < 1-3 \zeta .$$ That is, $w_0 > 3 \zeta$, and the algorithm \learn reports {\sc Fail}. So, \learn does not proceed to Step (vi).
\end{enumerate}
\end{proof}

\begin{lem}[{\bf Guarantee from Step (vi) of \learn}]\label{lem:stepv} 
Assume that the $\mbox{event {\sc GOOD}}$ holds. Also, assume that $D$ is $(3\delta,5\zeta)$-clustered around $\mathcal{S}$ and $w_0 \leq 3 \zeta$ holds in Step (vi) of \learn. Consider the following distribution $D''$ over $\{0,1\}^n$, constructed from the weights obtained from Step (vi) of \learn, such that 
\begin{enumerate}
    \item[(i)] For each $i \in [t_1]$,  $D''(\mathbf{X}_i) = w(\mathbf{x}_i)=w_i$.
    
    \item[(ii)] $D''(\mathbf{X}_0) = 1 - \sum\limits_{i=1}^{t_1} w({\mathbf{x}_i})$ for some arbitrary $\mathbf{X}_0$.
    
    \item[(iii)] $D''(\mathbf{X})=0$ for every $\mathbf{X} \in \{0,1\}^n \setminus \{\mathbf{X}_0, \ldots, \mathbf{X}_{t_1}\}$.
\end{enumerate}

Then $D''$ is $(5\delta,5 \zeta)$-clustered around $\cS$ with weights $w_0, \ldots, w_{t_1}$, where $w_0=1-\sum\limits_{i=1}^{t_1}w_i$, and the EMD between $D$ and $D''$ satisfies $d_{EM}(D, D'') \leq {10 \delta + 12 \zeta}$. 
\end{lem}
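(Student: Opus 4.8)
The plan is to analyze the distribution $D''$ in two halves, matching the two conclusions: first that $D''$ is $(5\delta, 5\zeta)$-clustered around $\cS$ with the weights $w_0, \ldots, w_{t_1}$, and then that $d_{EM}(D, D'') \le 10\delta + 12\zeta$. For the first conclusion, I would unpack Definition~\ref{def:concdist}(i): I need to exhibit pairwise disjoint sets $\cC_i' \subseteq \mbox{{\sc NGB}}_{5\delta}(\mathbf{X}_i)$ with $D''(\cC_i') \ge w_i$. Since $D''$ is supported on $\{\mathbf{X}_0, \mathbf{X}_1, \ldots, \mathbf{X}_{t_1}\}$ with $D''(\mathbf{X}_i) = w_i$, the natural candidate is $\cC_i' = \{\mathbf{X}_i\}$, which trivially lies in $\mbox{{\sc NGB}}_{5\delta}(\mathbf{X}_i)$ (indeed in $\mbox{{\sc NGB}}_{0}$), is disjoint across $i$ (as long as the $\mathbf{X}_i$ are distinct; if some coincide one can merge their weights harmlessly), and has mass exactly $w_i$. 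The constraint $w_0 = 1 - \sum_{i=1}^{t_1} w_i \le 3\zeta \le 5\zeta$ holds by hypothesis, so the weight-budget condition of the definition is met. This half is essentially bookkeeping.

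The substantive half is the EMD bound $d_{EM}(D, D'') \le 10\delta + 12\zeta$. Here I would build an explicit transportation plan (a feasible flow $f$ for the EMD linear program) from $D$ to $D''$ and bound its cost. The idea: since the event {\sc GOOD} holds and $D$ is $(3\delta, 5\zeta)$-clustered around $\cS$, there exist pairwise disjoint sets $\cC_i^{**} = \cC_i^* \cap \mbox{{\sc NGB}}_{3\delta}(\mathbf{X}_i)$ (as in Observation~\ref{obs:crucial-2}) with $D(\cC_i^{**}) \ge w_i - \tfrac{2\zeta}{t_1}$, using that $w_i \le D(\cC_i^{**}) + \tfrac{2\zeta}{t_1}$ from that observation. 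For each $i$, I would route mass $\min\{w_i, D(\cC_i^{**})\}$ from within $\cC_i^{**}$ to the point $\mathbf{X}_i$; each unit of this mass travels distance at most $3\delta$ (since every vector in $\cC_i^{**}$ is within $3\delta$ of $\mathbf{X}_i$), contributing at most $3\delta$ to the cost. The leftover mass — both the parts of $D$ not covered by any $\cC_i^{**}$ (total mass at most $1 - \sum_i (w_i - \tfrac{2\zeta}{t_1}) = w_0 + 2\zeta \le 5\zeta$, using $w_0 \le 3\zeta$) and the deficit at each $\mathbf{X}_i$ where $D(\cC_i^{**}) < w_i$ — gets matched arbitrarily, at cost at most $1$ per unit, contributing at most $\approx 2 \cdot (w_0 + 2\zeta)$ or so to the total; I would need to be a little careful to double-count correctly, but the "residual mass" on each side is $O(\zeta)$, giving an $O(\zeta)$ contribution. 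Adding up, cost $\le 3\delta + (\text{lower-order }\delta\text{ terms}) + O(\zeta)$; reconciling the constants to land at exactly $10\delta + 12\zeta$ is where I'd spend the arithmetic, leaving slack (as the footnote about the constant $17$ suggests the authors aren't optimizing).

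I expect the main obstacle to be the careful accounting of the residual/unmatched mass in the transportation plan — in particular, making sure the disjointness of the $\cC_i^{**}$ is genuinely used so that masses don't get double-allocated, and handling the possibility that $D(\cC_i^{**})$ exceeds $w_i$ (in which case only part of $\cC_i^{**}$ is routed to $\mathbf{X}_i$ and the rest must be shipped elsewhere). A clean way to organize this is to invoke a helper lemma of the form "if $D$ is $(\eta,\xi)$-clustered around $\cS$ with weights $w_i$ and $D'$ is a distribution supported on $\cS$ with $D'(\mathbf{S}_i) = w_i$, then $d_{EM}(D,D') \le \eta + 2\xi$" (this looks like it is exactly what Lemma~\ref{lem:distbyvec}, referenced in the surrounding text, is meant to provide). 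Given such a lemma, applied with $\eta = 3\delta$ and $\xi \approx 5\zeta$ together with the slightly larger radius $5\delta$ and weight budget $5\zeta$ in the conclusion, the EMD bound follows almost immediately, and the proof reduces to verifying the hypotheses of that helper lemma — which is precisely the clustered-ness of $D$ around $\cS$ guaranteed by {\sc GOOD} plus the bound $w_0 \le 3\zeta$.
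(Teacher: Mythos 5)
Your first half is exactly the paper's Claim~\ref{cl:distD}: $D''$ is $(5\delta,5\zeta)$-clustered around $\cS$ with weights $w_0,\dots,w_{t_1}$ by taking $\cC_i=\{\mathbf{X}_i\}$ and using $w_0\le 3\zeta<5\zeta$; nothing to add there. Your direct transportation plan for the EMD bound is also sound: routing $\min\{w_i, D(\cC_i^{**})\}$ into $\mathbf{X}_i$ at cost $\le 3\delta$ per unit and shipping the residual (mass $\le 5\zeta$ by Observation~\ref{obs:crucial-2}) at cost $\le 1$ does give a valid, in fact tighter, bound of $3\delta+5\zeta\le 10\delta+12\zeta$.

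The gap is in the ``clean way to organize'' you sketch at the end. You propose to invoke a helper lemma requiring that \emph{$D$ is $(\eta,\xi)$-clustered around $\cS$ with the algorithm's weights $w_i$}. That hypothesis is not available: $w_i$ is the empirical fraction from $\cT$, and Observation~\ref{obs:crucial-2} gives only the one-sided estimate $w_i\le D(\cC_i^{**})+2\zeta/t_1$, so $D(\cC_i^{**})$ may genuinely be smaller than $w_i$. Consequently no family of disjoint sets $\cC_i\subseteq\mbox{{\sc NGB}}_{\eta}(\mathbf{X}_i)$ need satisfy $D(\cC_i)\ge w_i$, which is what the definition of $(\eta,\xi)$-clustered with weights $w_i$ demands. (You also misremember Lemma~\ref{lem:distbyvec}: it gives $\eta+\xi+\kappa$, not $\eta+2\xi$, and it too requires the source distribution to be clustered with exactly the target weights.) The paper's fix is precisely the extra scaffolding you'd need: Claim~\ref{cl:existw} constructs a \emph{different} weight vector $w_i'=\max\{w_i-2\zeta/t_1,\,0\}$ for which $D$ genuinely is $(5\delta,5\zeta)$-clustered around $\cS$ (since $w_i'\le D(\cC_i^{**})$), together with the control $\sum_i|w_i-w_i'|\le 2\zeta$; then Lemma~\ref{lem:comparebywt}, which compares two distributions clustered around the same $\cS$ with \emph{different} weight vectors and explicitly charges a $\sum_i|v_i-w_i|$ term, is applied with $\eta=5\delta$, $\xi=5\zeta$ to yield $10\delta+2\zeta+10\zeta=10\delta+12\zeta$. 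Your inline transportation argument handles this weight mismatch correctly (via $\min\{w_i,D(\cC_i^{**})\}$), so it effectively re-derives Lemma~\ref{lem:comparebywt}; the modular invocation you suggest instead would not go through with the lemma in the form you stated.
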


We will prove Lemma~\ref{lem:stepv} in Subsection~\ref{sec:pf-inter}. Now we proceed to prove the guarantee regarding Step (vii) of \learn.

\begin{lem}[{\bf Guarantee from Step (vii) of \learn}]
\label{lem:approxguarantee}
Assume that the $\mbox{event {\sc GOOD}}$ holds. Then, in Step (vii),
the algorithm \aprox (if called as described in Algorithm~\ref{alg:approx}) outputs a sequence of vectors $\{\mathbf{S}_1, \ldots, \mathbf{S}_{t_1}\}$ in $\{0,1\}^n$, such that there exists a permutation $\sigma:[n] \rightarrow [n]$ for which $d_H(\sigma(\mathbf{X}_i), \mathbf{S}_i) \leq \frac{\delta}{10}$ holds for every $i \in [t_1]$.
\end{lem}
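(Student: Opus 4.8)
\textbf{Proof plan for Lemma~\ref{lem:approxguarantee}.}

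The plan is to unpack what \aprox does and match each piece against the event $\mbox{{\sc GOOD}}$, specifically the structure-preservation event $\cE_3$. Recall that \aprox receives the projected vectors $\mathbf{x}_1,\ldots,\mathbf{x}_{t_1} = \mathbf{X}_1\mid_R,\ldots,\mathbf{X}_{t_1}\mid_R$, forms for each $i \in R$ the column vector $\mathbf{C}_i \in \{0,1\}^{t_1}$ with $\mathbf{C}_i(j) = \mathbf{x}_j(i)$, computes the empirical column-type frequencies $\gamma_J = |\{i \in R : \mathbf{C}_i = J\}|/|R|$ for each $J \in \{0,1\}^{t_1}$, rounds them via Observation~\ref{obs:integral} to integers $\Gamma_J$ summing to $n$, and builds an $t_1 \times n$ matrix $A$ containing exactly $\Gamma_J$ copies of each column type $J$; the output $\mathbf{S}_1,\ldots,\mathbf{S}_{t_1}$ are the rows of $A$. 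On the other side, the true matrix $M$ with rows $\mathbf{X}_1,\ldots,\mathbf{X}_{t_1}$ has column-type multiplicities $\alpha_J \cdot n = |\{j \in [n] : C_j^{\mathcal S} = J\}|$.

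The key observation is that two $0/1$ matrices of the same dimension with the same multiset of column vectors are related by a column permutation, hence their corresponding rows are at Hamming distance zero. So first I would define $\sigma$ explicitly: since $\sum_J \Gamma_J = n = \sum_J \alpha_J n$, the matrix $A$ and the matrix $M$ differ only in that some column types $J$ appear with multiplicity $\Gamma_J$ in $A$ versus $\alpha_J n$ in $M$. Pick, for each $J$, the smaller of the two multiplicities $\min(\Gamma_J, \alpha_J n)$ columns of type $J$ and pair them up between $A$ and $M$ arbitrarily; this defines a partial matching on columns, and extend it to a full permutation $\sigma : [n] \to [n]$ of the column indices arbitrarily on the unmatched columns. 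For every matched column index $j$, the corresponding columns of $A$ and $M$ are identical, so they contribute nothing to any row's Hamming distance. The number of \emph{unmatched} column indices is exactly $\sum_J |\Gamma_J - \alpha_J n|$, and each such index contributes at most $1$ to $\delta_H(\sigma(\mathbf{X}_i), \mathbf{S}_i)$ for each row $i$. Therefore, for every $i \in [t_1]$,
\[
d_H(\sigma(\mathbf{X}_i), \mathbf{S}_i) \le \frac{1}{n}\sum_{J \in \{0,1\}^{t_1}} \bigl|\Gamma_J - \alpha_J n\bigr| = \sum_{J \in \{0,1\}^{t_1}} \Bigl|\frac{\Gamma_J}{n} - \alpha_J\Bigr|.
\]

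Now I would invoke $\cE_3$ (structure preservation for $\cS$, which holds because $\mbox{{\sc GOOD}}$ holds): it gives $|\alpha_J - \Gamma_J/n| \le \frac{\delta}{10 \cdot 2^{t_1}}$ for every $J \in \{0,1\}^{t_1}$. Summing over the $2^{t_1}$ possible values of $J$ yields $\sum_J |\Gamma_J/n - \alpha_J| \le 2^{t_1} \cdot \frac{\delta}{10 \cdot 2^{t_1}} = \frac{\delta}{10}$, which is exactly the claimed bound $d_H(\sigma(\mathbf{X}_i), \mathbf{S}_i) \le \frac{\delta}{10}$ for all $i \in [t_1]$ simultaneously, with the same $\sigma$. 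I do not anticipate a serious obstacle here; the only point requiring a little care is making the column-matching argument precise — in particular checking that a single permutation $\sigma$ works for all rows at once (which it does, since $\sigma$ acts on column indices and is row-independent) and that the rounding bound from Observation~\ref{obs:integral} is already absorbed into the statement of $\cE_3$ (indeed Lemma~\ref{lem:struct} folds the $|\gamma_J - \Gamma_J/n| \le 1/n$ rounding error into its final bound, assuming $n$ is large enough). One might also want to state at the outset that $n$ is assumed large enough for the hidden constants, consistent with the running assumption in Lemma~\ref{lem:struct}.
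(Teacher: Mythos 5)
Your proof takes essentially the same route as the paper: build the $t_1 \times n$ matrices $M$ (rows $\mathbf{X}_1,\ldots,\mathbf{X}_{t_1}$) and $A$ (rows $\mathbf{S}_1,\ldots,\mathbf{S}_{t_1}$), invoke $\cE_3$ to control $|\alpha_J - \Gamma_J/n|$ for each column type $J$, construct a column permutation $\sigma$ by matching columns of the same type across the two matrices, and bound each row's Hamming distance by the number of unmatched columns; the paper does exactly this, just more tersely by asserting a ``bad set $I$'' of size at most $\delta n/10$. One small arithmetic slip: you claim the number of unmatched columns is \emph{exactly} $\sum_J |\Gamma_J - \alpha_J n|$, whereas (since $\sum_J \Gamma_J = \sum_J \alpha_J n = n$) the excesses and deficits cancel and the true count is $\frac{1}{2}\sum_J |\Gamma_J - \alpha_J n|$ — but since you only need an upper bound, your chain of inequalities and the final bound of $\delta/10$ remain valid.
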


\begin{proof}


Here we assume that the event $\mbox{{\sc GOOD}}$ holds. In particular, we assume that the event $\cE_3$ holds.

Let us consider a matrix $M$ of order $t_1\times n$ such that the $i$-th row vector corresponds to the vector $\mathbf{X}_i$. Then observe that $C_j^{\cS}$ represents the $j$-th column vector of matrix $M$ and $\alpha_J$ denotes the fraction of column vectors of $M$ that are identical to the vector $J$.


Let us consider the matrix $A$ of order $t_1\times n$ constructed by our algorithm,  by putting $\Gamma_J$ many column vectors identical to $J$, for every $J \in \{0,1\}^{t_1}$. Note that  $\{\mathbf{S}_1,\ldots,\mathbf{S}_{t_1}\}$ are the row vectors corresponding to $A$. As we are assuming that the event $\cE_3$ holds (see Lemma~\ref{lem:struct}),  $|\alpha_J - \frac{\Gamma_J}{n}| \leq \frac{\delta n}{10 \cdot 2^{t_1}}$ holds for every $J \in \{0,1\}^{t_1}$. Observe that we can permute the columns of the matrix $M$ using a permutation $\sigma: [n] \rightarrow [n]$ and create a matrix $M_\sigma$, such that there exists a bad set $I\subset[n]$ of size at most $\frac{\delta \cdot n}{10}$, where after the removal of the columns corresponding to indices of $I$ from both matrices $M_\sigma$ and $A$ become identical. Hence, we infer that $d_H(\sigma(\mathbf{X}_i), \mathbf{S}_i) \leq \frac{\delta}{10}$ for every $i \in [t_1]$, where $\sigma$ is the permutation corresponding to $M_\sigma$. This completes the proof of Lemma~\ref{lem:approxguarantee}.
\end{proof}

Finally, to prove Theorem~\ref{theo:mainthm1}, we need to show that the Earth Mover Distance between two distributions defined over close vectors is bounded when one distribution is clustered around a sequence of vectors and the other distribution has similar weights compared to the first distribution.  

\begin{lem}[{\bf EMD between distributions having close cluster centers}]\label{lem:distbyvec}
Let $\eta, \kappa, \xi \in (0,1)$ be three parameters such that $\eta+ \kappa+ \xi <1$. Suppose that $\cS= \{\mathbf{X}_1, \ldots, \mathbf{X}_{t_1}\}$ and $\cS'=\{\mathbf{X}'_1,\ldots,\mathbf{X}'_{t_1}\}$ are two sequences of vectors over $\{0,1\}^n$ such that $d_H(\mathbf{X}_i,\mathbf{X}'_i) \leq \kappa$ for every $i \in [t_1]$. Moreover, let $D$ be an $(\eta,\xi)$-clustered distribution around $\cS$ with weights $w_0, \ldots, w_{t_1}$ and $D'$ be another distribution such that $D'(\mathbf{X}'_i) \geq w_i$ for every $i \in [t_1]$. Then $d_{EM}(D, D') \leq \eta+\xi+\kappa$.


\end{lem}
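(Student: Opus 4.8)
The plan is to build an explicit feasible transportation plan for the linear program defining $d_{EM}(D,D')$ and to bound its cost by $\eta+\kappa+\xi$; since $d_{EM}(D,D')$ is the minimum cost over all feasible plans, this proves the lemma. Unfolding Definition~\ref{def:concdist}, the hypothesis that $D$ is $(\eta,\xi)$-clustered around $\cS$ with weights $w_0,\ldots,w_{t_1}$ gives pairwise disjoint sets $\cC_1,\ldots,\cC_{t_1}\subseteq\{0,1\}^n$ with $\cC_i\subseteq\mbox{{\sc NGB}}_{\eta}(\mathbf{X}_i)$ and $D(\cC_i)\geq w_i$ for every $i\in[t_1]$, together with $\sum_{i=0}^{t_1}w_i=1$ and $w_0\leq\xi$ (so $\sum_{i=1}^{t_1}w_i=1-w_0\geq 1-\xi$). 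The only geometric input is the triangle inequality for $d_H$: for each $i\in[t_1]$ and each $\mathbf{Z}\in\cC_i$ we have $d_H(\mathbf{Z},\mathbf{X}'_i)\leq d_H(\mathbf{Z},\mathbf{X}_i)+d_H(\mathbf{X}_i,\mathbf{X}'_i)\leq\eta+\kappa$.

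Next I would route, for each $i\in[t_1]$, exactly $w_i$ units of mass from $\cC_i$ straight to $\mathbf{X}'_i$: since $D(\cC_i)\geq w_i$ there is a sub-assignment $g_i:\cC_i\to[0,1]$ with $g_i\leq D$ pointwise on $\cC_i$ and total mass $w_i$, and we send $g_i(\mathbf{Z})$ mass from $\mathbf{Z}$ to $\mathbf{X}'_i$. Disjointness of the $\cC_i$ makes the total outflow from any vector at most $D$, so this partial plan is consistent on the source side, while on the target side it delivers exactly $w_i$ to $\mathbf{X}'_i$, which stays within the capacity $D'(\mathbf{X}'_i)\geq w_i$. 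After this step, an amount $1-\sum_{i=1}^{t_1}w_i=w_0\leq\xi$ of source mass is still unrouted, and exactly the same amount of target mass is still undelivered; I would route the former onto the latter in any feasible way, costing at most $1$ per unit since $d_H$ is always at most $1$. Concatenating the two parts yields a feasible plan $f$, which one may think of as first moving mass $1-w_0$ a distance $\leq\eta$ (onto the $\mathbf{X}_i$) then $\leq\kappa$ (onto the $\mathbf{X}'_i$), and moving the leftover mass $\leq\xi$ freely.

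The cost of $f$ is at most $\sum_{i=1}^{t_1}w_i(\eta+\kappa)+w_0\cdot 1=(1-w_0)(\eta+\kappa)+w_0\leq(\eta+\kappa)+\xi$, where we used the triangle-inequality bound above, $(1-w_0)(\eta+\kappa)\leq\eta+\kappa$, and $w_0\leq\xi$; hence $d_{EM}(D,D')\leq\eta+\kappa+\xi$. The only things to check are bookkeeping: that the $g_i$'s together with the arbitrary residual routing really satisfy both marginal constraints of the EMD LP, which is immediate from disjointness of the $\cC_i$ and from $D'(\mathbf{X}'_i)\geq w_i$ (if two of the $\mathbf{X}'_i$ coincide one simply sums the corresponding $w_i$'s, which is exactly what the hypothesis provides in all of our applications). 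I expect no real obstacle — the entire content is this split of the probability mass into a part transported within distance $\eta+\kappa$ and a leftover part of total mass at most $\xi$ transported freely.
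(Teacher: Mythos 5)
Your proof is correct and follows essentially the same route as the paper's: both construct an explicit feasible transportation plan that routes $w_i$ units of mass from $\cC_i$ to $\mathbf{X}'_i$ at per-unit cost at most $\eta+\kappa$ (via the triangle inequality), and route the residual $w_0\leq\xi$ mass arbitrarily at cost at most $1$, giving the bound $\eta+\kappa+\xi$. Your write-up is slightly more careful about the sub-assignment $g_i$ and the case of coinciding $\mathbf{X}'_i$, but the argument is the same.
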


\begin{proof}


Recall that the  EMD between $D$ and $D'$ is the solution to the following LP:
\begin{eqnarray*}
    &&\mbox{Minimize}\quad{\sum_{\mathbf{X},\mathbf{Y} \in \{0,1\}^n} f_{\mathbf{X}\mathbf{Y}} d_H(\mathbf{X},\mathbf{Y})}\\
    &&\mbox{Subject to}\quad\sum\limits_{\mathbf{Y} \in \{0,1\}^n} f_{\mathbf{X}\mathbf{Y}} = D(\mathbf{X})~\forall \mathbf{X} \in \{0,1\}^n \; \mbox{,} \;
    \sum\limits_{\mathbf{X} \in \{0,1\}^n} f_{\mathbf{X}\mathbf{Y}} = D'(\mathbf{Y})~\forall \mathbf{Y} \in \{0,1\}^n \\
    &&\mbox{and} ~~~~~~~~~~~~\quad0 \leq f_{\mathbf{X}\mathbf{Y}} \leq 1, ~~~~~\forall \, \mathbf{X}, \mathbf{Y} \in \{0,1\}^n.
\end{eqnarray*}

Here $D$ is $(\eta, \xi)$-clustered around $\cS$. Let $\cC_1,\ldots,\cC_{t_1}$ be the pairwise disjoint subsets of $\{0,1\}^n$ such that $\cC_i \subseteq \mbox{{\sc NGB}}_\eta(\mathbf{X}_i)$  and $D(\cC_i)\geq w_i$ for every $i \in [t_1]$. 

Consider a particular solution $\{f^*_{\bf X Y}:{\bf X ,Y}\in \{0,1\}^n\}$ that also satisfies the constraint 

\begin{center}
$\sum\limits_{{\bf X} \in \cC_i} f_{{\bf X} {\bf X}'_i} \geq  w_i$ for every $i \in [t_1]$.
\end{center}
  

The above constraint is feasible as  $D(\cC_i)\geq w_i$ and $D'({\bf X}_i')\geq w_i$, where $i \in [t_1]$. 

Now, 
\begin{eqnarray*}
\mbox{EMD}(D,D') &\leq& \sum\limits_{{\bf X},{\bf Y} \in \{0,1\}^n} f^*_{{\bf X Y}}d_H({\bf X},{\bf Y}) \\
&\leq& \sum\limits_{i=1}^{t_1} \sum\limits_{{\bf X} \in \cC_i } f^*_{{\bf X} {\bf X}_i'}d_H({\bf X},{\bf X}_i') + \sum\limits_{{\bf X} \notin \bigcup\limits_{i=1}^{t_1} \cC_i, {\bf Y} \in \{0,1\}^n} f^* _{{\bf X} {\bf Y}}d_H({\bf X},{\bf Y})\\
&\leq& \sum\limits_{i=1}^{t_1}w_i \cdot (\eta +\kappa) + w_0 \cdot 1\\ 
 &\leq& \eta + \kappa + \xi.
\end{eqnarray*}
\end{proof}

\subsection*{Proof of Theorem~\ref{theo:mainthm1} }\label{sec:main-pf}
To prove Theorem~\ref{theo:mainthm1}, we need the following lemma.
\begin{lem}\label{cl:distsigma}
If $D$ is $(3\delta,5\zeta)$-clustered around $S$, and \learn executes Step (vi), then $d_{EM}(D,D'_\sigma) \leq 17(\delta + \zeta)$ for some permutation $\sigma: [n] \rightarrow [n]$.
\end{lem}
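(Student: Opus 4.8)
The plan is to chain together the three structural lemmas proved under the event $\mbox{{\sc GOOD}}$, using Lemma~\ref{lem:distbyvec} as the glue. Assume $\mbox{{\sc GOOD}}$ holds and that $D$ is $(3\delta,5\zeta)$-clustered around $\cS$, so that by Lemma~\ref{cl:concfar}(i)--(ii) the algorithm indeed reaches Step (vi) and computes the weights $w_1,\ldots,w_{t_1}$ with $w_0 = 1-\sum_i w_i \leq 3\zeta$. First I would invoke Lemma~\ref{lem:stepv}: the auxiliary distribution $D''$ supported on (approximately) $\cS$ with $D''(\mathbf{X}_i)=w_i$ is $(5\delta,5\zeta)$-clustered around $\cS$ with weights $w_0,\ldots,w_{t_1}$, and $d_{EM}(D,D'')\leq 10\delta+12\zeta$.

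Next I would bring in Lemma~\ref{lem:approxguarantee}: the vectors $\mathbf{S}_1,\ldots,\mathbf{S}_{t_1}$ returned by \aprox in Step (vii) satisfy $d_H(\sigma(\mathbf{X}_i),\mathbf{S}_i)\leq \delta/10$ for every $i\in[t_1]$, for some permutation $\sigma:[n]\to[n]$. Equivalently, writing $\mathbf{X}'_i = \sigma^{-1}(\mathbf{S}_i)$, we get $d_H(\mathbf{X}_i,\mathbf{X}'_i)\leq \delta/10$, and the output distribution $D'$ constructed in Step (viii) has $D'(\mathbf{S}_i)=D''(\mathbf{X}_i)=w_i$, hence $D'_{\sigma^{-1}}(\mathbf{X}'_i)=w_i \geq w_i$ (or one uses $\sigma$ directly with $D'_\sigma$ supported on the $\sigma^{-1}(\mathbf{S}_i)$; I'll be careful about which direction of permutation the statement wants, matching the convention of Definition of $D_\sigma$). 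Then apply Lemma~\ref{lem:distbyvec} with $\cS = \{\mathbf{X}_1,\ldots,\mathbf{X}_{t_1}\}$, $\cS'=\{\mathbf{X}'_1,\ldots,\mathbf{X}'_{t_1}\}$, the clustered distribution $D''$ (which plays the role of ``$D$'' in that lemma) with parameters $\eta = 5\delta$, $\xi = 5\zeta$, and $\kappa = \delta/10$, and the distribution $D'_\sigma$ (playing the role of ``$D'$''), which satisfies $D'_\sigma(\mathbf{X}'_i)\geq w_i$. This yields $d_{EM}(D'',D'_\sigma)\leq \eta+\xi+\kappa = 5\delta + 5\zeta + \delta/10 \leq 6\delta + 5\zeta$.

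Finally, combine the two bounds by the triangle inequality for EMD:
\[
d_{EM}(D,D'_\sigma)\;\leq\; d_{EM}(D,D'')+d_{EM}(D'',D'_\sigma)\;\leq\;(10\delta+12\zeta)+(6\delta+5\zeta)\;=\;16\delta+17\zeta\;\leq\;17(\delta+\zeta),
\]
which is the claimed bound. The only points requiring care, rather than real difficulty, are: (a) checking that the hypotheses of Lemma~\ref{lem:distbyvec} are met — in particular that $\eta+\kappa+\xi<1$, which follows from $\eps = 17(\delta+\zeta)<1$ with room to spare, and that $D'_\sigma$ really does put mass at least $w_i$ on the right vectors, which is immediate from the construction in Step (viii) together with the identification $\mathbf{X}'_i=\sigma^{-1}(\mathbf{S}_i)$; and (b) keeping the direction of the permutation consistent with the paper's convention $D(w_1,\ldots,w_n)=D_\sigma(w_{\sigma(1)},\ldots,w_{\sigma(n)})$, so that $d_H(\sigma(\mathbf{X}_i),\mathbf{S}_i)$ small translates correctly into $d_H(\mathbf{X}_i,(\text{permuted }\mathbf{S}_i))$ small. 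The main (mild) obstacle is just this bookkeeping of permutation directions and of which distribution occupies which slot in Lemma~\ref{lem:distbyvec}; the numerical estimates themselves are routine.
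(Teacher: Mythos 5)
Your proof is correct and follows essentially the same route as the paper's: invoke Lemma~\ref{lem:stepv} to get $d_{EM}(D,D'')\leq 10\delta+12\zeta$, invoke Lemma~\ref{lem:approxguarantee} for the $\delta/10$-closeness of $\mathbf{S}_i$ to $\sigma(\mathbf{X}_i)$, bridge $D''$ and the output distribution via Lemma~\ref{lem:distbyvec}, and finish with the triangle inequality. The only cosmetic difference is that the paper applies Lemma~\ref{lem:distbyvec} to $D'$ and $D''_\sigma$ (then implicitly uses permutation-invariance of EMD), whereas you apply it directly to $D''$ and $D'_\sigma$ via the identification $\mathbf{X}'_i=\sigma^{-1}(\mathbf{S}_i)$; this is the same argument written with one fewer implicit step of permutation bookkeeping.
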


\begin{proof}
As $D$ is $(3\delta, 5 \zeta)$-clustered around $\cS$, by Lemma~\ref{lem:stepv}, we have that $D''$ is $(5\delta,5\zeta)$-clustered around $\cS$ with weights $w_0, \ldots, w_{t_1}$ and $d_{EM}(D, D'') \leq 10 \delta + 12 \zeta.$

Now consider Step (vii) of \learn, where we call \aprox with $R$ and ${\bf x}_1,\ldots,{\bf x}_{t_1}$ to obtain $\mathbf{S}_1, \ldots, \mathbf{S}_{t_1}$. By  Lemma~\ref{lem:approxguarantee}, $d_H(\sigma(\mathbf{X}_i), \mathbf{S}_i) \leq \frac{\delta}{10}$ for every $i \in [t_1]$ for some permutation $\sigma: [n] \rightarrow [n]$. Consider the sequence of vectors $\mathbf{X}_1^{\sigma}\ldots,\mathbf{X}_{t_1}^{\sigma}$ where $\mathbf{X}_i^\sigma = \sigma(\mathbf{X}_i)$ for every $i \in [t_1]$.


Let us now consider the distribution $D''_\sigma$ over $\{0,1\}^n$ such that $D''_\sigma (\mathbf{X})=D''(\sigma(\mathbf{X}))$ for every $\mathbf{X} \in \{0,1\}^n$. As $D''$ is $(5\delta, 5\zeta)$-clustered around $\cS=\{\mathbf{X}_1,\ldots, \mathbf{X}_{t_1}\}$ with weights $w_0,\ldots,w_{t_1}$, $D''_\sigma$ is $(5\delta, 5\zeta)$-clustered around $\{\mathbf{X}_1^\sigma,\ldots,\mathbf{X}_{t_1}^\sigma\}$ with weights $w_0,\ldots,w_{t_1}$. In the output distribution $D'$, $D'(\mathbf{S}_i) \geq w_i$ for every $i \in [t_1]$. So, 
by Lemma~\ref{lem:distbyvec}, we have $d_{EM}(D',D''_\sigma)\leq 5 \delta + \frac{\delta}{10} + 5 \zeta$. Combining this with the fact that $d_{EM}(D, D'') \leq 10 \delta + 12 \zeta$, \complain{we conclude that $d_{EM}(D,D_\sigma')\leq 17 (\delta+\zeta)$}.
\end{proof}

To prove Theorem~\ref{theo:mainthm1}, we first prove that the guarantees of the two parts follow assuming that the event {\sc GOOD} holds. We will be done since $\pr\left(\mbox{{\sc GOOD}}\right)\geq 2/3$ (see Lemma~\ref{eqn:cevent}). The query complexity of the algorithm follows from the parameters in its description.

\begin{proof}[Proof of Part (i):]

Here 
$D$ is $(\zeta, \delta,r)$-clusterable. By Lemma~\ref{cl:concfar},  $D$ is $(\delta,2\zeta)$-clustered around $\cS$ and the fraction of samples in $\cT_y$ that are not assigned to any vector in $\cS_x$ is at most $3 \zeta$. That is, \learn does not output $\mbox{{\sc Fail}}$ for $D$ in Step (v). By Lemma~\ref{cl:distsigma}, we conclude that $d_{EM}(D,D_\sigma')\leq 17(\delta + \zeta)$ for some permutation $\sigma: [n] \rightarrow [n]$. This completes the proof of Part (i).
\end{proof}

\begin{proof}[Proof of Part (ii):]


Recall that we are working under the conditional space that the $\mbox{event {\sc GOOD}}$ holds. Now consider the following:
\begin{itemize}
    \item If $D$ is not $(3\delta,5\zeta)$-clustered around $\cS$, then by Lemma~\ref{cl:concfar}, the algorithm \learn reports {\sc Fail}.
    
    \item If $D$ is $(3\delta,5\zeta)$-clustered around $\cS$, then the algorithm \learn either reports {\sc Fail} in Step (v) or continues to Step (vi). In case we go to Step (vi), following Lemma~\ref{cl:distsigma}, we again conclude that $d_{EM}(D,D'_\sigma)\leq \eps$.
\end{itemize}

Observe that the above two statements imply Part (ii). This completes the proof of Theorem~\ref{theo:mainthm1}.
\end{proof} 
 

\subsection*{Proof of Lemma~\ref{lem:stepv}}
\label{sec:pf-inter}

Here we assume that the event $\mbox{{\sc GOOD}}$ holds. In particular, the events $\cE_2$ and $\cE_4$ hold. To prove Lemma~\ref{lem:stepv}, we will prove some associated claims and lemmas about the weights $w_0,\ldots,w_{t_{1}}$ obtained in Step (vi) of \learn, and the distribution $D''$ defined in Lemma~\ref{lem:stepv}. Let us start with the following claim.

\begin{cl}\label{cl:distD}
The distribution $D^{''}$ (defined in the statement of Lemma~\ref{lem:stepv}) is $(5\delta, 5\zeta)$-clustered around $\cS$ with weights $w_0,w_1\ldots,w_{t_1}$, where $w_0 =1- \sum\limits_{i=1}^{t_1} w_i$.
\end{cl}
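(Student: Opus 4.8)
The goal is to verify that the explicitly constructed distribution $D''$ fits the definition of being $(5\delta,5\zeta)$-clustered around $\cS$ with the weights $w_0,\dots,w_{t_1}$ from Step (vi), where $w_0 = 1-\sum_{i=1}^{t_1}w_i$. Recalling Definition~\ref{def:concdist}, this requires producing $t_1$ pairwise disjoint sets $\cC_i \subseteq \mbox{{\sc NGB}}_{5\delta}(\mathbf{X}_i)$ with $D''(\cC_i) \geq w_i$, together with the side conditions $\sum_{i=0}^{t_1} w_i = 1$ and $w_0 \leq 5\zeta$. The last two are immediate: $\sum_{i=0}^{t_1} w_i = 1$ holds by the very definition of $w_0$, and $w_0 \leq 3\zeta \leq 5\zeta$ is exactly the hypothesis assumed in the statement of Lemma~\ref{lem:stepv} (which is the ambient assumption here), so no work is needed there.

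The substantive step is to exhibit the sets $\cC_i$. The natural choice is simply $\cC_i := \{\mathbf{X}_i\}$ for each $i \in [t_1]$ (or, if some $\mathbf{X}_i$ coincide as points of $\{0,1\}^n$, one assigns the singleton to the smallest-indexed copy and takes $\cC_i = \emptyset$ for the others — but since $D''$ is defined on the list of vectors this is a harmless bookkeeping point, and I would note that each $\mathbf{X}_i \in \mbox{{\sc NGB}}_0(\mathbf{X}_i) \subseteq \mbox{{\sc NGB}}_{5\delta}(\mathbf{X}_i)$ trivially). These sets are pairwise disjoint, each is contained in $\mbox{{\sc NGB}}_{5\delta}(\mathbf{X}_i)$ since $d_H(\mathbf{X}_i,\mathbf{X}_i) = 0 \leq 5\delta$, and $D''(\cC_i) = D''(\mathbf{X}_i) = w_i$ by part (i) of the definition of $D''$ in Lemma~\ref{lem:stepv}. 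Hence $D''(\cC_i) \geq w_i$, which is all that is required.

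I expect there is essentially no obstacle here: this claim is a bookkeeping/sanity statement whose role is purely to package $D''$ into the $(\eta,\xi)$-clustered formalism so that Lemma~\ref{lem:distbyvec} can later be invoked with $\eta = 5\delta$, $\xi = 5\zeta$. The only point requiring a sentence of care is the possibility of repeated vectors among $\mathbf{X}_1,\dots,\mathbf{X}_{t_1}$ and of $\mathbf{X}_0$ coinciding with some $\mathbf{X}_i$; in that case one simply observes that the definition of $(\eta,\xi)$-clustered only demands $D(\cC_i) \geq w_i$ with $\cC_i$ disjoint, so placing the full atom's mass into a single $\cC_i$ and taking empty sets for duplicate indices still satisfies every clause, since the leftover mass is absorbed into the $w_0$ slack and $w_0 \leq 3\zeta \leq 5\zeta$ continues to hold. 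With the sets and weights chosen as above, all conditions of Definition~\ref{def:concdist}(i) are met, completing the proof.
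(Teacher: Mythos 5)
Your proposal is correct and takes essentially the same approach as the paper's proof, which simply asserts that the claim ``follows from the definition of $D''$, along with the fact that $w_0 \leq 3\zeta < 5\zeta$.'' Your choice of witnessing sets $\cC_i = \{\mathbf{X}_i\}$ and the observation that $w_0\leq 3\zeta\leq 5\zeta$ is exactly what that terse sentence encapsulates, and your side remark about duplicate centers (which forces $w_j=0$ for the non-minimal duplicate index via the tie-breaking rule in Step (iv), so the empty-set choice still satisfies $D''(\cC_j)\geq w_j$) correctly handles the only bookkeeping edge case.
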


\begin{proof}
This follows from the definition of $D''$, along with the fact that $w_0 \leq 3 \zeta < 5 \zeta$.
\end{proof}



Now we have the following claim.

\begin{cl}\label{cl:existw}
There exists a sequence of weights $w'_0,\ldots,w_{t_1}'$ such that $D$ is $(5\delta, 5\zeta)$-clustered around $\cS$ with weights $w'_0,\ldots,w_{t_1}'$, and 
$\sum\limits_{i=1}^{t_1}\size{w_i-w'_{i}}\leq 2 \zeta$.


\end{cl}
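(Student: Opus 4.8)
The plan is to produce the witnessing clustering for $D$ directly from the sets $\cC_1^{*}, \ldots, \cC_{t_1}^{*}$ of Observation~\ref{obs:crucial} --- the vectors of $\{0,1\}^n$ that the assignment rule of \learn would send to $\mathbf{X}_1, \ldots, \mathbf{X}_{t_1}$ --- after truncating each of them to a neighborhood of the corresponding $\mathbf{X}_i$. First I would set $\cC_i' := \cC_i^{**} = \cC_i^{*} \cap \mbox{{\sc NGB}}_{3\delta}(\mathbf{X}_i)$ for $i \in [t_1]$: these are pairwise disjoint because the $\cC_i^{*}$ are (Observation~\ref{obs:crucial}), and each lies in $\mbox{{\sc NGB}}_{3\delta}(\mathbf{X}_i) \subseteq \mbox{{\sc NGB}}_{5\delta}(\mathbf{X}_i)$. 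Then I would define the weights $w_i' := \max\{0,\, w_i - \tfrac{2\zeta}{t_1}\}$ for $i \in [t_1]$ and $w_0' := 1 - \sum_{i=1}^{t_1} w_i'$, and claim that $(w_0',\dots,w_{t_1}')$ together with $\{\cC_i^{**}\}$ witness that $D$ is $(5\delta,5\zeta)$-clustered around $\cS$.

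The one substantive ingredient is the bound $w_i \le D(\cC_i^{**}) + \tfrac{2\zeta}{t_1}$ for every $i\in[t_1]$, which is exactly Observation~\ref{obs:crucial-2}; it applies because under $\mbox{{\sc GOOD}}$ the event $\cE_2$ holds (so $R$ is distance preserving) and the event $\cE_4$ holds (so $\cT$ is weight representative for $\cS$ and $\cC^{*}$). Granting it, the three checks are immediate: $D(\cC_i^{**}) \ge w_i - \tfrac{2\zeta}{t_1} \ge w_i'$, so the $\cC_i^{**}$ carry enough mass; $|w_i - w_i'| \le \tfrac{2\zeta}{t_1}$ regardless of whether the $\max$ is active, so $\sum_{i=1}^{t_1}|w_i - w_i'| \le 2\zeta$; and $\sum_{i=1}^{t_1} w_i' \ge \sum_{i=1}^{t_1}(w_i - \tfrac{2\zeta}{t_1}) = (1-w_0) - 2\zeta \ge 1 - 5\zeta$ by the standing hypothesis $w_0 \le 3\zeta$, so $w_0' \le 5\zeta$ (and $w_0' \ge 0$ since $\sum_{i=1}^{t_1} w_i' \le \sum_{i=1}^{t_1} w_i \le 1$).

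I do not expect a genuine obstacle here; the only place to be careful is that Observation~\ref{obs:crucial-2} gives only an \emph{upper} bound on each $w_i$ --- the \emph{weight preserving} guarantees of Lemma~\ref{lem:mass} do not provide a matching lower bound on $w_i$ in terms of $D(\cC_i^{*})$ --- so the new weights have to be placed \emph{below} the $w_i$'s rather than straddling them. The downward shift by $\tfrac{2\zeta}{t_1}$ and the clamp at $0$ are precisely what makes $D(\cC_i^{**}) \ge w_i'$ compatible with both $\sum_i |w_i - w_i'| \le 2\zeta$ and $w_0' \le 5\zeta$.
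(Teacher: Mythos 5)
Your proposal is correct and is essentially the paper's own proof: you use the same witnessing sets $\cC_i^{**} = \cC_i^* \cap \mbox{{\sc NGB}}_{3\delta}(\mathbf{X}_i)$, the same clamped weights $w_i' = \max\{0, w_i - 2\zeta/t_1\}$, invoke Observation~\ref{obs:crucial-2} in the same place, and close with the same arithmetic to bound $w_0'$ by $5\zeta$. The only additions are the explicit checks of pairwise disjointness and $w_0'\geq 0$, which the paper leaves implicit.
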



\begin{proof}
As events $\cE_2$ and $\cE_4$ hold, consider $\cC^*=\{\cC_1^*,\ldots,\cC_{t_1}^*\}$ (as guaranteed by Observation~\ref{obs:crucial}) and $\cC^{**}=\{\cC_1^{**},\ldots,\cC_{t_1}^{**}\}$ such that, for every $i \in [t_1]$, $ \cC_i^{**}= \cC_i^{*} \cap \mbox{{\sc NGB}}_{3\delta}(\mathbf{X}_i)$ and $ w_i \leq D(\cC_i^{**})+ \frac{2\zeta}{t_1}$ (see Observation~\ref{obs:crucial-2}).


Let us define $w_i'=\max\{w_i-\frac{2 \zeta}{t_1},0\}$ and $w_0' =1-\sum\limits_{i=1}^{t_1}w_i'$. So, $w_i' \leq D\left(\cC_i^{**}\right) $.

Now $$w_0'=1-\sum\limits_{i=1}^{t_1}w_i'\leq 1- \sum\limits_{i=1}^{t_1}\left(w_i- \frac{2\zeta}{t_1}\right) \leq \left(w_0+2\zeta\right) \leq 3 \zeta +2 \zeta= 5 \zeta. $$




Putting everything together, the above $\cC^{**}$ satisfies $\cC_i^{**} \subseteq \mbox{{\sc NGB}}_{3\delta}(\mathbf{X}_i) \subseteq \mbox{{\sc NGB}}_{5\delta}(\mathbf{X}_i)$ and has weights $w_0',\ldots,w_{t_1}'$ such that $w_0' \leq 5 \zeta$ and $w_i'\leq D(\cC_i^{**})$ for every $i \in [t_1]$.
Hence, $D$ is $(5\delta, 5 \zeta)$-clustered around $\cS$ with weights $w_0',\ldots,w_{t_1}'$. Moreover, $\sum\limits_{i=1}^{t_1}\size{w_i-w'_{i}}\leq 2 \zeta$ holds following the definition of $w_i'$s.
\color{black}
\end{proof}

\color{black}

\begin{lem}[{\bf Comparison-by-weights}]\label{lem:comparebywt}
Let $D_1$ and $D_2$ be two distributions defined over $\{0,1\}^n$ that are $(\eta, \xi)$-clustered around a sequence of vectors $\cS= \{\mathbf{X}_1, \ldots, \mathbf{X}_{t_1}\}$ with weights $v_0,\ldots,v_{t_1}$ and $w_0,\ldots,w_{t_1}$, respectively.
Then the Earth Mover Distance between $D_1$ and $D_2$ is $d_{EM}(D_1, D_2) \leq 2\eta+\sum\limits_{i=1}^{t_1}|v_i-w_i| + 2\xi$.
\end{lem}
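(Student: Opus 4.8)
The plan is to directly construct a feasible flow $\{f_{\mathbf{X}\mathbf{Y}}\}$ for the EMD linear program between $D_1$ and $D_2$ and bound its cost by $2\eta + \sum_i |v_i - w_i| + 2\xi$. Since both distributions are $(\eta,\xi)$-clustered around the \emph{same} sequence $\cS$, I first invoke the definition to obtain pairwise disjoint sets $\cC_1^{(1)},\ldots,\cC_{t_1}^{(1)}$ with $\cC_i^{(1)} \subseteq \mbox{{\sc NGB}}_\eta(\mathbf{X}_i)$ and $D_1(\cC_i^{(1)}) \geq v_i$, and similarly $\cC_1^{(2)},\ldots,\cC_{t_1}^{(2)}$ with $\cC_i^{(2)} \subseteq \mbox{{\sc NGB}}_\eta(\mathbf{X}_i)$ and $D_2(\cC_i^{(2)}) \geq w_i$. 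The idea is to route, for each $i$, an amount $\min\{v_i, w_i\}$ of mass from $\cC_i^{(1)}$ through (conceptually) the hub $\mathbf{X}_i$ to $\cC_i^{(2)}$; any source point in $\cC_i^{(1)}$ is within distance $\eta$ of $\mathbf{X}_i$, and any sink point in $\cC_i^{(2)}$ is within distance $\eta$ of $\mathbf{X}_i$, so by the triangle inequality for $d_H$ every such unit of mass travels a distance at most $2\eta$. The leftover mass — which consists of the ``cluster-zero'' mass ($v_0$ from $D_1$ and $w_0$ from $D_2$, each at most $\xi$) plus the mismatch $|v_i - w_i|$ summed over $i$ — is routed arbitrarily at cost at most $1$ per unit, i.e.\ at cost at most $v_0 + w_0 + \sum_i |v_i - w_i| \le 2\xi + \sum_i |v_i-w_i|$.

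Carrying this out in order: (1) Fix the two clusterings as above. (2) For each $i$, define $m_i = \min\{v_i, w_i\}$ and choose a partial matching that sends $m_i$ units of $D_1$-mass supported inside $\cC_i^{(1)}$ to $D_1$-mass supported inside $\cC_i^{(2)}$ — feasible since $D_1(\cC_i^{(1)}) \ge v_i \ge m_i$ and we only need to reroute $D_1$'s mass to match $D_2$'s target, so more carefully: we send $m_i$ units from the $D_1$-distribution restricted to $\cC_i^{(1)}$ onto the $D_2$-distribution restricted to $\cC_i^{(2)}$ (possible because both have total mass $\ge m_i$), incurring cost $\le 2\eta$ per unit by the triangle inequality. (3) The total mass handled this way is $\sum_i m_i \ge \sum_i (v_i - |v_i - w_i|) = (1 - v_0) - \sum_i|v_i - w_i|$, so the remaining unmatched $D_1$-mass is at most $v_0 + \sum_i |v_i - w_i| \le \xi + \sum_i|v_i-w_i|$, and symmetrically the remaining $D_2$-mass to be filled is at most $w_0 + \sum_i |v_i-w_i| \le \xi + \sum_i|v_i - w_i|$; match these leftovers in any way, at cost $\le 1$ per unit. (4) Sum the two contributions: $2\eta \cdot 1 + \big(\xi + \sum_i|v_i-w_i|\big) \le 2\eta + \sum_i |v_i - w_i| + 2\xi$, which upper-bounds the optimum of the LP and hence $d_{EM}(D_1,D_2)$.

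The only mildly delicate point — and the one I expect to need the most care in writing — is the bookkeeping in step (3): ensuring that the partial matchings for different $i$ do not conflict (they don't, because the $\cC_i^{(1)}$ are pairwise disjoint and likewise the $\cC_i^{(2)}$, so the flows use disjoint sources and disjoint sinks), and verifying that the residual masses on both sides really are bounded by $\xi + \sum_i|v_i-w_i|$ so that a valid (sub-)coupling of the leftovers exists with total mass exactly equal on both sides. This is essentially the same transportation-through-hubs argument already used in the proof of Lemma~\ref{lem:distbyvec}, applied symmetrically to both distributions rather than to one distribution and a point mass, so no genuinely new idea is required; it is a routine strengthening. A clean way to phrase the whole thing is: the cost of optimally coupling $D_1$ with a distribution $\widehat D_1$ that places exactly $v_i$ on a single representative point of $\cC_i^{(1)}$ (and $v_0$ arbitrarily) is at most $\eta + \xi$; similarly for $D_2$ and a point-supported $\widehat D_2$; and $d_{EM}(\widehat D_1, \widehat D_2) \le \sum_i |v_i - w_i|$ can be read off directly by moving mass between the shared hubs — then apply the triangle inequality for $d_{EM}$.
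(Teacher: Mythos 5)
Your main argument (the direct flow construction in steps 1--4) is correct, and it is a genuinely different packaging from the paper's. The paper introduces two auxiliary point-supported distributions $D_1'$, $D_2'$ that place mass $v_i$ (resp.\ $w_i$) on $\mathbf{X}_i$ and the leftover on a spare vector $\mathbf{U}$, bounds $d_{EM}(D_1,D_1') \le \eta+\xi$ and $d_{EM}(D_2,D_2') \le \eta+\xi$ via Lemma~\ref{lem:distbyvec}, bounds $d_{EM}(D_1',D_2') \le \sum_i|v_i-w_i|$ directly, and then applies the triangle inequality for $d_{EM}$. You instead build one explicit coupling: transport $\min\{v_i,w_i\}$ through the hub $\mathbf{X}_i$ at unit cost $\le 2\eta$, route the residual (whose two sides are automatically balanced, both equalling $1-\sum_i \min\{v_i,w_i\}$) at cost $\le 1$, and bound the residual mass by $\xi + \sum_i|v_i-w_i|$. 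Both approaches are the same transportation-through-hubs idea; yours inlines it and in fact gives the slightly better bound $2\eta + \xi + \sum_i|v_i-w_i|$ (a single $\xi$ rather than $2\xi$), since the ``cluster-zero'' slack from $D_1$ and $D_2$ is absorbed into a single residual of balanced mass rather than being paid for twice in two legs of a triangle inequality.

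One small imprecision in your closing ``clean'' reformulation: to get $d_{EM}(\widehat D_1, \widehat D_2) \le \sum_i|v_i-w_i|$ you must place both $\widehat D_1$ and $\widehat D_2$ at the \emph{same} hub $\mathbf{X}_i$ (as the paper does), not at representative points chosen inside $\cC_i^{(1)}$ and $\cC_i^{(2)}$ respectively --- those may be up to $2\eta$ apart, which would leak an extra $2\eta$ into the middle leg. You do say ``shared hubs,'' so your intent is right, but ``a single representative point of $\cC_i^{(1)}$'' reads as if the hub sits inside the cluster rather than at $\mathbf{X}_i$; the bound $d_{EM}(D_j,\widehat D_j) \le \eta+\xi$ from Lemma~\ref{lem:distbyvec} requires $\kappa=0$, i.e.\ the hubs must be exactly the $\mathbf{X}_i$.
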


\begin{proof}


Let $\mathbf{U}$ be an arbitrary vector from $\{0,1\}^n$. Let us define a distribution $D_1'$ (supported over $\cS \cup \{\mathbf{U}\}$) from the distribution $D_1$ as follows:
\[ D_1'(\mathbf{Y})=  \left\{
\begin{array}{ll}
       v_i & \mathbf{Y} = \mathbf{X}_i \ \mbox{for every $i \in [t_1]$} \\
      1 - \sum\limits_{i=1}^{t_1}v_i & \mathbf{Y}=\mathbf{U} \\
      0 &\mbox{otherwise}
\end{array} 
\right. \]

Similarly, we define a distribution $D_2'$ from $D_2$.
First we have the following claim, which follows from the definitions. From the definitions of $D_1'$ and $D_2'$, we can say that 
\begin{enumerate}
    \item[(i)] $d_{EM}(D_1, D_1') \leq \eta + \xi$ and $d_{EM}(D_2, D_2') \leq \eta + \xi$ (by Lemma~\ref{lem:distbyvec}).
    
    \item[(iii)] $d_{EM}(D_1', D_2')\leq \sum\limits_{i=1}^{t_1} |v_i-w_i|$.
\end{enumerate}

Using the triangle inequality, we have
\begin{eqnarray*}
d_{EM}(D_1, D_2) &\leq& d_{EM}(D_1, D_1') + d_{EM}(D_1', D_2') + d_{EM}(D_2, D_2') \\ &\leq& 2 \eta + \sum\limits_{i=1}^{t_1} |v_i-w_i| + 2 \xi.
\end{eqnarray*}

This completes the proof of Lemma~\ref{lem:comparebywt}.
\end{proof}

Now we proceed to prove Lemma~\ref{lem:stepv}.

\begin{proof}[Proof of Lemma~\ref{lem:stepv}]


By the description of $D''$ in Lemma~\ref{lem:stepv}, using Claim~\ref{cl:distD}, we know that $D''$ is $\left(5 \delta, 5 \zeta\right)$-clustered around $\cS$ with weights $w_1, \ldots, w_{t_1}$. By applying Claim~\ref{cl:existw}, $D$ is $(5 \delta,5\zeta)$-clustered around $\cS$ with weights $w_0',\ldots,w_{t_1}'$ such that $\sum\limits_{i=1}^{t_1}\size{w_i-w_i'}\leq 2\zeta$. Now, by applying Lemma~\ref{lem:comparebywt} with $\eta=5\delta$, $\xi=5\zeta$, we obtain that the Earth Mover Distance between $D$ and $D''$ is bounded as follows:
$$d_{EM}(D, D'') \leq 10 \delta + 2 \zeta + 10 \zeta \leq 10 \delta + 12 \zeta .$$

This completes the proof of Lemma~\ref{lem:stepv}.
\end{proof}

\section{Testing properties with bounded VC-dimension }\label{sec:vc}


{In this section, we will prove that 
distributions over $\{0,1\}^n$ whose support have bounded VC-dimension can be learnt (up to permutations) by performing a number of queries that is independent of the dimension $n$, and depends only on the proximity parameter $\eps$ and the VC-dimension $d$ (Theorem~\ref{theo:main}). In fact, we will prove a generalization, that any distribution $D$ that is $\beta$-close to bounded VC-dimension can be learnt efficiently up to permutations (with a proximity parameter depending on $\beta$) by performing a set of queries whose size is independent of $n$ (Theorem~\ref{theo:learn-vc}). As a consequence of the learning result of Theorem~\ref{theo:learn-vc}, we also obtain a tester for properties having a bounded VC-dimension (Corollary~\ref{coro:vc-learn2}) which is a restatement of Corollary~\ref{coro:testvc}.}

{In Subsection~\ref{sec:corotheo-vc}, we connect the notions of $(\zeta, \delta,r)$-clusterablity  and being $\beta$-close to $(\alpha,r)$-clusterablity (Definition~\ref{defi:cluster}) in Lemma~\ref{lem:cluster12} and prove Corollary~\ref{coro:mainthm1} regarding learning distributions that are $\beta$-close to $(\alpha,r)$-clusterable. Then, in Subsection~\ref{sec:learn-vcprelim}, we recall some standard results from VC theory to connect the notions of bounded VC-dimension and clusterability, to obtain Corollary~\ref{coro:cover}, which is crucially used in Subsection~\ref{sec:learn-vc} to prove Theorem~\ref{theo:learn-vc}.}


\begin{theo}[{\bf Learning a distribution $\beta$-close to bounded {VC-dimension}}]\label{theo:learn-vc}
Let $d\in \N$ be a constant. 
There exists a (non-adaptive) algorithm, that given sample and query access to an unknown distribution $D$ over $\{0,1\}^n$, takes $ \alpha, \beta \in (0,1)$ with $\beta <\alpha$ as input such that $\eps=17(3 \alpha+\beta/\alpha)<1$, makes number of queries that depends only on $\alpha,\beta$ and $d$, and either reports a full description of a distribution, or {\sc Fail}, satisfying both of the following conditions:
\begin{enumerate}
\item[(i)] If $D$ is $\beta$-close to {VC-dimension} $d$, then with probability at least $2/3$, the algorithm outputs a distribution $D'$ such that $d_{EM}(D,D'_\sigma)\leq \eps$ for some permutation $\sigma:[n]\rightarrow [n]$.

\item[(ii)] For any $D$, the algorithm will not output a distribution $D'$ such that $d_{EM}(D,D'_\sigma) > \eps$ for every permutation $\sigma:[n]\rightarrow [n]$ with probability more than $\frac{1}{3}$. However, if the distribution $D$  is not $\beta$-close to {VC-dimension} $d$, the algorithm may output {\sc Fail} with any probability.
\end{enumerate}
\end{theo}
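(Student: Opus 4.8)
The plan is to obtain Theorem~\ref{theo:learn-vc} as a direct consequence of the learning result for clusterable distributions (Corollary~\ref{coro:mainthm1}, which is built on Theorem~\ref{theo:mainthm1} via the reduction in Lemma~\ref{lem:cluster12}), once we know that a global VC-dimension bound forces clusterability. The bridge is the statement that any distribution whose support has VC-dimension at most $d$ is $(\alpha,r)$-clusterable for an $r=r(\alpha,d)$ depending only on $\alpha$ and $d$ — this is precisely what Corollary~\ref{coro:cover} (proved in Subsection~\ref{sec:learn-vcprelim}) should supply, and it in turn follows from standard VC theory: by Sauer--Shelah together with Haussler's packing lemma, a maximal $\alpha/2$-separated subset of the support has size at most $r(\alpha,d)=(C/\alpha)^{O(d)}$, and being maximal it is an $\alpha/2$-net of the support; assigning each support vector to a nearest net point then partitions the support into at most $r(\alpha,d)$ parts of Hamming diameter at most $\alpha$. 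The key point is that $r(\alpha,d)$ is uniform over the entire class of VC-dimension-$\le d$ distributions and is independent of $n$, so one fixed choice of parameters handles all of them at once.

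Given this, the argument is short. Suppose $D$ is $\beta$-close to VC-dimension $d$, and let $D_0$ witness this, i.e.\ $D_0$ has VC-dimension $d$ and $d_{EM}(D,D_0)\le\beta$. By the previous paragraph $D_0$ is $(\alpha,r)$-clusterable with $r=r(\alpha,d)$, hence $D$ is $\beta$-close to being $(\alpha,r)$-clusterable. I would then invoke Corollary~\ref{coro:mainthm1} with these parameters: it learns any distribution $\beta$-close to $(\alpha,r)$-clusterable up to a permutation, with proximity guarantee $\eps=17(3\alpha+\beta/\alpha)$ — this value is exactly what drops out of Lemma~\ref{lem:cluster12}, which shows that a distribution $\beta$-close to $(\alpha,r)$-clusterable is $(\beta/\alpha,\,3\alpha,\,r)$-clusterable, fed into Theorem~\ref{theo:mainthm1} whose guarantee has the form $\eps=17(\delta+\zeta)$ with $\zeta=\beta/\alpha$ and $\delta=3\alpha$. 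Parts~(i) and~(ii) of Theorem~\ref{theo:learn-vc} then follow verbatim from Parts~(i) and~(ii) of Corollary~\ref{coro:mainthm1}: on inputs $\beta$-close to VC-dimension $d$ the algorithm outputs some $D'$ with $d_{EM}(D,D'_\sigma)\le\eps$ for a permutation $\sigma$ with probability at least $2/3$; on every input it outputs an $\eps$-far-under-all-permutations distribution with probability at most $1/3$; and it may always answer {\sc Fail} when $D$ is not $\beta$-close to VC-dimension $d$. Finally, the query complexity of \learn depends only on $\zeta=\beta/\alpha$, $\delta=3\alpha$, and $r=r(\alpha,d)$, hence only on $\alpha$, $\beta$, and $d$, and not on $n$, as required.

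There is essentially no hard step left once Corollary~\ref{coro:cover} and Corollary~\ref{coro:mainthm1} are in hand; the real content was pushed into those two results. The one thing that needs care is the parameter bookkeeping across the two reductions (VC-dimension $\to (\alpha,r)$-clusterable $\to (\zeta,\delta,r)$-clusterable), so that the stated relation $\eps=17(3\alpha+\beta/\alpha)$ and the side conditions $\beta<\alpha$ and $\eps<1$ come out consistently, together with the routine observation that since $d$ is treated as a fixed constant and the query bound may depend on $d$ arbitrarily, even the crude covering estimate $r(\alpha,d)=(C/\alpha)^{O(d)}$ suffices and no optimization is needed.
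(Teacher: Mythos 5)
Your proposal is correct and follows the paper's proof essentially step for step: invoke the algorithm of Corollary~\ref{coro:mainthm1} with $r$ fixed by Haussler's packing bound (Corollary~\ref{coro:cover}), observe that a $\beta$-close witness $D_0$ of VC-dimension $d$ is $(\alpha,r)$-clusterable so that $D$ is $\beta$-close to $(\alpha,r)$-clusterable, and inherit both guarantees together with the $n$-free query bound. The only cosmetic difference is that you pack at scale $\alpha/2$ to control cluster \emph{diameter} rather than radius, which is if anything slightly more careful than the paper's phrasing but changes nothing in the argument.
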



\begin{rem}
Note that $\alpha$ above does not appear anywhere outside the expression for $\eps$, and hence it is tempting to minimize $\eps$ by taking $\alpha=\sqrt{\beta/3}$. However, this is a bad strategy since the number of queries of the algorithm depends on $1/\alpha$.  In the common scenario, we would be given $\beta$ and $\eps \geq 34\sqrt{3\beta}$, and solve for $\alpha$.    
\end{rem}

\color{black}

\begin{coro}
[{\bf Testing properties with bounded VC-dimension}]\label{coro:vc-learn2} 
Let $d \in \N$ be a constant, and $\cP$ be an index-invariant property with VC-dimension $d$. There exists an algorithm that has sample and query access to an unknown distribution $D$, takes a parameter $\eps \in (0,1)$, and distinguishes whether $D \in \cP$ or $D$ is $\eps$-far from $\cP$ with probability at least $2/3$, where the total number of queries made by the algorithm is a function of only $d$ and $\eps$.
\end{coro}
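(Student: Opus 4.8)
The plan is to derive Corollary~\ref{coro:vc-learn2} as a straightforward consequence of the learning result in Theorem~\ref{theo:learn-vc} (equivalently Theorem~\ref{theo:main}), using the standard ``testing by learning'' reduction. First I would fix the index-invariant property $\cP$ with VC-dimension $d$, and the proximity parameter $\eps \in (0,1)$. Since every $D \in \cP$ has VC-dimension at most $d$, I would invoke Theorem~\ref{theo:learn-vc} with $\beta = 0$ (so $D$ is $0$-close to VC-dimension $d$, i.e.\ exactly has bounded VC-dimension) and with $\alpha$ chosen small enough that the learning error $\eps' = 17 \cdot 3\alpha = 51\alpha$ satisfies $\eps' \leq \eps/3$; concretely $\alpha = \eps/153$ works. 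The learning algorithm makes a number of queries depending only on $\alpha$ and $d$ — hence only on $\eps$ and $d$ — and, with probability at least $2/3$, when $D \in \cP$ it outputs a full description of a distribution $D'$ with $d_{EM}(D, D'_\sigma) \leq \eps/3$ for some permutation $\sigma:[n]\to[n]$.

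The tester then proceeds as follows: run the learning algorithm; if it outputs \textsc{Fail}, reject; otherwise obtain the explicit description of $D'$ and compute (this is an offline computation with no further queries) the distance $\min_{\sigma} d_{EM}(D', \cP)$ — more precisely, check whether there is a distribution in $\cP$ within EMD $\tfrac{2\eps}{3}$ of $D'$ (equivalently, since $\cP$ is index-invariant, whether $\min_\tau d_{EM}(D'_\tau, \cdot)$ to $\cP$ is small). Accept if yes, reject if no. For completeness: if $D \in \cP$, then with probability $\geq 2/3$ we get $D'$ with some $D'_\sigma$ at EMD $\leq \eps/3$ from $D \in \cP$, so $D'$ is within $\eps/3 \leq \tfrac{2\eps}{3}$ of $\cP$ (using that $D'_{\sigma^{-1}}$-type arguments and index-invariance of $\cP$ let us pass between $D'$ and its permutations while staying in $\cP$), and we accept. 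For soundness: if $D$ is $\eps$-far from $\cP$, then part (ii) of Theorem~\ref{theo:learn-vc} guarantees that, except with probability $\leq 1/3$, the algorithm does not output any $D'$ with $d_{EM}(D, D'_\sigma) > \eps/3$ for all $\sigma$; so with probability $\geq 2/3$ either it outputs \textsc{Fail} (we reject) or it outputs $D'$ with $d_{EM}(D, D'_\sigma) \leq \eps/3$ for some $\sigma$. In the latter case, by the triangle inequality for EMD, any distribution in $\cP$ is at EMD $> \eps - \eps/3 = \tfrac{2\eps}{3}$ from $D'_\sigma$, hence (again using index-invariance to move the permutation) $D'$ is more than $\tfrac{2\eps}{3}$-far from $\cP$, so our offline check rejects.

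The main technical point to get right — though it is routine rather than deep — is bookkeeping the permutations and the EMD triangle inequality together with the index-invariance of $\cP$: the learning guarantee is stated ``up to a permutation'', so closeness/farness of $D'$ itself to $\cP$ must be argued through the fact that $\cP$ is closed under permuting indices (so $E \in \cP \iff E_\tau \in \cP$), and EMD is a genuine metric on distributions over $\{0,1\}^n$ so the triangle inequality applies verbatim. One should also note that the offline search for the nearest member of $\cP$ to the explicitly described $D'$ requires no queries and hence does not affect query complexity (computational efficiency is not claimed here, only query complexity). Choosing the constant in $\alpha = \Theta(\eps)$ so that $51\alpha < \eps/3$ and the resulting query count $\mathrm{poly}(1/\alpha)$ with degree depending on $d$ becomes $\mathrm{poly}(1/\eps)$ completes the argument, matching the ``$\mathrm{poly}(1/\eps)$ queries'' claim of Corollary~\ref{coro:testvc}. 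I do not anticipate any genuine obstacle; the content is entirely in Theorem~\ref{theo:learn-vc}, and this corollary is the clean wrapper around it.
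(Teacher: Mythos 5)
Your proof is correct and takes essentially the same route as the paper: invoke Theorem~\ref{theo:learn-vc} with $\beta=0$ and $\alpha=\Theta(\eps)$, then perform an offline query-free check of whether the learnt distribution $D'$ is close enough to $\cP$, using the EMD triangle inequality and index-invariance of $\cP$ to move permutations around. The paper's version uses the tight split $\alpha=\eps/102$ (learning error $\eps/2$, offline threshold $\eps/2$) whereas you use $\alpha=\eps/153$ (error $\eps/3$, threshold $2\eps/3$); both are valid choices and the logic is otherwise identical.
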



\begin{rem}
Note that the algorithm for testing the index-invariant property with constant VC-dimension $d$ takes $\exp(d)$ samples, and performs $\exp(\exp(d))$ queries. It turns out that similarly to the case of \learn, the dependencies of the sample and query complexities on $d$ are tight, in the sense that there exists a property of VC-dimension $d$ such that testing it requires $2^{\Omega(d)}$ samples, and $\Omega(2^{2^{d- \Oh(1)}})$ queries. We will construct such a property and prove its lower bound in Section~\ref{sec:bounded-vc}.
\end{rem}

\color{black}

We will give the proof of Theorem~\ref{theo:learn-vc} in Subsection~\ref{sec:learn-vc}.

\subsection{A corollary of Theorem~\ref{theo:mainthm1} to prove Theorem~\ref{theo:learn-vc}}\label{sec:corotheo-vc}

{In this subsection, we first connect the notions of $(\zeta, \delta,r)$-clusterablity  and being $\beta$-close to $(\alpha,r)$-clusterablity (Definition~\ref{defi:cluster}) in Lemma~\ref{lem:cluster12}. Then using Lemma~\ref{lem:cluster12} with our algorithm for learning $(\zeta,\delta,r)$-clusterable distributions (Theorem~\ref{theo:mainthm1}), we prove Corollary~\ref{coro:mainthm1} regarding learning distributions that are $\beta$-close to $(\alpha,r)$-clusterable. This corollary will be used later to prove Theorem~\ref{theo:learn-vc}.}

\begin{coro}[{\bf Learning distributions $\beta$-close to  $(\alpha,r)$-clusterable}]\label{coro:mainthm1}
Let $n \in \N$. There exists a (non-adaptive) algorithm, that has sample and query access to an unknown distribution $D$ over $\{0,1\}^n$, takes parameters $\alpha, \beta, r$ as inputs such that $\alpha > \beta$ and $\eps=17(3\alpha+\beta/\alpha)<1$ and $r \in \N$, makes a number of queries that only depends on $\alpha,\beta$ and $r$, and either reports a full description of a distribution over $\{0,1\}^n$ or reports {\sc Fail}, satisfying both of the following conditions:
\begin{enumerate}
    \item[(i)] If $D$ is $\beta$-close to $(\alpha, r)$-clusterable, then with probability at least $2/3$, the algorithm outputs a full description of a distribution $D'$ over $\{0,1\}^n$ such that $d_{EM}(D, D'_{\sigma}) \leq \eps$ for some permutation $\sigma:[n] \rightarrow [n]$.
    
    \item[(ii)] For any $D$, the algorithm will not output a distribution $D'$ such that $d_{EM}(D,D'_\sigma) > \eps$ for every permutation $\sigma:[n]\rightarrow [n]$, with probability more than $1/3$. However, if the distribution $D$  is not $\beta$-close to $(\alpha, r)$-clusterable, the algorithm may output {\sc Fail} with any probability.
\end{enumerate}
\end{coro}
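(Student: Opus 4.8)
The plan is to obtain Corollary~\ref{coro:mainthm1} from Theorem~\ref{theo:mainthm1} essentially by a change of parameters, with Lemma~\ref{lem:cluster12} serving as the bridge between the two notions of clusterability. Concretely, I would first establish (as Lemma~\ref{lem:cluster12}) that if $D$ is $\beta$-close to being $(\alpha,r)$-clusterable, then $D$ is $(\beta/\alpha,\,3\alpha,\,r)$-clusterable. Granting this, the corollary is immediate: run \learn (Algorithm~\ref{alg:testcluster--}) with parameters $\zeta=\beta/\alpha$, $\delta=3\alpha$, and the same $r$. The proximity parameter of Theorem~\ref{theo:mainthm1} for this run is $17(\delta+\zeta)=17(3\alpha+\beta/\alpha)=\eps$, which is $<1$ by hypothesis (and the hypothesis $\alpha>\beta$ together with $\eps<1$ forces $\zeta,\delta\in(0,1)$, so the preconditions of Theorem~\ref{theo:mainthm1} are met); its query complexity depends only on $\zeta,\delta,r$, hence only on $\alpha,\beta,r$.

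For Lemma~\ref{lem:cluster12}, I would argue as follows. Let $D_0$ be an $(\alpha,r)$-clusterable distribution with $d_{EM}(D,D_0)\le\beta$, and fix an optimal coupling $\{f_{\mathbf{X}\mathbf{Y}}\}$ realizing $d_{EM}(D,D_0)$. Let $\cC_1,\dots,\cC_s$ (with $s\le r$) be a partition of $\mathrm{supp}(D_0)$ into parts of diameter at most $\alpha$. Thicken each part to its $\alpha$-neighborhood $B_i=\{\mathbf{X}\in\{0,1\}^n : d_H(\mathbf{X},\mathbf{Y})\le\alpha \text{ for some }\mathbf{Y}\in\cC_i\}$, and replace these by the pairwise disjoint sets $\cC_i'=B_i\setminus\bigcup_{j<i}B_j$ for $i\in[s]$; each $\cC_i'$ has diameter at most $3\alpha$ by the triangle inequality, since two of its points are each within $\alpha$ of $\cC_i$, whose diameter is at most $\alpha$. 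Put $\cC_0'=\{0,1\}^n\setminus\bigcup_{i\in[s]}\cC_i'$. Any $\mathbf{X}\in\cC_0'$ is at Hamming distance more than $\alpha$ from every point of $\mathrm{supp}(D_0)$, so under $f$ every unit of mass leaving $\mathbf{X}$ travels distance more than $\alpha$; applying Markov's inequality to $\sum_{\mathbf{X},\mathbf{Y}}f_{\mathbf{X}\mathbf{Y}}d_H(\mathbf{X},\mathbf{Y})\le\beta$ yields $D(\cC_0')\le\beta/\alpha$. Hence $\cC_0',\cC_1',\dots,\cC_s'$ witnesses that $D$ is $(\beta/\alpha,3\alpha,r)$-clusterable.

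With Lemma~\ref{lem:cluster12} available, both parts of the corollary follow formally. For Part~(i): if $D$ is $\beta$-close to $(\alpha,r)$-clusterable it is $(\beta/\alpha,3\alpha,r)$-clusterable, so Theorem~\ref{theo:mainthm1}(i) guarantees that with probability at least $2/3$ the algorithm outputs $D'$ with $d_{EM}(D,D'_\sigma)\le\eps$ for some permutation $\sigma:[n]\to[n]$. For Part~(ii): the first assertion --- that for \emph{any} $D$ the algorithm outputs $D'$ with $d_{EM}(D,D'_\sigma)>\eps$ for all $\sigma$ with probability at most $1/3$ --- is exactly Theorem~\ref{theo:mainthm1}(ii) and uses no clusterability assumption, while the remaining clause is only a permission to output {\sc Fail}, trivially inherited from \learn.

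The only genuine content lies in Lemma~\ref{lem:cluster12}, and I expect the main (mild) obstacle to be the bookkeeping there: making the thickened neighborhoods pairwise disjoint without increasing the number of parts beyond $r$, keeping the diameter bound at $3\alpha$ rather than a larger constant, and correctly accounting for the (possibly zero-mass) vectors that fall into $\cC_0'$ so that $\cC_0',\dots,\cC_s'$ genuinely partitions $\{0,1\}^n$. Once these are settled, the corollary itself carries no further probabilistic analysis --- it is a direct instantiation of Theorem~\ref{theo:mainthm1}.
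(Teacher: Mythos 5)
Your proposal matches the paper's proof almost step for step: the key bridge is exactly the paper's Lemma~\ref{lem:cluster12}, proved the same way (thicken each cluster to its $\alpha$-neighborhood, disjointify to get diameter-$3\alpha$ parts, bound the leftover mass by $\beta/\alpha$ via a Markov-type argument on the optimal EMD flow), and the corollary is then the same instantiation of Theorem~\ref{theo:mainthm1} with $\zeta=\beta/\alpha$ and $\delta=3\alpha$. The bookkeeping points you flag as the "genuine content" are precisely the ones the paper handles, and your treatment of them is correct.
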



To prove the above corollary, we need the following lemma, that connects the two notions of clusterability, that is, $(\zeta,\delta,r)$-clusterablity and being $\beta$-close to $(\alpha,r)$-clusterability (see Definition~\ref{defi:cluster}).

\begin{lem}\label{lem:cluster12}
Let $\alpha, \beta \in (0,1)$ be such that $\alpha > \beta$, and $D$ be a distribution over $\{0,1\}^n$ that is $\beta$-close to being $(\alpha, r)$-clusterable. Then $D$ is $\left(3\alpha, r,\beta/\alpha \right)$-clusterable.
\end{lem}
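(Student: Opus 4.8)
The plan is to exhibit the required clustering of $D$ directly from an optimal transportation plan between $D$ and a witnessing clusterable distribution. Fix a distribution $D_0$ with $d_{EM}(D,D_0)\leq\beta$ that is $(\alpha,r)$-clusterable, i.e. $(0,\alpha,r)$-clusterable, and let $\cB_0,\cB_1,\dots,\cB_s$ (with $s\leq r$) be a partition of $\{0,1\}^n$ witnessing this, so $D_0(\cB_0)=0$ and $d_H(\mathbf{U},\mathbf{V})\leq\alpha$ for all $\mathbf{U},\mathbf{V}\in\cB_i$, $i\in[s]$. Since $d_{EM}(D,D_0)\leq\beta$, there is a feasible solution $f=\{f_{\mathbf{X}\mathbf{Y}}\}$ of the transportation LP with $\sum_{\mathbf{X},\mathbf{Y}}f_{\mathbf{X}\mathbf{Y}}d_H(\mathbf{X},\mathbf{Y})\leq\beta$. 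The one preliminary observation I would record: because $D_0(\mathbf{Y})=0$ for every $\mathbf{Y}\in\cB_0$ and $\sum_{\mathbf{X}}f_{\mathbf{X}\mathbf{Y}}=D_0(\mathbf{Y})$ with all $f_{\mathbf{X}\mathbf{Y}}\geq0$, no mass is transported into $\cB_0$; hence for every $\mathbf{X}$ we have $\sum_{j=1}^{s}\sum_{\mathbf{Y}\in\cB_j}f_{\mathbf{X}\mathbf{Y}}=D(\mathbf{X})$.

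Next I would build the partition $\cC_0,\cC_1,\dots,\cC_s$ of $\{0,1\}^n$. Call $\mathbf{Y}$ a \emph{good destination} of $\mathbf{X}\in\mathrm{supp}(D)$ if $f_{\mathbf{X}\mathbf{Y}}>0$ and $d_H(\mathbf{X},\mathbf{Y})\leq\alpha$. If $\mathbf{X}\in\mathrm{supp}(D)$ has a good destination, then that destination lies in some $\cB_j$ with $j\in[s]$ (not $\cB_0$, by the observation above), and I assign $\mathbf{X}$ to $\cC_j$ for one such $j$; otherwise I assign $\mathbf{X}$ to $\cC_0$, and I also place every point outside $\mathrm{supp}(D)$ into $\cC_0$. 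Then two things must be checked. First, the diameter bound: if $\mathbf{X}_1,\mathbf{X}_2\in\cC_j$ with $j\geq1$, choose good destinations $\mathbf{Y}_1,\mathbf{Y}_2\in\cB_j$ of $\mathbf{X}_1,\mathbf{X}_2$; since $d_H(\mathbf{Y}_1,\mathbf{Y}_2)\leq\alpha$, the triangle inequality for $d_H$ gives $d_H(\mathbf{X}_1,\mathbf{X}_2)\leq d_H(\mathbf{X}_1,\mathbf{Y}_1)+d_H(\mathbf{Y}_1,\mathbf{Y}_2)+d_H(\mathbf{Y}_2,\mathbf{X}_2)\leq3\alpha$. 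Second, the mass bound on $\cC_0$: for any $\mathbf{X}\in\cC_0\cap\mathrm{supp}(D)$, every $\mathbf{Y}$ with $f_{\mathbf{X}\mathbf{Y}}>0$ satisfies $d_H(\mathbf{X},\mathbf{Y})>\alpha$, so $\sum_{\mathbf{Y}}f_{\mathbf{X}\mathbf{Y}}d_H(\mathbf{X},\mathbf{Y})\geq\alpha\sum_{\mathbf{Y}}f_{\mathbf{X}\mathbf{Y}}=\alpha D(\mathbf{X})$; summing over $\mathbf{X}\in\cC_0$ and using the cost bound on $f$ yields $\beta\geq\sum_{\mathbf{X}\in\cC_0}\sum_{\mathbf{Y}}f_{\mathbf{X}\mathbf{Y}}d_H(\mathbf{X},\mathbf{Y})\geq\alpha\,D(\cC_0)$, hence $D(\cC_0)\leq\beta/\alpha$. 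Since also $s\leq r$, this shows $D$ admits a partition into at most $r+1$ parts with leftover mass at most $\beta/\alpha$ and all non-leftover parts of diameter at most $3\alpha$; that is, $D$ is $(\beta/\alpha,3\alpha,r)$-clusterable, which is the assertion of the lemma (up to the ordering of the three parameters).

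I do not expect a genuine obstacle: the only step that needs a moment's care is the preliminary observation that the transport plan puts no mass on the zero-measure cluster $\cB_0$ of $D_0$, since this is exactly what guarantees that a good destination of a support point can always be routed to one of the $s$ honest clusters rather than to the leftover part. Beyond that, the argument is two applications of the triangle inequality for the normalized Hamming distance and one application of a Markov-type bound to the optimal transportation cost.
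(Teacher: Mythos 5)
Your proof is correct and rests on the same two ingredients as the paper's: a triangle-inequality argument to show the expanded clusters have diameter at most $3\alpha$, and a Markov-type bound on the transport cost to show the leftover mass is at most $\beta/\alpha$. The one (small) structural difference is how the clusters are defined: the paper takes the $\alpha$-neighborhood of each $D_0$-cluster (a purely geometric construction, using the flow only in the mass bound) and sets $\cC_0$ to be the set of points at distance at least $\alpha$ from $\mathrm{supp}(D_0)$, whereas you assign each support point of $D$ to a cluster according to where a flow-positive, $\alpha$-close destination lands. Your flow-driven assignment has the minor advantage of yielding a genuine partition with no extra bookkeeping (the paper's $\cC_0'=\cC_{>\alpha}$ technically needs to be enlarged to $\{0,1\}^n\setminus\bigcup_i\cC_i'$ to cover everything), but the content and the bounds obtained are the same.
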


\begin{proof}
Let $D_0$ be the distribution such that $D_0$ is $(\alpha,r)$-clusterable and $d_{EM}(D,D_0)\leq \beta$. Let $\cC_1, \ldots, \cC_s$ be the partition of the support of $D_0$ that realizes the $(\alpha,r)$-clusterability of $D_0$, and let $\{f_{\mathbf{XY}}:\mathbf{X,Y}\in \{0,1\}^n\}$ be the flow that realizes $d_{EM}(D,D_0)\leq \beta$.

Let $\cC=\bigcup\limits_{i=1}^s \cC_i$, and $\cC_{>\alpha}$ be the set of vectors in $\{0,1\}^n$ that have distance of at least $\alpha$ from all the vectors in $\cC$. Now we have the following claim.

\begin{cl}\label{cl:clusterdelta}
$D(\cC_{>\alpha}) \leq \frac{\beta}{\alpha}$.
\end{cl}

\begin{proof}
By contradiction, let us assume that $D(\cC_{>\alpha}) > \frac{\beta}{\alpha}$. Then we have the following:
\begin{eqnarray*}
d_{EM}(D,D_0) &\geq& \sum\limits_{\mathbf{X} \in \cC _{> \alpha} , \mathbf{Y} \in \cC} f_{\mathbf{XY}}d_H(\mathbf{X},\mathbf{Y}) 
\geq \alpha \cdot D(\cC _{> \alpha}) > \beta.
\end{eqnarray*}
This is a contradiction as we have assumed $d_{EM}(D,D_0)\leq \beta$.
\end{proof}

Now for every $i$, let $\cC_i^{\leq \alpha}$ be the vectors that have distance at most $\alpha$ from at least one vector $\cC_i$, where $i \in [s]$. Let $\cC_i'=\cC_i^{\leq \alpha} \setminus \bigcup\limits_{j=1}^{i-1} \cC_j'$ for $1 \leq i \leq s$. Now we have the following observation.


\begin{obs}
For any $1 \leq i \leq s$,  $d_H(\mathbf{U},\mathbf{V})\leq 3\alpha$  for any $\mathbf{U},\mathbf{V} \in \cC_i'$.    
\end{obs}

\begin{proof}
Since $\mathbf{U}, \mathbf{V} \in \cC_i'$, let $\mathbf{U'}$ and $\mathbf{V'}$ be the vectors in $\cC_i$ such that $d_H(\mathbf{U}, \mathbf{U'}) \leq \alpha$, and $d_H(\mathbf{V}, \mathbf{V'}) \leq \alpha$. As $\mathbf{U'}, \mathbf{V'} \in \cC_i$, and $D_0$ is $(\alpha, r)$-clusterable, using the triangle inequality, we can say that $d_H(\mathbf{U}, \mathbf{V}) \leq d_H(\mathbf{U}, \mathbf{U'}) + d_H(\mathbf{U'}, \mathbf{V'}) + d_H(\mathbf{V'}, \mathbf{V}) \leq 3\alpha$.
\end{proof}

\color{black}
 

Consider $\cC_0'=\cC_{>\alpha}$, and by Claim~\ref{cl:clusterdelta}, note that $D(\cC_0')\leq \beta/\alpha$. The existence of $\cC_0',\cC_1',\ldots,\cC_s'$ as above implies that $D$ is $\left(3\alpha,r, \beta/\alpha \right)$-clusterable (see Definition~\ref{defi:cluster}).
\end{proof}

\begin{proof}[Proof of Corollary~\ref{coro:mainthm1} using Theorem~\ref{theo:mainthm1} and Lemma~\ref{lem:cluster12}]
The algorithm here (say {\sc ALG}) calls algorithm \learn (as described in Algorithm~\ref{alg:testcluster--}) with parameters $\zeta=\beta/\alpha$ and $\delta=3\alpha$, and reports the output returned by \learn as the output of {\sc ALG}. Now we prove the correctness of {\sc ALG}. 

\begin{description}
\item[Part (i):] \complain{Here we consider the case where $D$ is $\beta$-close to $(\alpha,r)$-clusterable. By Lemma~\ref{lem:cluster12}, $D$ is $(\zeta, \delta,r)$-clusterable}. By Theorem~\ref{theo:mainthm1} (i), we get a distribution $D'$  such that $d_{EM}(D,D_\sigma')\leq 17(\zeta+\delta) =17(3\alpha+\beta/\alpha)=\eps$ for some permutation $\sigma:[n]\rightarrow [n]$, with probability at least $2/3$. This completes the proof of Part (i).
 
\item[Part (ii):] This follows from Theorem~\ref{theo:mainthm1} (ii) along with our choices of  $\delta=3\alpha$ and $\zeta=\beta/\alpha$.
\end{description}
\end{proof}

\color{black}


\remove{\begin{theo}[{\bf Learning a distribution with bounded {VC-dimension}}]\label{theo:learn-vc}
Let $d, n \in \N$ be a constant. 
There exists a (non-adaptive) algorithm, that given sample and query accesses to a distribution $D$ over $\{0,1\}^n$, takes $ \alpha, \beta \in (0,1)$ with $\beta <\alpha$ as input such that $\eps=17(2 \alpha+\beta/\alpha)<1$, makes number of queries that depends only on $\alpha,\beta$ and $d$, either reports a full description of a distribution, or {\sc Fail}, satisfying both of the following conditions:
\begin{description}
\item[(i)] If $D$ is $\beta$-EMD-close to {VC-dimension} $d$, then with probability at least $2/3$, the algorithm outputs a distribution $D'$ such that $d_{EM}(D,D'_\sigma)\leq \eps$, where $\sigma:[n]\rightarrow [n]$ is a permutation.

\item[(ii)] For any $D$, the algorithm will not output a distribution $D'$ such that $d_{EM}(D,D'_\sigma) > \eps$ for every permutation $\sigma:[n]\rightarrow [n]$ with probability more than $\frac{1}{3}$. However, if the distribution $D$  is not $\beta$-EMD-close to {VC-dimension} $d$, the algorithm may output {\sc Fail} with any probability.
\end{description}
\end{theo}
}


\subsection{A corollary from VC theory required to prove Theorem~\ref{theo:learn-vc}}\label{sec:learn-vcprelim}

{In this subsection, we recall some definitions from VC-dimension theory, and use a well known result of Haussler~\cite{haussler1995sphere} to obtain Corollary~\ref{coro:cover}, which states that if the VC-dimension of a set of vectors $V$ is bounded, then the vectors of $V$ can be covered by bounded number of Hamming balls. This corollary will be crucially used to prove Theorem~\ref{theo:learn-vc} in Subsection~\ref{sec:learn-vc}.}


Let us start by defining the notion of an {$\alpha$-separated set}.

\begin{defi}[{\bf $\alpha$-separated set}]
Let $\alpha \in (0,1)$ and $W \subset \{0,1\}^n$ be a set of vectors. $W$ is said to be \emph{$\alpha$-separated} if for any two vectors $\mathbf{X}, \mathbf{Y} \in W$, $d_{H}(\mathbf{X}, \mathbf{Y}) \geq \alpha$.
\end{defi}

Now let us define the notion of the $\alpha$-packing number of a set of vectors.

\begin{defi}[{\bf $\alpha$-packing number}]
Let $\alpha \in (0,1)$, and $V \subset \{0,1\}^n$ be a set of vectors. The \emph{$\alpha$-packing number} of $V$, denoted by $\cM(\alpha, V)$, is defined as the cardinality of the largest $\alpha$-separated subset $W$ of $V$.
\end{defi}

Now we define the notion of an $\alpha$-cover of a set of vectors.

\begin{defi}[{\bf $\alpha$-cover}]
Let $\alpha \in (0,1)$ and $V \subset \{0,1\}^n$ be a set of vectors. A set $M\subseteq V$ is an \emph{$\alpha$-cover} of $V$ if $V \subseteq \bigcup\limits_{{\bf p} \in M} \mbox{{\sc NGB}}_{\alpha}({\bf p})$,
where $\mbox{{\sc NGB}}_{\alpha}({\bf p}) : = \left\{ \mathbf{q} \, : \, d_{H}(\mathbf{p},\mathbf{q}) \leq \alpha\right\}$ denotes the set of vectors that are within Hamming distance $\alpha$ from the vector $\mathbf{p}$.
\end{defi}




Now let us consider the following theorem from \cite{haussler1995sphere}, which says that if the VC-dimension of a set of vectors $V$ is $d$, then the size of the $\alpha$-packing number of $V$, that is, $\cM(\alpha, V)$, is bounded by a function of $d$ and $\alpha$.

\begin{theo}[{\bf Haussler's packing theorem~\cite[Theorem~1]{haussler1995sphere}}]\label{haussler1995sphere}
Let $\alpha \in (0,1)$ be a parameter. If the VC-dimension of a set of vectors $V$ is $d$, then the $\alpha$-packing number of $V$ is bounded as follows:
$$\cM(\alpha, V) \leq e(d+1)\left(\frac{2e}{\alpha}\right)^d
$$
\end{theo}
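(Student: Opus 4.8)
The plan is to bound $\cM(\alpha,V)$ by controlling the size $m$ of a largest $\alpha$-separated subset $W\subseteq V$, and the two ingredients I would use are the Sauer--Shelah lemma and a random coordinate restriction. First I would note that $W$ inherits VC-dimension at most $d$ from $V$, since any index sequence shattered by $W$ is also shattered by $V$. The core of the argument is then to estimate, in two different ways, the expected number of distinct projections $\size{W\mid_R}$, where $R$ is a random sequence of $r$ coordinates of $[n]$ (each drawn uniformly and independently) and $r$ is a parameter to be tuned.

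For the upper estimate, $W\mid_R$ is a set of vectors in $\{0,1\}^{r}$ of VC-dimension at most $d$, so by Sauer--Shelah $\size{W\mid_R}\le\sum_{i=0}^{d}\binom{r}{i}\le\left(\tfrac{er}{d}\right)^{d}$ for $r\ge d$, deterministically and hence in expectation. For the lower estimate, fix distinct $\mathbf{X},\mathbf{Y}\in W$; since they disagree on at least an $\alpha$-fraction of coordinates, $\pr[\mathbf{X}\mid_R=\mathbf{Y}\mid_R]\le(1-\alpha)^{r}\le e^{-\alpha r}$. Because equality of projections is an equivalence relation on $W$, writing $P$ for the random number of colliding pairs and applying Cauchy--Schwarz to the sizes of the equivalence classes gives $\size{W\mid_R}\ge m^{2}/(m+2P)$; taking expectations, using $\E[P]\le\binom{m}{2}e^{-\alpha r}$ and the convexity of $x\mapsto m^{2}/(m+2x)$, yields $\E\,\size{W\mid_R}\ge m/(1+m e^{-\alpha r})$. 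Comparing the two estimates gives the master inequality
\[
\frac{m}{1+m e^{-\alpha r}}\ \le\ \left(\frac{er}{d}\right)^{d},
\]
and choosing $r$ of order $\tfrac1\alpha\ln m$ so that the denominator is at most $2$ already produces a bound of the shape $m\le O_d\!\big((1/\alpha)^{d}\,\mathrm{polylog}(1/\alpha)\big)$ --- which is already enough for every use of this result in Subsection~\ref{sec:learn-vcprelim}, where only boundedness in terms of $d$ and $\alpha$ is required.

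The main obstacle is shaving off the stray $\big(\log(1/\alpha)\big)^{d}$ factor to reach the exact constant $e(d+1)(2e/\alpha)^{d}$. The naive collision count is lossy precisely because it insists that a \emph{single} random restriction of size $\approx\tfrac1\alpha\ln m$ simultaneously separate all of $W$. Haussler's sharper argument circumvents this by revealing the coordinates of a random permutation of $[n]$ one at a time, tracking the growth of the number of surviving distinct projections, and bounding that growth via the one-inclusion graph inequality $\size{E}\le d\,\size{\cC}$ for classes $\cC\subseteq\{0,1\}^{r}$ of VC-dimension at most $d$ (the Haussler--Littlestone--Warmuth bound). Executing this incremental analysis and optimizing the resulting recursion --- rather than the one-shot sampling above --- is the technical heart of the proof, and for this last part I would follow Haussler's original argument closely; the final accounting then delivers the constant $e(d+1)(2e/\alpha)^{d}$ as stated.
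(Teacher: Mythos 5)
The paper cites this as Theorem~1 of Haussler (1995) and does not prove it, so there is no in-paper argument to compare against. Your sketch is nonetheless sound: the one-shot collision-counting argument (random $r$-coordinate restriction, Sauer--Shelah upper bound $\size{W\mid_R}\le(er/d)^d$, collision probability $(1-\alpha)^r$, Cauchy--Schwarz across the equivalence classes of the projection, and Jensen's inequality applied to the convex map $x\mapsto m^2/(m+2x)$) is the standard probabilistic packing lemma and correctly yields $m=O\big((d\log(1/\alpha)/\alpha)^d\big)$ after optimizing $r\approx\alpha^{-1}\ln m$. You are also right that this weaker bound is all that the paper actually needs --- Corollary~\ref{coro:cover} and Theorem~\ref{theo:learn-vc} only require that the cover size be a finite function of $d$ and $\alpha$, and a stray $\big(\log(1/\alpha)\big)^d$ factor would simply be absorbed into the already doubly-exponential-in-$d$ query bound. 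Your description of the sharper route via the incremental chaining argument and the Haussler--Littlestone--Warmuth edge bound $\size{E}\le d\size{\cC}$ for the one-inclusion graph of a VC-dimension-$d$ class is accurate; you correctly identify, rather than attempt, the careful accounting needed to land exactly on $e(d+1)(2e/\alpha)^d$. In short: there is no paper proof to match, and what you provide is a correct self-contained argument for a slightly weaker bound, an honest pointer to Haussler for the sharp constant, and the useful observation that the weaker bound already suffices for the paper's application.
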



The following observation is immediate.

\begin{obs}
Let $\alpha \in (0,1)$ be a parameter and $M$ be a maximal $\alpha$-packing of a set of vectors $V \subset \{0,1\}^{n}$. Then $M$ is also an $\alpha$-cover of $V$.
\end{obs}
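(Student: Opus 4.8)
The plan is to prove the contrapositive-flavored statement directly: take a maximal $\alpha$-packing $M$ of $V$ and show that $M$ is automatically an $\alpha$-cover of $V$. The whole content is the maximality of $M$, so first I would spell out what maximality means here: $M \subseteq V$ is $\alpha$-separated (so $d_H(\mathbf{p},\mathbf{q}) \geq \alpha$ for all distinct $\mathbf{p},\mathbf{q} \in M$), and no proper superset $M \cup \{\mathbf{v}\}$ with $\mathbf{v} \in V \setminus M$ is still $\alpha$-separated.

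Next I would take an arbitrary $\mathbf{v} \in V$ and show $\mathbf{v} \in \mbox{{\sc NGB}}_\alpha(\mathbf{p})$ for some $\mathbf{p} \in M$. If $\mathbf{v} \in M$ this is immediate with $\mathbf{p} = \mathbf{v}$ (since $d_H(\mathbf{v},\mathbf{v}) = 0 \leq \alpha$). If $\mathbf{v} \notin M$, then by maximality $M \cup \{\mathbf{v}\}$ is not $\alpha$-separated; since $M$ itself is $\alpha$-separated, the violation must involve $\mathbf{v}$, i.e. there exists $\mathbf{p} \in M$ with $d_H(\mathbf{v},\mathbf{p}) < \alpha$, hence $\mathbf{v} \in \mbox{{\sc NGB}}_\alpha(\mathbf{p})$. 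Therefore $V \subseteq \bigcup_{\mathbf{p} \in M} \mbox{{\sc NGB}}_\alpha(\mathbf{p})$, which is exactly the definition of $M$ being an $\alpha$-cover.

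There is essentially no obstacle here — this is a one-line pigeonhole-style argument, and the only mild subtlety is a boundary convention: the definition of $\alpha$-separated uses $d_H \geq \alpha$, so a violation of $\alpha$-separatedness for $M \cup \{\mathbf{v}\}$ gives strict inequality $d_H(\mathbf{v},\mathbf{p}) < \alpha \leq \alpha$, which still lands inside $\mbox{{\sc NGB}}_\alpha(\mathbf{p}) = \{\mathbf{q} : d_H(\mathbf{p},\mathbf{q}) \leq \alpha\}$, so nothing breaks. One should also note that a maximal $\alpha$-packing exists (take any $\alpha$-separated subset of maximum cardinality, which is finite since $V \subseteq \{0,1\}^n$ is finite), so the statement is non-vacuous. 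I would write the proof in two sentences: introduce $\mathbf{v} \in V$, split on whether $\mathbf{v} \in M$, and invoke maximality in the remaining case.

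\begin{proof}
Since $V \subseteq \{0,1\}^n$ is finite, a maximal $\alpha$-packing $M$ of $V$ exists. Let $\mathbf{v} \in V$ be arbitrary. If $\mathbf{v} \in M$, then $d_H(\mathbf{v},\mathbf{v}) = 0 \leq \alpha$, so $\mathbf{v} \in \mbox{{\sc NGB}}_\alpha(\mathbf{v})$. If $\mathbf{v} \notin M$, then by the maximality of $M$ the set $M \cup \{\mathbf{v}\}$ is not $\alpha$-separated; as $M$ itself is $\alpha$-separated, there must exist $\mathbf{p} \in M$ with $d_H(\mathbf{v},\mathbf{p}) < \alpha$, so $\mathbf{v} \in \mbox{{\sc NGB}}_\alpha(\mathbf{p})$. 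In either case $\mathbf{v} \in \bigcup_{\mathbf{p} \in M} \mbox{{\sc NGB}}_\alpha(\mathbf{p})$, and since $\mathbf{v} \in V$ was arbitrary, $V \subseteq \bigcup_{\mathbf{p} \in M} \mbox{{\sc NGB}}_\alpha(\mathbf{p})$. Hence $M$ is an $\alpha$-cover of $V$.
\end{proof}
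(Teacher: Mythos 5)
Your proof is correct and is exactly the standard maximality argument that the paper is implicitly invoking when it says the observation is ``immediate'' (the paper gives no explicit proof). Nothing to add.
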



With this observation, along with Theorem~\ref{haussler1995sphere},
we get the 
following bound on the size of a cover of a set of vectors in terms of its VC-dimension. 

\begin{coro}[{\bf Existence of a small $\alpha$-cover}]\label{coro:cover}
Let $d \in \N$. If the VC-dimension of a set of vectors $V$ is $d$, then for all $\alpha \in (0,1)$,
there exists a set $M \subseteq V$ such that $M$ is an $\alpha$-cover of $V$ and
$|M| \leq e(d+1)\left(\frac{2e}{\alpha}\right)^d$.
\end{coro}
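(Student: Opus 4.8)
The plan is to let $M$ be any maximal $\alpha$-separated subset of $V$. Since $V\subseteq\{0,1\}^n$ is finite, such a maximal set certainly exists (greedily keep adding vectors of $V$ that are at Hamming distance at least $\alpha$ from all previously chosen ones, until no further vector can be added). By construction $M$ is an $\alpha$-separated subset of $V$, so $|M|\le \cM(\alpha,V)$, and Haussler's packing theorem (Theorem~\ref{haussler1995sphere}) immediately yields $|M|\le e(d+1)\left(\frac{2e}{\alpha}\right)^d$, which is exactly the size bound claimed in the corollary.

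It then remains to check that this $M$ is an $\alpha$-cover of $V$, which is precisely the observation stated just before the corollary; I would spell it out as follows. Suppose some $\mathbf{v}\in V$ is not covered, i.e.\ $\mathbf{v}\notin \mbox{{\sc NGB}}_{\alpha}(\mathbf{p})$ for every $\mathbf{p}\in M$, which means $d_H(\mathbf{v},\mathbf{p})>\alpha$ for all $\mathbf{p}\in M$. In particular $\mathbf{v}\notin M$ (since $d_H(\mathbf{v},\mathbf{v})=0\le\alpha$ would place $\mathbf{v}$ in $\mbox{{\sc NGB}}_{\alpha}(\mathbf{v})$), and $M\cup\{\mathbf{v}\}$ is still $\alpha$-separated: the old pairwise distances are unchanged, and every new pair involving $\mathbf{v}$ has distance strictly larger than $\alpha$. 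This contradicts the maximality of $M$. Hence every $\mathbf{v}\in V$ lies in some $\mbox{{\sc NGB}}_{\alpha}(\mathbf{p})$ with $\mathbf{p}\in M$, that is, $V\subseteq\bigcup_{\mathbf{p}\in M}\mbox{{\sc NGB}}_{\alpha}(\mathbf{p})$, so $M$ is an $\alpha$-cover of $V$, completing the proof.

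There is essentially no hard step here: the corollary is just the combination of Haussler's quantitative packing bound with the elementary fact that a maximal packing is automatically a covering. The only point requiring a moment's care is the interplay between the ``$\ge\alpha$'' in the definition of $\alpha$-separated and the ``$\le\alpha$'' in the definition of $\mbox{{\sc NGB}}_{\alpha}$; but since a failure of the covering property gives the strict inequality $d_H(\mathbf{v},\mathbf{p})>\alpha$, the augmented set $M\cup\{\mathbf{v}\}$ indeed remains $\alpha$-separated, so the contradiction argument goes through cleanly.
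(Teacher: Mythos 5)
Your proof is correct and is essentially identical to the paper's: take a maximal $\alpha$-separated subset of $V$, bound its size via Haussler's packing theorem, and observe that maximality forces it to be an $\alpha$-cover. The paper leaves the maximal-packing-is-a-cover observation unproved; you fill it in, with appropriate care about the $\ge\alpha$ vs.\ $\le\alpha$ boundary, but the overall argument is the same.
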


\subsection{Proof of Theorem~\ref{theo:learn-vc} and testing bounded VC-dimension 
properties}\label{sec:learn-vc}

{In this subsection, using Corollary~\ref{coro:mainthm1}, we prove that any distribution that is $\beta$-close to bounded VC-dimension can be learnt (up to permutation) by performing a number of queries that depends only on the VC-dimension $d$ and the proximity parameter $\eps$, and is independent of the dimension of the Hamming cube $\{0,1\}^n$ (Theorem~\ref{theo:learn-vc}). The crucial ingredient of the proof is Theorem~\ref{haussler1995sphere}, through its Corollary~\ref{coro:cover}.
From Theorem~\ref{theo:learn-vc}, we obtain a tester for testing distribution properties with bounded VC-dimension (Corollary~\ref{coro:vc-learn2}).}



\begin{proof}[Proof of Theorem~\ref{theo:learn-vc}]
\complain{We call the algorithm {\sc ALG} corresponding to Corollary~\ref{coro:mainthm1} with $D$ as the input distribution, the same $\alpha$ and $\beta$ as here, and $r=\lfloor e(d+1)\left(\frac{2e}{\alpha}\right)^d \rfloor$. Note that the output of {\sc ALG} is either the full description of a distribution $D'$ or {\sc Fail}. We output the same output returned by {\sc ALG}. Now we prove the correctness of this procedure.}
\begin{itemize}
\item[(i)] \complain{
Here $D$ is $\beta$-close to having {VC-dimension} $d$. Let $D_0$ be the distribution such that $D_0$ has VC-dimension at most $d$ and $d_{EM}(D,D_0)\leq \beta$. By Corollary~\ref{coro:cover}, we can partition the support of $D_0$ into $r$ parts $\cC_1,\ldots,\cC_r$ such that $r\leq e(d+1)\left(\frac{2e}{\alpha}\right)^d$ and the Hamming distance between any pair of vectors in the same cluster $\cC_i$ is at most $\alpha$. This means that $D_0$ is $(\alpha,r)$-clusterable.
So, with probability at least 2/3, \learn outputs a distribution $D'$ such that $d_{EM}(D,D'_\sigma)\leq 17(3\alpha + \beta/\alpha)$ for some permutation $\sigma : [n] \rightarrow [n]$, and we are done with the proof.}
\item[(ii)] \complain{This follows from the guarantee provided by \learn, see Corollary~\ref{theo:mainthm1} (ii).}
\end{itemize}      
\end{proof}      

\remove{
\begin{theo}
Let $\mathcal{D}_0$ be a distribution that is $\epsilon$-EMD-close to $(\delta, r)$-clusterable.  Then for any $\eta$ one can output a distribution $D'$, using at most $poly(1/\epsilon, 1/\delta, 1/\eta, 2^r)$ samples from $D_0$, such that 
$d_{EM}(D_0, D_1) \leq (\eta + \epsilon)$.
\end{theo}

\begin{proof}
If $D_0$ is  $\epsilon$-EMD-close to $(\delta, r)$-clusterable then $D_0$ is $(\epsilon/\delta, \delta, r)$-clusterable.

....
\end{proof}}







We conclude this section with the proof of Corollary~\ref{coro:vc-learn2} regarding the testing of properties with bounded VC-dimension.


\begin{proof}[Proof of Corollary~\ref{coro:vc-learn2}]
We call the algorithm (say {\sc ALG}) corresponding to Theorem~\ref{theo:learn-vc} with the input distribution $D$, $\alpha=\eps/102$, and $\beta=0$. Let $D'$ be the output of {\sc ALG}.  We check if there exists a distribution $D'' \in \cP$ such that $d_{EM}(D',D'') \leq \eps/2$. If yes, we accept $D$. Otherwise, we reject $D$.

Now we argue the correctness. For completeness, let us assume that $D \in \cP$, hence $D$ has VC-dimension $d$. By the guarantee for {\sc ALG} following Theorem~\ref{theo:learn-vc}, with probability at least $2/3$, {\sc ALG} does not report {\sc Fail}, and the output distribution $D'$ by {\sc ALG} satisfies $d_{EM}(D,D'_\sigma)\leq \eps/2$ for some permutation $\sigma:[n]\rightarrow [n]$. Since $\cP$ is an index-invariant property, $D'$ and $D'_\sigma$ have the same distance from the property $\cP$. Also, as $D \in \cP$, $D_\sigma \in \cP$ as well. Hence, there exists a distribution $D'' \in \cP$ (here $D_\sigma$ in particular) such that $d_{EM}(D',D'') \leq \eps/2$, and we accept $D$ with probability at least $2/3$.

For soundness, consider the case where $D$ is $\eps$-far from $\cP$. If {\sc ALG} reports {\sc Fail}, we are done. Otherwise, by Theorem~\ref{theo:learn-vc}, the output distribution $D'$ is such that $d_{EM}(D,D'_\sigma)\leq \eps/2$ for some permutation $\sigma:[n]\rightarrow [n]$. Now we consider any distribution $D''$ with $d_{EM}(D',D'')\leq \eps/2$ and argue that $D''$ is not in $\cP$. By contradiction, let us assume that $D'' \in \cP$. As $\cP$ is index-invariant, $D''_\sigma \in \cP$. Note that  $d_{EM}(D'_\sigma,D''_\sigma) \leq \eps/2$ as $d_{EM}(D',D'')\leq \eps/2$. So, $D'_\sigma$ is $\eps/2$-close to property $\cP$. As $d_{EM}(D,D'_\sigma)\leq \eps/2$, by the triangle inequality, $D$ is $\eps$-close to $\cP$, a contradiction. This completes the proof of Corollary~\ref{coro:vc-learn2}.
\color{black}
\end{proof}

\section{Tightness of the bounds for bounded VC-dimension properties}\label{sec:bounded-vc}

As mentioned in the introduction, our tester for testing a VC-dimension property takes $\exp(d)$ samples, and performs $\exp(\exp(d))$ queries for VC-dimension $d$. Now we show that there exists an index-invariant property of VC-dimension at most $d$ which requires such sample and query complexities, proving Theorem~\ref{theo:lb-vc_intro}.

\color{black}

\begin{theo}[{\bf Restatement of Theorem~\ref{theo:lb-vc_intro}}]\label{theo:lb-vc}
Let $d,n \in \N$. There exists an index-invariant property $\cP_{\vc}$ with VC-dimension at most $d$ such that any (non-adaptive) tester for $\cP_{\vc}$ requires $2^{\Omega(d)}$ samples and $2^{2^{d - \Oh(1)}}$ queries.
\end{theo}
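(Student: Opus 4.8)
The plan is to exhibit the property $\cP_{\vc}$ sketched in the overview and to prove the sample bound and the query bound separately, each via Yao's lemma for non-adaptive testers in the huge object model; since the ``yes'' instance will lie in $\cP_{\vc}$ while both ``no'' families will be far from $\cP_{\vc}$, one tester has to beat both bounds at once. Fix $k=2^d$, $\ell=2^{2^{d-10}}$, $\ell'=2^{2^{d-20}}$ (so $\ell'\mid\ell$), and assume $d$ is large enough that these are integers and $\ell\mid n$ (for smaller $d$ or $n$ the bounds are vacuous). First I would pin down a fixed $k\times\ell$ matrix $A$ whose columns are pairwise $\tfrac13$-far and whose rows are pairwise $\tfrac13$-far: a uniformly random $A$ works, because a fixed pair of rows (length $\ell$) fails with probability $e^{-\Omega(\ell)}$, a fixed pair of columns (length $k$) with probability $e^{-\Omega(k)}$, and $\binom{k}{2}e^{-\Omega(\ell)}+\binom{\ell}{2}e^{-\Omega(k)}<1$. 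Let $\mathbf V_1,\dots,\mathbf V_k$ be the rows of $A$ each blown up by repeating every bit $n/\ell$ times, $D_A$ uniform on them, and $\cP_{\vc}=\{(D_A)_\sigma:\sigma\in S_n\}$; this is a finite (hence closed) index-invariant property, and since every member has support size $k=2^d<2^{d+1}$ it cannot shatter $d+1$ coordinates, so its VC-dimension is at most $d$. Set $D_{yes}=(D_A)_\pi$ for a uniformly random $\pi\in S_n$.

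For the query bound, take $D_{no}=(D_B)_\tau$ where $B\in\{0,1\}^{k\times\ell'}$ consists of $\ell'$ uniformly random columns of $A$, its rows blown up by repeating each bit $n/\ell'$ times into $\mathbf W_1,\dots,\mathbf W_k$, and $\tau$ uniformly random. The crucial claim is that $d_{EM}(D_B,D')\ge\tfrac18$ for every such $B$ and every $D'\in\cP_{\vc}$ with support vectors $\mathbf V'_1,\dots,\mathbf V'_k$. Both are uniform on $k$ points, so (the relevant transportation polytope has uniform marginals, hence vertices that are scaled permutation matrices) the EMD equals $\tfrac1k\min_{\rho\in S_k}\sum_i d_H(\mathbf W_i,\mathbf V'_{\rho(i)})$. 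Decomposing $[n]$ simultaneously along the size-$n/\ell'$ blocks $Q_t$ on which the $\mathbf W_i$ are constant and the size-$n/\ell$ blocks $P^\sigma_s$ on which the support of $D'$ is constant,
\[
\sum_i d_H(\mathbf W_i,\mathbf V'_{\rho(i)}) \;=\; \sum_{t\in[\ell'],\,s\in[\ell]} \frac{|Q_t\cap P^\sigma_s|}{n}\,\delta_H\!\bigl(A^{(\beta_t)},\,\rho(A^{(s)})\bigr),
\]
where $A^{(j)}$ is the $j$-th column, $\beta_t$ the $t$-th chosen column, and $\rho(A^{(s)})$ is $A^{(s)}$ with its $k$ entries permuted by $\rho$. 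Because one fixed $\rho$ preserves all pairwise Hamming distances, $\{\rho(A^{(s)}):s\in[\ell]\}$ is still a family of $\ell$ vectors pairwise $\tfrac k3$-apart, so for each $t$ at most one $s$ has $\delta_H(A^{(\beta_t)},\rho(A^{(s)}))<\tfrac k6$; dropping that term and using $|P^\sigma_s|=n/\ell$ gives $\sum_i d_H\ge\tfrac k6(1-\ell'/\ell)>\tfrac k8$, i.e.\ $d_{EM}(D_B,D')>\tfrac18$. For indistinguishability, given a query sequence of total size $q$, a birthday bound shows that with probability $\ge1-O(q^2/\ell')$ the queried coordinates inside every single sample land in distinct blocks of the relevant partition; conditioned on this, the view of each sample is ``a uniformly random row of $A$ read at a uniformly random subset of its columns'' — the same law under $D_{yes}$ and $D_{no}$ (a random $\ell'$-subset followed by a random sub-subset is a random subset), and the cross-sample consistency pattern also agrees up to a negligible $1/\ell$-versus-$1/\ell'$ discrepancy. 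Hence $d_{TV}(D_{yes}\mid_{\mathcal J},D_{no}\mid_{\mathcal J})=O(q^2/\ell')<\tfrac13$ for $q<2^{2^{d-\Oh(1)}}$, and Yao's lemma gives the query lower bound.

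For the sample bound, take $D'_{no}=(D_{B'})_\tau$ where $B'\in\{0,1\}^{k'\times\ell}$ consists of $k'=2^{d-20}$ uniformly random rows of $A$, blown up as before into $\mathbf W'_1,\dots,\mathbf W'_{k'}$. Farness is easy here: any $D'\in\cP_{\vc}$ has $k$ support points pairwise $\tfrac13$-far, so each $\mathbf W'_j$ is within $\tfrac16$ of at most one of them; in any transport plan $\mathbf W'_j$ sends out mass $1/k'$, of which at most $1/k$ can reach a point within $\tfrac16$, so at least $1/k'-1/k$ travels distance $\ge\tfrac16$, yielding $d_{EM}(D_{B'},D')\ge\tfrac16(1-k'/k)=\tfrac16(1-2^{-20})>\tfrac18$. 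For indistinguishability, a non-adaptive tester may read all of $[n]$ in every sample, so the only obstruction to hiding the input is a collision among the $s$ drawn samples, of probability $\le s^2/k'$ under either distribution; conditioned on all $s$ samples being distinct, the view is in both cases a uniformly random ordered $s$-tuple of distinct rows of $A$, each blown up, together with a uniformly random balanced partition of $[n]$ into $\ell$ groups (a random $k'$-subset of rows followed by a random $s$-subset is a random $s$-subset). Thus $d_{TV}(D_{yes}\mid_{\mathcal J},D'_{no}\mid_{\mathcal J})\le2s^2/k'<\tfrac13$ for $s<2^{\Omega(d)}$, and Yao's lemma gives the sample lower bound; since $D_{no}$ and $D'_{no}$ are both $\tfrac18$-far from $\cP_{\vc}$ while $D_{yes}\in\cP_{\vc}$, any tester must both query $2^{2^{d-\Oh(1)}}$ times and draw $2^{\Omega(d)}$ samples.

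I expect the farness of $D_{no}$ to be the main obstacle: one must rule out that an index-invariant property, closed under all $n!$ coordinate permutations, can nonetheless approximate $D_{no}$. The resolution is the reformulation of the optimal matching cost as an expectation, over pairs of blocks, of column Hamming distances, combined with the observation that a single permutation $\rho$ applied uniformly to all $\ell$ target columns preserves their mutual $\tfrac13$-separation — so $\rho$ can ``align'' the $\ell'$ chosen columns with at most $\ell'$ of the $\ell$ targets, and $\ell'\ll\ell$ forces almost all of the transported mass to move a constant distance. A secondary technical nuisance is controlling the cross-sample dependencies in the query-indistinguishability step, but these contribute only an $O(q^2/\ell')$ term to the variation distance.
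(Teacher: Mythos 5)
Your proposal is correct and is, at its core, the same construction and argument as the paper: the same property $\cP_{\vc}$ (the permutation orbit of one distribution built from a $k\times\ell$ matrix with pairwise-far columns), the same two families of far ``no'' instances (a random $\ell'$-column submatrix for the query bound and a random $k'$-row submatrix for the sample bound), farness via the Birkhoff reduction of the uniform-to-uniform EMD to an optimal row matching, and indistinguishability by a birthday estimate feeding Yao's lemma. The one genuinely different local step is your sample-farness argument: you add the (easily satisfiable) requirement that the rows of $A$ also be pairwise $1/3$-far, and then bound the transport cost directly by noting that each $\mathbf{W}'_j$ is $1/6$-close to at most one of the $k$ target atoms; the paper instead reuses the matching/column-class argument, which needs only column separation (though the paper's Lemma~\ref{lem:sample} implicitly relies on distinct rows too, so your extra hypothesis costs nothing and also gives a cleaner proof). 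One small precision point in your query-indistinguishability step: the conditioning event should be that the union $Q$ of all query sets contains no two indices in the same block --- across samples, not merely within each sample --- since a cross-sample pair landing in a common block produces an answer correlation whose frequency differs between $D_{yes}$ and $D_{no}$; the $O(q^2/\ell')$ estimate you state is exactly what the global birthday bound gives, so the final numbers are right, but phrase the event globally as the paper does.
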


Since the query complexity of non-adaptive testers can be at most quadratic as compared to adaptive ones (Theorem~\ref{theo:lb-main}), arguing only for non-adaptive testers is sufficient for our purpose. {We would like to point out that the property of having support size at most $2^d$ is a property with VC-dimension bounded by $d$, for which the authors of \cite{GoldreichR21a} proved a lower bound of $\Omega(2^{(1- o(1))d})$ samples~\cite[Observation 2.7]{GoldreichR21a}. Although the sample lower bound of the property $\cP_{\vc}$ of Theorem~\ref{theo:lb-vc} is weaker in comparison to that of the  support size property, here we prove both sample and query lower bounds for the same property $\cP_{\vc}$. Moreover, $\cP_{\vc}$ is defined by being a permutation of a single distribution.}

Without loss of generality, in what follows, we assume that $d$ is large enough.

\paragraph*{Property $\cP_{\vc}$:}
Let $k=2^d$ and $\ell=2^{2^{d - 10}}$ be two integers and assume that $\ell$ divides $n$. Consider a matrix $A$ of dimension $k \times \ell$ such that the Hamming distance between any pair of column vectors of $A$ is at least $1/3$~\footnote{
One way to construct such a matrix is to select  $2^{d - 10}$  vectors from $\{0,1\}^{2^d}$ uniformly at random, and let the columns of $A$ be the set of all their linear combinations over the field $\mathbb{Z}_2$.}. Let $D_A$ be a  distribution supported over the vectors $\mathbf{V}_1,\ldots,\mathbf{V}_{k}$ such that, for every $i \in [k]$, the following holds:

\begin{itemize}
    \item $\mathbf{V}_i$ is the $n/\ell$ times ``blow-up'' of the $i$-th row of $A$, that is, for $j \in [\ell]$ and $j'$ with $(j-1)\cdot \frac{n}{\ell} < j' \leq j \cdot \frac{n}{\ell} $, $\left(\mathbf{V}_i\right)_{j'}=a_{ij}$, where $a_{ij}$ denotes the element of the matrix $A$ present in the $i$-th row and the $j$-th column.
    
    \item $D_A(\mathbf{V}_i)=\frac{1}{k}=\frac{1}{2^d}$.
\end{itemize}

Now we are ready to define the property $\cP_{\vc}$.
$$\cP_{\vc} = \{D: D= D_A^\sigma \ \mbox{for some permutation}\ \sigma: [n] \rightarrow [n]\}.$$


Now we have the following observation.

\begin{obs}
The VC-dimension of $\cP_{\vc}$ is at most $d$.
\end{obs}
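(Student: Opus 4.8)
The plan is to bound the VC-dimension of the property $\cP_{\vc}$ by bounding the VC-dimension of every distribution in it, and since VC-dimension of a distribution is defined via its support, it suffices to bound the VC-dimension of the support of any $D \in \cP_{\vc}$. First I would observe that every $D \in \cP_{\vc}$ is of the form $D_A^\sigma$ for some permutation $\sigma : [n] \to [n]$, and permuting the indices does not change the cardinality of the support: $\mathrm{Supp}(D_A^\sigma)$ has exactly $k = 2^d$ elements, namely the permuted versions of $\mathbf{V}_1, \ldots, \mathbf{V}_k$ (these are distinct because the rows of $A$ are distinct, which follows from the columns being pairwise $1/3$-far — in particular $A$ has at least two distinct columns, forcing the rows to be pairwise distinct as long as $k \geq 2$; more carefully, the construction with linear combinations of random vectors makes all rows distinct with high probability, but for the bound we only need the support size to be at most $2^d$, which holds trivially since $D_A$ is supported on $k = 2^d$ vectors regardless).

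The key step is then the elementary fact that if a set $V \subseteq \{0,1\}^n$ has $|V| \leq 2^d$, then the VC-dimension of $V$ is at most $d$: indeed, if $V$ shattered an index sequence $I$ of length $k'$, then $V\mid_I = \{0,1\}^{k'}$ would have $2^{k'}$ elements, forcing $|V| \geq 2^{k'}$, hence $2^{k'} \leq 2^d$, i.e. $k' \leq d$. I would state this as a one-line argument. Combining: for any $D \in \cP_{\vc}$, $|\mathrm{Supp}(D)| = |\mathrm{Supp}(D_A)| = 2^d$, so the VC-dimension of $\mathrm{Supp}(D)$ is at most $d$, and therefore the VC-dimension of the property $\cP_{\vc}$ (the maximum over its members, per Definition~\ref{defi:closevcdimprop}) is at most $d$.

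There is essentially no obstacle here — the statement is a direct consequence of the definition of $\cP_{\vc}$ together with the monotonicity of VC-dimension under support size. The only mild subtlety is making explicit that permuting coordinates preserves support size (and in fact preserves VC-dimension altogether, since shattering a sequence $I$ under $D$ is equivalent to shattering $\sigma(I)$ under $D_\sigma$), so I would include a brief remark to that effect rather than belabor it. The proof will be two or three sentences long.

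\begin{proof}
By construction, the distribution $D_A$ is supported on the $k = 2^d$ vectors $\mathbf{V}_1, \ldots, \mathbf{V}_k$, so $\size{\mathrm{Supp}(D_A)} \leq 2^d$. Any $D \in \cP_{\vc}$ is of the form $D_A^\sigma$ for some permutation $\sigma : [n] \to [n]$; permuting the indices is a bijection on $\{0,1\}^n$, so $\size{\mathrm{Supp}(D)} = \size{\mathrm{Supp}(D_A)} \leq 2^d$. Now, if a set of vectors $V \subseteq \{0,1\}^n$ shattered some index sequence $I$ of length $k'$, then $V\mid_I = \{0,1\}^{k'}$, which would force $\size{V} \geq 2^{k'}$; hence any $V$ with $\size{V} \leq 2^d$ has VC-dimension at most $d$. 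Applying this to $V = \mathrm{Supp}(D)$ shows that every $D \in \cP_{\vc}$ has VC-dimension at most $d$, and therefore, by Definition~\ref{defi:closevcdimprop}, the VC-dimension of $\cP_{\vc}$ is at most $d$.
\end{proof}
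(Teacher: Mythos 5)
Your proof is correct and uses exactly the same argument as the paper, which simply notes that the support size of $D_A$ is $2^d$ and hence the VC-dimension is at most $d$; you just spell out the two routine steps (permuting indices preserves support size, and a set of size at most $2^d$ cannot shatter more than $d$ indices) that the paper leaves implicit.
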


This follows from the fact that the support size of the distribution $D_A$ is $2^d$. We will prove first the query complexity lower bound, and then prove the (easier) sample complexity lower bound.


\paragraph*{Query complexity lower bound:}

Let us define the first pair of hard distributions over distributions over $\{0,1\}^n$, that is, $D_{yes}$ and $D_{no}$.

\paragraph*{Distribution $D_{yes}$:} We choose a permutation $\sigma:[n] \rightarrow [n]$ uniformly at random, and pick the distribution $D_A^{\sigma}$ over $\{0,1\}^n$.
\newline

The distribution $D_{no}$ is constructed from the matrix $A$ that is used to define $D_{yes}$ as follows:
\paragraph*{Distribution $D_{no}$:}
We first choose $\ell'=2^{2^{d - 20}}$ many column vectors uniformly at random from $A$ and let $B$ be the resulting matrix of dimension $k \times \ell'$. Let $D_{B}$ be the distribution supported over the vectors $\mathbf{W}_1,\ldots,\mathbf{W}_{k}$ such that, for every $i \in [k]$, the following holds:

\begin{itemize}
    \item $\mathbf{W}_i$ is the $n/\ell'$ times blow-up of the $i$-th row of $B$, that is, for $j \in [\ell']$ and $j'$ with $(j-1)\cdot \frac{n}{\ell'} < j' \leq j \cdot \frac{n}{\ell'} $, $\left(\mathbf{W}_i\right)_{j'}=b_{ij}$, where $b_{ij}$ denotes the element of matrix $B$ present in the $i$-th row and the $j$-th column.
    
    \item $D_{no}(\mathbf{W}_i)=\frac{1}{k}=\frac{1}{2^d}$.
    
\end{itemize}

We choose a permutation $\sigma:[n] \rightarrow [n]$ uniformly at random, and pick the distribution $D_B^{\sigma}$ over $\{0,1\}^n$.

\begin{lem}\label{cl:emd-vc-low}
$D_{yes}$ is supported over $\cP_{\vc}$ and $D_{no}$ is supported over distributions that are $1/8$-far from  $\cP_{\vc}$.
\end{lem}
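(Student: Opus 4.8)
The first assertion requires nothing: $\cP_{\vc}$ is by definition the set of all distributions $D_A^{\sigma}$, and $D_{yes}$ is, by construction, always one of these, so $D_{yes}$ is supported over $\cP_{\vc}$.

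For the second assertion, fix an arbitrary $D' = D_A^{\tau}\in\cP_{\vc}$ and an arbitrary distribution $D_B^{\sigma}$ in the support of $D_{no}$; the plan is to show $d_{EM}(D_B^{\sigma},D')\ge \tfrac18$. First I would record the two structural facts in play. Both distributions are uniform over $k=2^d$ vectors, each of which is a blow-up, so $D'$ partitions the coordinate set $[n]$ into exactly $\ell$ index-classes of equal size $n/\ell$ — one per column of $A$, and these are genuinely distinct since the columns of $A$ are pairwise $\tfrac13$-separated — whereas $D_B^{\sigma}$ partitions $[n]$ into at most $\ell'$ index-classes, each labelled by a column of $B$, which is in turn a column of $A$. (It is convenient, and harmless, to fix an $A$ whose rows are pairwise distinct and pairwise far, which the random-linear construction of the footnote achieves with probability $1-o(1)$, so that all column vectors appearing below are honestly columns of $A$ and the supports of $D'$ and $D_B^{\sigma}$ have $k$ distinct points.) Then I would invoke the transportation-LP structure: since both distributions are uniform over $k$ points, the feasible region is a scaled Birkhoff polytope, so by Birkhoff--von Neumann the optimum is attained at a matching, i.e.
\[
 d_{EM}(D_B^{\sigma},D') \;=\; \min_{\pi\in S_k}\;\frac1k\sum_{i=1}^k d_H(\mathbf{W}_i,\mathbf{V}_{\pi(i)}),
\]
where $\mathbf{W}_1,\dots,\mathbf{W}_k$ and $\mathbf{V}_1,\dots,\mathbf{V}_k$ enumerate the two supports.

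The core step is a coordinate-by-coordinate analysis together with a pigeonhole count. Swapping the order of summation turns the right-hand side into $\frac1k\sum_i d_H(\mathbf{W}_i,\mathbf{V}_{\pi(i)})=\frac1n\sum_{j=1}^n d_H(c^B_j,\,c^A_j\circ\pi)$, where $c^B_j,c^A_j\in\{0,1\}^k$ are the column vectors of the two supports at coordinate $j$ — each a column of $A$, and depending only on which index-classes $j$ belongs to. Now fix $\pi$, and call a column $c'$ of $B$ \emph{cheap} for an index-class $T$ of $D'$ with label $c$ if $d_H(c',c\circ\pi)<\tfrac16$. Since the columns of $A$ are pairwise $\tfrac13$-separated, the Hamming ball of radius $\tfrac16$ about any point contains at most one column of $A$; hence each column $c'$ of $B$ is cheap for at most one class $T$ (the unique class, if any, whose label lies within $\tfrac16$ of $c'\circ\pi^{-1}$). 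As $B$ has at most $\ell'$ distinct columns, at most $\ell'$ of the $\ell$ index-classes of $D'$ admit a cheap column; for each of the remaining $\ge\ell-\ell'$ classes $T$, every $j\in T$ satisfies $d_H(c^B_j,c^A_j\circ\pi)\ge\tfrac16$, so these coordinates contribute at least $\tfrac16\cdot\tfrac n\ell$ to the sum. Therefore $d_{EM}(D_B^{\sigma},D')\ge\tfrac16(1-\ell'/\ell)$, and since $\ell'/\ell=2^{-(2^{d-10}-2^{d-20})}=o(1)$ this exceeds $\tfrac18$ once $d$ is large enough, proving the second assertion.

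I expect the main obstacle to be the pigeonhole accounting in the core step: making precise that a single matching $\pi$ can serve at most $\ell'$ of $D'$'s index-classes cheaply, and applying the ball-packing to the correct objects (the columns of $A$ versus their $\pi$-images $c^A_j\circ\pi$, respectively $c'\circ\pi^{-1}$). I would also handle with care the degenerate cases in which the support of $D_B^{\sigma}$ or of $D'$ collapses (rows of $B$ or $A$ coinciding); choosing $A$ to have distinct, pairwise-far rows eliminates these and keeps both the Birkhoff reduction and the identification of the $c^B_j,c^A_j$ with columns of $A$ clean. Everything else — the coordinate swap, the uniform $n/\ell$ class sizes, and the bound $\ell'/\ell=o(1)$ — is routine.
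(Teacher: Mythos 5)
Your proof is correct and takes essentially the same route as the paper: reduce via the corresponding-matrix / Birkhoff--von Neumann argument to minimizing over row permutations $\pi$, then do a column-wise pigeonhole count exploiting the $1/3$-separation of $A$'s columns and the gap $\ell \gg \ell'$. The only (harmless) bookkeeping difference is that you partition $[n]$ into the $\ell$ index-classes of $D'$ and bound how many can admit a cheap column of $B$, whereas the paper partitions into the $\ell'$ classes of $D_B^{\sigma}$ and bounds how many $L^\pi$-columns inside each can be close to that class's constant value --- dual counts that yield the same $\bigl(1-\ell'/\ell\bigr)$ factor and hence the same $\ge 1/8$ bound.
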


\begin{proof}
Following the definition of $\cP_{\vc}$ and $D_{yes}$, it is clear that $D_{yes}$ is supported over $\cP_{\vc}$. To prove the claim about $D_{no}$, consider the following definition and observation.

\begin{defi}
Let us consider a distribution $D$  over $\{0,1\}^n$. A matrix $M$ of dimension $s \times n$ is said to be a \emph{corresponding} matrix of $D$ if $D$ is the distribution resulting from picking uniformly at random a row of $M$.~\footnote{Note that, if $M$ has no duplicate rows, then $D$ is a uniform distribution over its support.} For a permutation $\pi:[s] \rightarrow [s]$, $M^\pi$ denotes the matrix obtained by permuting the rows of $M$ according to the permutation $\pi$, that is, the $\pi(i)$-th row of $M^\pi$ is same as the $i$-th row of $M$ for every $i \in [s]$.  
\end{defi}


Note that if $M$ is a corresponding matrix of $D$ with $s$ rows and $s'$ is a multiple of $s$, then the matrix $M'$ constructed by repeating every row of $M$ $s'/s$ many times is also a corresponding matrix of $D$.

Now the following observation connects the Earth Mover Distance between two distributions with the Hamming distance between their corresponding matrices.

\begin{cl}\label{obs:emd-far}
Let $D_1$ and $D_2$ be two distributions over $\{0,1\}^n$. Also, let $L$ and $M$ be corresponding matrices of $D_1$ and $D_2$, respectively, both of dimension $s \times n$. Then the Earth Mover Distance between $D_1$ and $D_2$ is the same as the minimum Hamming distance between $L$ and $M$ over all row permutations.

Formally, let the Hamming distance between $L$ and $M$ be defined as  $$d_H(L,M)=\frac{\size{\{(i,j) \in [s]\times [n]~:~l_{ij}\neq m_{ij}\}}}{s \cdot n}$$ Then
$$d_{EM}(D_1,D_2)=\min\limits_{\pi: [s] \rightarrow [s]} d_H({L}^\pi,M).$$
\end{cl}

\begin{proof}
We first note that any solution $f_{\mathbf{XY}}$ for the EMD between $D_1$ and $D_2$ can be translated to a doubly stochastic matrix $S$ of dimension $s \times s$ as follows: 

For every $i$, let $\mathbf{L}_i$ be the $i$-th row of $L$ and $l_i$ be the number of rows of $L$ that are identical to $\mathbf{L}_i$. Similarly, let $\mathbf{M}_i$ be the $i$-th row of $M$ and $m_i$ be the number of rows of $M$ that are identical $\mathbf{M}_i$. To construct the matrix $S$, we set the value of its entry at $i$-th row and $j$-th column as follows: $$s_{ij}=\frac{f_{\mathbf{L}_i \mathbf{M}_j} \cdot s}{l_i \cdot m_j}$$


Now we claim that the matrix $S$ defined above is a doubly stochastic matrix.

\begin{obs}
    The matrix $S$ defined above is doubly stochastic.
\end{obs}

\begin{proof}
We will prove that the every row of $S$ sum to $1$, and omit the identical proof for the columns of $S$. Note that if we sum the $i$-th row of $S$, we obtain the following:
$$\sum_{j=1}^ss_{ij}=\sum_{j=1}^s\frac{f_{\mathbf L_i\mathbf M_j}\cdot s}{l_i \cdot m_j}=\sum_{\mathbf Y \in \mathrm{Supp}(D_2)}\frac{f_{\mathbf L_i\mathbf Y}\cdot s}{l_i}=\frac{D_1(\mathbf L_i)\cdot s}{l_i}=1$$

This completes the proof of the observation.
\end{proof}

\color{black}

Now we will apply the Birkhoff-Newmann theorem~\cite{birkhoff1946three,von1953certain}, which states that the doubly stochastic matrix $S$ defined above can be expressed as a weighted average of permutation matrices. By translating the EMD expression from $f_{\mathbf{XY}}$ to $S$ and using an averaging argument, we can infer that there exists a permutation $\pi$ (among those in the representation of $S$) such that $d_H(L^{\pi},M)$ is equal to $d_{EM}(D_1,D_2)$. This completes the proof of the claim.
\end{proof}

Note that $D_{no}$ is supported over the set of distributions $ D_B^{\sigma}$ for any permutation $\sigma$ and any matrix $B$ which consists of $2^{2^{d - 20}}$ columns of $A$. We will be done by showing that the Earth Mover Distance between $D$ and $D_B^\sigma$ is at least $1/8$, where $D \in \cP_{\vc}$, $\sigma:[n] \rightarrow [n]$ is any permutation, and $B$ is any matrix with $2^{2^{d - 20}}$ columns.


Note that both $D$ and $D_B^\sigma$ admit respective corresponding matrices $L$ and $M$, respectively, both of dimension $2^d \times n$, where the rows of L are the vectors $\mathbf{V}_i$, and the rows of M are the respective permutations of the vectors $\mathbf{W}_i$. By Claim~\ref{obs:emd-far}, we note that: $$d_{EM}(D_B^\sigma,D)=\min_{\pi : [2^d]\rightarrow [2^d]} d_H(L^\pi,M).$$
The following claim will imply that $d_{EM}(D_B^\sigma,D) \geq 1/8$.

\begin{cl}\label{cl:emd}
For any permutation $\pi:[2^d]\rightarrow [2^d]$, $d_H(L^\pi,M)$ is at least $1/8$.
\end{cl}

\begin{proof}
Let us partition the index set $[n]$ into $\ell'$ many equivalence classes $C_1, \ldots, C_{\ell'}$ such that two indices of $[n]$ belong to the same equivalence class if the corresponding column vectors  in $L^\pi$ are identical. Observe that
$$d_H(L^\pi,M)=\frac{\sum\limits_{i \in [\ell']} \sum\limits_{j \in C_i}d_H(\mathbf{L}^\pi_j,\mathbf{M}_j)\cdot k}{k \cdot n}=\frac{\sum\limits_{i \in [\ell']} \sum\limits_{j \in C_i}d_H(\mathbf{L}^\pi_j,\mathbf{M}_j)}{n},$$
where $\mathbf{L}^\pi_j$ and $\mathbf{M}_j$ denote the $j$-th column vectors of $L^\pi$ and $M$, respectively.
 
Hence we will be done by showing $\sum\limits_{j \in C_i} d_H\left(\mathbf{L}^\pi_j,\mathbf{M}_j\right) \geq \frac{n}{8 \ell '},$ for every $i \in [\ell']$.
 
Note that $\size{C_i}=\frac{n}{\ell'}$. Also, all the columns in $\{\mathbf{L}_j^\pi : j \in C_i\}$  are identical. Consider a column vector $\mathbf{v} \in \{0,1\}^k$. Observe that there can be at most $\frac{n}{\ell}$ many columns in $\{\mathbf{M}_j : j \in C_i\}$ that are $1/7$-close to $\mathbf{v}$. This follows from the construction of $\cP_{\vc}$, which implies that for every column $\mathbf{M}_j$ of $M$, there are no more than $n/\ell - 1$ many other columns of $L^{\pi}$ whose distance from $\mathbf{M}_j$ is at most $2/7 < 1/3$.

So, in the expression $\sum\limits_{j \in C_i} d_H\left(\mathbf{L}^\pi_j,\mathbf{M}_j\right)$, there are at least $(\frac{n}{\ell'}-\frac{n}{\ell})$ many terms that are at least $1/7$. Hence, $\sum\limits_{j \in C_i} d_H\left(\mathbf{L}^\pi_j,\mathbf{M}_j\right)\geq \ell' \cdot \frac{1}{7}\left(\frac{n}{\ell'}-\frac{n}{\ell}\right) \geq \frac{n}{8\ell'}$.
\end{proof}
The above two claims conclude the proof of Lemma~\ref{cl:emd-vc-low}.
\end{proof}

\begin{lem}[{\bf Query complexity lower bound part of Theorem~\ref{theo:lb-vc}}]\label{cl:main-vc-low}
Any (non-adaptive) tester, that has sample and query access to either $D_{yes}$ or $D_{no}$ and performs $2^{2^{d - \omega(1)}}$ queries, can not distinguish between $D_{yes}$ and $D_{no}$.
\end{lem}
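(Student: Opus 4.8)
The plan is to apply Yao's lemma for non-adaptive testers in the huge object model (the version stated in Section~\ref{sec:prelim}). By Lemma~\ref{cl:emd-vc-low}, $D_{yes}$ is supported on $\cP_{\vc}$ and $D_{no}$ is supported on distributions that are $1/8$-far from $\cP_{\vc}$, so it suffices to show that for any non-adaptive query sequence $\mathcal{J}=(J_1,\ldots,J_s)$ of size $q=\sum_i \size{J_i}\le 2^{2^{d-\omega(1)}}$, the variation distance between $D_{yes}\mid_{\mathcal{J}}$ and $D_{no}\mid_{\mathcal{J}}$ is less than $1/3$. The key structural observation is that both $D_{yes}$ and $D_{no}$ partition the index set $[n]$ into equivalence classes (of equal size $n/\ell$ and $n/\ell'$ respectively), where indices in the same class are "collapsed" — every vector in the support is constant on each class. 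After applying the uniformly random permutation $\sigma:[n]\to[n]$, a fixed query set $J_i$ of size at most $q$ lands on a random placement of these classes.

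First I would argue the following \emph{collision event}: with high probability over the random permutation $\sigma$, no query set $J_i$ in $\mathcal{J}$ contains two indices belonging to the same equivalence class, \emph{simultaneously} for the partition into $\ell$ classes (the $D_{yes}$ case) and for the partition into $\ell'$ classes (the $D_{no}$ case). For a single pair of indices within one $J_i$, the probability they land in the same class of the $\ell'$-partition is roughly $1/\ell' = 2^{-2^{d-20}}$; union-bounding over all $\binom{\size{J_i}}{2}$ pairs and over all $i$ gives a total failure probability of order $q^2/\ell'$, which is $o(1)$ as long as $q = 2^{2^{d-\omega(1)}}$ (here one uses that $s\le q$ so the total number of within-set pairs is at most $q^2$). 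The same bound applies for the $\ell$-partition, which is even coarser, so it only helps. Call this good event $\mathcal{E}_{\mathrm{coll}}$; it has probability $1-o(1)$ under both $D_{yes}$ and $D_{no}$ (the distribution of $\sigma$ is identical in both).

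Next I would show that \emph{conditioned on $\mathcal{E}_{\mathrm{coll}}$}, the restrictions $D_{yes}\mid_{\mathcal{J}}$ and $D_{no}\mid_{\mathcal{J}}$ are \emph{identically distributed}. The point is that when no query set hits two indices of the same class, each queried bit is, from the tester's viewpoint, an independent uniform draw from the corresponding class — and the key combinatorial fact is that a single row of the matrix $A$ restricted to a uniformly random column is a uniform bit (or more precisely, the joint distribution over the queried entries, across the $s$ independently sampled vectors, depends only on which queries fall in the same class, not on $\ell$ versus $\ell'$). Concretely: for a sample $\mathbf{X}_i$ of $D_{yes}$ (a uniformly random row of $A$, blown up and permuted) queried at positions $J_i$ with all positions in distinct classes, the answer vector $\mathbf{X}_i\mid_{J_i}$ is distributed as a uniformly random row of $A$ projected onto $\size{J_i}$ uniformly random distinct columns; the analogous statement holds for $D_{no}$ with $B$ (a uniformly random $k\times\ell'$ submatrix of $A$'s columns) in place of $A$, but projecting a uniform row of $B$ onto random distinct columns of $B$ is the same as projecting a uniform row of $A$ onto random distinct columns of $A$. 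So the two restricted distributions coincide conditioned on $\mathcal{E}_{\mathrm{coll}}$. Combining, $d_{TV}(D_{yes}\mid_{\mathcal{J}}, D_{no}\mid_{\mathcal{J}}) \le \pr(\neg\mathcal{E}_{\mathrm{coll}}) = o(1) < 1/3$, and Yao's lemma finishes the proof.

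The main obstacle I anticipate is the second step — carefully verifying that conditioned on the no-collision event the two restricted distributions are \emph{exactly} equal (rather than merely close). One has to handle the fact that the $s$ samples are drawn independently but the permutation $\sigma$ and the submatrix choice $B$ are shared across samples, so the answers to queries on different samples are correlated through these shared random objects; one must check that this correlation structure is also matched between $D_{yes}$ and $D_{no}$. The cleanest route is probably to expose $\sigma$ first (coupling it identically in both experiments), condition on $\mathcal{E}_{\mathrm{coll}}$, and then observe that the residual randomness — which row each sample picks, and (in the $D_{no}$ case) which $\ell'$ columns form $B$ — induces the same joint law on the at-most-$q$ revealed bits, because every revealed column index is "fresh" (no two in the same class) and a uniform row of a uniform column-submatrix of $A$ looks locally like a uniform row of $A$ at fresh coordinates. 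A small amount of care is needed because different samples may query the same class; but since the rows are chosen independently per sample, bits from the same class in different samples are genuinely independent uniform bits in both models, so this causes no discrepancy.
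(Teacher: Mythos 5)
Your overall strategy matches the paper's: couple the random permutation $\sigma$ between the two experiments, rule out a "collision" event, and argue that conditioned on no collisions the restricted distributions coincide because a uniform row of a uniformly random column-submatrix of $A$ projected to fresh columns is distributed identically to a uniform row of $A$ projected to fresh columns. The gap is in your choice of collision event.

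You define $\mathcal{E}_{\mathrm{coll}}$ to rule out collisions only \emph{within each} $J_i$, and then handle cross-sample collisions with the assertion that ``bits from the same class in different samples are genuinely independent uniform bits in both models.'' That assertion is not justified by the construction of $A$. If a cross-sample pair of positions maps to the same column $c$ of $A$, the two responses are $a_{r_i,c}$ and $a_{r_j,c}$ with independent rows $r_i,r_j$ but the \emph{same} (random) column $c$: given $c$ they are i.i.d.\ $\mathrm{Bernoulli}(w_c/k)$ where $w_c$ is the weight of column $c$, so they are marginally positively correlated whenever the column weights of $A$ are not all equal. The paper's construction (footnote: all $\mathbb{Z}_2$-linear combinations of $2^{d-10}$ random vectors — note that this includes the all-zero column) does not guarantee equal column weights. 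Worse, the cross-sample collision rate differs between the two models (roughly $1/\ell$ versus $1/\ell'$), which is exactly the type of statistical discrepancy the argument is supposed to show is invisible, so you cannot simply wave it away. Your penultimate paragraph also says ``every revealed column index is fresh (no two in the same class),'' which is inconsistent with your own definition of $\mathcal{E}_{\mathrm{coll}}$ as a per-$J_i$ event.

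The fix is easy and is what the paper does: condition on the \emph{global} no-collision event that no two positions in the union $Q = \bigcup_i J_i$ lie in the same $\ell$-class nor in the same $\ell'$-class. The union bound over all pairs of $Q$ gives the same $\size{Q}^2/\ell' = o(1)$ failure probability you already computed, so this stronger conditioning costs nothing. Conditioned on the global event, every queried position maps to a distinct column, so in both experiments the transcript is generated by ``pick $\size{Q}$ uniformly random distinct columns of $A$, then for each sample pick a uniformly random row and reveal the corresponding entries'' — exactly equal, with no cross-sample argument needed. With that change your proof coincides with the paper's.
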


\begin{proof}
Let $A'$ and $B'$ be the matrices of dimension $k \times n$ such that the $i$-th row of $A'$ corresponds to the vector $\mathbf{V}_i^{\sigma}$ (for the permutation $\sigma$ drawn according to $D_{yes}$) and the $i$-th row of $B'$ corresponds to the vector $\mathbf{W}_i^{\sigma}$ (for the permutation $\sigma$ drawn according to $D_{no}$), where $i \in [k]$.

Let us divide the index set $[n]$ into $\ell$ equivalence classes $C_1,\ldots,C_{\ell}$ such that two indices belong to the same equivalence class if the corresponding column vectors in $A'$ are identical. Similarly, let us divide the index set $[n]$ into $\ell'$ equivalence classes $C'_1,\ldots,C'_{\ell'}$ such that two indices belong to the same equivalence class if the corresponding column vectors  in $B'$ are identical.

Let $Q \subseteq [n]$ be the set of all distinct indices queried by the tester to any sample (that is, the union of the sets $J_1, \ldots, J_s$ as they appear in Definition~\ref{defi:non-adaptive-tester}). If $\size{Q}=2^{2^{d - \omega(1)}}$, then the probability that there exist two indices in $Q$ that belong to the same $C_i$ or the same $C'_i$ is $o(1)$. Observe that, conditioned on the event that $Q$ does not contain two indices from the same equivalence class $C_i$ or $C'_i$, the distributions over the responses to the queries of the tester are identical for both $D_{yes}$ and $D_{no}$. The reason is that in both the cases of $D_{yes}$ and $D_{no}$, the distribution over the responses is identical to the one derived from picking a uniformly random subset of size $\size{Q}$ of the columns of the matrix $A$, and taking uniformly independent samples of the rows of the resulting matrix.
\end{proof}



Now we will prove the sample complexity lower bound for testing $\cP_{\vc}$.

\paragraph*{Sample complexity lower bound:}

Let us define the second pair of hard distributions over distributions over $\{0,1\}^n$, $D'_{yes}$ and $D'_{no}$.

\paragraph*{Distribution $D'_{yes}$:} Identically to $D_{yes}$ above, we choose a permutation $\sigma:[n] \rightarrow [n]$ uniformly at random, and pick the distribution $D_A^{\sigma}$ over $\{0,1\}^n$.
\newline

\vspace{2pt}

The distribution $D'_{no}$ is constructed from the matrix $A$ used to define $D'_{yes}$ as follows:


\paragraph*{Distribution $D'_{no}$:}
We first choose $k'=2^{d - 20}$ many row vectors uniformly at random from $A$ and construct a matrix $B'$ of dimension $k' \times \ell$. Let $D_{B'}$ be the distribution supported over the vectors $\mathbf{W}'_1,\ldots,\mathbf{W}'_{k'}$ such that, for every $i \in [k']$, the following hold:


\begin{itemize}
    \item $\mathbf{W}'_i$ is the $n/\ell$ times blow-up of the $i$-th row of $B'$, that is, for $j \in [\ell]$ and $j'$ with $(j-1)\cdot \frac{n}{\ell} < j' \leq j \cdot \frac{n}{\ell} $, $\left(\mathbf{W}'_i\right)_{j'}=b_{ij}$, where $b_{ij}$ denotes the element of matrix $B'$ present in the $i$-th row and the $j$-th column.
    
    \item $D_{no}(\mathbf{W}'_i)=\frac{1}{k'}=\frac{1}{2^{d-20}}$.
    

\end{itemize}

We choose a permutation $\sigma:[n] \rightarrow [n]$ uniformly at random, and pick the distribution $D_{B'}^{\sigma}$ over $\{0,1\}^n$.

\begin{lem}\label{cl:emd-vc-low_sample}
$D'_{yes}$ is supported over $\cP_{\vc}$ and $D'_{no}$ is supported over distributions that are $1/8$-far from  $\cP_{\vc}$.
\end{lem}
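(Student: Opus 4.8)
The assertion about $D'_{yes}$ is immediate: $D'_{yes}$ always outputs $D_A^{\sigma}$ for a uniformly random $\sigma$, and every $D_A^{\sigma}$ lies in $\cP_{\vc}$ by definition. So the task is to show that every distribution in the support of $D'_{no}$ — that is, every $D_{B'}^{\sigma}$, as $\sigma$ ranges over all permutations of $[n]$ and $B'$ over all choices of $k'$ rows of $A$ — satisfies $d_{EM}(D_{B'}^{\sigma},D)>\tfrac18$ for every $D\in\cP_{\vc}$. I would reduce this to a single structural fact about the vectors $\mathbf{V}_1,\dots,\mathbf{V}_k$, namely that they are mutually far in Hamming distance, and then finish with a one-line estimate on the transportation LP that exploits that $D_{B'}^{\sigma}$ has support of size only $k'=2^{d-20}$ whereas a distribution in $\cP_{\vc}$ spreads the mass $1/k=2^{-d}$ over $k=2^{d}$ far-apart points.

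\textbf{Step 1 (structural fact): the support vectors are pairwise $\tfrac12$-far.} I would first establish that $d_{H}(\mathbf{V}_i,\mathbf{V}_{i'})=\tfrac12$ for all $i\neq i'$; in particular the $\mathbf{V}_i$ are distinct. This does not follow from the $1/3$-separation of the columns of a generic such $A$, so here I would use the explicit construction from the footnote. Writing $\phi_i\in\{0,1\}^{2^{d-10}}$ for the vector of $i$-th coordinates of the $2^{d-10}$ random seed vectors, the $i$-th row of $A$ is the evaluation vector $(\langle\mathbf{1}_S,\phi_i\rangle)_{S\subseteq[2^{d-10}]}$, so for $i\neq i'$ with $\phi_i\neq\phi_{i'}$ the rows $i$ and $i'$ of $A$ disagree on exactly $\ell/2$ of the $\ell$ columns (a nonzero $\mathbb{F}_2$-linear functional is balanced), hence $\mathbf{V}_i$ and $\mathbf{V}_{i'}$ disagree on exactly $n/2$ coordinates. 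Since the $\phi_i$ are i.i.d.\ uniform, $\Pr[\phi_i=\phi_{i'}]=2^{-2^{d-10}}$, and a union bound over the $\binom{2^{d}}{2}$ pairs shows that with probability $1-o(1)$ the construction produces a matrix $A$ whose columns are $1/3$-separated and all of whose $\phi_i$ are distinct; fix one such $A$.

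\textbf{Step 2 (the LP estimate).} Fix $\sigma$, a choice of $B'$, and $D=D_A^{\sigma'}\in\cP_{\vc}$. The support $V_1,\dots,V_k$ of $D$ consists of the $\sigma'$-images of $\mathbf{V}_1,\dots,\mathbf{V}_k$, so it is still pairwise at distance $\tfrac12$, and $D(V_a)=1/k$ for each $a$. Let $\{f_{XY}\}$ be any feasible flow in the EMD linear program from $D_{B'}^{\sigma}$ to $D$, so that $\sum_V f_{WV}=D_{B'}^{\sigma}(W)$ for each support vector $W$ of $D_{B'}^{\sigma}$ and $\sum_W f_{WV_a}=1/k$ for each $a$. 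For a fixed $W$ there is at most one index $a$ with $d_H(W,V_a)<\tfrac14$ — two such would force $d_H(V_a,V_{a'})<\tfrac12$ by the triangle inequality — so the mass leaving $W$ that travels a distance less than $\tfrac14$ is at most $f_{WV_{a_0}}\le\sum_{W'}f_{W'V_{a_0}}=1/k$, where $a_0$ is that index. Summing over the (at most $k'$) support vectors of $D_{B'}^{\sigma}$, the total mass transported a distance $<\tfrac14$ is at most $k'/k=2^{-20}$; hence the cost of the flow is at least $\tfrac14(1-2^{-20})>\tfrac18$. As this holds for every feasible flow, $d_{EM}(D_{B'}^{\sigma},D)>\tfrac18$, which proves the lemma. (This estimate does not care whether the $k'$ rows of $B'$ are chosen with or without repetition, since $\sum_W\max\{D_{B'}^{\sigma}(W)-1/k,0\}\ge 1-k'/k$ in either case.)

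\textbf{Main obstacle.} The only nonroutine ingredient is Step 1: one must go back to the explicit construction of $A$, because the hypothesis that the columns of $A$ are $1/3$-separated does not by itself force the support vectors $\mathbf{V}_i$ to be mutually far, and it is exactly that mutual farness that prevents a distribution with support much smaller than $2^{d}$, such as $D_{B'}^{\sigma}$, from being EMD-close to a uniform distribution over $2^{d}$ far-apart points. It is worth noting that the ``column-equivalence-class counting'' used for the query lower bound (Lemma~\ref{cl:emd-vc-low}) cannot be recycled here: the matrix corresponding to $D_{B'}^{\sigma}$ has only $k'$ rows, so its columns exhibit many coincidences and are \emph{not} pairwise far, which is why one must argue via support size and separation instead.
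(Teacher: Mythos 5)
Your argument is correct, but it takes a genuinely different route from the paper's. The paper stays inside the matrix picture: it applies Claim~\ref{obs:emd-far} to write $d_{EM}(D_{B'}^{\sigma},D)=\min_{\pi}d_H(L^{\pi},M)$ for corresponding $k\times n$ matrices $L,M$, and then counts \emph{column} equivalence classes. Since $B'$ has only $k'=2^{d-20}$ rows, $M$ has at most $\ell'=2^{k'}$ distinct columns; $L^{\pi}$ has exactly $\ell$ classes of size $n/\ell$, pairwise $1/3$-far because the columns of $A$ are $1/3$-separated, so each column value of $M$ can be $1/7$-close to the columns of at most one class of $L^{\pi}$, hence to at most $n/\ell$ indices. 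This gives $d_H(L^{\pi},M)\geq\tfrac17(1-\ell'/\ell)>1/8$. It is the same column-counting scheme as in Claim~\ref{cl:emd}, and — importantly — it uses only the stated hypothesis that the \emph{columns} of $A$ are $1/3$-separated. You instead bound the transportation LP directly via a small-support-against-far-apart-points packing, bypassing Claim~\ref{obs:emd-far} entirely, which is conceptually clean; but, as your Step~1 correctly isolates, it needs the additional fact that the \emph{rows} $\mathbf{V}_1,\dots,\mathbf{V}_k$ are pairwise far, and you are right that this does not follow from column separation (one can have a column-separated $A$ with two identical rows). You patch this by verifying row-separation for the explicit random linear-code construction from the footnote, which is a legitimate fix because $\cP_{\vc}$ is built from one fixed $A$ and the theorem is an existence statement; the trade-off is that your argument covers only such an $A$, whereas the paper's column-based argument applies to any matrix with $1/3$-separated columns. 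Both versions also quietly rely on $B'$ having $k'$ distinct rows (the paper states this explicitly; you handle possible repetitions in your parenthetical). In short, the paper's proof is shorter because it reuses machinery already built for the query lower bound; yours is a more self-contained LP bound at the price of an extra verification about the specific construction.
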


\begin{proof}
Following the definition of $\cP_{\vc}$ and $D'_{yes}$, it is clear that $D'_{yes}$ is supported over $\cP_{\vc}$. To prove the claim about $D'_{no}$, we will apply Claim~\ref{obs:emd-far}.

Note that $D'_{no}$ is supported over the set of distributions $ D_{B'}^{\sigma}$ for any permutation $\sigma$ and any matrix $B'$ which consists of $2^{d - 20}$ rows of $A$. We will be done by showing the Earth Mover Distance between $D$ and $D_{B'}^{\sigma}$ is at least $1/8$, where $D \in \cP_{\vc}$ and $\sigma:[n] \rightarrow [n]$ be any permutation, and $B'$ is any matrix with $2^{d - 20}$ distinct rows.

Let $L$ and $M$ be corresponding matrices of $D$ and $D_{B'}^{\sigma}$, respectively, of dimension $k \times n$, where $k=2^d$ (where the rows of $L$ are the vectors $\mathbf{V}_i$, and the rows of $M$ are $2^{20}$-fold repetitions of the respective permutations of the vectors $\mathbf{W'}_i$). By Claim~\ref{obs:emd-far}, we know that $$d_{EM}\left(D_{B'}^{\sigma}, D \right)=\min\limits_{\pi: [k] \rightarrow [k]} d_H(L^\pi,M).$$ Thus, the following claim will imply that $d_{EM}(D_{B'}^{\sigma}, D) \geq 1/8$.

\begin{cl}
For any permutation $\pi:[2^d]\rightarrow [2^d]$, $d_H(L^\pi,M)$ is at least $1/8$.    
\end{cl}

\begin{proof}
Our proof will follow a similar vain to that of Claim~\ref{cl:emd}.
Let us first partition the index set $[n]$ into $\ell'$ many equivalence classes $C_1, \ldots, C_{\ell'}$ such that two indices of $[n]$ belong to the same equivalence class if the corresponding column vectors  in $L^\pi$ are identical. Observe that
$$d_H(L^\pi,M)=\frac{\sum\limits_{i \in [\ell']} \sum\limits_{j \in C_i}d_H(\mathbf{L}^\pi_j,\mathbf{M}_j)\cdot k}{k \cdot n}=\frac{\sum\limits_{i \in [\ell']} \sum\limits_{j \in C_i}d_H(\mathbf{L}^\pi_j,\mathbf{M}_j)}{n},$$
where $\mathbf{L}^\pi_j$ and $\mathbf{M}_j$ denote the $j$-th column vectors of $L^\pi$ and $M$, respectively.

Since $B'$ has only $2^{d - 20}$ distinct rows, the number of its equivalence classes is bounded by $\ell'=2^{2^{d - 20}}$. Note that unlike the proof of the query lower bound, the sizes of the equivalence classes here may be different from each other. Also, note that the sizes of the equivalence classes of $L$ are $n/\ell$, as $D \in \cP_{\vc}$. Thus we have the following:
$$d_H(L^{\pi}, M) \geq \frac{1}{7} \cdot \frac{\sum_{i=1}^{\ell'}\max\{0, |C_i| - n/\ell\}}{n} \geq  \frac{1}{7} \cdot \left(1- \frac{1}{2^{10}}\right) \cdot n >\frac{1}{8}n.$$
The inequality follows from the facts that $\ell = 2^{2^{d-10}}$ and $\ell' = 2^{2^{d-20}}$, and the columns of $M$ corresponding to each $C_i$ can be $1/7$-close to at most $n/\ell$ columns of $L$.
\end{proof}
This concludes the proof of Lemma~\ref{cl:emd-vc-low_sample}.
\end{proof}

\noindent
The sample lower bound for testing $\cP_{\vc}$ now follows from the following lemma.

\begin{lem}[{\bf Sample complexity lower bound part of Theorem~\ref{theo:lb-vc}}]\label{lem:sample}
Any tester that takes at most $2^{o(d)}$ samples from the input distribution can not distinguish between the distributions $D'_{yes}$ and $D'_{no}$.    
\end{lem}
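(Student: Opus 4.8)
The plan is a direct statistical‑indistinguishability argument. I use the hard pair $D'_{yes}, D'_{no}$, which by Lemma~\ref{cl:emd-vc-low_sample} are supported on $\cP_{\vc}$ and on distributions $1/8$‑far from $\cP_{\vc}$ respectively, and show that with only $s=2^{o(d)}$ samples the entire input seen by any tester (even adaptive, with no bound on bit‑queries) has almost the same distribution under both. A draw from $D'_{yes}$ yields $s$ vectors that are i.i.d.\ copies of $\mathbf V_I^\sigma$, where $I$ is uniform in $[k]$, $k=2^d$, and $\sigma$ is an independent uniformly random permutation of $[n]$. Writing $\rho\colon[k']\hookrightarrow[k]$, $k'=2^{d-20}$, for the (uniformly random) injection recording which rows of $A$ form $B'$, we have $\mathbf W'_t=\mathbf V_{\rho(t)}$ for every $t$ — because $B'$ and $A$ have the same number $\ell$ of columns, hence the same blow‑up factor $n/\ell$ — so a draw from $D'_{no}$ yields $s$ i.i.d.\ copies of $\mathbf V_{\rho(J)}^\sigma$ with $J$ uniform in $[k']$. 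Let $E_{yes}$ (resp.\ $E_{no}$) be the event that the $s$ sampled indices $I_1,\dots,I_s$ (resp.\ $J_1,\dots,J_s$) are pairwise distinct; a union bound over pairs gives $\Pr[\neg E_{yes}]\le\binom s2/k$ and $\Pr[\neg E_{no}]\le\binom s2/k'$, both $o(1)$ when $s=2^{o(d)}$ (and for $d$ large this also ensures $s<k'$, so $E_{no}$ is nonvacuous).

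The crux is that the two conditional laws coincide. Conditioned on $E_{yes}$, the tuple $(I_1,\dots,I_s)$ is uniform over ordered $s$‑tuples of distinct elements of $[k]$. Conditioned on $E_{no}$ — equivalently, since $\rho$ is injective, on $\rho(J_1),\dots,\rho(J_s)$ being distinct — the tuple $(J_1,\dots,J_s)$ is uniform over ordered $s$‑tuples of distinct elements of $[k']$, and then applying the independent uniformly random injection $\rho$ makes $(\rho(J_1),\dots,\rho(J_s))$ uniform over ordered $s$‑tuples of distinct elements of $[k]$. In both experiments $\sigma$ is an independent uniformly random permutation, so in both cases the sampled sequence of vectors is the image of the same deterministic map $(i_1,\dots,i_s,\sigma)\mapsto(\mathbf V_{i_1}^\sigma,\dots,\mathbf V_{i_s}^\sigma)$ applied to inputs with identical joint distribution. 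Hence the conditional law of the whole input read by the tester, and therefore of its (randomized) accept/reject output, is literally the same under $D'_{yes}\mid E_{yes}$ and under $D'_{no}\mid E_{no}$.

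To finish, let $\nu$ be this common conditional output law; then $|\Pr_{D'_{yes}}[\mathrm{acc}]-\nu(\mathrm{acc})|\le\Pr[\neg E_{yes}]$ and $|\Pr_{D'_{no}}[\mathrm{acc}]-\nu(\mathrm{acc})|\le\Pr[\neg E_{no}]$, so $|\Pr_{D'_{yes}}[\mathrm{acc}]-\Pr_{D'_{no}}[\mathrm{acc}]|\le\Pr[\neg E_{yes}]+\Pr[\neg E_{no}]=o(1)<1/3$ for $d$ large. Since a valid $\eps$‑tester accepts a draw from $D'_{yes}$ with probability $\ge 2/3$ and a draw from $D'_{no}$ with probability $\le 1/3$, this is a contradiction, proving Lemma~\ref{lem:sample}; combined with Lemma~\ref{cl:main-vc-low} (and Theorem~\ref{theo:lb-main} to transfer the query bound from non‑adaptive to adaptive) it completes Theorem~\ref{theo:lb-vc}. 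I do not foresee a genuine obstacle: the only delicate point is the second paragraph — verifying that ``pick a random $k'$‑subset of rows of $A$, then sample $s$ of them without repetition'' yields exactly the same distribution over ordered $s$‑tuples of rows of $A$ as ``sample $s$ rows of $A$ without repetition'', and that this is unaffected by the independently chosen $\sigma$ — together with making explicit that, since we argue about the distribution of the \emph{entire} sampled input, the bound applies verbatim to adaptive, query‑unbounded testers, so no appeal to the restriction‑based Yao lemma is needed here.
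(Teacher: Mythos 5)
Your proof is correct and follows essentially the same route as the paper: both arguments condition on the event that the $2^{o(d)}$ samples hit pairwise-distinct support vectors (which happens with probability $1-o(1)$), and then observe that under this conditioning both $D'_{yes}$ and $D'_{no}$ produce samples distributed as a uniformly random, repetition-free sequence of rows of $A$ composed with an independent uniformly random index-permutation, so the full transcripts are identically distributed. Your write-up simply makes explicit the injection $\rho$, the union-bound computation, and the final total-variation step that the paper leaves implicit.
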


\begin{proof}
Let $\cS$ be the set of samples taken by the algorithm. Note that if $\size{\cS}=2^{o(d)}$, then the probability that $\cS$ contains two samples of the same $\mathbf{V}_i$ or the same $\mathbf{W'}_i$ is $o(1)$. Conditioned on the event that $\cS$ does not contain two samples from the same vector ($\mathbf{V}_i$ or $\mathbf{W'}_i$), even if the tester queries the samples of $\cS$ in their entirety, the distributions over the responses to the queries of the tester are identical for both $D'_{yes}$ and $D'_{no}$. This follows from the fact that the distribution over the responses is identical to a distribution obtained by drawing uniformly without repetitions a sequence of row vectors from $\mathbf{V}_1, \ldots, \mathbf{V}_{2^d}$, and querying the row vectors completely. This completes the proof.
\end{proof}

\color{black}


\remove{{In this section, we show that there exist a pair of distributions $D^{yes}$ and $D^{no}$ such that in order to distinguish $D^{yes}$ from $D^{no}$, the number of queries required is a function of the VC-dimension of the two distributions (Proposition~\ref{theo:lb}). To prove it, we apply a linear code with large distance and dual distance (see Definition~\ref{defi:well-sep}). Using several properties of the good linear code, we prove that the distribution $D^{yes}$ is sufficiently far from the distribution $D^{no}_{\sigma}$ for any permutation $\sigma:[n] \rightarrow [n]$. Then we show that unless we perform $\Omega(d)$ queries to the samples obtained from the input distribution, the projected vectors from the queried indices appear as uniformly random vectors from $\{0,1\}^d$, where $d$ is an upper bound on the VC-dimension of $D^{yes}$ and $D^{no}$. Also, following a result of \cite{GoldreichR21a}, in Proposition~\ref{theo:lb1}, we show that there exists an index-invariant property $\cP$ with VC-dimension at most $d$ such that in order to $\eps$-test $\cP$, $\Omega(2^d/d)$ samples are necessary. Finally, we restate Proposition~\ref{theo:non-index}, which shows the necessity of index-invariance assumption for testing bounded VC-dimension properties. The last one requires a construction from Section~\ref{sec:exponentialgap}, and is formally proved in Section~\ref{sec:lbgen}.}



\begin{pre}[Restatement of Proposition~\ref{cor:lb}]\label{theo:lb}
Let $d,n \in \N$. There exist two distributions  $D^{yes}$ and $D^{no}$ such that the following conditions hold:
\begin{description}
\item[(i)] The VC-dimension of both distributions $D^{yes}$ and $D^{no}$ is at most $d$;

\item[(ii)] For any permutation $\sigma:[n]\rightarrow [n]$, the EMD between $D^{yes}$ and $D^{no}_\sigma$  is at least $\frac{1}{10}$;

\item[(iii)] Any algorithm, that has sample and query access to either $D^{yes}$ or $D^{no}$, and distinguishes whether it is $D^{yes}$ or $D^{no}$, must perform $\Omega(d)$ queries.
\end{description}
\end{pre}

\begin{rem}

Consider two fixed distributions $D_1$ and $D_2$ such that the EMD between them is at least $\eps$. Then the $\ell_1$ distance between them is also at least $\eps$, that is, there exists a subset $S \subseteq \{0,1\}^n$ such that $\size{D_1(S)-D_2(S)}\geq 2\eps$. So, given a distribution $D \in \{D_1,D_2\}$, $\Oh \left(\frac{1}{\eps^2}\right)$ samples are sufficient  to distinguish whether $D=D_1$ and $D=D_2$, if we are allowed to query all indices of the sampled vectors. 
Proposition~\ref{theo:lb} shows that even if we take any number of samples and query $o(d)$ indices from each of them, we can not distinguish between $D^{yes}$ and $D^{no}$.
\end{rem}

\color{black}

\paragraph*{Coding scheme:}
We describe the properties of the linear code that we will use to prove the lower bound.

\begin{lem}[Theorem 10 from Chapter 1 of \cite{macwilliams1977theory}]\label{lem:code-lem}
Let $d \in \N$ such that $10$ divides $d$. There exists a linear code $\cL \subseteq \{0,1\}^{d}$ such that the following conditions are satisfied: 
\begin{enumerate}
\item $\size{\cL}=2^{d/10}$;

\item The Hamming distance between the vectors of $\cL$ is $\Omega(d)$, that is, $d_H(\cL({\bf u}), \cL({\bf v})) \geq \Omega(d)$;
    
\item The Hamming distance of the dual code $\cL^{\perp} = \{\mathbf{v} \ | \ \mathbf{u} \cdot \mathbf{v} =0\ \mbox{for all} \ \mathbf{u} \in L\}$ is at least $\Omega(d)$.
\end{enumerate}
\end{lem}

\remove{\begin{proof}
Let us consider the vector space $\{0,1\}^{d}$. Now we select a set of $\frac{d}{2}$ many random vectors from $\{0,1\}^{d}$, and consider its span $\cL$. With high probability, $\cL$ is a \emph{linear code} satisfying  the  properties.
\end{proof}}

\begin{defi}[{\bf Good linear code}]\label{defi:well-sep}
A linear code that satisfies the three properties of Lemma~\ref{lem:code-lem} is called a \emph{good} linear code.
\end{defi}

\begin{lem}[Theorem 10 of Chapter 1 of \cite{macwilliams1977theory}]\label{lem:maincode}
Let $\cL$ be a good linear code, and  $D_0$ denote the uniform distribution over $\cL$. For any $I \subseteq [d]$, let $D_0 \mid _I$ denote the distribution over $\{0,1\}^{\size{I}}$ obtained by taking the projection of the vectors in $\cL$ to the indices corresponding to $I$. That is, for any ${\bf u} \in \{0,1\}^{\size{I}}$, 
$$D_0 \mid_I ({\bf u})=\sum \limits_{{\bf y} \in \{0,1\}^{d}\ :\ {\bf y}\mid_I = {\bf u}} D_0({\bf y}).$$ Then for any $I \subset [d]$ with $\size{I}= o(d)$, 
$D_0 \mid _I$ is the uniform distribution over $\{0,1\}^{\size{I}}$.
\end{lem}

\remove{
\begin{proof}
Let $A$ be the generator matrix of the code $\cL$. 
The first thing to notice that for any $I \subseteq [d]$ (with $|I| < \frac{d}{2}$) and any $u \in \{0,1\}^{|I|}$ there exists an $x$ such that \begin{equation}
    (A\cdot x)\mid_I = u
\end{equation}
We can prove this fact by contradiction along with the property that the minimum distance of the dual code of $\cL$ is $\frac{d}{2}$. 
Let there be an $I \subseteq [d]$ (with $|I| < \frac{ d}{3}$) and an $u \in \{0,1\}^{|I|}$ such that for no $x\in \{0,1\}^{d}$ such that $(A\cdot x)\mid_I = u$. 
In other words, $u$ is not in the set $\cL\mid_I = \{(A\cdot x)\mid_I\ : \ x\in \{0,1\}^d\}$. But since $\cL\mid_I$ is a linear space this means that dimension of the space 
$\cL\mid_I$ is strictly smaller than $|I|$. So there exists an $v\in \{0,1\}^{|I|}$ (with $v \neq 0$) such that $v \in S^{\perp}$.  Now consider the vector $y\in \{0,1\}^{d}$ that is defined by $y\mid_I = v$ and $y\mid_{\bar{I}}= 0^{|I|}$, that is, restricted to $I$ the vector is $v$ and it has $0$ in other co-ordinates. Note that $y \neq 0$ and $y \in \cL^{\perp}$. Also $|y| = |v| < \frac{ d}{3}$. But this is not possible as the minimum distance of the dual code of $\cL$ is $\frac{ d}{3}$. 

Using the above observation we can finish the proof of the Lemma~\ref{lem:maincode}. We can show that for any $I \subseteq [d]$ and any $u, v \in \{0,1\}^{|I|}$ there is a bijection between the sets $\{x\in \cL :\ x\mid_I = u\}$ and $\{y\in \cL\ :\ y\mid_I = v\}$. Let $z\in \{0,1\}^d$ be such that $(A\cdot z)\mid_I = (u+v)$. Such an $z$ exists by our previous argument. Now if $(c + (A\cdot x))\mid_I = u$ then note that $(c + (A\cdot (x + z)))\mid_I = u + (u+v) = v$. This gives a bijection between the sets $\{x\in \{0,1\}^d\ :\ (A\cdot x)\mid_I = u\}$ and $\{y\in \{0,1\}^d\ :\ (A\cdot y)\mid_I = v\}$ which in-turn gives a bijection between $\{x\in \cL\ :\ x\mid_I = u\}$ and $\{y\in \cL\ :\ y\mid_I = v\}$. So the distribution $D_0\mid_I$ is the uniform distribution over $\{0,1\}^{\size{I}}$. 
\end{proof}}

\subsection*{Proof of Proposition~\ref{theo:lb}:}

\paragraph*{Description of $D^{yes}$ and $D^{no}$:}
Without loss of generality, let us assume that $d$ divides $n$.
\begin{itemize}
    \item[(i)] Consider a partition $\cS$ of $[n]$ into $ S_1, \ldots,  S_d \subseteq [n]$ such that $\size{S_i}=\frac{n}{d}$.
    \item[(ii)] For any subset $\mathcal{V}$ of $\{0,1\}^d$, let us define a distribution $D^{\mathcal{V}}$ generated from $\cS$ and 
    $\mathcal{V}$ as follows: For each $\mathbf{v} \in \mathbf{\mathcal{V}}$, let $\mathbf{X}_{\mathbf{v}} \in \{0,1\}^n$ be such that, if $j \in S_i $, then   the
    $j$-th coordinate of $\mathbf{X}_{\mathbf{v}}$ is equal to the $i$-th coordinate of $\mathbf{v}$, that is, 
    $\mathbf{X}_{\mathbf{v}} (j)=\mathbf{v}(i)$, where $i \in [d]$. Following this, define  
    $\mathbf{{\cX}}_{\mathcal{V}}=\{\mathbf{X}_{\mathbf{v}}: \mathbf{v} \in \mathcal{V}\}$, and

\[ D^{\mathcal{V}}(\mathbf{X})=  \left\{
\begin{array}{ll}
       
      \frac{1}{ \size{\mathbf{\cX}_{\mathcal{V}}}}&, \mathbf{X} \in \mathbf{\cX}_{\mathcal{V}}  \\
      
      0 &, \mbox{otherwise}
\end{array} 
\right. \]
\item[(iii)] Take $D^{yes}=D^{\cL}$ and $D^{no}=D^{\{0,1\}^d}$, where $\cL$ is a good linear code (see Definition~\ref{defi:well-sep} and Lemma~\ref{lem:code-lem}). 
\end{itemize}

\begin{lem}
The VC-dimension of both $D^{yes}$ and $D^{no}$ is at most $d$.
\end{lem}

\begin{proof}
The VC-dimension of $D^{yes}$ is the VC-dimension of $\cL \subset \{0,1\}^d$, which is at most $d$. The VC-dimension  of $D^{no}$ is the VC-dimension of $ \{0,1\}^d$, which is $d$.
\end{proof}

\begin{lem}
For every permutation $\sigma:[n]\rightarrow [n],$ $d_{EM}(D^{yes},D^{no}_\sigma) \geq \frac{1}{10}$.
\end{lem}
\begin{proof}
We use the following claim.
\remove{
\begin{obs}
Let $0  \leq i < j \leq d$ and $\sigma:[n]\rightarrow$ be any permutation. Then $$d_H(\mathbf{W}_i,\sigma(\mathbf{W}_j))\geq \frac{n}{d}\geq \frac{1}{100}.$$
\end{obs}
\begin{obs}
Consider a vector $\mathbf{v}\in \cL$ and the vector $\mathbf{X}_{\mathbf{v}}\in \{0,1\}^n$. Let $\mathbf{W} \in \cW$ and $\sigma:[n]\rightarrow [n]$ be such that  $\sigma(\mathbf{W}) \in \mathbf{\cX}_{\mathbf{v}}^{1/6}$. Then  for any $\mathbf{W}' \in \mathbf{\cW} \setminus \{\mathbf{W}\}$, $\sigma(\mathbf{W}') \notin \mathbf{\cX}_{\mathbf{v}}^{1/6}.$ All the vectors, that are at a hamming distance of at most $ 1/100$ from $\sigma(\mathbf{W})$, have mass at most $\frac{1}{2(2^d-d)}$ in $D^{no}.$ 
\end{obs}}

\begin{cl}\label{cl:vecnumber}
Consider any permutation $\sigma:[n]\rightarrow [n]$. For any vector $\mathbf{X} \in \{0,1\}^n$, let $\mathbf{X}^{\leq 1/5}$ denote the set of vectors that have Hamming distance at most $1/5$ from $\mathbf{X}$ and are in the support of $D^{no}_\sigma$. Then $\size{\mathbf{X}^{\leq 1/5}}$ is at most $\frac{2^{9d/10}}{2}$.

\end{cl}
\begin{proof}
Let us consider a random vector $\mathbf{Y}$ from the support of $D^{no}_\sigma$.
The Hamming distance between $\mathbf{X}$ and $\mathbf{Y}$ is $\frac{1}{n}\sum\limits_{i=1}^n Z_i$, where $Z_i$ is a indicator random variable such that $Z_i=1$ if and only $\mathbf{X}_i \neq \mathbf{Y}_i$.

From the way we construct the distribution $D^{no}$, we have $\pr(Z_i=1)=\frac{1}{2}$, So, $\E\left[d_H(\mathbf{X},\mathbf{Y})\right]=\frac{1}{2}$. But here $Z_i$ depends on $\frac{n}{d}-1$ many other random variables $Z_j$'s in a particular way. So, applying Chernoff bound with bounded dependency (Lemma~\ref{lem:depend:high_exact_statement} and Corollary~\ref{lem:depend:high_prob}), we have 
$$\pr\left(d_H(\mathbf{X},\mathbf{Y}\right) \leq 1/5) \leq e^{-9d/50}< 2^{-9d/50}.$$

As there are $2^d$ many vectors in the support of $D^{no}_\sigma$, observe that $\size{\mathbf{X}^{\leq 1/5}} \leq 2^d\cdot\frac{1}{2^{9d/50}}\leq \frac{2^{9d/10}}{2}.$ This is because, without loss of generality, $d$ is a sufficiently large constant.
\end{proof}

Now we argue on the lower bound of EMD between $D^{yes}$ and $D^{no}_\sigma$, for any permutation $\sigma:[n] \rightarrow [n]$. By Claim~\ref{cl:vecnumber}, the total number of vectors that have Hamming distance at most $1/5$ from some vectors of $\cX_{\cL}$ (support of $D^{yes}$), is at most $\size{\cL} \cdot 2^{9d/10}/2 \leq 2^d/2$. So, at least $2^d/2$ vectors in the support of $D^{no}_\sigma$ are at a Hamming distance of at least $1/5$ from all the vectors in the support of $D^{yes}$. Using the fact that $D^{no}_\sigma$ is a uniform distribution over $2^d$ many vectors, we have $d_{EM}(D^{yes}, D^{no}_\sigma) \geq \frac{1}{10}$.
\end{proof}

\remove{Now, we are ready to prove Lemma~\ref{theo:lb}. Let us consider an algorithm that takes $t$ samples, that is, $\cS=\{\mathbf{X}_1,\ldots,\mathbf{X}_t\}$. As the number of queries an algorithm makes must be lower bounded by the number of samples it asks for, assume that $t\leq \frac{d}{100}$. Otherwise, we are done.

 Note that $D^{yes}$ and $D^{no}$ are identical on $\mathbf{\cW}$. Let us define $\cS'=\cS \setminus \mathbf{\cW}$. Observe  that the algorithm can't decide whether $D=D^{yes}$ or $D=D^{no}$ unless it decides whether there exists a vector $\mathbf{X}\in \cS'$ such that $\mathbf{X}=\mathbf{X}_{\mathbf{v}}$ for some $\mathbf{v} \in \cL$.

\begin{obs}
With probability $9/10$, $\cS'$ does not contain any vector from $\{0,1\}^n \setminus \mathbf{\cW}$ more than once irrespective of whether $D=D^{yes}$ or $D=D^{no}$.
\end{obs}
\begin{proof}
This is because, irrespective of whether $D=D^{yes}$ or $D=D^{no}$, 
$D(\mathbf{X})=\Oh(1/2^d)$ for every $\mathbf{X} \in \{0,1\}^n \setminus \mathbf{\cW}$  and $\size{\cS'} \leq \frac{d}{100}$.
\end{proof}
}

Now, we are ready to prove Proposition~\ref{theo:lb}. Let us consider an algorithm {\sc ALG} that takes $t$ many samples, that is, $\cS=\{\mathbf{X}_1,\ldots,\mathbf{X}_t\}$ for some $t \in \N$, and performs $o(d)$ many queries. 



Now consider the following lemma that says that a particular sample $\mathbf{X} \in \cS$ is indistinguishable whether it is coming from $D^{yes}$ or $D^{no}$ unless we query at least $\Omega(d)$ coordinates of $\mathbf{X}$.

\begin{lem}\label{lem:inter-lb}
Consider a vector $\mathbf{X} \in \cS$. Let {\sc ALG} queries for $\mathbf{X}(i_1),\ldots,\mathbf{X(i_q)}$, where $q=o\left(d\right)$. Then for any $(a_1,\ldots,a_q)\in \{0,1\}^q$,
$$\pr_{\mathbf{X} \sim D^{yes}} \left(\mathbf{X}(i_1),\ldots,\mathbf{X(i_q)}=(a_1,\ldots,a_q)\right)=\pr_{\mathbf{X} \sim D^{no}}\left(\mathbf{X}(i_1),\ldots,\mathbf{X(i_q)}=(a_1,\ldots,a_q)\right).$$
\end{lem}
 
From the above lemma, it implies that, {\sc ALG} cannot  decide whether $D=D^{yes}$ or $D=D^{no}$ even if it queries for some $q=o(d)$ many indices from each sample in $\cS$. Hence, we are done with the proof of Lemma~\ref{theo:lb}, except the proof of above lemma.
\begin{proof}[Proof of Lemma~\ref{lem:inter-lb}]
 
Note that $\mathbf{X}$ is generated from some vector $\mathbf{v} \in \{0,1\}^d$, such that the $j$-th coordinate of $\mathbf{X}$ is equal to the $i$-th coordinate of  $\mathbf{v}$, that is, 
    $\mathbf{X} (j)=\mathbf{v}(i)$ if $j \in S_i$, where $i \in [d]$.  As we are reading $q$ bits of $\mathbf{X}$, we are also reading  at most $q $ bits of $\mathbf{v}$. This follows from the construction of $\mathbf{X}$ from $\mathbf{v}$. Let $I\subset [d]$ be the set of  indices such that the algorithm reads $\mathbf{v}\mid_{I}$. Note that $\size{I}\leq q=o(d)$. Hence,  We will be done by showing the following:
    
   For any $(b_1,\ldots,b_{\size{I}})\in \{0,1\}^{\size{I}}$,
    \begin{equation}
         \pr_{\mathbf{X} \sim D^{yes}} \left(\mathbf{v}\mid_I=(b_1,\ldots,b_{|I|})\right)=\pr_{\mathbf{X} \sim D^{no}}\left(\mathbf{v}\mid_{I}=(b_1,\ldots,b_{|I|})\right).
    \end{equation}
   
We prove the above equation by showing that irrespective of whether $D=D^{yes}$ or $D=D^{no}$, $\mathbf{v}\mid_I$ follows the uniform distribution over $\{0,1\}^{|I|}$. Now recall the constructions of $D^{yes}$ and $D^{no}$.
\begin{description}
\item[(i)] If $\mathbf{X} \sim D^{yes}$, then  $\mathbf{v}$ is a vector drawn from $\cL$ uniformly at random. As $\size{I}=o(d)$, by Lemma~\ref{lem:maincode}, $\mathbf{v}\mid_I$ follows uniform distribution over $\{0,1\}^{|I|}$.
\item[(ii)] If $\mathbf{X} \sim D^{no}$, then  $\mathbf{v}$ is a vector drawn from $\{0,1\}^d$ uniformly at random. So, $\mathbf{v}\mid_I$ follows uniform distribution over $\{0,1\}^{|I|}$ as well.
\end{description}
\end{proof}
 }


\remove{
\begin{pre}\label{theo:lb1}
Let $d \in \N$. There exists an index-invariant $(0,d)$-VC-dimension property $\cP$ such that any algorithm that has sample and query access to a distribution $D$ over $\{0,1\}^n$ and takes a proximity parameter $\eps \in (0,1)$, and decides whether $D \in \cP$ or $D$ is $\eps$-far from $\cP$, requires $\Omega(2^d/ d)$ samples.
\end{pre}

\begin{proof}
Let $\cP$ be the distribution property of having support size at most $2^d$. Note that the VC-dimension of any member of $\cP$ is at most $d$. By \cite{GoldreichR21a}, for any small enough $\eps$, an $\eps$-test for this property requires at least $\Omega(2^d/d)$ samples. 
\end{proof}}

\color{black}

\section{Exponential gap between adaptive and non-adaptive testers for general properties}\label{sec:exponentialgap}

In this section, we prove that unlike the index-invariant properties, there can be an exponential gap between the query complexities of adaptive and non-adaptive algorithms for non-index-invariant properties. In Subsection~\ref{sec:ubgen} we prove an exponential upper bound on the relation between the non-adaptive and adaptive query complexities of general properties. In Subsection~\ref{sec:lbgen}, we provide an exponential separation between them, and also use the same method to prove Proposition~\ref{theo:non-index}. 

\subsection{Relation between adaptive and non-adaptive testers for general properties}~\label{sec:ubgen}
Let us assume that $\cA$ is the adaptive algorithm that $\eps$-tests $\cP$ using $s$ samples $\{\mathbf{V}_1, \ldots, \mathbf{V}_s\}$ and $q$ queries, along with tossing some random coins. Before directly proceeding to the description of the non-adaptive algorithm, let us first consider the following observation.




\begin{obs}\label{obs:nonadaptquerygen}
For any given outcome sequence of the random coin tosses of $\cA$, there are at most $2^q-1$ possible internal states of $\cA$.

\end{obs}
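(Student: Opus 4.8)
The plan is to model one execution of $\cA$, with its random coins already fixed, as a binary decision tree and then simply count the internal nodes of that tree. First I would note that once the outcome sequence of the coin tosses is fixed, the only remaining source of variation in the behavior of $\cA$ is the sequence of answers it receives to its queries; each query returns a single bit $v_{i,j}\in\{0,1\}$, so the answer to the $k$-th query lies in $\{0,1\}$. By the definition of an adaptive tester, the index pair $(i_{k+1},j_{k+1})$ that $\cA$ queries at step $k+1$ is a deterministic function of the (fixed) coins together with the answers to the first $k$ queries. Hence an ``internal state'' of $\cA$ — a configuration in which $\cA$ is about to issue its next query — is completely determined by the ordered tuple of answers received so far.

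Next I would set up the tree explicitly: let the root be the state before the first query, and from each node in which $\cA$ is about to make a query, draw two children corresponding to the two possible answers $0$ and $1$; after the $q$-th query $\cA$ reaches a terminal \textsc{Accept}/\textsc{Reject} configuration, which we do not count as an internal state. Then the states in which $\cA$ is about to make its $k$-th query are exactly the nodes at depth $k-1$, for $k=1,\dots,q$, and depth $k-1$ has at most $2^{k-1}$ nodes. Summing over the $q$ query steps gives at most $\sum_{k=1}^{q}2^{k-1}=2^{q}-1$ internal states, which is the asserted bound.

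I do not expect any substantive obstacle here: the whole content is the elementary identity $1+2+\cdots+2^{q-1}=2^{q}-1$ applied to a depth-$q$ binary decision tree. The single point that needs a bit of care is the bookkeeping of what counts as an ``internal state'' — namely, counting only the $q$ pre-query configurations along each branch and excluding the final decision node — since this is precisely why the bound comes out to $2^{q}-1$ rather than $2^{q+1}-1$. I would also remark that this bound will be used next to drive the non-adaptive simulation, where each of these $2^{q}-1$ states contributes (at most) one query index that the non-adaptive algorithm must prepare in advance.
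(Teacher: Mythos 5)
Your proof is correct and follows essentially the same argument as the paper: for each $k\in[q]$, the state of $\cA$ just before its $k$-th query is determined (given the fixed coins) by the $2^{k-1}$ possible answer strings to the first $k-1$ queries, and summing $\sum_{k=1}^{q}2^{k-1}=2^q-1$ gives the bound. Your decision-tree framing is just a slightly more explicit way of organizing the same count.
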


\begin{proof}
Consider the $k$-th step of $\cA$, where $\cA$ queries the $j_k$-th index of $\mathbf{V}_{i_k}$ for some $i_k \in [s]$, $j_k \in [n]$, and $k \in [q]$. Note that ${i_1}$ and ${j_1}$ are functions of only the random coins, and ${i_k}$ and ${j_k}$ are functions of the random coins, as well as $\mathbf{V}_{i_1}\mid_{j_1}, \ldots, \mathbf{V}_{i_{k-1}}\mid_{j_{k-1}}$, where $2 \leq k \leq q$. Due to the $2^{k-1}$ possible values of $\mathbf{V}_{i_1}\mid_{j_1}, \ldots, \mathbf{V}_{i_{k-1}}\mid_{j_{{k-1}}}$, there are $2^k$ possible states of the algorithm $\cA$ at Step $k$, for each $1 \leq k \leq q$. Finally, the state of $\cA$ depending on the random coins and the values of $\mathbf{V}_{i_1}\mid_{j_1}, \ldots, \mathbf{V}_{i_q}\mid_{j_{q}}$ will decide the final output. This implies that for any fixed set of outcomes of the random coin tosses used by $\cA$, there can be a total of at most $\sum\limits_{i=0}^{q-1} 2^i=2^q-1$ internal states, each making one query, as well as $2^q$ final (non-query-making) states.
\end{proof}

Now we proceed to present the non-adaptive algorithm $\cA'$ that simulates $\cA$ by using $s$ samples and at most $2^q$ queries.

\begin{lem}\label{pre:adaptiveexp}
Let $\cP$ be any property that is $\eps$-testable by an adaptive algorithm using $s$ samples and $q$ queries. Then $\cP$ can be $\eps$-tested by a non-adaptive algorithm using $s$ samples and at most $2^q - 1$ queries, where $s$ and $q$ are integers.
\end{lem}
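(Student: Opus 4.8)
The plan is to simulate the adaptive algorithm $\cA$ by a non-adaptive algorithm $\cA'$ that makes all the queries that $\cA$ could possibly make, across all branches of its decision tree, and then reconstructs which branch $\cA$ would actually have followed. First I would fix an outcome of the random coins of $\cA$; conditioned on this, $\cA$ is a deterministic adaptive algorithm, and by Observation~\ref{obs:nonadaptquerygen} its execution is governed by a binary decision tree of depth $q$ with at most $2^q - 1$ internal (query-making) nodes. For each internal node $v$ of this tree, the pair $(i_v, j_v)$ — namely, which sample and which coordinate $\cA$ queries at $v$ — is determined by the coin outcome together with the answers received along the path from the root to $v$; since that path is a fixed sequence of bits, $(i_v, j_v)$ is a fixed, computable quantity once the coins are fixed. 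So $\cA'$ will, after tossing the same coins as $\cA$, compute the set of all pairs $\{(i_v, j_v) : v \text{ an internal node}\}$, group them by sample index to form the query sets $J_1, \ldots, J_s \subseteq [n]$, and query exactly these coordinates. The total number of queries is at most the number of internal nodes, which is at most $2^q - 1$, and the sample complexity stays $s$.

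The key steps, in order, are: (1) observe that the query pattern of $\cA$ forms a depth-$q$ decision tree with $\le 2^q - 1$ internal nodes, which is exactly Observation~\ref{obs:nonadaptquerygen}; (2) note that each internal node's query location is a deterministic function of the coins and the path bits, hence can be precomputed by $\cA'$ without actually observing any sample — this is what makes $\cA'$ genuinely non-adaptive in the sense of Definition~\ref{defi:non-adaptive-tester}; (3) have $\cA'$ query all these precomputed locations; (4) have $\cA'$ reconstruct $\cA$'s actual execution path: starting at the root, at each internal node $v$ on the current path, read off the value $\mathbf{V}_{i_v}|_{\{j_v\}}$ from the answers $\cA'$ already collected, and follow the corresponding child; after $q$ steps this lands in one of $\cA$'s final states, and $\cA'$ outputs whatever $\cA$ outputs there. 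Since $\cA'$ sees exactly the same coins and exactly the same answers to exactly the same effective queries that $\cA$ would have issued, $\cA'$ produces an identically distributed output, so it $\eps$-tests $\cP$ with the same error probability.

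I do not expect a serious obstacle here; the only point requiring a little care is the distinction between the query set being \emph{fixed in advance} (as Definition~\ref{defi:non-adaptive-tester} requires) versus \emph{adaptively chosen}: one must emphasize that although the \emph{locations} $(i_v, j_v)$ depend on the hypothetical answer bits along a path, those bits are loop variables ranging over $\{0,1\}$, not actual observed data, so $\cA'$ enumerates all $2^q - 1$ of these locations purely from its coins before looking at anything. A second minor point is that the bound is $2^q - 1$ rather than $2^q$ because the $2^q$ leaf states of the tree are non-query-making; the sum $\sum_{i=0}^{q-1} 2^i = 2^q - 1$ from Observation~\ref{obs:nonadaptquerygen} gives exactly this. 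With these observations in place the proof is a short assembly: invoke Observation~\ref{obs:nonadaptquerygen}, define $\cA'$ as above, argue non-adaptivity, and argue output-equivalence to conclude $\eps$-testability with $s$ samples and at most $2^q - 1$ queries.
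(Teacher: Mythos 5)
Your proposal is correct and is essentially the paper's own proof: both invoke Observation~\ref{obs:nonadaptquerygen} to bound the number of internal states by $2^q-1$, precompute from the coins alone the full set of possible query locations (the union over all decision-tree branches), query them non-adaptively, and then simulate $\cA$'s actual path from the collected answers. The only difference is presentational — you phrase the unrolling explicitly in terms of the decision tree's internal nodes, while the paper speaks of a two-phase algorithm — but the construction and correctness argument are the same.
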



\begin{proof}
Let $\cA$ be the adaptive algorithm that $\eps$-tests $\cP$ using $s$ samples $\{\mathbf{V}_1, \ldots, \mathbf{V}_s\}$ and $q$ queries. Now we show that a non-adaptive algorithm $\cA'$ exists that uses $s$ samples and makes at most $2^q-1$ queries, such that the output distributions of $\cA$ and $\cA'$ are identical for any unknown distribution $D$.

The idea of $\cA'$ in a high level is to enumerate all possible internal steps of $\cA$, and list all possible queries $\cQ$ that might be performed by $\cA$. Note that $\cQ$ depends on the random coins used by $\cA$. We then query all the indices of $\cQ$ non-adaptively, and finally simulate $\cA$ using the full information at hand, with the same random coins that were used to generate $\cQ$. As $\cA$ has query complexity $q$, the number of possible internal states of $\cA$ is at most $2^q-1$, and the query complexity of $\cA'$ follows. Now we formalize the above intuition below.


The algorithm $\cA'$ has two phases:
\paragraph{Phase 1:}
\begin{enumerate}
    \item[(i)] $\cA'$ first takes $s$ samples $\mathbf{V}_1, \ldots, \mathbf{V}_s$.
    
    \item[(ii)] $\cA'$ now tosses some random coins (same as $\cA$) and determines the set of all possible indices $J_i$ of $\mathbf{V}_i$ that might be queried by $\cA$, for every $i \in [s]$. The sets of indices $J_i$'s are well defined after we fix the random coins, and follows from Observation~\ref{obs:nonadaptquerygen}.
\end{enumerate}

Thus at the end of Phase $1$, $\cA'$ has determined $s$ sets of indices $J_1, \ldots, J_s$ of the vectors $\mathbf{V}_1, \ldots, \mathbf{V}_s$ such that $\sum\limits_{i=1}^s|J_i|\leq 2^{q} -1$. Now $\cA'$ proceeds to the second phase of the algorithm.

\paragraph{Phase 2:}
\begin{enumerate}
    \item[(i)] For every $i \in [s]$ and $j \in J_i$, query the $j$-th index of $\mathbf{V}_i$, where $J_i$ denotes the set of indices of $\mathbf{V}_i$ that might be queried at the internal states of $\cA$, determined in Phase $1$.
    
    \item[(ii)] Simulate the algorithm $\cA$ using the same random coins used in Phase $1$, and report {\sc Accept} or {\sc Reject} according to the output of $\cA$.
    
\end{enumerate}


Note that the set of random coins that are used to determine $J_1, \ldots, J_s$ in Step $(ii)$ of Phase $1$ of the algorithm are the same random coins that are used to simulate $\cA$ in Step $(ii)$ of Phase $2$. Thus the correctness of $\cA'$ follows from to the correctness of $\cA$ along with Observation~\ref{obs:nonadaptquerygen}.
\end{proof}

\remove{
Given $\cA$, the algorithm $\cA'$ follows $\cA$ with the following difference:

Instead of choosing $i_k$ and $j_k$, it goes over all $2^{k-1}$ possibilities
for the answers $v_{i_1}|{j_1},...,v_{i_{k-1}}|{j_{k-1}}$, and queries all of them. After querying them, it uses $v_{i_k}|j_k$.

It is clear that the new algorithm has $s$ samples and $2^q-1$ queries in
total. Also, it is non-adaptive, because all $q$ query rounds use only the
internal coins, and hence they can be first calculated, and then you can
set each $I_i$ be the set of all queries made to $v_i$ from all $q$ "rounds".
}

\subsection{Exponential separation between adaptive and non-adaptive query complexities}\label{sec:lbgen}

Now we prove that the gap of Lemma~\ref{pre:adaptiveexp} is almost tight, in the sense that there exists a property such that the adaptive and non-adaptive query complexities for testing it are exponentially separated. 


Before proceeding to the proof, let us consider any property $\cP$ of strings of length $n$ over the alphabet $\{0,1\}$. Now we describe a related property $1_{\cP}$ over distributions as follows:

\paragraph*{{\bf Property $1_{\cP}$:}}

For any distribution $D \in 1_{\cP}$, the size of the support of $D$ is $1$, and the single string in the support of $D$ satisfies $\cP$.\\

Let us first recall the following result from \cite{GoldreichR21a}, which states that $\widetilde{\Oh}(\frac{1}{\eps})$ queries are enough to $\eps$-test whether any distribution has support size $1$.

\begin{lem}[{\bf Restatement of Corollary $2.3.1$ of \cite{GoldreichR21a}}]\label{lem:supportsizebound}
There exists a non-adaptive algorithm that $\eps$-tests whether an unknown distribution $D$ has support size $1$ and uses $\widetilde{\Oh}(\frac{1}{\eps})$ queries, for any $\eps \in (0,1)$.

\end{lem}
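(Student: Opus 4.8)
The plan is to reduce testing ``$D$ has support size $1$'' to detecting whether a single, easily sampled Bernoulli random variable has parameter $0$ or parameter at least $\eps$. Write $p_i = \Pr_{\mathbf{X}\sim D}[\mathbf{X}_i = 1]$ for the $i$-th marginal of $D$. The first step I would carry out is an exact characterization of the EMD-distance of $D$ from the property: for any fixed $\mathbf{w}\in\{0,1\}^n$ the column constraint of the transportation LP forces the unique transport plan between $D$ and the point mass $\delta_{\mathbf{w}}$, so $d_{EM}(D,\delta_{\mathbf{w}}) = \E_{\mathbf{X}\sim D}[d_H(\mathbf{X},\mathbf{w})] = \frac{1}{n}\sum_{i=1}^n\big(\mathbf{1}[w_i=0]\,p_i + \mathbf{1}[w_i=1](1-p_i)\big)$, and this is minimized coordinatewise by taking $\mathbf{w}$ to be the coordinate-wise majority string. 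Hence the distance of $D$ from ``support size $1$'' is exactly $\mu(D) := \frac{1}{n}\sum_{i=1}^n \min(p_i, 1-p_i)$; so $D$ is in the property iff $\mu(D)=0$, and $D$ is $\eps$-far iff $\mu(D)\ge\eps$.

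Next I would exhibit a quantity that is both cheap to estimate with queries and tightly related to $\mu(D)$. For two independent samples $\mathbf{X},\mathbf{Y}\sim D$ and a uniformly random coordinate $i\in[n]$, $\Pr[\mathbf{X}_i\ne\mathbf{Y}_i] = \frac{1}{n}\sum_{i=1}^n 2p_i(1-p_i) =: q(D)$. Using the elementary bound $2p(1-p)\ge \min(p,1-p)$ valid for all $p\in[0,1]$, we get $q(D)\ge\mu(D)$ (and symmetrically $q(D)\le 2\mu(D)$, though only the lower bound is needed). Thus $q(D)=0$ when $D$ is in the property, and $q(D)\ge\eps$ when $D$ is $\eps$-far.

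The resulting non-adaptive tester draws $2s$ samples, $s=\lceil \ln 3/\eps\rceil$, groups them into $s$ disjoint pairs $(\mathbf{X}^{(2j-1)},\mathbf{X}^{(2j)})$, chooses $s$ independent uniformly random indices $i_1,\dots,i_s\in[n]$ using only its internal coins, queries $\mathbf{X}^{(2j-1)}_{i_j}$ and $\mathbf{X}^{(2j)}_{i_j}$ for each $j\in[s]$, and accepts iff $\mathbf{X}^{(2j-1)}_{i_j}=\mathbf{X}^{(2j)}_{i_j}$ for all $j$. Since the queried index set is fixed before any answer is read, the tester is non-adaptive, and it makes $2s = \Oh(1/\eps)$ queries, which is in particular $\tOh(1/\eps)$. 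For correctness: if $D$ has support size $1$ then all its samples coincide, so the tester accepts with probability $1$; if $D$ is $\eps$-far, the $s$ events ``pair $j$ disagrees at $i_j$'' are mutually independent (disjoint samples, independent coordinates), each of probability $q(D)\ge\eps$, so the tester accepts with probability at most $(1-\eps)^s\le e^{-\eps s}\le 1/3$. (A $\log(1/\delta)$ factor yields failure probability $\delta$, which is where the $\tOh$ slack may be used.)

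The only genuinely delicate points are the following. First, one must get the EMD characterization of the first step right — that $d_{EM}(D,\delta_{\mathbf{w}})$ decomposes into per-coordinate terms and is therefore minimized by the coordinatewise majority; this is easy here precisely because the second marginal is a point mass, which leaves no freedom in the coupling. Second, and more important to notice, is the choice of sampling scheme: one cannot test by fixing one common random coordinate and querying it across many samples, since a distribution that is $0$ on all but an $\eps$-fraction of the coordinates (and uniform there) would then be accepted with probability about $1-\eps$; it is essential that each trial uses a fresh sample pair together with a fresh coordinate, which is exactly what makes the $s$ trials independent Bernoulli$(q(D))$ variables and yields the $\tOh(1/\eps)$ bound.
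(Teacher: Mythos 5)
The paper does not actually prove this lemma --- it simply cites Corollary~$2.3.1$ of Goldreich--Ron --- so what you have written is a self-contained replacement rather than a comparison. Your argument is correct. The first key step, that $d_{EM}(D,\delta_{\mathbf{w}})=\E_{\mathbf{X}\sim D}[d_H(\mathbf{X},\mathbf{w})]$, holds precisely because the coupling to a point mass is forced (every row of the transportation LP must send all its mass to $\mathbf{w}$), so the distance to the property is exactly $\mu(D)=\frac{1}{n}\sum_i\min(p_i,1-p_i)$, minimized by the coordinatewise majority. The second key step, $q(D)=\frac1n\sum_i 2p_i(1-p_i)\ge\mu(D)$, follows from $2p(1-p)\ge\min(p,1-p)$, which is the pointwise inequality $1-p\ge\tfrac12$ after the WLOG $p\le\tfrac12$. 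The independence of the $s$ trials holds because each trial reads a disjoint sample pair at a freshly drawn coordinate, so the number of observed collisions is $\mathrm{Binomial}(s,q(D))$; the rest is routine. The pitfall you single out as delicate --- that a tester reusing one common coordinate across many samples would be fooled by a distribution that is deterministic on all but an $\eps$-fraction of coordinates --- is exactly the right thing to notice, and your per-pair fresh coordinate is the correct fix. Two small remarks. First, your tester has one-sided error and uses $O(1/\eps)$ queries, which is in fact sharper than the $\tOh(1/\eps)$ claimed by the lemma, so you lose nothing. Second, by the paper's definition of farness (Definition of ``distance of a distribution from a property''), being $\eps$-far means $\mu(D)>\eps$, not $\mu(D)\ge\eps$; this is only cosmetic and strengthens your soundness bound.
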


We now prove that the query complexity of $\eps$-testing $1_\cP$ is at least the query complexity of $\eps$-testing $\cP$, and can be at most the query complexity of $\frac{\eps}{2}$-testing of $\cP$, along with an additional additive factor of $\widetilde{\Oh}(\frac{1}{\eps})$ for testing whether the distribution has support size $1$. The result is formally stated as follows:

\begin{lem}\label{lem:stringprel}
Let $q_N$ and $q_A$ denote  the non-adaptive and adaptive query complexities for $\eps$-testing $\cP$, respectively. Similarly, let $Q_N$ and $Q_A$ denote the non-adaptive and adaptive query complexities of $\eps$-testing $1_{\cP}$, respectively. Then the following hold:
\begin{enumerate}
    \item \label{eqn:adaptivestring} 
        $q_A(\eps) \leq Q_A(\eps) \leq \widetilde{\Oh}(\frac{1}{\eps})+ \Oh\left(q_A(\frac{\eps}{2})\right)$~\footnote{We are using $\Oh(\cdot)$ as we are amplifying the success probability of the tester for the property $\cP$ to $9/10$ as compared to the usual success probability of $2/3$.}.

    \item \label{eqn:nonadaptivestring}$q_N(\eps) \leq Q_N(\eps) \leq \widetilde{\Oh}(\frac{1}{\eps})+ \Oh\left(q_N( \frac{\eps}{2})\right)$.
\end{enumerate}
\end{lem}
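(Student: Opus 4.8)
The plan is to prove the two chains of inequalities by exhibiting reductions in both directions, treating the adaptive and non-adaptive cases uniformly since the arguments differ only in a trivial way. For the lower bounds $q_A(\eps) \leq Q_A(\eps)$ and $q_N(\eps) \leq Q_N(\eps)$, I would argue that any $\eps$-tester for $1_{\cP}$ can be used to $\eps$-test the string property $\cP$: given an input string $x \in \{0,1\}^n$, consider the distribution $D_x$ that puts all its mass on $x$. Then $D_x \in 1_{\cP}$ iff $x \in \cP$, and crucially $d_{EM}(D_x, 1_{\cP}) = \min_{y \in \cP} d_H(x,y)$, which is exactly the (normalized Hamming) distance of $x$ from $\cP$; so $D_x$ is $\eps$-far from $1_{\cP}$ iff $x$ is $\eps$-far from $\cP$. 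A tester for $1_{\cP}$ run on $D_x$ only ever samples $x$, and each query to a sample is a query to $x$, so it induces a tester for $\cP$ with the same query complexity and adaptivity. This gives the left inequality in both items.

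For the upper bounds, I would design a tester for $1_{\cP}$ as a two-stage procedure. First, run the support-size-$1$ tester of Lemma~\ref{lem:supportsizebound} with proximity parameter $\eps/2$ (amplified to success probability $9/10$), using $\widetilde{\Oh}(1/\eps)$ queries; if it rejects, reject. If it accepts, we are (up to the error probability) in the case that $D$ is $\eps/2$-close to some single-string distribution $D_y$. Take one fresh sample $\mathbf{V}$ from $D$ and run the $\eps/2$-tester for the string property $\cP$ on $\mathbf{V}$ (again amplified to success probability $9/10$, which costs only a constant factor, hence the $\Oh(\cdot)$), using $q_A(\eps/2)$ or $q_N(\eps/2)$ queries respectively; accept iff this tester accepts. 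The total query count is $\widetilde{\Oh}(1/\eps) + \Oh(q_A(\eps/2))$ in the adaptive case and $\widetilde{\Oh}(1/\eps) + \Oh(q_N(\eps/2))$ in the non-adaptive case; in the non-adaptive case note that the two stages' query sets can be fixed in advance from the internal coins, so the combined tester stays non-adaptive, and the single extra sample does not affect this.

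The correctness argument is where the bulk of the work lies, and I expect the case analysis around EMD-closeness to be the main obstacle. For completeness, if $D \in 1_{\cP}$ then $D = D_y$ with $y \in \cP$; the first stage accepts with probability $\geq 9/10$, the sample $\mathbf{V}$ is always $y$, and the second stage accepts with probability $\geq 9/10$, so overall acceptance probability is $\geq 1 - 2/10 \geq 2/3$. For soundness, suppose $D$ is $\eps$-far from $1_{\cP}$. If $D$ is also $\eps/2$-far from having support size $1$, the first stage rejects with probability $\geq 9/10$. Otherwise $D$ is $\eps/2$-close to some $D_y$; I claim $y$ must be $\eps/2$-far from $\cP$, for otherwise $d_{EM}(D, 1_{\cP}) \leq d_{EM}(D, D_y) + d_{EM}(D_y, D_{y'}) \leq \eps/2 + d_H(y,y')$ for the nearest $y' \in \cP$, which would be $< \eps$, contradicting $\eps$-farness. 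Now a sample $\mathbf{V}$ from $D$ equals $y$ with probability $\geq 1 - \eps/2 \geq 1/2$ (since $d_{TV}(D,D_y) = d_{EM}(D,D_y) \cdot (\text{factor} \geq 1)$—here I need to be slightly careful, as $d_{EM} \leq \|D-D_y\|_1/2$ only bounds EMD from above; instead I should observe directly that the flow realizing $d_{EM}(D,D_y)\le \eps/2$ moves at most an $\eps/2$ fraction of mass, so $D$ puts mass $\geq 1-\eps/2$ on $y$), and conditioned on $\mathbf{V}=y$ the second-stage tester rejects with probability $\geq 9/10$; a union bound over the failure events (wrong sample, tester error) keeps the rejection probability above $2/3$, possibly after adjusting the constant in the amplification. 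Assembling these bounds yields both items of the lemma.
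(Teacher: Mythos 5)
Your first direction ($q_A(\eps)\le Q_A(\eps)$ and $q_N(\eps)\le Q_N(\eps)$) is correct and matches the paper, and the two-stage structure of your upper-bound algorithm (support-size-$1$ test followed by running the string tester on a fresh sample) is also what the paper does. The problem is in the soundness analysis. Your claim that the flow realizing $d_{EM}(D,D_y)\le \eps/2$ moves at most an $\eps/2$ fraction of mass is false: since $D_y$ is a point mass at $y$, the flow sends \emph{all} of $D$'s mass to $y$, and its cost is $\sum_{\mathbf{X}}D(\mathbf{X})\,d_H(\mathbf{X},y)=\E_{\mathbf{U}\sim D}[d_H(\mathbf{U},y)]$. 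This expectation can be tiny while $D(y)=0$, e.g.\ $D$ uniform on the $n$ Hamming-neighbours of $y$ gives EMD $=1/n$ but never samples $y$. So you cannot conclude $\Pr[\mathbf{U}=y]\ge 1-\eps/2$, and the step ``conditioned on $\mathbf{V}=y$ the second stage rejects'' has no footing.

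The fix is to use Markov's inequality on $d_H(\mathbf{U},y)$ rather than on $\Pr[\mathbf{U}=y]$: a sample $\mathbf{U}$ is then only \emph{close} to $y$, not equal to it, and you must budget for this loss. That is precisely where your parameter choice breaks down. With the support-size test run at proximity $\eps/2$, the $D_y$ you obtain satisfies only $d_{EM}(D,D_y)\le\eps/2$; you show $y$ is $>\eps/2$-far from $\cP$, but Markov only gives $d_H(\mathbf{U},y)\le t\cdot\eps/2$ with failure probability $1/t$, so the triangle inequality lower-bounds the distance of $\mathbf{U}$ from $\cP$ by $\eps/2 - t\eps/2$, which is negative for any $t\ge 1$. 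The paper avoids this by running the support-size test at $\eps/20$: then $y$ is forced to be $>19\eps/20$-far from $\cP$, a sample is $\le 9\eps/20$-close to $y$ with probability $\ge 4/5$, and the triangle inequality leaves exactly $\eps/2$ of slack, which matches the proximity parameter of the string tester in the second stage. Your algorithm, with the first-stage parameter loosened to something like $\eps/20$ and the equality claim replaced by the Markov step on Hamming distance, would then go through.
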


\begin{proof}
We prove here \eqref{eqn:adaptivestring}, and omit the nearly identical proof of \eqref{eqn:nonadaptivestring}.

\paragraph*{Proof of $q_A(\eps) \leq Q_A(\eps)$:} Consider an adaptive algorithm $\cA$ that $\eps$-tests $1_\cP$ by using $Q_A(\eps)$ queries. We construct an algorithm $\cA'$ that $\eps$-tests $\cP$ using the same number of queries. Let $\mathbf{V}$ be the unknown string of length $n$, where we want to test whether $\mathbf{V} \in \cP$ or $\mathbf{V}$ is $\eps$-far from $\cP$.

Let us define an unknown distribution $D'$ (over the Hamming cube $\{0,1\}^n$) such that we want to distinguish whether $D'\in 1_{\cP}$ or $D'$ is $\eps$-far from $1_{\cP}$. The distribution $D'$ is defined as follows:
\[ D'(\mathbf{X}) = \left\{
\begin{array}{ll}
     
      1 & \mathbf{X}=\mathbf{V} \\
      \vspace{3pt}
            0 & \mbox{otherwise}  
\end{array} 
\right. \]

Observe that $\mathbf{V} \in \cP$ if and only if $D' \in 1_\cP$. Similarly, it is not hard to see that $\mathbf{V}$ is $\eps$-far from $\cP$ if and only if $D'$ is $\eps$-far from $1_\cP$.
We simulate the algorithm $\cA$ by $\cA'$ as follows: when $\cA$ takes a sample, $\cA'$ does nothing, and when $\cA$ queries an index $i \in [n]$ of any sample, $\cA'$ queries the index $i$ of $\mathbf{V}$. Finally, $\cA'$ provides the output received from the simulation of $\cA$. 


From the description, it is clear that $\cA'$ performs exactly $Q_{\cA}(\eps)$ queries and is indeed simulated by running $\cA$ over $D'$.



\paragraph*{Proof of {\bf $Q_A(\eps) \leq \tOh\left(\frac{1}{\eps}\right) + \Oh\left(q_A(\frac{\eps}{2})\right)$:}}

Let us consider an adaptive algorithm $\cA_1$ that $\frac{\eps}{2}$-tests $\cP$ using $\Oh\left(q_A(\frac{\eps}{2})\right)$ queries to the unknown string $\mathbf{X} \in \{0,1\}^n$, with success probability at least $\frac{9}{10}$.  Now we design an adaptive algorithm $\cA_1'$ that $\eps$-tests $1_\cP$ using $\tOh\left(\frac{1}{\eps}\right) + \Oh\left(q_A(\frac{\eps}{2})\right)$ many queries.

\paragraph*{Algorithm $\cA_1'$:}
Assume that $D$ is the distribution that we want to $\eps$-test for $1_\cP$. The algorithm $\cA_1'$ performs the following steps:

\begin{enumerate}
    \item[(i)] Run the tester corresponding to Lemma~\ref{lem:supportsizebound} to $\frac{\eps}{20}$-test whether $D$ has support size $1$, with success probability at least $\frac{9}{10}$. If the tester decides that $D$ has support size $1$, then go to the next step. Otherwise, {\sc Reject}.
    \item[(ii)] Take one more sample from $D$ and let it be $\mathbf{U} \in \{0,1\}^n $. Run algorithm $\cA_1$ to $\frac{\eps}{2}$-test $\cP$ considering $\mathbf{X}=\mathbf{U}$ as the unknown string. If $\cA_1$ accepts, {\sc Accept}. Otherwise {\sc Reject}.
\end{enumerate}

Note that the query complexity for performing Step $(i)$ is $\widetilde{\Oh}(\frac{1}{\eps})$, which follows from Lemma~\ref{lem:supportsizebound}. Additionally, the number of queries performed in Step $(ii)$ is $\Oh\left(q_A(\frac{\eps}{2})\right)$, which follows from the assertion of the lemma. Thus, the algorithm $\cA_1'$ performs $\widetilde{\Oh}(\frac{1}{\eps}) + \Oh\left(q_A(\frac{\eps}{2})\right)$ many queries in total.

Now we will argue the correctness of $\cA_1'$. For completeness, assume that $D \in 1_\cP$. Let $\mathbf{V} \in \{0,1\}^n$ be the string such that $D(\mathbf{V})=1$ and $\mathbf{V} \in \cP$. Note that, by Lemma~\ref{lem:supportsizebound}, $\cA_1'$ proceeds to Step $(ii)$ with probability at least $\frac{9}{10}$. In Step $(ii)$, $\cA'$ sets $\mathbf{U}=\mathbf{V}$, and runs algorithm $\cA_1$ to $\frac{\eps}{2}$-test $\cP$ considering $\mathbf{X}=\mathbf{V}$ as the unknown string. Since $\mathbf{V} \in \cP$, by the assumption on the algorithm $\cA_1$, $\cA_1'$ accepts with probability at least $\frac{9}{10}$, given that $\cA_1'$ does not report {\sc Reject} in Step $(i)$. Thus, by the union bound, $\cA_1'$ accepts $D$ with probability at least $\frac{4}{5}$.

Now consider the case where $D$ is $\eps$-far from $1_{\cP}$. If $D$ is $\frac{\eps}{20}$-far from having support size $1$, $\cA_1'$ reports {\sc Reject} in Step $(i)$ with probability at least $\frac{9}{10}$, and we are done. So, assume that $D$ is $\frac{\eps}{20}$-close to having support size $1$. Then there exists a distribution $D'$ with support size $1$, and the distance between $D$ and $D'$ is at most $\frac{\eps}{20}$. Let us assume that $D'$ is supported on the string $\mathbf{V}$. By the Markov inequality, this implies that with probability at least $\frac{4}{5}$, a string $\mathbf{U}$ sampled according $D$ will be $\frac{9 \eps}{20}$-close to $\mathbf{V}$.

\begin{enumerate}
\item[(i)] If $\mathbf{V}$ is $\frac{19 \eps}{20}$-close to $\cP$, using the triangle inequality, this implies that $D$ is $\eps$-close to $1_{\cP}$, which is a contradiction.

\item[(ii)] Now consider the case where $\mathbf{V}$ is $\frac{19 \eps}{20}$-far from $\cP$. Recall that with probability at least $\frac{4}{5}$, the sample $\mathbf{U}$ taken at Step (ii) above is $\frac{9 \eps}{20}$-close to $\mathbf{V}$. As we are considering the case where $\mathbf{V}$ is $\frac{19 \eps}{20}$-far from $\cP$, using the triangle inequality, $\mathbf{U}$ is $\frac{\eps}{2}$-far from $\cP$ with the same probability. In this case, the algorithm will {\sc Reject} in Step $(ii)$, with probability at least $\frac{9}{10}$. Together, this implies that the  algorithm will {\sc Reject} the distribution $D$, with probability at least $\frac{7}{10}$.
\end{enumerate}
\end{proof}

In the following, we will construct the property $\cP_{Pal}$ of strings over the alphabet $\{0,1,2,3\}$. It will then be encoded as a property of strings over $\{0,1\}$ by using two bits per letter.

\paragraph*{{\bf Property $\cP_{Pal}$:}}

A string $\mathbf{S}$ of length $n$ is in $\cP_{Pal}$ if $\mathbf{S} = \mathbf{X}\mathbf{Y}$, where $\mathbf{X}$ is a palindrome over the alphabet $\{0,1\}$, and $\mathbf{Y}$ is a palindrome over the alphabet $\{2,3\}$.

There is an exponential gap between the query complexities of adaptive and non-adaptive algorithms to $\eps$-test $\cP_{Pal}$. The result is stated as follows:

\begin{lem}\label{lem:stringlb}
There exists an adaptive algorithm that $\eps$-tests $\cP_{Pal}$ by performing $\Oh(\log n)$ queries for any $\eps \in (0,1)$. However, there exists an $\eps \in (0,1)$ such that $\Omega(\sqrt{n})$ non-adaptive queries are necessary to $\eps$-test $\cP_{Pal}$.
\end{lem}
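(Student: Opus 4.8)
The plan is to prove the two parts of Lemma~\ref{lem:stringlb} separately. For the upper bound, I would describe an adaptive tester for $\cP_{Pal}$ that exploits a binary-search-like mechanism. Observe that a string $\mathbf{S}$ is in $\cP_{Pal}$ iff it has a prefix consisting of symbols in $\{0,1\}$ (forming a palindrome $\mathbf{X}$) followed by a suffix consisting of symbols in $\{2,3\}$ (forming a palindrome $\mathbf{Y}$). First I would adaptively locate (an approximation of) the ``boundary'' index $m$ where the string transitions from the $\{0,1\}$ part to the $\{2,3\}$ part: since the alphabet is sorted, one can binary-search using $\Oh(\log n)$ queries to find an index that is consistent with a valid cut point (if no such consistent cut exists the tester rejects). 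Once $m$ is (approximately) known, testing whether $\mathbf{X}=\mathbf{S}[1\ldots m]$ is a palindrome and $\mathbf{Y}=\mathbf{S}[m+1\ldots n]$ is a palindrome is done by sampling $\Oh(1/\eps)$ random positions $i$, querying $\mathbf{S}[i]$ and the mirror position within the relevant block, and checking equality (also checking $\mathbf{S}[i]\in\{0,1\}$ in the first block and $\in\{2,3\}$ in the second). Palindromicity of a block of length $L$ is a property of strings testable with $\Oh(1/\eps)$ queries (pick random $i$, compare with its mirror), so the whole procedure uses $\Oh(\log n + 1/\eps) = \Oh(\log n)$ queries for constant $\eps$. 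One must verify the standard completeness/soundness: if $\mathbf{S}\in\cP_{Pal}$ all checks pass; if $\mathbf{S}$ is $\eps$-far then either no valid cut exists, or for every cut either a constant fraction of positions violate the alphabet restriction or a constant fraction of mirror pairs disagree, and a random sample of $\Oh(1/\eps)$ positions catches this with probability $\geq 2/3$.

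For the lower bound, I would use Yao's principle for non-adaptive testers and construct $D_{yes}$ and $D_{no}$ over strings. The hard instances should be built so that detecting a planted ``defect'' requires querying a matched pair of mirror positions. Concretely, fix $\mathbf{X}_0$ to be, say, the all-$0$ string of length $n/2$ and $\mathbf{Y}_0$ the all-$2$ string of length $n/2$, so $\mathbf{S}_0 = \mathbf{X}_0\mathbf{Y}_0 \in \cP_{Pal}$; let $D_{yes}$ be concentrated on $\mathbf{S}_0$ (or on a random such base string with a random permutation-like structure, depending on what makes the far-from-$\cP_{Pal}$ argument clean). For $D_{no}$, pick a uniformly random index $j \in [n/2]$ and flip the symbol at position $j$ of the first half from $0$ to $1$ while keeping its mirror at $0$ — this destroys palindromicity at a single coordinate pair, which is only $\Theta(1/n)$-far, so instead I would plant $\Theta(\eps n)$ such mismatched pairs at random locations to make $D_{no}$ supported on strings that are $\eps$-far from $\cP_{Pal}$. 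The key combinatorial point: a non-adaptive tester issuing $q = o(\sqrt{n})$ queries to a single sample, by a birthday-type argument, hits both coordinates of a mirror pair only with probability $o(1)$; since samples carry no label information beyond what is queried and $D_{yes},D_{no}$ agree on the marginal of any single un-paired coordinate, the query-restricted distributions $D_{yes}\mid_{\mathcal{J}}$ and $D_{no}\mid_{\mathcal{J}}$ have variation distance $o(1) < 1/3$. Hence by Yao's lemma (the huge object model version stated earlier) no non-adaptive $q$-query tester works, giving the $\Omega(\sqrt n)$ bound.

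The main obstacle I anticipate is making the $D_{no}$ construction genuinely $\eps$-far from $\cP_{Pal}$ while simultaneously keeping $D_{yes}\mid_{\mathcal{J}}$ and $D_{no}\mid_{\mathcal{J}}$ statistically indistinguishable for $q = o(\sqrt n)$ non-adaptive queries. If the planted mismatches are at fixed-looking positions, a non-adaptive tester could query exactly those positions; so the positions of the $\Theta(\eps n)$ mismatched pairs must be randomized, and one has to argue that a fixed query set of size $o(\sqrt n)$ is unlikely to contain any mismatched pair \emph{and} that even containing one endpoint of such a pair reveals nothing (the marginal on a single coordinate is the same under $D_{yes}$ and $D_{no}$ since a flipped endpoint is equally likely to be at any position). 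A clean way to handle this is to make $D_{yes}$ itself randomized over base strings (e.g.\ random palindromes in each block) so that a single queried coordinate is uniform in $\{0,1\}$ (resp.\ $\{2,3\}$) under both distributions, and a \emph{pair} of mirror coordinates is equal under $D_{yes}$ but independent uniform (hence unequal with probability $1/2$) under $D_{no}$ at a planted location. Then the whole argument reduces to: the probability that the non-adaptive query set, across all its samples, contains a mirror pair that happens to be a planted-mismatch location is $o(1)$ when the total query budget is $o(\sqrt n)$ — a union bound over pairs combined with the fact that a planted location is uniformly distributed. Getting the constants so that "$\eps$-far" holds for a specific fixed $\eps$ and the indistinguishability holds up to $q = \Omega(\sqrt n)$ is the delicate bookkeeping, but it is routine once the randomized construction is set up correctly.
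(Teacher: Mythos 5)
Your upper bound argument is essentially the paper's: binary-search for the $\{0,1\}$-to-$\{2,3\}$ transition point using $\Oh(\log n)$ queries, then spot-check $\Oh(1/\eps)$ random mirror pairs within each block. That part is fine.

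The lower bound argument, however, has a genuine flaw. You fix the boundary at $n/2$ (both halves have fixed length $n/2$), and you try to generate hardness by planting random mismatches at random mirror-pair locations and randomizing the base palindrome. The problem is that once the boundary is deterministic, the \emph{mirror-pair structure itself} is deterministic: the pair $(i,\,n/2+1-i)$ is always a mirror pair in the first block, regardless of which locations you randomize within. A non-adaptive tester can therefore pick $\Oh(1/\eps)$ uniformly random indices $i\le n/4$ and query each together with its known mirror $n/2+1-i$, plus the analogous pairs in the second block. Under your $D_{yes}$ these pairs always agree; under your $D_{no}$ a $\Theta(\eps)$ fraction of mirror pairs disagree, so one of the $\Oh(1/\eps)$ queried pairs is a planted mismatch with constant probability, and the test succeeds with only $\Oh(1/\eps)$ queries. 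No $\sqrt n$ barrier arises. You explicitly worry about randomizing the \emph{locations} of the mismatches, but that does not help: the attack above does not look for specific mismatch positions, it just queries fixed mirror pairs and wins whenever it hits any mismatch.

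What actually produces the $\Omega(\sqrt n)$ barrier (and what the cited proof in Alon, Krivelevich, Newman, Szegedy does) is to randomize the boundary index $k$. Sample $k$ uniformly, let the first block be a random length-$k$ palindrome over $\{0,1\}$ and the second a random length-$(n-k)$ palindrome over $\{2,3\}$ in $D_{yes}$, and in $D_{no}$ replace those palindromes by uniformly random strings over the same alphabets. A pair $(i,j)$ is a mirror pair precisely when $i+j=k+1$ or $i+j=n+k+1$, so for any fixed pair there is exactly one value of $k$ making it a mirror pair; hence a fixed non-adaptive query set of size $q$ contains a mirror pair with probability $\Oh(q^2/n)$. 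For $q=o(\sqrt n)$, with probability $1-o(1)$ no queried pair is a mirror pair, and conditioned on that the responses are distributed identically under $D_{yes}$ and $D_{no}$, yielding the lower bound via Yao. The essential ingredient you are missing is that the hidden boundary must be part of the randomness; with the boundary fixed, there is no $\sqrt n$ lower bound for non-adaptive palindrome testing at all.
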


\begin{proof}
The lower bound proof (using Yao's lemma), which we omit here, is nearly identical to the one from \cite{AlonKNS99} (see Theorem $2$ therein).


Let us assume that $\mathbf{V}$ is the string that we want to $\eps$-test for $\cP_{Pal}$. The adaptive algorithm to $\eps$-test $\cP_{Pal}$ uses binary search, and is described below:

\begin{enumerate}
    \item[(i)] Use binary search for an index of $\mathbf{V}$ that has ``value 1.5'' (which is not present in the input). This returns an index $0 \leq i \leq n$, such that (a) $\mathbf{V}_i \in \{0,1\}$ unless $i=0$, and (b) $\mathbf{V}_{i+1} \in \{2,3\}$ unless $i=n$.
     
    
    \item[(ii)]Repeat $\Oh(\frac{1}{\eps})$ times:
    
    \begin{enumerate}
        \item[(a)] Sample an index $j \in [n]$ uniformly at random.
        
        \item[(b)] If $j \leq i$, then query $\mathbf{V}_j$ and $\mathbf{V}_{i+1 - j}$. {\sc Reject} if they are not both equal to the same value in $\{0,1\}$.
        
        \item[(c)] Otherwise query $\mathbf{V}_j$ and $\mathbf{V}_{n+i +1 - j}$. {\sc Reject} if they are not both equal to the same value in $\{2,3\}$.
        
    \end{enumerate}
    
    \item[(iii)] If the input has not been rejected till now, {\sc Accept}.
    
\end{enumerate}


We first argue the completeness of the algorithm. Assume that $\mathbf{V}$ is a string such that $\mathbf{V} \in \cP_{Pal}$, and $i$ is the index returned by Step $(i)$ of the algorithm. As $\mathbf{V}=\mathbf{XY}$ for some palindrome $\mathbf{X}$ over $\{0,1\}$ and palindrome $\mathbf{Y}$ over $\{2,3\}$, the index $i$ will be equal to $\size{\mathbf{X}}$. This implies that the algorithm will {\sc Accept} $\mathbf{V}$ with probability $1$.


Now consider the case where $\mathbf{V}$ is $\eps$-far from $\cP_{Pal}$. 
We call an index $j$ \emph{violating} if it does not satisfy the condition appearing either in Step (ii)(b) or Step (ii)(c) above, where $i$ is the index returned in Step $(i)$. The number of violating indices is at least $\eps n$, because otherwise we can change the violating indices such that the modified input is a string of the form $\mathbf{XY}$ following the definition of $\cP_{Pal}$, where $\size{\mathbf{X}}=i$. Since the loop in Step (ii) runs for $\Oh(\frac{1}{\eps})$ times, we conclude that with probability at least $\frac{2}{3}$ at least one such violating index will be found. So, the algorithm will {\sc Reject} $\mathbf{V}$ with probability at least $\frac{2}{3}$.
\end{proof}






Now we are ready to formally state and prove the main result of this section.

\begin{pre}\label{lem:nonindexlb}
There exists a property of distributions over strings that can be $\eps$-tested adaptively using $\Oh(\log n)$ queries for any $\eps \in (0,1)$, but $\Omega(\sqrt{n})$ queries are necessary for any non-adaptive algorithm to $\eps$-test it for some $\eps \in (0,1)$.

\end{pre}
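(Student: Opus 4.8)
The plan is to take the property $1_{\cP_{Pal}}$ and apply the machinery already developed in this section. Recall that $1_{\cP_{Pal}}$ consists of all distributions supported on a single string, where that string lies in $\cP_{Pal}$; here $\cP_{Pal}$ is the string property of being a concatenation of a $\{0,1\}$-palindrome and a $\{2,3\}$-palindrome (encoded in binary using two bits per letter). So the proof is essentially a one-line combination of Lemma~\ref{lem:stringprel} and Lemma~\ref{lem:stringlb}.

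Concretely, first I would invoke Lemma~\ref{lem:stringlb}, which gives an adaptive tester for $\cP_{Pal}$ with $q_A(\eps) = \Oh(\log n)$ queries, and a non-adaptive lower bound $q_N(\eps) = \Omega(\sqrt n)$ for some fixed $\eps \in (0,1)$. Then I would apply Lemma~\ref{lem:stringprel}\eqref{eqn:adaptivestring}: the adaptive query complexity $Q_A(\eps)$ of $\eps$-testing $1_{\cP_{Pal}}$ satisfies $Q_A(\eps) \le \tOh(\tfrac1\eps) + \Oh(q_A(\tfrac\eps2)) = \tOh(\tfrac1\eps) + \Oh(\log n)$, which is $\Oh(\log n)$ for any constant $\eps$ (more precisely $\tOh(\log n)$ accounting for the $1/\eps$ dependence, but the point is it is polylogarithmic in $n$). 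For the lower bound, I would apply Lemma~\ref{lem:stringprel}\eqref{eqn:nonadaptivestring}: $Q_N(\eps) \ge q_N(\eps) = \Omega(\sqrt n)$ for the bad $\eps$ supplied by Lemma~\ref{lem:stringlb}. Hence $1_{\cP_{Pal}}$ is the desired distribution property: testable adaptively with $\Oh(\log n)$ queries for every $\eps$, but requiring $\Omega(\sqrt n)$ non-adaptive queries for some $\eps$.

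I should double-check that the statement of Proposition~\ref{lem:nonindexlb} matches: it asks for ``a property of distributions over strings'' that is adaptively $\eps$-testable with $\Oh(\log n)$ queries for all $\eps$, while non-adaptively needing $\Omega(\sqrt n)$ queries for some $\eps$. The construction $1_{\cP_{Pal}}$ delivers exactly this, so no additional work is needed beyond stitching the two lemmas together and noting that for fixed $\eps$ the $\tOh(1/\eps)$ overhead from the support-size test is a constant. One small point worth a sentence in the write-up: the adaptive upper bound of Lemma~\ref{lem:stringprel} gives $\Oh(q_A(\eps/2))$, and since $q_A$ is $\Oh(\log n)$ independently of the (constant) proximity parameter, halving $\eps$ does not change the asymptotics.

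There is essentially no obstacle here — all the content lives in the already-proven Lemma~\ref{lem:stringprel} (the transfer between string testing and the $1_{\cP}$ distribution property) and Lemma~\ref{lem:stringlb} (the adaptive-vs-non-adaptive gap for $\cP_{Pal}$, whose lower bound mirrors the palindrome lower bound of~\cite{AlonKNS99}). The only care needed is bookkeeping: making sure the ``for all $\eps$'' quantifier on the upper bound and the ``for some $\eps$'' quantifier on the lower bound are preserved through both reductions, and that the two-bits-per-letter binary encoding of $\cP_{Pal}$ does not affect the $\Oh(\log n)$ versus $\Omega(\sqrt n)$ separation (it only changes constants and the value of $n$ by a factor of two). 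If one wanted a cleaner exponential statement, one could also remark that $\Oh(\log n)$ vs.\ $\Omega(\sqrt n)$ is already an exponential gap in the query complexity, matching the upper bound of Lemma~\ref{pre:adaptiveexp} up to the usual polynomial slack in the exponent.
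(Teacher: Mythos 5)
Your proposal is correct and follows exactly the same route as the paper's own proof: take $1_{\cP_{Pal}}$, apply Lemma~\ref{lem:stringlb} to get $q_A(\eps)=\Oh(\log n)$ and $q_N(\eps)=\Omega(\sqrt n)$, and transfer both through Lemma~\ref{lem:stringprel}. The bookkeeping remarks about quantifiers and the $\tOh(1/\eps)$ overhead match what the paper says as well.
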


\begin{proof}
Consider the property $1_{\cP_{Pal}}$. From Lemma~\ref{lem:stringlb}, we know that $q_A(\frac{\eps}{2}) = \Oh(\log n)$, for any fixed $\eps \in (0,1)$. Using the upper bound of Lemma~\ref{lem:stringprel}, we conclude that $Q_A(\eps) = \Oh(\log n)$, for any fixed $\eps \in (0,1)$, ignoring the additive $\widetilde{\Oh}(\frac{1}{\eps})$ term.

On the other hand, according to Lemma~\ref{lem:stringlb}, $q_N(\eps) = \Omega(\sqrt{n})$ for some $\eps \in (0,1)$. Thus, following Lemma~\ref{lem:stringprel}, we conclude that $Q_N(\eps) = \Omega(\sqrt{n})$ holds for some $\eps \in (0,1)$. Together, Proposition~\ref{lem:nonindexlb} follows.
\end{proof}



Now we present a sketch of a proof of Proposition~\ref{theo:non-index}, which shows that for a property to be constantly testable, it is not sufficient that the property has constant VC-dimension, unless it is index-invariant as well.

\begin{pre}[{\bf Restatement of Proposition~\ref{theo:non-index}}]\label{pre:linlb}
There exists a non-index-invariant property $\cP$ such that any distribution $D \in \cP$ has VC-dimension $O(1)$ and the following holds. There exists a fixed $\eps>0$, such that distinguishing whether $D \in \cP$ or $D$ is $\eps$-far from $\cP$, requires $\Omega(n)$ queries, where the distributions in the property $\cP$ are defined over the $n$-dimensional Hamming cube $\{0,1\}^n$.
\end{pre}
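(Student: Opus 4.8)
The plan is to construct a non-index-invariant property $\cP$ by taking the property $1_{\cP_{Pal}}$ (or a close variant thereof) that was already analyzed in this section. Recall that $1_{\cP_{Pal}}$ consists of all distributions whose support is a single string lying in $\cP_{Pal}$, where $\cP_{Pal}$ is the set of strings that split as a palindrome over $\{0,1\}$ followed by a palindrome over $\{2,3\}$ (encoded in binary using two bits per letter). First I would note that every distribution $D \in 1_{\cP_{Pal}}$ has support size exactly $1$, so the VC-dimension of its support is $0$; in particular the VC-dimension bound $O(1)$ demanded by the statement holds trivially. This immediately settles the VC-dimension requirement, so the entire content of the proof is the query lower bound.

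For the lower bound, the idea is to transfer the non-adaptive lower bound for string testing of $\cP_{Pal}$ into the huge object model. By Lemma~\ref{lem:stringlb}, $\eps$-testing $\cP_{Pal}$ as a string property requires $\Omega(\sqrt{n})$ non-adaptive queries for some fixed $\eps$; and by the lower-bound direction of Lemma~\ref{lem:stringprel}(\ref{eqn:nonadaptivestring}), namely $q_N(\eps) \leq Q_N(\eps)$, this carries over to give $\Omega(\sqrt{n})$ non-adaptive queries for $1_{\cP_{Pal}}$. However, this only gives an $\Omega(\sqrt{n})$ bound, not the claimed $\Omega(n)$. To get the stronger $\Omega(n)$ bound, I would instead use the string property $\cP$ whose non-adaptive query complexity is genuinely linear — a maximally hard to test string property, as hinted in the footnote of the introduction (``string testing can be reduced to testing in the huge object model'') and in the sentence ``This technique, employed for a maximally hard to test string property, is also used for proving Proposition~\ref{theo:non-index}''. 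Concretely, there exist properties of strings in $\{0,1\}^n$ (e.g. random linear codes, or the classic construction from \cite{ben2003randomness}) for which every (even adaptive) $\eps$-tester requires $\Omega(n)$ queries for some fixed $\eps$. Take such a string property $\cP_{\mathrm{hard}}$ and define $\cP = 1_{\cP_{\mathrm{hard}}}$, the distribution property of all single-support distributions whose support string lies in $\cP_{\mathrm{hard}}$.

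The correctness argument then runs exactly as in the proof of Lemma~\ref{lem:stringprel}: given any (adaptive or non-adaptive) tester $\cA$ for $1_{\cP_{\mathrm{hard}}}$ in the huge object model using $q$ queries, I would simulate it to test the string property $\cP_{\mathrm{hard}}$ by feeding it the degenerate distribution $D'$ supported entirely on the unknown input string $\mathbf{V}$; whenever $\cA$ samples, we do nothing (every sample equals $\mathbf{V}$), and whenever $\cA$ queries index $i$ of some sample we query index $i$ of $\mathbf{V}$. Since $D' \in 1_{\cP_{\mathrm{hard}}}$ iff $\mathbf{V} \in \cP_{\mathrm{hard}}$, and $D'$ is $\eps$-far from $1_{\cP_{\mathrm{hard}}}$ iff $\mathbf{V}$ is $\eps$-far from $\cP_{\mathrm{hard}}$ (the EMD between two single-support distributions equals the normalized Hamming distance between their support strings, and farness is preserved under this correspondence), a $q$-query tester for $1_{\cP_{\mathrm{hard}}}$ yields a $q$-query tester for $\cP_{\mathrm{hard}}$, forcing $q = \Omega(n)$. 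Finally, $1_{\cP_{\mathrm{hard}}}$ is clearly not index-invariant, since permuting the coordinates of the single support string generally moves it out of $\cP_{\mathrm{hard}}$ (which is itself a non-index-invariant string property). I expect the main subtlety to be pinning down the cited maximally-hard string property with the right parameters — that it is hard even for adaptive testers and that its hardness survives the two-bits-per-letter binary encoding used throughout this section — rather than the reduction itself, which is essentially the one already carried out for Proposition~\ref{lem:nonindexlb}.
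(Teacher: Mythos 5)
Your proposal is correct and follows essentially the same route as the paper: take a string property $\cP$ over $\{0,1\}^n$ that is hard to test (requiring $\Omega(n)$ queries even adaptively — the paper cites Ben-Eliezer, Fischer, Levi and Rothblum and Ben-Sasson, Harsha and Raskhodnikova for such constructions), form $1_{\cP}$, observe its VC-dimension is $0$, and transfer the lower bound via the $q_A(\eps) \leq Q_A(\eps)$ direction of Lemma~\ref{lem:stringprel}. The only inessential difference is the specific citation for the hard string property; also, your worry about ``surviving the two-bits-per-letter encoding'' is moot here, since the referenced hard properties are already over the binary alphabet (the encoding was only needed for the four-letter palindrome property).
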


\begin{proof}
Note that the VC-dimension of $1_{\cP}$ is $0$, where $1_{\cP}$ is the property corresponding to $\cP$ as defined before. String properties which are hard to test, for which there is a fixed $\eps>0$ such that $\eps$-testing them requires $\Omega(n)$ queries, are known to exist. Examples are properties studied in the work of Ben{-}Eliezer, Fischer, Levi and Rothblum~\cite{DBLP:conf/innovations/Ben-EliezerFLR20}, and in the work of Ben{-}Sasson, Harsha and Raskhodnikova~\cite{DBLP:journals/siamcomp/Ben-SassonHR05}. Defining $1_{\cP}$ for such a property $\cP$ provides us the example proving Proposition~\ref{pre:linlb}. 
\end{proof}


\section{Quadratic gap between adaptive and non-adaptive testers for index-invariant properties}\label{sec:quadratic_gap1}


In this section, we first prove Theorem~\ref{theo:lb-main} in Subsection~\ref{sec:quadratic_gap2}, that is, there can be at most a quadratic gap between the query complexities of adaptive and non-adaptive algorithms for testing index-invariant properties. Then in Subsection~\ref{sec:quadratic_gap}, we prove Theorem~\ref{theo:lb-main_new_intro}, that is,  we demonstrate a quadratic separation between them, which is one of the main results of the paper and the main content of this section.

\subsection{Quadratic relation between adaptive and non-adaptive testers for index-invariant properties}\label{sec:quadratic_gap2}

\begin{theo}[{\bf Restatement of Theorem~\ref{theo:lb-main}}]\label{theo:ub-main}
Let $\cP$ be any index-invariant property that is $\eps$-testable by an adaptive algorithm using $s$ samples and $q$ queries. Then $\cP$ can be $\eps$-tested by a non-adaptive algorithm using $s$ samples and $sq \leq q^2$ queries, where $s$ and $q$ are integers.
\end{theo}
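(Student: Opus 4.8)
The plan is to follow the two-step strategy used for the analogous quadratic bound in the dense graph model~\cite{goldreich2003three}, adapted to the huge object model. Throughout one may assume, without loss of generality, that $q\le n$, that the adaptive tester $\cA$ never queries the same coordinate of the same sample twice, and that every one of its $s$ samples is queried at least once --- an unread sample carries no information and can simply be discarded, which also yields $s\le q$ and hence $sq\le q^{2}$. It is also convenient to assume that $\cA$ touches exactly $q$ distinct coordinates in total (pad with fresh dummy queries if needed).

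\textbf{Step 1: a semi-adaptive tester.} First I would construct an $\eps$-tester $\cA'$ that still uses $s$ samples and at most $sq$ queries but is adaptive only in its choice of which coordinate to read, never in which sample. The algorithm $\cA'$ runs $\cA$ verbatim, with the same coins, maintaining the invariant that every coordinate $\cA$ has ever asked about has been queried by $\cA'$ in all $s$ samples: whenever $\cA$ requests coordinate $j$ of sample $i$ and coordinate $j$ has not yet been read, $\cA'$ queries coordinate $j$ of $\mathbf{V}_1,\dots,\mathbf{V}_s$, and then answers $\cA$ with $(\mathbf{V}_i)_j$. Since $\cA$ touches at most $q$ coordinates, $\cA'$ makes at most $sq$ queries; it uses the same $s$ samples; and because its view always contains $\cA$'s view, it reproduces $\cA$'s output exactly, so it $\eps$-tests $\cP$. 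The key feature of $\cA'$ is that its behaviour depends on the sampled vectors only through the sequence of full ``columns'' $\bigl((\mathbf{V}_1)_{j},\dots,(\mathbf{V}_s)_{j}\bigr)$ it reads, for the adaptively chosen distinct coordinates $j=j_1,\dots,j_q$.

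\textbf{Step 2: removing adaptivity via a random relabelling.} The non-adaptive tester $\cA''$ draws a uniformly random permutation $\sigma:[n]\to[n]$, takes $s$ samples $\mathbf{V}_1,\dots,\mathbf{V}_s$ from $D$, non-adaptively queries coordinates $\sigma(1),\dots,\sigma(q)$ of every $\mathbf{V}_i$ (a fixed query set of size $q$ per sample once $\sigma$ is drawn, for a total of $sq$ queries), and then simulates the decision process of $\cA'$ running on $D_{\sigma}$, feeding $\cA'$ the queried columns in place of the ones it adaptively asks for. The heart of the argument is the claim that this is a faithful simulation, i.e.\ that the sequence of columns $\cA'$ \emph{would} see while adaptively choosing distinct coordinates $j_1,\dots,j_q$ in a sample matrix of $D_\sigma$ has exactly the same distribution as the columns at the fixed coordinates $1,\dots,q$ of that matrix. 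This is an exchangeability statement over the random $\sigma$: fixing the sampled matrix and $\cA'$'s coins, the coordinate $j_{k+1}$ is a deterministic function of the first $k$ revealed columns (and their coordinates) and of the coins, hence independent of which of the still-unrevealed columns happens to sit at position $j_{k+1}$; so the $(k+1)$-st column $\cA'$ reads is uniform among the columns not yet revealed, which is precisely the distribution obtained by reading positions $1,\dots,q$ in order. Consequently $\cA''$ and ``$\cA'$ on $D_\sigma$'' have identical output distributions.

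\textbf{Step 3: correctness.} Since $d_H$, and therefore $d_{EM}$, is invariant under permuting coordinates, the distance of a distribution from an index-invariant property $\cP$ is unchanged by replacing $D$ with $D_\sigma$; combined with index-invariance this gives $D\in\cP$ if and only if $D_\sigma\in\cP$, and ``$D$ is $\eps$-far from $\cP$'' if and only if ``$D_\sigma$ is $\eps$-far from $\cP$''. As $\cA'$ $\eps$-tests $\cP$ and $\cA''$ has the same acceptance probability as $\cA'$ run on $D_\sigma$, $\cA''$ is a non-adaptive $\eps$-tester using $s$ samples and $sq\le q^{2}$ queries. I expect the main obstacle to be making the exchangeability claim of Step 2 fully rigorous --- formalising that randomising the labels of the coordinates renders an adaptive choice of \emph{which} column to read statistically equivalent to reading a fixed prefix of columns; Steps 1 and 3 are essentially bookkeeping.
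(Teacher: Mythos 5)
Your proposal is correct and follows essentially the same two-step route as the paper's proof: first build a semi-adaptive tester $\cA'$ that reads every coordinate it ever touches from all $s$ samples (so $sq$ queries), and then derandomize the adaptive choice of coordinates by running $\cA'$ on $D_\sigma$ for a uniformly random $\sigma$, which by index-invariance preserves the distance to $\cP$ while making the sequence of accessed columns a uniformly random non-repeating sequence. Your ``exchangeability'' argument in Step~2 is precisely the content of the paper's Observation~\ref{obs:sigmauni} and the subsequent rewriting of $\cA''$, and your explicit remark that one may assume $s \le q$ (by discarding never-queried samples) is a small but worthwhile point that the paper leaves implicit in the inequality $sq\le q^2$.
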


\begin{proof}

The main idea of the proof is to start with an adaptive algorithm $\cA$ as stated above, and then argue for another semi-adaptive algorithm $\cA'$ with sample complexity $s$ but query complexity $qs$, such that the output distributions of $\cA$ and $\cA'$ are the same for any unknown distribution $D$.
Finally, we construct a non-adaptive algorithm $\cA''$ such that (i) the sample and query complexities of $\cA''$ are the same as that of $\cA'$, and (ii) the probability bounds of accepting and rejecting distributions depending on their distances to $\cP$ are preserved from $\cA'$ to $\cA''$. Now we proceed to formalize this argument.

Let $\cA$ be the adaptive algorithm that $\eps$-tests $\cP$ using $s$ samples $\{\mathbf{V}_1, \ldots, \mathbf{V}_s\}$ and $q$ queries. Now we show that a two phase algorithm $\cA'$ exists that takes $s$ samples $\{\mathbf{V}_{1},\ldots,\mathbf{V}_{s}\}$ and proceeds as follows:

\paragraph{Phase 1:} In this phase, $\cA'$ queries in an adaptive fashion. If $\cA$ queries the $j_k$-th index of $\mathbf{V}_{i_k}$ at its $k$-th step, for some $i_k \in [s]$ and $j_k \in [n]$, then we perform the following steps:
\begin{enumerate}
    \item[(i)] If $\cA'$ has queried the $j_k$-th index of all the samples before this step, then we reuse the queried value.
    
    \item[(ii)] Otherwise, we query the $j_k$-th index from all the samples $\{\mathbf{V}_1, \ldots, \mathbf{V}_s\}$.
\end{enumerate}

\paragraph{Phase 2:} Let $\cQ \subset [n]$ be the set of indices queried by $\cA'$ while running the $q$ querying steps of $\cA$.  If $\size{\cQ} < q$, we arbitrarily pick $t = q- \size{\cQ}$ distinct indices $\{j'_1, \ldots, j'_t\}$, disjoint from the set of indices $\cQ$. We query the set of indices ${j'_1, \ldots, j'_t}$ from the entire set of sampled vectors $\mathbf{V}_1, \ldots,\mathbf{V}_s$. 

The output ({\sc Accept} or {\sc Reject}) of $\cA'$ is finally set to that of $\cA$, and in particular depends only on the answers to the queries made in the first phase.





Now we have the following observation regarding the query complexity of $\cA'$, which will be used to argue the query complexity of the non-adaptive algorithm later.
\begin{obs}\label{obs:semiadaptquery}
$\cA'$ uses $s$ samples and performs exactly $qs$ queries. Moreover, for any distribution $D$, the output distribution of $\cA'$ is the same as that of $\cA$. 
    
    
\end{obs}

Let us assume that $\cA'$ proceeds in $q$ steps by querying indices $\ell_1,\ldots,\ell_q \in [n]$ in each of the $s$ samples $\mathbf{V}_1, \ldots, \mathbf{V}_s$ (when the unknown distribution is $D$). Equivalently, we can think that the algorithm proceeds in $q$ steps, where in Step $k$ ($k \in [q]$), we query the $\ell_k$-th index of $\{\mathbf{V}_1, \ldots, \mathbf{V}_s \}$, such that $\ell_k$ depends on $\ell_1,\ldots,\ell_{k-1}$, where $2 \leq k \leq q$. 

Let us now consider an uniformly random permutation $\sigma: [n] \rightarrow [n]$ (unknown to $\cA'$). Assume that the unknown distribution is $D_\sigma$ instead of $D$. As $\cP$ is index-invariant, we can assume that the algorithm $\cA'$ runs on $D_\sigma$ for $q$ steps as follows. In Step $k$, $\cA'$ queries the $\sigma(\ell_k)$-th index of each of the $s$ samples, for $k \in [q]$. Now we have the following observation regarding the distribution of the indices queried, which follows from $\sigma$ being uniformly random.


\begin{obs}\label{obs:sigmauni}
$\sigma(\ell_1)$ is uniformly distributed over $[n]$, and $\sigma(\ell_k)$ is uniformly distributed over $[n] \setminus \{\sigma(\ell_1),\ldots,\sigma(\ell_{k-1})\}$, where $2 \leq k \leq q$. Moreover, this holds even if we condition on the values $\ell_1, \ldots, \ell_k$ as well as $\sigma(\ell_1), \ldots,\sigma(\ell_{k-1})$.
\end{obs}


Now the algorithm $\cA''$ works as follows:
\begin{itemize}
    \item First take a uniformly random permutation $\sigma: [n] \rightarrow [n]$.
    
    \item Run $\cA'$ over $D_{\sigma}$ instead of $D$.\remove{, that is, instead of querying the index $\ell_i$, query the index $\sigma(\ell_i)$ from all the samples $\mathbf{v}_1, \ldots, \mathbf{v}_s$}
    
\end{itemize}
From the above description, it does not immediately follow that $\cA''$ is a non-adaptive algorithm. But from the description along with Observation~\ref{obs:sigmauni}, it follows that $\cA''$ is the same as the following algorithm:  

\begin{itemize}
    \item First take $s$ samples $\mathbf{V}_1,\ldots,\mathbf{V}_s$, and  also pick a uniformly random non-repeating sequence of $q$ indices $r_1, \ldots,r_q \in [n]$.
    
    \item Run $\cA'$ such that, for every $i \in [q]$, when $\cA'$ is about to query $\ell_i$, query $r_i$ from all samples instead. That is, we assume $r_i$ to be the value of $\sigma(\ell_i)$.
\end{itemize}
 
\color{black}

The sample complexity and  query complexity  of algorithm $\cA''$ are $s$ and $qs$, respectively, which follows from Observation~\ref{obs:semiadaptquery} and Observation~\ref{obs:sigmauni}. The correctness of the algorithm follows from Observation~\ref{obs:semiadaptquery} and Observation~\ref{obs:sigmauni} along with the fact that $\cP$ is index-invariant. This completes the proof of Theorem~\ref{theo:ub-main}.
\end{proof}

\color{black}

\subsection{Preliminaries towards proving a quadratic separation result}\label{sec:quadratic_gap}

In this subsection, we present some preliminary results required to prove that Theorem~\ref{theo:lb-main} is almost tight, that is, there exists an index-invariant property for which there is a nearly quadratic gap between the query complexities of adaptive and non-adaptive testers. The result is formally stated as follows.

\begin{theo}[{\bf Restatement of Theorem~\ref{theo:lb-main_new_intro}}]\label{theo:lb-main_new}
There exists an index-invariant property $\cP_{\mathrm{Gap}}$ that can be $\eps$-tested adaptively using $\tOh(n)$ queries for any $\eps \in (0,1)$, while there exists an $\eps \in (0,1)$ for which $\widetilde{\Omega}(n^{2})$ queries are necessary for any non-adaptive $\eps$-tester.
\end{theo}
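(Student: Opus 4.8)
The plan is to construct the index-invariant property $\cP_{\mathrm{Gap}}$ by taking the support-size-estimation hard instances of Valiant--Valiant (Theorem~\ref{theo:support_lb_overview}) over a domain of $2n$ elements, encoding each such element as a string in $\{0,1\}^N$ with $N=\Oh(n\log n)$ via an encoding in which the true payload can only be recovered by reading one of a family of $\Theta(\log n)$ special ``secret'' index-sets in its entirety (this is the encoding motivated by~\cite{DBLP:conf/innovations/Ben-EliezerFLR20}, adapted here). On top of the payload strings, each distribution in the family carries a fixed collection of $\Theta(\log n)$ ``ordering vectors'' that jointly encode a permutation $\sigma:[n]\to[n]$ — equivalently, the identity of all the secret index-sets. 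The property $\cP_{\mathrm{Gap}}$ is then defined as: all $\sigma$-permutations of distributions of the form (ordering vectors encoding $\sigma$) $\cup$ ($\sigma$-images of payload strings), where the payload part ranges over distributions whose support lies in a fixed well-separated set of $2n$ codewords and has size at most $n$. Index-invariance is immediate from this ``permute everything'' definition; farness of $D_{no}$ (the image of $D_{no}^{\mathrm{Supp}}$, which has $\geq(1+\eps)n$ heavy elements) from $\cP_{\mathrm{Gap}}$ follows because the payload codewords are pairwise $\Omega(1)$-far, so any distribution in $\cP_{\mathrm{Gap}}$ — having $\leq n$ payload atoms — must in EMD leave $\Omega(\eps)$ mass on atoms at constant Hamming distance, an argument (as in Lemma~\ref{cl:emd-vc-low} via Claim~\ref{obs:emd-far}) that only uses mutual Hamming distances and support size and is therefore invariant under reordering.

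Next I would establish the adaptive upper bound. An adaptive tester first takes $\mathrm{poly}(\log n)$ samples; with high probability these hit every ordering vector (there are only $\Theta(\log n)$ of them and they carry non-negligible mass), so by reading each of those $\Theta(\log n)$ sampled ordering vectors in full — $\tOh(n)$ queries total — the tester recovers $\sigma$, hence the locations of all secret index-sets. It then takes $\widetilde{\Oh}(n)$ further samples (enough for the Valiant--Valiant support-size test on a domain of size $2n$), and for each sampled payload vector queries only the $\mathrm{poly}(\log n)$ bits of the relevant secret set to decode which of the $2n$ codewords it is; it then runs the support-size distinguisher. Total: $\tOh(n)$ queries. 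Correctness on the ``yes'' side is clear; on the ``no'' side, $D_{no}$ has $(1+\eps)n$ heavy payload atoms so the support test rejects, and a distribution $\eps$-far from $\cP_{\mathrm{Gap}}$ either fails the ordering-vector consistency check or yields a decoded payload distribution that fails support estimation.

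For the non-adaptive lower bound I would set up Yao's lemma for non-adaptive testers in the huge object model (the version stated in the preliminaries) with $D_{yes}$ = random $\sigma$-permutation of the ordering-plus-$D_{yes}^{\mathrm{Supp}}$-payload distribution and $D_{no}$ the analogue with $D_{no}^{\mathrm{Supp}}$. Given a non-adaptive query sequence $\mathcal{J}$ of total size $q=o(n^2)$, the key structural claim is: unless the tester devotes $\widetilde{\Omega}(n)$ queries to an individual sampled vector, then with high probability over $\sigma$ none of that vector's secret index-sets is hit in its entirety, so the queried bits of a payload sample are distributed identically under $D_{yes}$ and $D_{no}$ (they look like fresh random bits, by the hiding property of the encoding). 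If $q=o(n^2)$, only $o(n)$ of the $\widetilde{\Omega}(n)$-sample-budget can be ``fully probed'', and since the support-size problem on $2n$ elements genuinely requires $\Theta(n/\log n)=\widetilde{\Omega}(n)$ informative samples (Theorem~\ref{theo:support_lb_overview}), $o(n)$ informative payload samples cannot distinguish $D_{yes}^{\mathrm{Supp}}$ from $D_{no}^{\mathrm{Supp}}$; one also checks the ordering vectors carry no distinguishing information since they are identical in the two families. A short coupling/total-variation bookkeeping then gives $d_{TV}(D_{yes}\mid_{\mathcal{J}},D_{no}\mid_{\mathcal{J}})<1/3$, so no non-adaptive $q$-query tester works, i.e.\ $\widetilde{\Omega}(n^2)$ queries are necessary for the appropriate constant $\eps$.

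The main obstacle I anticipate is the quantitative ``hiding'' analysis of the encoding: one must design the secret index-sets (their sizes $\mathrm{poly}(\log n)$, their number $\Theta(\log n)$, and how they depend on $\sigma$) so that (a) a random $\sigma$ makes the probability of a non-adaptive query set of size $o(n)$ on a single vector hitting any whole secret set be $o(1/\mathrm{poly})$, yet (b) legitimately knowing $\sigma$ lets one locate and read a secret set cheaply, and (c) reading fewer than a full secret set reveals literally nothing about the payload — and then to push these per-vector statements through a union bound over the $s$ samples and combine with the Valiant--Valiant lower bound without the error terms swamping the $\widetilde{\Omega}(n^2)$ bound. Handling the interaction between ``a few fully-probed vectors'' and ``many useless ones'' cleanly, and arguing that the EMD-farness of $D_{no}$ survives arbitrary re-permutation of the ordering vectors (which are themselves cheap to change in EMD), is where the care is needed; the distance-only, permutation-invariant argument sketched above for farness is the device that resolves the latter point.
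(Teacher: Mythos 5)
Your proposal captures the strategic blueprint that the paper actually follows: encode the Valiant--Valiant hard instance in the huge object model via a hiding encoding over a domain of size $N = \Theta(n \log n)$, include special ``ordering'' vectors that encode the index-permutation, and exploit the asymmetry that an adaptive tester can learn the permutation and then decode cheaply, while a non-adaptive tester cannot. The farness device you describe -- arguing via mutual Hamming distances and support sizes, which are permutation-invariant, so that changing the cheap ordering vectors cannot save a far distribution -- is exactly the observation that makes the paper's Lemma~\ref{lem:farness} go through (via the corresponding-matrix/Birkhoff argument of Claim~\ref{obs:emd-far}). So the overall route is the same.

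That said, there are two real gaps. First, the encoding you describe does not match the design that makes the hiding and decoding numbers work. You postulate ``a family of $\Theta(\log n)$ special secret index-sets'' per vector, any one of which reveals the payload. The paper's design instead uses $n$ disjoint chunks $C_1,\dots,C_n$, each of size $k = \Theta(\log n)$, where chunk $C_j$ encodes (through $\sen$ with a key $\gen(\mathbf{z})_j$) the single bit $\mathbf{x}_j$ of the payload $\mathbf{x} \in \{0,1\}^n$. The per-bit structure is what the hiding lemma (Lemma~\ref{lem:restrictionuniform}) exploits -- and also what lets the adaptive tester decode \emph{one bit} for $\Theta(\log n)$ queries after learning $\pi$, allowing it to call a huge-object support estimator directly rather than first fully identifying each sample's codeword. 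If the secret were concentrated in $\Theta(\log n)$ small sets, the decoding cost calculus, the choice of ``large'' threshold ($n/\mathrm{polylog}\,n$), and the per-chunk concentration bound (Observation~\ref{obs:queryintersectCj}) would all have to be re-derived and it is not clear they would still give $\widetilde{\Omega}(n^2)$.

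Second, and more substantively, the soundness of the adaptive tester is not a one-liner. You assert that any $D$ that is $\eps$-far from $\cP_{\mathrm{Gap}}$ ``either fails the ordering-vector consistency check or yields a decoded payload distribution that fails support estimation.'' This is precisely Lemma~\ref{lem:far-D-star}, and it is where the bulk of the paper's technical work lies. The difficulty is the converse direction: one must show that if the decoded distribution $\dhash$ \emph{were} $\eps/3$-close to support size $\leq n$, then $D$ itself would be close to $\cP_{\mathrm{Gap}}$, which contradicts the farness hypothesis. Establishing this requires constructing an explicit member of $\cP_{\mathrm{Gap}}$ close to $D$; the paper does this by first replacing invalidly-encoded atoms of $D$ (bounding the resulting EMD by a sampling check in Step (ii) of the algorithm), then lifting the flow $f'$ realizing $d_{EM}(D_2, \dhash)$ to a flow on $\{0,1\}^N$ using the contractivity property $d_H(\fen(\mathbf{z},\mathbf{x}), \fen(\mathbf{z},\mathbf{x'})) \leq d_H(\mathbf{x},\mathbf{x'})$ of Observation~\ref{obs:FEproperties}(iii), and finally normalizing the ordering-vector mass (Lemmas~\ref{cl:close-P-gap} and~\ref{cl:emd-far}). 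Your restriction of the property to encodings of $2n$ fixed codewords (rather than arbitrary $n$-vector supports, as in the paper's $\cP^0_{\mathrm{Gap}}$ Item (iv)) would simplify the farness of $D_{no}$, but it creates a matching problem on the soundness side: a distribution $\eps$-far from your restricted property could have payload vectors that are valid encodings of non-codewords (hence would pass your support-size check), and you would need a separate codeword-verification step, with its own query budget and error analysis, which the sketch omits. The paper's choice -- a more permissive property, with the lifting argument bearing the burden -- is what avoids this.
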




In what follows throughout this section, we assume that the integer $n$ is of the form $n=2^l$ for some integer $l$, and that $k=\Oh(l)$ is another integer.  We denote vectors in $\{0, 1\}^N$ by capital bold letters (for example $\mathbf{X} \in \{0, 1\}^N$) and vectors in $\{0, 1\}^n$ by small bold letters (for example $\mathbf{x} \in \{0, 1\}^n$).
For two vectors $\mathbf{X}, \mathbf{Y} \in \{0,1\}^N$, we will use $\delta_H(\mathbf{X}, \mathbf{Y}) = N \cdot d_H(\mathbf{X}, \mathbf{Y})$ to denote the absolute Hamming distance between $\mathbf{X}$ and $\mathbf{Y}$.

To construct the property $\cP_{\mathrm{Gap}}$ (as stated in Theorem~\ref{theo:lb-main_new}), we define two encodings $\sen:\{0,1\}^\ell \rightarrow \{0,1\}^k$ and $\gen:[n]^m \rightarrow [n]^n$~\footnote{{\sc SE} stands for Secret Encoding, and {\sc GE} stands for General Encoding.}. The encodings $\gen$ and  $\sen$ follow from the construction of a Probabilistically Checkable Unveiling of a Shared Secret (PCUSS) in \cite{DBLP:conf/innovations/Ben-EliezerFLR20}.  We can also construct such a function $\gen$ using the Reed-Solomon code, where we will assume that $n$ is a prime power and use polynomials of degree $m-1$ over the field $\mbox{{\sc GL}}(n)$ for $m= \Theta(n)$~\footnote{$\mbox{{\sc GL}}(n)$ stands for the finite field with $n$ elements.}.


\paragraph*{Function {\sc SE}:}

We will use a function $\sen$ of the form $\sen:\{0,1\}^l \times \{0,1\} \rightarrow \{0,1\}^k$, where $l$ and $k$ are the integers defined above. In fact, $\sen$ takes an integer $i \in [n]$ in its Boolean encoding as an $l$ bit Boolean string and a ``secret'' bit $a \in \{0,1\}$, and will output a Boolean string of length $k$. $\sen$ will have the following properties for some constant $\zeta \in (0,1/2)$. 


\begin{enumerate}
    \item[(i)] Let $i, i' \in [n]$ be two integers encoded as binary strings of length $l$~\footnote{Binary strings of length $\log n$ can actually encode only integers from $\{0, \ldots, n-1\}$, so we use the encoding of $0$ for the value $n$.}, and $a, a' \in \{0,1\}$. If $(i,a) \neq (i',a')$, then $\delta_H(\sen(i,a), \sen(i',a')) \geq \zeta \cdot k$.
    

     \item[(ii)] Let $a \in \{0,1\}$ be a fixed bit, and suppose that $i$ is an integer chosen uniformly at random from $[n]$. Then for any set of indices $I \subset [k]$ such that $\size{I} \leq \zeta \cdot k$, the restriction $\sen(i,a) \mid_I$ is uniformly distributed over $\{0,1\}^{\size{I}}$. 
     
    
\end{enumerate}

\paragraph*{Function \mbox{{\sc GE}:}}

For our construction, we will use another function $\gen$ of the form $\gen:[n]^m \rightarrow [n]^n$, where $n,m \in \mathbb{N}$ with the following properties for the same constant $\zeta \in (0,1/2)$ as above.

\begin{enumerate}

     \item[(i)] Let $\mathbf{z}, \mathbf{z'} \in [n]^m$ be two strings such that $\mathbf{z} \neq \mathbf{z'}$. For any two such strings $\mathbf{z}$ and $\mathbf{z'}$, $\delta_H(\gen(\mathbf{z}), \gen(\mathbf{z'})) = |\{i: \gen(\mathbf{z})_i \neq \gen(\mathbf{z'})_i\}| \geq \zeta \cdot n$.
     

     \item[(ii)] Consider a string $\mathbf{z} \in [n]^m$ chosen uniformly at random. For any set of indices $I \subset [n]$ such that $\size{I} \leq \zeta \cdot n$, $\gen(\mathbf{z}) \mid_I$ is uniformly distributed over $[n]^{\size{I}}$.
     
\end{enumerate}

\remove{\begin{rem}
We can construct such a function $\gen$ using the Reed-Solomon code, where we will use polynomials of degree $m-1$ over the field $GL(n)$ for $m= \Theta(n)$.~\footnote{\textcolor{blue}{$GL$ stands for General Linear group.}} A more computationally efficient way to construct $\gen$ (and the earlier function $\sen$) follows from the construction of a Probabilistically Checkable Unveiling of a Shared Secret (PCUSS) in \cite{DBLP:conf/innovations/Ben-EliezerFLR20}.  
\end{rem}}

From now on, we will use the following notation in this subsection: Let $n \in \N$ be such that $n=2^l$ for some integer $l$, $k=\Oh(l)$ and $\zeta \in (0,1/2)$ as above, $b = \lfloor \log(\lceil \log kn \rceil) \rfloor +1$, $N = 1+b+kn$ and 
$\alpha = 1/\log n$. Note that in particular $N=\Oh(n \log n)$. For a vector $\mathbf{X} \in \{0,1\}^N$ and a permutation $\pi:[N] \rightarrow [N]$, $\mathbf{X}_{\pi}$ denotes the vector obtained from $\mathbf{X}$ by permuting the indices of $\mathbf{X}$ with $\pi$, that is, $\mathbf{X}_{\pi}=(\mathbf{X}_{\pi(1)}, \ldots, \mathbf{X}_{\pi(N)})$. 


Let $B$ be the sequence of integers $B = \{2, \ldots, b+1\}$, and for every $j \in [n]$, let $C_j$ denote the sequence of integers
$C_j = \{b+2+k(j-1), \ldots, b+1+kj\}$. For a sequence of integers $A$ and a vector $\mathbf{X}$, we denote by $\mathbf{X} \mid_A$ the vector obtained by projecting $\mathbf{X}$ onto the set of indices of $A$ preserving the sequence order. For a sequence $A \subseteq [N]$ and a permutation $\pi:[N] \rightarrow [N]$, we denote by $\pi(A)$ the sequence obtained after permuting every element of $A$ with respect to the permutation $\pi$, that is, if $A=(a_1, \ldots, a_l)$, then $\pi(A)=(\pi(a_1), \ldots, \pi(a_l))$. In particular, we have $\mathbf{X}_{\pi} \mid_A = \mathbf{X} \mid_{\pi(A)}$. By abuse of notation and for simplicity, for a set of integers $A$ and a vector $\mathbf{X}$, we denote by $\mathbf{X} \mid_A$ the vector obtained by projecting $\mathbf{X}$ onto the set of indices of $A$, whenever the ordering in which we consider the indices in $A$ will be clear from the context~\footnote{A common scenario is when the indexes of $A$ are considered as a monotone increasing sequence.}.

In the following, we use string notation. For example, $\mathbf{1}^k\mathbf{0}^k$ denotes the vector in $\{0,1\}^{2k}$ whose first $k$ coordinates are $1$ and whose last $k$ coordinates are $0$. Now we formally define the notion of encoding of a vector which will be crucially used to define $\cP_{\mathrm{Gap}}$.



\begin{defi}[{\bf Encoding of a vector}]\label{defi:encoding}
Let $n,k,b \in \mathbb{N}, N=1+b+kn$, and $\mathbf{x} = (\mathbf{x}_1, \ldots, \mathbf{x}_n) \in \{0,1\}^{n}$ and $\mathbf{Y}\in \{0,1\}^N$ be two vectors. $\mathbf{Y}$ is said to be an \emph{encoding} of $\mathbf{x}$ with respect to the functions $\sen:\{0,1\}^l \times \{0,1\} \rightarrow \{0,1\}^k$ and $\gen:[n]^m \rightarrow [n]^n$ if the following hold:
\begin{enumerate}
    \item[(i)] The first index of $\mathbf{Y}$ is $0$.
    
    \item[(ii)] $\mathbf{Y} \mid_B$ is the all-$1$ vector.
    \item[(iii)] $\mathbf{Y} \mid_{[N] \setminus \{1\} \cup B}$ is of the form $\sen(\gen(\mathbf{z})_1,\mathbf{x}_1) \ldots \sen(\gen(\mathbf{z})_n,\mathbf{x}_n)$ for some string $\mathbf{z} \in [n]^m$. In other words, 
    $\mathbf{Y} \mid_{C_j} = \sen(\gen(\mathbf{z})_j,\mathbf{x}_j)$ for every $j \in [n]$.
    
    
    
    
\end{enumerate}

\end{defi}

\color{black}

For ease of presentation, we will denote this encoding by $\fen$, that is, $\fen: [n]^m \times \{0,1\}^n \rightarrow \{0,1\}^N$ is the function~\footnote{{\sc FE} stands for Final Encoding.} such that $\fen(\mathbf{z},\mathbf{x})= \mathbf{0}(\mathbf{1}^b)\sen(\gen(\mathbf{z})_1, \mathbf{x}_1) \ldots \sen(\gen(\mathbf{z})_n,\mathbf{x}_n)$, for $\mathbf{z} \in [n]^m$ and $\mathbf{x} = (\mathbf{x}_1, \ldots, \mathbf{x}_n) \in \{0,1\}^n$. We also say that $\mathbf{X} \in \{0,1\}^N$ is a \emph{valid encoding} of some $\mathbf{x} \in \{0,1\}^n$, if there exists some $\mathbf{z} \in [n]^m$ for which $\mathbf{X} = \fen(\mathbf{z},\mathbf{x})$. {The image of {{\sc FE}} will be called the set of all valid encodings.}

Now let us infer two properties of the function $\fen$, which will be crucial to our proofs, as stated in the following two claims. These properties of {\sc FE} are analogous to the properties of {\sc SE} and {\sc GE}. As {\sc FE} is formed by combining {\sc SE} and {\sc GE}, the proofs of these observations use their respective properties.


The following observation, particularly Items (i) and (ii), will allow us to prove that certain distributions are indeed far from the property $\cP_{\mathrm{Gap}}$ (to be defined later) in the EMD metric. Item (iii) will be useful to prove the soundness of our adaptive algorithm in Subsection~\ref{sec:quadratic_adaptive_ub}, and in particular in Lemma~\ref{cl:emd-far}.

\color{black}

\begin{obs}[{\bf Distance properties of $\fen$}]\label{obs:FEproperties}
Let $\fen: [n]^m \times \{0,1\}^n \rightarrow \{0,1\}^N$ be the function from Definition~\ref{defi:encoding}. Then $\fen$ has the following properties:

\begin{enumerate}

\item[(i)] Let $\mathbf{x}, \mathbf{x'} \in \{0,1\}^n$ be any two strings and $\mathbf{z}, \mathbf{z'} \in [n]^m$ be two vectors such that $\mathbf{z} \neq \mathbf{z'}$. Then $\delta_H(\fen(\mathbf{z},\mathbf{x}), \fen(\mathbf{z'},\mathbf{x'})) \geq \zeta^2 \cdot N/2$ holds.

\item[(ii)] Let $\mathbf{z},\mathbf{z'} \in [n]^m$ be any two strings, and $\mathbf{x}, \mathbf{x'} \in \{0,1\}^n$ be two other strings such that $\mathbf{x} \neq \mathbf{x'}$. Then $\delta_H(\fen(\mathbf{z},\mathbf{x}), \fen(\mathbf{z'},\mathbf{x'})) \geq \zeta k \cdot \delta_H(\mathbf{x}, \mathbf{x'})$. 

\item[(iii)] 
Let $\mathbf{x}, \mathbf{x'} \in \{0,1\}^n$ be two strings and $\mathbf{z} \in [n]^m$ be a vector. Then $\delta_H(\fen(\mathbf{z},\mathbf{x}), \fen(\mathbf{z},\mathbf{x'})) \leq k \cdot \delta_H(\mathbf{x},\mathbf{x'})$, and in particular $d_H(\fen(\mathbf{z},\mathbf{x}), \fen(\mathbf{z},\mathbf{x'})) \leq d_H(\mathbf{x},\mathbf{x'})$ holds.

\end{enumerate}
\end{obs}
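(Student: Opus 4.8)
The plan is to exploit the block structure of $\fen$. For any $\mathbf{z} \in [n]^m$ and $\mathbf{x} = (\mathbf{x}_1,\ldots,\mathbf{x}_n) \in \{0,1\}^n$, the word $\fen(\mathbf{z},\mathbf{x})$ consists of a fixed prefix of length $1+b$ (the bit $0$ on index $1$, followed by the all-$1$ vector on the sequence $B$) together with $n$ consecutive length-$k$ blocks, the $j$-th one being $\fen(\mathbf{z},\mathbf{x})\mid_{C_j} = \sen(\gen(\mathbf{z})_j,\mathbf{x}_j)$. I would first record that $kn = N - 1 - b \geq N/2$ for $n$ large enough (as already assumed throughout the section), so that $kn$ and $N$ are interchangeable up to the constant $1/2$. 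Each of the three items then reduces, block by block, to the distance properties of $\sen$ and $\gen$ stated above.

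For items (i) and (ii) the argument is parallel. In case (i), since $\mathbf{z} \neq \mathbf{z'}$, property~(i) of $\gen$ gives a set $J \subseteq [n]$ with $\size{J} \geq \zeta n$ such that $\gen(\mathbf{z})_j \neq \gen(\mathbf{z'})_j$ for every $j \in J$; then for each $j \in J$ the pairs $(\gen(\mathbf{z})_j,\mathbf{x}_j)$ and $(\gen(\mathbf{z'})_j,\mathbf{x'}_j)$ are distinct already in their \emph{first} coordinate, so property~(i) of $\sen$ gives $\delta_H(\fen(\mathbf{z},\mathbf{x})\mid_{C_j},\fen(\mathbf{z'},\mathbf{x'})\mid_{C_j}) \geq \zeta k$. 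As the blocks $\{C_j : j \in J\}$ are pairwise disjoint, summing yields $\delta_H(\fen(\mathbf{z},\mathbf{x}),\fen(\mathbf{z'},\mathbf{x'})) \geq \zeta n \cdot \zeta k = \zeta^2 kn \geq \zeta^2 N/2$. In case (ii), take instead $J = \{j : \mathbf{x}_j \neq \mathbf{x'}_j\}$, so $\size{J} = \delta_H(\mathbf{x},\mathbf{x'})$; now the two pairs fed to $\sen$ on block $C_j$ differ in their \emph{second} coordinate, so again property~(i) of $\sen$ gives at least $\zeta k$ disagreements per such block, and summing over the disjoint blocks indexed by $J$ gives $\delta_H(\fen(\mathbf{z},\mathbf{x}),\fen(\mathbf{z'},\mathbf{x'})) \geq \zeta k \cdot \delta_H(\mathbf{x},\mathbf{x'})$.

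For item (iii) the $\gen$-part is unchanged, so $\fen(\mathbf{z},\mathbf{x})$ and $\fen(\mathbf{z},\mathbf{x'})$ agree on the prefix $\{1\}\cup B$ and on every block $C_j$ with $\mathbf{x}_j = \mathbf{x'}_j$ (since then $\sen(\gen(\mathbf{z})_j,\mathbf{x}_j) = \sen(\gen(\mathbf{z})_j,\mathbf{x'}_j)$), while each block $C_j$ with $\mathbf{x}_j \neq \mathbf{x'}_j$ contributes at most $\size{C_j}=k$ disagreements. Hence $\delta_H(\fen(\mathbf{z},\mathbf{x}),\fen(\mathbf{z},\mathbf{x'})) \leq k \cdot \delta_H(\mathbf{x},\mathbf{x'})$. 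Dividing by $N$ and using $\delta_H(\mathbf{x},\mathbf{x'}) = n \cdot d_H(\mathbf{x},\mathbf{x'})$ together with $kn \leq N$ gives $d_H(\fen(\mathbf{z},\mathbf{x}),\fen(\mathbf{z},\mathbf{x'})) \leq (kn/N)\,d_H(\mathbf{x},\mathbf{x'}) \leq d_H(\mathbf{x},\mathbf{x'})$.

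I do not expect a substantial obstacle here: the distance guarantees of $\sen$ and $\gen$ do all the work. The only point that needs to be stated carefully in the write-up is why property~(i) of $\sen$ applies in items (i) and (ii) with \emph{no} assumption relating $\mathbf{x}_j$ to $\mathbf{x'}_j$ (resp. $\gen(\mathbf{z})_j$ to $\gen(\mathbf{z'})_j$) — namely, the two arguments of $\sen$ are already distinct for a different reason in each case — and the negligible additive slack $1+b$ between $kn$ and $N$, which is absorbed by the $1/2$ in item (i) and is harmless in item (iii).
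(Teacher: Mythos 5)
Your proposal is correct and follows essentially the same block-by-block decomposition as the paper's proof, reducing each item to properties (i) and (ii) of $\sen$ and $\gen$ on the disjoint chunks $C_j$. If anything you are slightly more careful than the paper on item (ii): you explicitly note that when $\mathbf{x}_j \neq \mathbf{x'}_j$ the two pairs $(\gen(\mathbf{z})_j,\mathbf{x}_j)$ and $(\gen(\mathbf{z'})_j,\mathbf{x'}_j)$ already differ in the second coordinate so property (i) of $\sen$ applies regardless of whether $\mathbf{z}=\mathbf{z'}$, whereas the paper's write-up tacitly writes $\gen(\mathbf{z})_j$ in both arguments.
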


\begin{proof}

We prove each item separately below.
    
\begin{enumerate}

\item[(i)]


Following the properties of $\gen$ (Property (i)),  for $\mathbf{z} \neq \mathbf{z'}$, we know that $\delta_H(\gen(\mathbf{z}), \gen(\mathbf{z'})) \geq \zeta \cdot n\geq \zeta N/2k$. That is, the number of indices $j \in [n]$ such that $\gen(\mathbf{z})_j \neq \gen(\mathbf{z'})_j$, is at least $\zeta N/2k$. For every index $j \in [n]$ such that $\gen(\mathbf{z})_j \neq \gen(\mathbf{z'})_j$, we can say that $\delta_H(\sen(\gen(\mathbf{z})_j, \mathbf{x}_j), \sen(\gen(\mathbf{z'})_j, \mathbf{x'}_{j})) \geq \zeta \cdot k$. This is due to  Property (i) of $\sen$. Hence,
\begin{eqnarray*}
\delta_H(\fen{\left(\mathbf{z},\mathbf{x}\right)},\fen{\left(\mathbf{z}',\mathbf{x'}\right)})
&\geq& \sum\limits_{j \in [n]: \mathbf{z}_j \neq \mathbf{z'}_j}\delta_H(\sen(\gen(\mathbf{z})_j, \mathbf{x}_j), \sen(\gen(\mathbf{z})_j, \mathbf{x'}_j))\\
&\geq& \zeta N/2k \cdot \zeta k=\zeta^2 \cdot N/2.
\end{eqnarray*}

\color{black}
        

\item[(ii)]

Consider two strings $\mathbf{x}, \mathbf{x'} \in \{0,1\}^n$ such that $\mathbf{x} \neq \mathbf{x'}$. Using Property (i) of $\sen$, we know that $\delta_H(\sen(\gen(\mathbf{z})_j, \mathbf{x}_j), \sen(\gen(\mathbf{z})_j, \mathbf{x'}_j)) \geq \zeta \cdot k$ for every $j$ for which $\mathbf{x}_j \neq \mathbf{x'}_j$. Note that the number of such indices $j$ is $\delta_H(\mathbf{x}, \mathbf{x'})$. Summing over them, we have the result.

\item[(iii)] 
Consider any two strings $\mathbf{x}, \mathbf{x'} \in \{0,1\}^n$. Observe that $$ \delta_H(\fen{\left(\mathbf{z},\mathbf{x}\right)},\fen{\left(\mathbf{z},\mathbf{x'}\right)})=\sum\limits_{j \in [n]}\delta_H(\sen(\gen(\mathbf{z})_j, \mathbf{x}_j), \sen(\gen(\mathbf{z})_j, \mathbf{x'}_j)).$$

Note that $\delta_H(\sen(\gen(\mathbf{z})_j, \mathbf{x}_j), \sen(\gen(\mathbf{z})_j, \mathbf{x'}_j))$ is at most $k$ for every $j \in [n]$. Moreover, $\delta_H(\sen(\gen(\mathbf{z})_j, \mathbf{x}_j), \sen(\gen(\mathbf{z})_j, \mathbf{x'}_j))= 0$ for every $j \in [n]$ with $\mathbf{x}_j = \mathbf{x}_j'$. Since the number of indices $j$ such that $\mathbf{x}_j \neq \mathbf{x}_j'$ is $\delta_H(\mathbf{x}, \mathbf{x'})$,  we conclude the following:
$$\delta_H(\fen{\left(\mathbf{z},\mathbf{x}\right)},\fen{\left(\mathbf{z},\mathbf{x'}\right)})\leq k \cdot \delta_H(\mathbf{x}, \mathbf{x'}).$$


Note that this immediately implies 
$d_H(\fen{\left(\mathbf{z},\mathbf{x}\right)},\fen{\left(\mathbf{z},\mathbf{x'}\right)}) \leq d_H(\mathbf{x}, \mathbf{x'}). ~~~~~~~~~~~~~~~~~~~~~~~~~~\qedhere $ 
\end{enumerate}
\end{proof}

\color{black}
        
        



The following lemma will provide us a way to construct distributions that cannot be easily distinguished using non-adaptive queries (following a uniformly random index-permutation which we will deploy).

\color{black}

\begin{lem}[{\bf Projection property of $\fen$}]\label{lem:restrictionuniform}
Consider a fixed vector $\mathbf{x} \in \{0,1\}^n$, and let $\mathbf{z} \in [n]^m$ be a string chosen uniformly at random. For any set of indices $Q \subseteq [N]$ such that $\size{Q} \leq \zeta \cdot N/2k$ and $|Q \cap C_j| \leq \zeta \cdot k$ for every $j \in [n]$, the restriction of $\fen(\mathbf{z},\mathbf{x}) \mid_{Q \setminus [b+1]}$ is uniformly distributed over $\{0,1\}^{|Q \setminus [b+1]|}$~\footnote{Recall that the
restriction $\fen(\mathbf{z},\mathbf{x}) \mid_{[b+1]}$ is always the vector $\mathbf{0}\mathbf{1}^b$.}.
\end{lem}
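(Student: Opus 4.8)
The plan is to reduce the statement about $\fen$ to the analogous uniformity properties of $\sen$ and $\gen$. Write $Q' = Q \setminus [b+1]$, so that $Q'$ consists only of indices lying in the blocks $C_1, \ldots, C_n$. For each $j \in [n]$, let $Q'_j = Q' \cap C_j$, and let $J \subseteq [n]$ be the set of block-indices $j$ for which $Q'_j \neq \emptyset$. Since $|Q'| \le |Q| \le \zeta N / 2k$, we have $|J| \le \zeta N / 2k = \zeta n / 2 \cdot (N/kn) $; more simply, $|J| \le |Q'| \le \zeta \cdot n$ (using $N = 1 + b + kn \ge kn$), so $|J| \le \zeta n$, which is exactly the regime in which Property (ii) of $\gen$ applies. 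Also, by hypothesis $|Q'_j| \le \zeta k$ for every $j$, which is the regime for Property (ii) of $\sen$.

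The key step is to condition on the values of $\gen(\mathbf{z})$ at the coordinates $J$. Since $|J| \le \zeta n$, Property (ii) of $\gen$ says that $\gen(\mathbf{z}) \mid_J$ is uniformly distributed over $[n]^{|J|}$; in particular, for \emph{every} fixed tuple $(g_j)_{j \in J} \in [n]^{|J|}$, the event $\{\gen(\mathbf{z})_j = g_j \text{ for all } j \in J\}$ has positive probability. Now I claim that, conditioned on any such event, the restriction $\fen(\mathbf{z},\mathbf{x}) \mid_{Q'}$ is already uniform over $\{0,1\}^{|Q'|}$. Indeed, on this conditional event, for each $j \in J$ the block contributes $\fen(\mathbf{z},\mathbf{x}) \mid_{Q'_j} = \sen(g_j, \mathbf{x}_j) \mid_{Q'_j}$ with $g_j$ now a \emph{fixed} value in $[n]$ (equivalently, its $l$-bit encoding) and $\mathbf{x}_j$ a fixed bit. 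But $g_j$ here plays the role of a uniformly random integer in Property (ii) of $\sen$ — wait, that is not quite right, since we have conditioned $g_j$ to a fixed value.

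So the cleaner route is to \emph{not} condition, and instead argue directly: fix the target string $\mathbf{w} \in \{0,1\}^{|Q'|}$, decomposed as $\mathbf{w} = (\mathbf{w}_j)_{j \in J}$ with $\mathbf{w}_j \in \{0,1\}^{|Q'_j|}$. Then
\[
\Pr_{\mathbf{z}}\bigl[\fen(\mathbf{z},\mathbf{x})\mid_{Q'} = \mathbf{w}\bigr]
= \sum_{(g_j)_{j\in J} \in [n]^{|J|}} \Pr_{\mathbf{z}}\bigl[\gen(\mathbf{z})\mid_J = (g_j)_j\bigr]\cdot \mathbbm{1}\bigl[\forall j\in J:\ \sen(g_j,\mathbf{x}_j)\mid_{Q'_j} = \mathbf{w}_j\bigr],
\]
where I used that, once $\gen(\mathbf{z})\mid_J$ is pinned down, the value $\fen(\mathbf{z},\mathbf{x})\mid_{Q'}$ is determined. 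By Property (ii) of $\gen$, each $\Pr_{\mathbf{z}}[\gen(\mathbf{z})\mid_J = (g_j)_j] = n^{-|J|}$, independent of the tuple. Hence the right-hand side equals $n^{-|J|}$ times the number of tuples $(g_j)_{j\in J}$ satisfying $\sen(g_j,\mathbf{x}_j)\mid_{Q'_j} = \mathbf{w}_j$ for all $j$, which factorizes as $\prod_{j\in J} N_j(\mathbf{w}_j)$ where $N_j(\mathbf{w}_j) = |\{g \in [n] : \sen(g,\mathbf{x}_j)\mid_{Q'_j} = \mathbf{w}_j\}|$. Now Property (ii) of $\sen$ (with the secret bit $\mathbf{x}_j$ fixed, $|Q'_j| \le \zeta k$, and $g$ uniform over $[n]$) says precisely that $N_j(\mathbf{w}_j)/n = 2^{-|Q'_j|}$, i.e. $N_j(\mathbf{w}_j) = n \cdot 2^{-|Q'_j|}$, independent of $\mathbf{w}_j$. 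Therefore
\[
\Pr_{\mathbf{z}}\bigl[\fen(\mathbf{z},\mathbf{x})\mid_{Q'} = \mathbf{w}\bigr]
= n^{-|J|}\prod_{j\in J} n\cdot 2^{-|Q'_j|}
= 2^{-\sum_{j\in J}|Q'_j|} = 2^{-|Q'|},
\]
which is independent of $\mathbf{w}$, establishing uniformity. The main (and essentially only) obstacle is the bookkeeping of indices — checking that the hypothesis $|Q| \le \zeta N/2k$ genuinely forces $|J| \le \zeta n$ so that Property (ii) of $\gen$ is applicable, and that the per-block budget $|Q \cap C_j| \le \zeta k$ lines up with the hypothesis of Property (ii) of $\sen$; everything else is the factorization above. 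I would also remark that the claim is vacuous on the coordinates $[b+1]$, which is why they are excised from $Q$ before the argument.
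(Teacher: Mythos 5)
Your proof is correct and follows essentially the same route as the paper's: partition $Q'$ into the per-block pieces $Q'_j = Q'\cap C_j$, invoke Property~(ii) of $\gen$ to get that $\gen(\mathbf{z})\mid_J$ is uniform (hence independent across $j$) over $[n]^{|J|}$, and then apply Property~(ii) of $\sen$ per block and combine. Your explicit sum-over-tuples calculation is merely a more careful writing-out of the same independence argument that the paper invokes directly; note only that your parenthetical ``using $N=1+b+kn\ge kn$'' should read ``using $1+b\le kn$, so $N\le 2kn$'' to justify $\zeta N/2k\le\zeta n$.
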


\begin{proof}
For the set of indices $Q$, consider the set $J =\{j: Q \cap C_j \neq \emptyset\}$. From the statement of the lemma, we know that $\size{Q \cap C_j} \leq \zeta \cdot k$ for every $j \in J$. Noting that $\size{J} \leq \size{Q} \leq \zeta \cdot n$, if we consider the restriction $\gen(\mathbf{z}) \mid_{J}$, following Property (ii) of the function $\gen$, we know that $\gen(\mathbf{z}) \mid_{J}$ is uniformly distributed over $[n]^{\size{J}}$.

Now when we call $\sen(i_j, \mathbf{x}_j)$ with $i_j \in [n]$ obtained from $\gen(\mathbf{z})\mid_{J}$, following the above argument, we can say that $i_j$ has been chosen uniformly at random from $[n]$ (and independently from the other $i_{j'}$). Since $\size{Q \cap C_j} \leq \zeta \cdot k$, applying Property (ii) of the function $\sen$, we know that the corresponding restriction of $\sen(i_j, \mathbf{x}_j)$ will be uniformly distributed over $\{0,1\}^{\size{Q \cap C_j}}$. Since $\fen(\mathbf{z},\mathbf{x})= \mathbf{0}(\mathbf{1}^b)\sen(\gen(\mathbf{z})_1,\mathbf{x}_1) \ldots \sen(\gen(\mathbf{z})_n,\mathbf{x}_n)$, combining the above arguments, we conclude that $\fen(\mathbf{z},\mathbf{x}) \mid_{Q \setminus [b+1]}$ is uniformly distributed over $\{0,1\}^{|Q \setminus [b+1]|}$.
\end{proof}




Now we are ready to formally define the property, first constructing a non-index-invariant version to be used in the next index-invariant definition.

\paragraph*{Property $\cP^0_{\mathrm{Gap}}$:}
A distribution $D$ over $\{0,1\}^N$ is in $\cP^0_{\mathrm{Gap}}$ if and only if $D$ satisfies the following conditions:

\begin{enumerate}
    \item[(i)] $D(\mathbf{U}) = \alpha$, where $\mathbf{U}= \mathbf{1}\mathbf{0}^{N-1}$ is the indicator vector for the index $1$.

    \item[(ii)] Consider the set of vectors $\cS= \{\mathbf{V}_1, \ldots, \mathbf{V}_b\}$ in $\{0,1\}^N$ such that for every $i\in [b]$, the $i$-th vector $\mathbf{V}_i$ is of the form $\mathbf{1}^{i+1}\mathbf{0}^{N-1-i}$. Note that $\mathbf{V}_i \mid _B=1^i0^{b-i}$ for $B=\{2,\ldots,b+1\}$. We require that $D(\mathbf{V}_i) = \alpha/b$ for every $i \in [b]$.
     

    \item[(iii)] Consider the set of vectors $\cT= \{\mathbf{W}_0, \ldots, \mathbf{W}_{\lceil \log kn \rceil -1}\}$ (disjoint from $\cS$) in $\{0,1\}^N$ such that for every $\mathbf{W}_i \in \cT$, $\mathbf{W}_i$ is of the form $\mathbf{0}(b(i))(\mathbf{0}^{2^i}\mathbf{1}^{2^i})^{kn/2^{i+1}}$, where $b(i)$ denotes the length $b$ binary representation of $i$~\footnote{If $kn/2^{i+1}$ is not an integer, we trim the rightmost copy of $\mathbf{0}^{2^i}\mathbf{1}^{2^i}$ so that the total length of ``$(\mathbf{0}^{2^i} \mathbf{1}^{2^i})^{kn/2^{i+1}}$'' is exactly $kn$.}. Note that for $i=b+2+j$, with $0 \leq j \leq kn-1$, the sequence $(\mathbf{W}_{0})_i, \ldots,(\mathbf{W}_{{(\lceil \log kn \rceil -1)}})_i$ holds the binary representation of $j$. Also, note that there is an one-to-one correspondence between $\mathbf{W}_i \mid_B$ and $\mathbf{W}_i \mid_{[N] \setminus \{B\} \cup \{1\}}$. We require that $D(\mathbf{W}_i)=\alpha/|\cT|$ for every $\mathbf{W}_i \in \cT$.

    \item[(iv)] $\mbox{Supp}(D)\setminus (\{\mathbf{U}\} \cup \cS \cup \cT)$ consists of valid encodings of at most $n$ vectors from $\{0,1\}^n$ with respect to the functions $\sen:\{0,1\}^l \times \{0,1\} \rightarrow \{0,1\}^k$ and $\gen:[n]^m \rightarrow [n]^n$, for the integers $l,m, k \in \N$ as defined in Definition~\ref{defi:encoding}. That is, there exist vectors $\mathbf{x}_1, \ldots, \mathbf{x}_n \in \{0,1\}^n$ for which $\mbox{Supp}(D) \setminus (\{\mathbf{U}\} \cup \cS \cup \cT) \subseteq \{\fen(\mathbf{z},\mathbf{x}_i) : \mathbf{z} \in [n]^m, i \in [n]\}$. Note that for $D$ to be a distribution, we must have $D(\mbox{Supp}(D)\setminus (\{\mathbf{U}\} \cup \cS \cup \cT))=1-3\alpha$.

\end{enumerate}

\paragraph*{Property $\cP_{\mathrm{Gap}}$:}

A distribution $D$ over $\{0,1\}^N$ is said to be in the property $\cP_{\mathrm{Gap}}$ if $D_{\pi}$ is in $\cP^0_{\mathrm{Gap}}$ for some permutation $\pi:[N]\rightarrow [N]$. \\


\begin{rem}[{\bf Intuition behind the definition of $\cP_{\mathrm{Gap}}$}]
If a distribution $D$ is in $\cP_{\mathrm{Gap}}^0$, then we can easily check (by querying the indexes in $B$) whether a sample from $D$ would be equal to $\mathrm{FE}(\mathbf{z}, \mathbf{x})$ for some $\mathbf{z} \in [n]^m$ and $\mathbf{x} \in \{0,1\}^n$. In that case, individual bits of $\mathbf{x}$ can be decoded by querying the appropriate $C_j$ and then passed to a tester of distributions over $\{0,1\}^n$.

On the other hand, if we take a uniformly random permutation of such a distribution $D$, which keeps it in $\cP_{\mathrm{Gap}}$ (though no longer in $\cP_{\mathrm{Gap}}^0$),  a non-adaptive algorithm will need many queries to capture sufficiently many bits from any $C_j$, and this will enable us to fully hide the identity of $\mathbf{x}$ if fewer queries are performed.

By contrast, an adaptive tester will use relatively few samples that are queried in their entirety to obtain the (permutations of the) special vectors in Items (i) to (iii) of the definition of $\cP_{\mathrm{Gap}}^0$, from which it will be able to fully learn the index-permutation applied to the distribution, and continue to successfully decode individual bits. A few further samples queried in their entirety will ensure that there is very little total weight on vectors that are neither special vectors nor equal to $\fen(\mathbf{z}, \mathbf{x})$ for some $\mathbf{z} \in [n]^m$ and $\mathbf{x} \in \{0,1\}^n$.
\end{rem}

\color{black}

\color{blue}


\color{black}

\paragraph*{Known useful results about support estimation:}
Now we state a lemma which will be required later to describe the adaptive tester for $\cP_{\mathrm{Gap}}$. Informally, it says that whether a distribution $D$ over $\{0,1\}^n$ has support size $s$ or is $\eps$-far from any such distribution, can be tested by taking $\tOh(s)$ samples from $D$, and performing $\tOh(s)$ queries on them.


\begin{lem}[{\bf Support size estimation, Theorem $1.9$ and Corollary $2.3$ of \cite{GoldreichR21a} restated}]\label{pre:suppest}
There exists an algorithm {\sc Supp-Est}$(s,\eps)$ that uses $\tOh(s/\eps^2)$ queries to an unknown distribution $D$ defined over $\{0,1\}^n$, and with probability at least $\frac{9}{10}$ distinguishes whether $D$ has at most $s$ elements in its support or $D$ is $\eps$-far from all such distributions with support size at most $s$. 


\end{lem}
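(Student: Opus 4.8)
The plan is to observe that this statement is a direct restatement of the support-size estimation results of Goldreich and Ron, so formally the proof consists of invoking \cite[Theorem~1.9, Corollary~2.3]{GoldreichR21a}; it is nevertheless worth recalling the structure of that argument. First I would recall that in the standard sampling model one can, using $\tOh(s)$ samples, distinguish whether a distribution over a domain of (known) size at most $2^n$ has support size at most $s$ from the case that it is far from every such distribution — this is the Valiant--Valiant-style estimator underlying Theorem~\ref{theo:support_lb_overview}. The two new difficulties in the huge object model are: (a) support elements of $\{0,1\}^n$ cannot be observed directly, only a few coordinates of each sampled string may be queried; and (b) the relevant notion of farness is the Earth Mover Distance rather than total variation.

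To handle (a), the algorithm {\sc Supp-Est}$(s,\eps)$ draws $m=\tOh(s)$ samples $\mathbf{X}_1,\dots,\mathbf{X}_m$, fixes a uniformly random set $R\subseteq[n]$ of $r=\tOh(1/\eps^2)$ coordinates, and queries each $\mathbf{X}_i$ exactly on the coordinates of $R$; this uses $mr=\tOh(s/\eps^2)$ queries in total. The projected samples $\mathbf{X}_i\!\mid_R$ are then treated as samples from the projected distribution $D\!\mid_R$ over $\{0,1\}^r$, and the standard support-size estimator is run on them. Since a projection can only merge support elements and never split them, $|\mathrm{Supp}(D\!\mid_R)|\le|\mathrm{Supp}(D)|$; hence if $D$ has support size at most $s$ then so does $D\!\mid_R$, and the estimator accepts with high probability.

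For the soundness direction (b), suppose $D$ is $\eps$-far in EMD from every distribution with support size at most $s$. One shows, via a distance-preservation argument for random coordinate subsets entirely analogous to Lemma~\ref{lem:dist} (a Hoeffding bound, taking $|R|=\tOh(1/\eps^2)$ so that normalized Hamming distances between the relevant support strings are preserved up to an additive $\eps/c$, with a union bound over the $O(s^2)$ relevant pairs), that with high probability $D\!\mid_R$ remains far, now in total variation over the projected domain, from every distribution of support size at most $s$. The point is that any scheme which merges projected support elements so as to bring the support of $D\!\mid_R$ down to $s$ lifts back to a merging of strings of $\{0,1\}^n$ that are pairwise almost $\eps$-far, and thus would exhibit a support-$\le s$ distribution that is EMD-close to $D$ — contradicting the hypothesis. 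Consequently the standard estimator rejects $D\!\mid_R$, and {\sc Supp-Est} rejects $D$, with the claimed probability (after the usual constant-factor amplification to $9/10$).

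The main obstacle is exactly this translation: making the slogan ``cheap merging downstairs corresponds to EMD-cheap merging upstairs'' quantitatively tight, which is precisely why the $\tOh(1/\eps^2)$-sized random projection and its distance-preservation guarantee are needed, and why the query complexity carries the $1/\eps^2$ factor. In \cite{GoldreichR21a} this is the content packaged inside the proofs of Theorem~1.9 and Corollary~2.3, so in our write-up it suffices to cite those statements.
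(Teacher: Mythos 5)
You are right that the paper itself does not prove this statement; Lemma~\ref{pre:suppest} is simply a black-box restatement of \cite[Theorem~1.9, Corollary~2.3]{GoldreichR21a}, so your proposal to cite those results is exactly the approach the paper takes. One caveat on the optional sketch you append: in the soundness direction, the ``union bound over the $O(s^2)$ relevant pairs'' controls the drawn samples but not the potentially huge number of unobserved support elements of $D$ that may collide under the projection to $R$, and the lift-back argument (constructing an $s$-supported distribution over $\{0,1\}^n$ from a cheap merging of $D\mid_R$ and bounding its EMD to $D$) needs a Markov-type averaging step in the spirit of Lemma~\ref{lem:dist}(ii) rather than a raw pairwise union bound. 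Since the formal proof is the citation, this does not affect the correctness of your proposal.
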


\color{black}

We will also use a lower bound on the support size estimation problem to prove the lower bound on non-adaptive testers for testing $\cP_{\mathrm{Gap}}$. Informally speaking, given a distribution $D$ over $\{1, \ldots, 2n\}$, in order to distinguish in the traditional (non-huge-object) model whether the size of the support of $D$ is $n$, or $D$ is far from all such distributions, $\Omega(\frac{n}{\log n})$ samples are necessary. More formally, we have the following theorem.


\begin{theo}[{\bf Support Estimation Lower bound, Corollary $9$ of \cite{valiant2010clt} restated}]\label{theo:valiantlb}
There exist two distributions $D_{yes}^{\mathrm{Supp}}$ and $D_{no}^{\mathrm{Supp}}$ over distributions
over $\{1, \ldots, 2n\}$, and an $\eta \in (0,1/8)$ such that the following holds:
\begin{enumerate}

\item[(i)] The probability mass of every element in the support of $D_{yes}^{\mathrm{Supp}}$ as well as $D_{no}^{\mathrm{Supp}}$ is a multiple of $1/2n$.

\item[(ii)] $D_{yes}^{\mathrm{Supp}}$ is supported over distributions whose support size is $n$.

\item[(iii)] $D_{no}^{\mathrm{Supp}}$ is supported over distributions whose support size is at least $(1+2\eta)n$, and in particular are $\eta$-far in variation distance from any distribution defined over $\{1, \ldots, 2n\}$ whose support size is $(1+2\eta)n$.

\item[(iv)] If a sequence of $o(\frac{n}{\log n})$ samples from a distribution are drawn according to either $D_{yes}^{\mathrm{Supp}}$ or $D_{no}^{\mathrm{Supp}}$, the resulting distributions over the sample sequences are $1/4$-close to each other.


\end{enumerate}
\end{theo}

We present an adaptive algorithm to test $\cP_{\mathrm{Gap}}$ in Subsection~\ref{sec:quadratic_adaptive_ub} and we prove the lower bound for non-adaptive testers in Subsection~\ref{sec:quadratic_nonadapt_lb}. In Subsection~\ref{sec:quadratic_determine_permutation}, we describe a subroutine to determine the unknown permutation that will be used in our adaptive algorithm in Subsection~\ref{sec:quadratic_adaptive_ub}.

\color{black}

\subsection{Determining the permutation $\pi$}\label{sec:quadratic_determine_permutation}

Here we design an algorithm that, given a distribution $D \in \cP_{\mathrm{Gap}}$, can learn with high probability the permutation $\pi$ for which $D_{\pi} \in P^0_{\mathrm{Gap}}$.

The crux of the algorithm is that if $D \in \cP_{\mathrm{Gap}}$, then there exist $\mathbf{U}'=\mathbf{U}_\pi \in \{0,1\}^n$, $\cS'= \mathcal{S}_{\pi} = \{\mathbf{\mathbf{V}}_i'=(\mathbf{V}_i)_\pi: i \in [b]\}$ and $\cT'= \cT_{\pi} =\{\mathbf{W}_j'=(\mathbf{W}_j)_\pi:j \in \{0\}\cup [\lceil \log kn\rceil]-1\}$ in the support of $D$ such that $D(\mathbf{U}')=\alpha$, $D(\mathbf{V}_i')=\alpha/b$ for every $i \in [b]$ and  $D(\mathbf{W}_j')=\alpha/\lceil \log kn\rceil$. Note that $\mathbf{U}$, $\cS$ and $\cT$ are as defined in the property $\cP_{\mathrm{Gap}}^0$.

The main observation is that, if we are given the set of special vectors $\{\mathbf{U'}\} \cup \cS' \cup \cT'$, then we can determine the permutation $\pi$. Our algorithm can find  $\mathbf{U}', \cS'$ and $\cT'$ with high probability, if they exist, by taking $\Oh(\log^2 N/\alpha)=\Oh(\log^2 n/\alpha)$ samples and reading them in their entirety. This is due to the fact that the probability mass of every vector in the set of special vectors is at least $\Omega(\alpha/\log n)$.


The algorithm is described in the following subroutine \findpi (see Algorithm~\ref{algo:perm})~\footnote{This algorithm is not adaptive in itself, but its output is used adaptively in the testing algorithm described later.}.

\begin{algorithm}[H]\label{algo:perm}

\SetAlgoLined

\caption{\findpi}
\label{alg:findpi}
\KwIn{Sample and Query access to a distribution $D$ over $\{0,1\}^N$.}
\KwOut{Either a permutation $\pi:[N] \rightarrow [N]$, or {\sc Fail}.}
\LinesNumbered
\begin{description}
     \item[(i)] First take a multi-set $\cX$ of $\Oh(\log^2 N/\alpha)$ samples from $D$, and query all the entries of the sampled vectors of $\cX$ to know the vectors of $\cX$ completely.

    \item[(ii)] Find the set of distinct vectors in $\cX$ that have exactly one $1$. If no such vector exists or there \\ is more than one such vector, {\sc Fail}. Otherwise, denote by $\mathbf{U'}$ the vector that has exactly one $1$, and denote the corresponding index by $i^*$. Set $\pi(i^*)=1$, and proceed to the next step.    

    \item[(iii)] Find the set of distinct vectors $\cS' \subseteq \cX \setminus \{\mathbf{U'}\}$ such that every vector in $\cS'$ has $1$ at the index $i^*$ and has at least another $1$ among other indices. If no such vector exists, or $\size{\cS'} \neq b$, {\sc Fail}. Otherwise, if the vectors of $\cS'$ form a chain $\mathbf{V}'_1, \ldots, \mathbf{V}'_b$, where $\mathbf{V}'_j$ has exactly $j+1$ many $1$, then set $\pi(i_j)=j+1$, where $i_j$ is the index where $\mathbf{V}'_j$ has $1$, but $\mathbf{V}'_{j-1}$ has $0$ there, for every $j \in [b]$ (denoting $\mathbf{V'}_0=\mathbf{U'}$ for the purpose here). Also, set $B'=(i_1, \ldots,i_b)$. If $\cS'$ does not form a chain $\mathbf{V}'_1, \ldots, \mathbf{V}'_b$ as mentioned above, {\sc Fail}.

    \item[(iv)] Let $\cT' \subseteq \cX$ be the set of distinct vectors such that every vector in $\cT'$ has $0$ at the index $i^*$, and does not have $1$ in all indices of $B'$. If no such vector exists, {\sc Fail}. For every $j$, denote by $\mathbf{W}'_j$ the vector in $\cT'$ for which $\mathbf{W}'_j \mid_{B'} = b(j)$, where $b(j)$ denotes the binary representation of $j$. For every $j \in \{0\} \cup [\lceil \log kn \rceil -1]$, if either there are no vectors $\mathbf{W}'_j \in \cT'$ or there is more than one distinct vector with $\mathbf{W'}_j \mid_{B'}=b(j)$, {\sc Fail}.
    Also, if there is any vector in $\mathbf{W}'_j \in \cT'$ such that $\mathbf{W}'_j \mid_{B'} = b(j)$ for $\log kn \leq j < 2^b -1$, {\sc Fail}.
    
    \item[(v)] For any $i \in [N]\setminus (\{i^*\} \cup B')$, let $l_i$ be the integer with binary representation $(\mathbf{W'}_0)_i, \ldots, (\mathbf{W'}_{\lceil \log kn \rceil -1})_i$.
Set $\pi(i)=b+2+l_i$ for each $i \in [N]\setminus (\{i^*\} \cup B')$. If $\pi$ is not a permutation of $[N]$, {\sc Fail}.

    \item[(vi)] Take another multi-set $\cX'$ of $\Oh(\log^2 N/\alpha)$ samples from $D$, and query all the entries of the sampled vectors of $\cX'$ to know the vectors of $\cX'$ completely. Let $\cY$ be the set of vectors in \\ $\cX'$ such that $\cY=\{\mathbf{Z} \in \cX': \mathbf{Z} \mid_{{\{i^*\} \cup B'}} \neq \mathbf{0}\mathbf{1}^b \}$. If $\size{\cY}/\size{\cX'} > 4\alpha$, {\sc Fail}. Otherwise, output \\ the permutation $\pi$.
    
    
\end{description}

\end{algorithm}

\vspace{10pt}

Let us start by analyzing  the query complexity of \findpi.

\begin{lem}[{\bf Query complexity of \findpi}]\label{lem:queryfindpi}
The query complexity of the above defined \findpi is $\tOh(N)$.
\end{lem}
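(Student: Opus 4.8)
The plan is to bound the query complexity of \findpi by accounting for the queries made at each of its six steps. The algorithm only performs queries in Steps (i) and (vi); all other steps perform computations on already-queried data. In Step (i), the algorithm takes a multi-set $\cX$ of $\Oh(\log^2 N/\alpha)$ samples from $D$ and reads each of them in its entirety, that is, it queries all $N$ coordinates of each sampled vector. Recalling that $\alpha = 1/\log n$ and $N = \Oh(n\log n)$, and hence $\log N = \Theta(\log n)$, this amounts to $\Oh(N \cdot \log^2 N/\alpha) = \Oh(N \log^3 n) = \tOh(N)$ queries, since $\tOh(\cdot)$ suppresses poly-logarithmic factors in $n$.

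Step (vi) is identical in structure: it takes another multi-set $\cX'$ of $\Oh(\log^2 N/\alpha)$ samples and again queries all $N$ entries of each, contributing another $\Oh(N \log^2 N/\alpha) = \tOh(N)$ queries. Steps (ii) through (v) only inspect the already-revealed coordinates of $\cX$ (and, in Step (v), perform bookkeeping to define the candidate permutation $\pi$), so they add no further queries. Summing the contributions of Steps (i) and (vi) gives a total of $\tOh(N) + \tOh(N) = \tOh(N)$ queries, which proves the lemma.

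There is essentially no obstacle here: the statement is a direct bookkeeping consequence of the sample sizes chosen in the description of \findpi together with the observation that in those two steps the sampled vectors are read in full (each costing $N$ queries). The only mild subtlety worth spelling out is the conversion $\log^2 N / \alpha = \Theta(\log^3 n)$, which is absorbed into the $\tOh$ notation precisely because $\tOh$ suppresses factors polylogarithmic in $n$ (and in $1/\eps$); since no dependence on $\eps$ enters \findpi, the bound is simply $\tOh(N)$.
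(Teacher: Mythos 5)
Your proof is correct and follows exactly the same approach as the paper's: both identify Steps (i) and (vi) as the only query-making steps, each taking $\Oh(\log^2 N/\alpha)$ fully-read samples of length $N$, and both substitute $\alpha = 1/\log n$ to conclude $\tOh(N)$ total queries. The extra arithmetic you spell out ($\log^2 N/\alpha = \Theta(\log^3 n)$) is just a slightly more explicit version of the same bookkeeping.
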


\begin{proof}
Note that \findpi takes a multi-set $\cX$ of $\Oh(\log^2 N/\alpha)$ samples from $D$ in Step (i), and queries them completely. So, \findpi performs $\Oh(N \log^2 N/\alpha)$ queries in Step (i). \findpi does not perform any new queries in Step (ii), Step (iii), Step (iv) and Step (v). Finally, \findpi takes another multi-set $\cX'$ of $\Oh(\log^2 N/\alpha)$ samples from $D$ and queries them completely, similar to Step (i). Recalling that $\alpha=1/\log n$, the query complexity of \findpi is $\tOh(N)= \tOh(n)$ in total.
\end{proof}

Now we proceed to prove the correctness of \findpi.

\begin{lem}[{\bf Guarantee when $D \in \cP_{\mathrm{Gap}}$}]\label{lem:findpicompleteness}
If $D$ is a distribution defined over $\{0,1\}^N$ such that $D \in \cP_{\mathrm{Gap}}$, then with probability at least $9/10$, \findpi reports the permutation $\pi$ such that $D_{\pi} \in \cP_{\mathrm{Gap}}^0$.
\end{lem}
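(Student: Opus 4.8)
The plan is to reduce the lemma to a single ``good event'' $E$ about the two multi-sets sampled inside \findpi, show $\Pr[E]\ge 9/10$, and then argue that conditioned on $E$ the execution of \findpi is purely deterministic: it never outputs {\sc Fail} and it returns exactly the permutation $\pi_0$ (which is unique) with $D_{\pi_0}\in\cP^0_{\mathrm{Gap}}$. For the probability bound, the key quantitative observation is that every special vector of $D$ (the image under $\pi_0$ of a vector in $\{\mathbf{U}\}\cup\cS\cup\cT$) carries mass at least $\alpha/\max\{b,|\cT|\}=\alpha/|\cT|=\Omega(1/\log^2 n)$, since $b=\Theta(\log\log n)$, $|\cT|=\lceil\log kn\rceil=\Theta(\log n)$ and $\alpha=1/\log n$. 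Hence a multi-set $\cX$ of $\Oh(\log^2 N/\alpha)=\Oh(\log^3 n)$ samples misses any fixed special vector with probability at most $(1-\Omega(1/\log^2 n))^{\Omega(\log^3 n)}\le n^{-\Omega(1)}$, and a union bound over the $1+b+|\cT|=\Oh(\log n)$ special vectors shows that with probability $\ge 19/20$ (after choosing the hidden constant in the sample size large enough) $\cX$ contains all of them. Separately, on the second multi-set $\cX'$, the indicator of ``$\mathbf{Z}\mid_{\{i^*\}\cup B'}\ne\mathbf{0}\mathbf{1}^b$'' has mean exactly $3\alpha$ (the combined mass of $\mathbf{U}',\cS',\cT'$, which are exactly the support vectors not carrying the prefix $\mathbf{0}\mathbf{1}^b$), so a Chernoff bound gives $|\cY|/|\cX'|\le 4\alpha$ with probability $\ge 1-e^{-\Omega(\alpha|\cX'|)}=1-e^{-\Omega(\log^2 n)}$. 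I let $E$ be the conjunction of these two events, so $\Pr[E]\ge 9/10$.

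Conditioned on $E$, I would then verify the deterministic phase step by step, using that $\cX,\cX'\subseteq\mathrm{Supp}(D)$ and that $\mathrm{Supp}(D)$ decomposes into $\{\mathbf{U}'\}\cup\cS'\cup\cT'$ together with valid encodings (which by Definition~\ref{defi:encoding} all restrict to $\mathbf{1}^b$ on the $B$-coordinates). The structural facts I would invoke are: (a) for $n$ large enough $b\ge 2$, so $\mathbf{U}$ is the only support vector of $D_{\pi_0}$ with exactly one $1$; (b) $\mathbf{U}$ and the $\cS$-vectors are the only ones with a $1$ in coordinate $1$, and the $\cS$-vectors form a chain whose successive ``new $1$'' coordinates enumerate $2,\dots,b+1$ in order; (c) among the remaining support vectors the $\cT$-vectors are exactly those whose restriction to $B$ is not all-$1$, with $i\mapsto\mathbf{W}_i\mid_B=b(i)$ injective and $b(i)\ne\mathbf{1}^b$ for all $0\le i\le\lceil\log kn\rceil-1$; (d) the bit columns $(\mathbf{W}_0)_i,\dots,(\mathbf{W}_{\lceil\log kn\rceil-1})_i$ encode the integer $l$ for the coordinate $i=b+2+l$, giving a bijection between $[N]\setminus[b+1]$ and $\{0,\dots,kn-1\}$. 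Feeding these into Steps (ii)--(v) shows in turn: Step (ii) identifies the unique vector $\mathbf{U}'$ and sets $\pi(i^*)=1$; Step (iii) recovers exactly the $b$ chain vectors $\cS'$, hence the correct sequence $B'$ and $\pi$ on $B'$; Step (iv) recovers exactly the $\lceil\log kn\rceil$ vectors of $\cT'$, each $\mathbf{W}'_j$ uniquely determined by $\mathbf{W}'_j\mid_{B'}=b(j)$ and with no spurious index in $[\log kn,2^b-1)$; Step (v) recovers $\pi$ on the remaining coordinates, and the resulting assignment is a genuine permutation of $[N]$ consistent with $\pi_0$. In particular none of the {\sc Fail} clauses of Steps (ii)--(v) fires, and $\pi=\pi_0$.

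Finally, at Step (vi) the set $\cY$ equals $\cX'\cap(\{\mathbf{U}'\}\cup\cS'\cup\cT')$, so by $E$ we get $|\cY|/|\cX'|\le 4\alpha$ and \findpi outputs $\pi=\pi_0$, which satisfies $D_\pi\in\cP^0_{\mathrm{Gap}}$; the uniqueness of $\pi_0$ needed to phrase the conclusion as ``the permutation'' follows from the same rigidity, as any permutation preserving $\cP^0_{\mathrm{Gap}}$-membership must fix coordinate $1$ by (a), then $2,\dots,b+1$ by the chain structure in (b), and then every remaining coordinate by the injective binary labelling in (d). The only genuinely fiddly part of the write-up is this deterministic phase: for each {\sc Fail} clause one has to check that the corresponding bad configuration is incompatible with the promised decomposition of $\mathrm{Supp}(D)$, and one has to confirm that the partial assignments of $\pi$ made in Steps (ii), (iii) and (v) are mutually consistent and together exhaust $[N]$. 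This is where the explicit prefix/chain/binary-labelling design of $\cP^0_{\mathrm{Gap}}$ is used, but it amounts to a finite case analysis rather than a substantive difficulty; the only probabilistic content is the two concentration bounds described above.
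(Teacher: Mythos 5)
Your proof is correct and follows essentially the same route as the paper: a union bound showing all special vectors appear in $\cX$, a Chernoff bound controlling the fraction $|\cY|/|\cX'|$ in Step~(vi), and a deterministic verification of Steps~(ii)--(v) using the prefix/chain/binary-labelling rigidity of $\cP^0_{\mathrm{Gap}}$. The paper phrases the hitting argument as three separate sub-lemmas (Lemmas~\ref{lem:hittingU}, \ref{lem:hittingS}, \ref{lem:hittingT}, one each for $\mathbf{U}'$, $\cS'$, $\cT'$) rather than one combined event, and it leaves the uniqueness of $\pi$ implicit, but the substance of the argument is identical.
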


\color{black}

We prove the above lemma by a series of intermediate lemmas.  In the following lemmas, we consider $\mathbf{U}$, $\cS$ and $\cT$ as per the definition of $\cP^0_{\mathrm{Gap}}$. Also, consider the permutation $\pi$ such that $D_\pi \in \cP_{\mathrm{Gap}}$.



\begin{lem}[{\bf Correctly finding $\pi^{-1}(1)$}]\label{lem:hittingU}
With probability at least $1-1/N^3$, $\cX$ will contain the vector $\mathbf{U'}$ for which $\mathbf{U'}_{\pi}=\mathbf{U}$, and $i^*=\pi^{-1}(1)$ will be identified correctly. Moreover, \findpi proceeds to Step (iii).
\end{lem}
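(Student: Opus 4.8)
The plan is to show that with very high probability the sample multi-set $\cX$ contains a copy of every ``special'' vector of the permuted distribution $D$, in particular the permuted indicator vector $\mathbf{U'}$, and that $\mathbf{U'}$ is the unique vector in $\cX$ having exactly one $1$-coordinate. The first part is a standard coupon-collector-style union bound: since $D \in \cP_{\mathrm{Gap}}$, there is a permutation $\pi$ with $D_\pi \in \cP_{\mathrm{Gap}}^0$, and hence $D$ assigns mass exactly $\alpha$ to $\mathbf{U'} = (\mathbf{U})_{\pi^{-1}}$ (I will be careful with which direction the permutation goes; the point is simply that the permuted indicator vector has mass $\alpha$). A multi-set $\cX$ of size $\Oh(\log^2 N / \alpha)$ therefore misses $\mathbf{U'}$ with probability at most $(1-\alpha)^{\size{\cX}} \leq e^{-\alpha \size{\cX}} \leq 1/N^3$ for a suitable choice of the hidden constant. (In fact the same computation simultaneously bounds the probability of missing any one of the $\Oh(\log n)$ special vectors, which is what the later lemmas of this subsection will want; here I only need $\mathbf{U'}$.)

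Next I would argue uniqueness: the vector $\mathbf{U'}$ has exactly one coordinate equal to $1$, and I claim that no other vector in the support of $D$ has exactly one $1$-coordinate. This is immediate from the definition of $\cP_{\mathrm{Gap}}^0$ (and hence of $\cP_{\mathrm{Gap}}$ after permuting): the vectors $\mathbf{V}_i \in \cS$ have at least two $1$'s (they are of the form $\mathbf{1}^{i+1}\mathbf{0}^{N-1-i}$ with $i \geq 1$, so weight $i+1 \geq 2$); the vectors $\mathbf{W}_j \in \cT$ have $0$ at index $1$ and a nonzero $\mathbf{W}_j\mid_B$ block of length $b \geq 1$ plus roughly half of the remaining $kn$ coordinates equal to $1$, so in particular Hamming weight far larger than $1$; and every valid encoding $\fen(\mathbf{z},\mathbf{x})$ has $\mathbf{Y}\mid_B = \mathbf{1}^b$ with $b = \lfloor \log \lceil \log kn\rceil\rfloor + 1 \geq 2$, so weight at least $2$. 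Permuting coordinates preserves Hamming weight, so the same holds for every vector in the support of $D$. Hence conditioned on $\mathbf{U'} \in \cX$ (which happens except with probability $1/N^3$), the ``weight-one'' test in Step (ii) of \findpi picks out exactly $\mathbf{U'}$, sets $i^* $ to its unique $1$-index, which is precisely $\pi^{-1}(1)$ since $(\mathbf{U'})_\pi = \mathbf{U} = \mathbf{1}\mathbf{0}^{N-1}$ forces $(\mathbf{U'})_{\pi^{-1}(1)} = 1$, sets $\pi(i^*) = 1$, and does not output {\sc Fail} at this step, so it proceeds to Step (iii). This gives the claimed probability bound $1 - 1/N^3$.

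The only mild subtlety — and the step I would be most careful about — is the bookkeeping of the permutation direction: $\cP_{\mathrm{Gap}}$ is defined via $D_\pi \in \cP_{\mathrm{Gap}}^0$ where $D_\pi(\mathbf{X}) = D(\mathbf{X}_\pi)$ (following the convention of the earlier definition of $D_\sigma$), so the vector in $\mathrm{Supp}(D)$ corresponding to $\mathbf{U}$ is the one whose $\pi$-permutation equals $\mathbf{U}$, and one must check that the index $i^*$ found by the algorithm indeed equals $\pi^{-1}(1)$ under whichever convention the paper fixes; this is purely a matter of unwinding definitions and carries no real mathematical content. Everything else is the union bound above plus the trivial Hamming-weight observation, so I expect the proof to be short. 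I would also remark that this lemma is stated and used as one link in the chain culminating in Lemma~\ref{lem:findpicompleteness}, so I would phrase it so that the ``proceeds to Step (iii)'' clause is explicit, since the subsequent lemmas are stated conditionally on \findpi having reached their respective steps.

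\begin{proof}[Proof of Lemma~\ref{lem:hittingU}]
Since $D \in \cP_{\mathrm{Gap}}$, there is a permutation $\pi:[N]\rightarrow[N]$ with $D_\pi \in \cP_{\mathrm{Gap}}^0$; let $\mathbf{U'}$ be the vector with $(\mathbf{U'})_\pi = \mathbf{U} = \mathbf{1}\mathbf{0}^{N-1}$, so that $D(\mathbf{U'}) = D_\pi(\mathbf{U}) = \alpha$ by Item (i) of the definition of $\cP_{\mathrm{Gap}}^0$. The probability that the multi-set $\cX$ of $\size{\cX} = \Oh(\log^2 N/\alpha)$ independent samples from $D$ contains no copy of $\mathbf{U'}$ is at most $(1-\alpha)^{\size{\cX}} \leq e^{-\alpha \size{\cX}} \leq 1/N^3$, for a suitable choice of the hidden constant in $\size{\cX}$.

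Assume from now on that $\mathbf{U'} \in \cX$. We claim $\mathbf{U'}$ is the unique vector in $\mathrm{Supp}(D)$ of Hamming weight $1$, hence the unique vector in $\cX$ with exactly one coordinate equal to $1$. Indeed, permuting coordinates preserves Hamming weight, so it suffices to check that every vector in $\mathrm{Supp}(D_\pi) \subseteq \{\mathbf{U}\}\cup \cS \cup \cT \cup \{\fen(\mathbf{z},\mathbf{x}):\mathbf{z}\in[n]^m, \mathbf{x}\in\{0,1\}^n\}$ other than $\mathbf{U}$ has weight at least $2$: each $\mathbf{V}_i \in \cS$ equals $\mathbf{1}^{i+1}\mathbf{0}^{N-1-i}$ with $i \geq 1$, so has weight $i+1\geq 2$; each $\mathbf{W}_j \in \cT$ has $\mathbf{W}_j\mid_B$ of length $b\geq 1$ together with roughly $kn/2$ further $1$-coordinates, so has weight much larger than $1$; and every valid encoding $\fen(\mathbf{z},\mathbf{x})$ satisfies $\fen(\mathbf{z},\mathbf{x})\mid_B = \mathbf{1}^b$ with $b\geq 2$, so has weight at least $2$. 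This proves the claim.

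Consequently, in Step (ii) of \findpi the vector selected is exactly $\mathbf{U'}$, and $i^*$ is its unique $1$-index. Since $(\mathbf{U'})_\pi = \mathbf{1}\mathbf{0}^{N-1}$, the unique index $i$ with $(\mathbf{U'})_i = 1$ satisfies $\pi(i) = 1$, i.e.\ $i^* = \pi^{-1}(1)$, and \findpi sets $\pi(i^*) = 1$ correctly and does not output {\sc Fail} at Step (ii), hence proceeds to Step (iii). All of this holds conditioned on the event $\mathbf{U'} \in \cX$, which has probability at least $1 - 1/N^3$, completing the proof.
\end{proof}
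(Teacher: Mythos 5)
Your proof is correct and follows essentially the same approach as the paper's: a hitting-probability bound of the form $(1-\alpha)^{|\cX|}\le 1/N^3$ for the special vector $\mathbf{U'}$, together with the observation that $\mathbf{U'}$ is the unique weight-one vector in $\mathrm{Supp}(D)$. The paper simply asserts the uniqueness ``by the definition of $\cP^0_{\mathrm{Gap}}$,'' whereas you spell out the Hamming-weight bookkeeping for each family of support vectors; this extra detail is harmless and arguably clarifying, but it is the same argument.
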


\begin{proof}
By the definition of $\cP^0_{\mathrm{Gap}}$, the vector $\mathbf{U'}$ is the only vector in the support of $D$ containing a single $1$. Since $D(\mathbf{U'}) = D(\mathbf{U}_{\pi^{-1}(1)}) = \alpha$, and we are taking $\size{\cX}$ many samples from $D$, the probability that $\mathbf{U'}$ will not appear in $\cX$ is at most $(1- \alpha)^{\size{X}} \leq \frac{1}{N^3}$. Thus, with probability at least $1 - \frac{1}{N^3}$, $\mathbf{U'} \in \cX$ and \findpi in Step (ii) proceeds to the next step.
\end{proof}

\begin{lem}[{\bf Correctly finding $B' = \pi^{-1}(B)$}]\label{lem:hittingS}
With probability at least $1-1/N^3$, the algorithm \findpi will correctly identify $\mathbf{V'}_1, \ldots, \mathbf{V'}_b$ for which $\mathbf{V'}_{i,\pi} = \mathbf{V}_i$, and $B'={\pi^{-1}(2), \ldots, \pi^{-1}(b+1)}$ will be identified correctly as well. Moreover, \findpi proceeds to Step (iv).

\end{lem}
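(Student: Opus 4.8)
The plan is to prove Lemma~\ref{lem:hittingS} by essentially the same argument as Lemma~\ref{lem:hittingU}, but now tracking $b$ special vectors instead of one, and then verifying that the deterministic procedure in Step~(iii) of \findpi correctly reconstructs $B'$ once all of $\cS'$ has been sampled. First I would observe that, since $D_\pi \in \cP^0_{\mathrm{Gap}}$, for each $i \in [b]$ the vector $\mathbf{V}'_i := (\mathbf{V}_i)_{\pi^{-1}}$ lies in $\mathrm{Supp}(D)$ and satisfies $D(\mathbf{V}'_i) = \alpha/b$. Because $\alpha = 1/\log n$ and $b = \Theta(\log\log(kn)) = \Theta(\log\log n)$, each such vector has mass $\Omega(\alpha/\log\log n) = \widetilde{\Omega}(1/\log n)$, so with $\size{\cX} = \Oh(\log^2 N/\alpha)$ samples, the probability that a fixed $\mathbf{V}'_i$ fails to appear in $\cX$ is at most $(1-\alpha/b)^{\size{\cX}} \le 1/N^4$ for a suitable choice of the hidden constant. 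A union bound over the $b \le N$ indices $i$ gives that all of $\mathbf{V}'_1,\ldots,\mathbf{V}'_b$ appear in $\cX$ with probability at least $1 - 1/N^3$.

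Next I would argue that, conditioned on $\cX \supseteq \{\mathbf{U}'\} \cup \{\mathbf{V}'_1,\ldots,\mathbf{V}'_b\}$ and on $i^* = \pi^{-1}(1)$ having been correctly identified (which holds by Lemma~\ref{lem:hittingU}), the set $\cS'$ extracted in Step~(iii) is \emph{exactly} $\{\mathbf{V}'_1,\ldots,\mathbf{V}'_b\}$. For this I would use the structure of $\cP^0_{\mathrm{Gap}}$: the only vectors in $\mathrm{Supp}(D)$ that have a $1$ at position $i^*$ and at least one further $1$ are precisely the $\mathbf{V}'_i$ (the vector $\mathbf{U}'$ has only the single $1$ at $i^*$, the vectors in $\cT'$ have a $0$ at $i^*$, and every valid encoding $\fen(\mathbf{z},\mathbf{x})$ has a $0$ at $\pi^{-1}(1)$ since its first coordinate, before permutation, is $0$). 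Hence $\cS'$ consists of $b$ distinct vectors, and since $(\mathbf{V}_i)\mid_B = \mathbf{1}^i\mathbf{0}^{b-i}$ and $(\mathbf{V}_i)\mid_{[N]\setminus(B\cup\{1\})} = \mathbf{0}$, the permuted vectors $\mathbf{V}'_i$ form a chain under $\preceq$ with $\mathbf{V}'_0 = \mathbf{U}' \prec \mathbf{V}'_1 \prec \cdots \prec \mathbf{V}'_b$, where $\mathbf{V}'_j$ has exactly $j+1$ ones. So the chain test in Step~(iii) succeeds, and for each $j \in [b]$ the unique index $i_j$ where $\mathbf{V}'_j$ has a $1$ but $\mathbf{V}'_{j-1}$ has a $0$ is exactly $\pi^{-1}(j+1)$, since the $(j+1)$-st coordinate of the unpermuted chain is the one that flips from $0$ to $1$ when passing from $\mathbf{V}_{j-1}$ to $\mathbf{V}_j$. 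Thus $B' = (i_1,\ldots,i_b) = (\pi^{-1}(2),\ldots,\pi^{-1}(b+1))$ and the assignments $\pi(i_j) = j+1$ are correct, and \findpi does not output {\sc Fail} in Step~(iii), so it proceeds to Step~(iv).

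I would finish by combining the two failure events: the bad event for this lemma (some $\mathbf{V}'_i$ missing from $\cX$, or the chain reconstruction going wrong) has probability at most $1/N^3$, as computed above, which gives the claimed bound. The main obstacle here is not probabilistic — the sampling bound is routine — but rather the bookkeeping needed to verify that no ``spurious'' vector from $\mathrm{Supp}(D)$ (a valid encoding or a $\mathbf{W}'_j$) can masquerade as an element of $\cS'$; this requires carefully invoking each clause of the definition of $\cP^0_{\mathrm{Gap}}$ (in particular that the first coordinate of every valid encoding is $0$, clause (i) of Definition~\ref{defi:encoding}) and that the chain structure pins down the index ordering within $B$ uniquely. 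Once that case analysis is in place, the identification of $\pi$ on $\{i^*\} \cup B'$ is forced, and the lemma follows.
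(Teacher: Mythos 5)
Your proposal is correct and takes essentially the same route as the paper: bound the probability that any fixed $\mathbf{V}'_i$ (of mass $\alpha/b$) is missing from $\cX$ by $(1-\alpha/b)^{|\cX|} \le 1/N^4$, take a union bound over $i \in [b]$, and then observe that once all of $\{\mathbf{U}'\}\cup\cS'$ are present the chain reconstruction in Step~(iii) is forced. The paper's proof of Lemma~\ref{lem:hittingS} is terser and simply asserts ``these are the only vectors outside $\mathbf{U}'$ in the support of $D$ that have $1$ at the index $i^*$,'' whereas you spell out the case analysis over $\mathbf{U}'$, $\cT'$, and valid encodings that justifies this — the same argument, just with the implicit step made explicit.
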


\begin{proof}

Let $\mathbf{V'}_1, \ldots, \mathbf{V'}_b$ denote the vectors for which $\mathbf{V'}_{i,\pi}=\mathbf{V}_i$ for every $i$. Note that these are the only vectors outside $\mathbf{U'}$ in the support of $D$ that have $1$ at the index $i^*$.
As $D(\mathbf{V}'_i)= \frac{\alpha}{b}$, the probability that $\mathbf{V}'_i$ does not appear in $\cX$ is at most $(1- \frac{\alpha}{b})^{\size{X}}$. Since $\size{\cX} = \Oh(\log^2 N/\alpha)$ and $b = \Oh(\log \log kn)$, the probability that $\mathbf{V}'_i \in \cX$ is at least $1 - \frac{1}{N^4}$. Using the union bound over all the vectors of $\cS'$, with probability at least $1-1/N^3$, we know that all of these vectors are in $\cX$, in which case they are identified correctly, so $B'$ is identified correctly as well, and \findpi in Step (iii) proceeds to the next step.
\end{proof}

\begin{lem}[{\bf Correctly identifying $\pi^{-1}(b+2), \ldots, \pi^{-1}(N)$}]\label{lem:hittingT}
Let $\mathbf{W'}_1, \ldots, \mathbf{W'}_{\lceil \log kn \rceil -1}$ denote the vectors for which $\mathbf{W'}_{j,\pi} = \mathbf{W}_j$ for every $j$. With probability at least $1-1/N^3$, all these vectors appear in $\cX$, in which case they are identified correctly, and so are $\pi^{-1}(b+2), \ldots, \pi^{-1}(N)$. Moreover, \findpi proceeds to Step (vi).

\end{lem}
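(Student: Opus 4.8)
The plan is to mirror the arguments of Lemma~\ref{lem:hittingU} and Lemma~\ref{lem:hittingS}, applying them now to the vectors of $\cT$ rather than $\mathbf{U}$ or $\cS$. First I would recall that, by the definition of $\cP^0_{\mathrm{Gap}}$, each vector $\mathbf{W}_j \in \cT$ satisfies $D_\pi(\mathbf{W}_j) = \alpha/\size{\cT}$, where $\size{\cT}=\lceil \log kn\rceil$, so that $D(\mathbf{W'}_j) = \alpha/\lceil \log kn\rceil$ for the vectors $\mathbf{W'}_j = (\mathbf{W}_j)_{\pi^{-1}}$ in the support of $D$. Since $\size{\cX}=\Oh(\log^2 N/\alpha)$ and $\lceil \log kn \rceil = \Oh(\log N)$, the probability that a particular $\mathbf{W'}_j$ fails to appear in $\cX$ is at most $(1-\alpha/\lceil \log kn\rceil)^{\size{\cX}} \leq 1/N^4$ for a suitable choice of the hidden constant in $\size{\cX}$. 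Taking a union bound over the at most $\lceil \log kn\rceil \leq N$ vectors of $\cT'$, all of them appear in $\cX$ with probability at least $1-1/N^3$.

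Next I would argue that, conditioned on all $\mathbf{W'}_j$ appearing in $\cX$ (and on the successful identification of $\mathbf{U'}$, $i^*$, $\cS'$ and $B'$ from the previous two lemmas), Step (iv) and Step (v) correctly recover $\pi$ on the remaining indices. The key structural fact is Item (iii) of the definition of $\cP^0_{\mathrm{Gap}}$: the vectors of $\cT'$ are precisely those vectors in $\cX$ that have $0$ at $i^*$ and do not have $1$ in all indices of $B'$, and within $\cT'$ the restriction $\mathbf{W'}_j \mid_{B'}$ equals the length-$b$ binary representation $b(j)$ of $j$ (this is exactly the property $\mathbf{W}_j \mid_B = b(j)$ transported through $\pi$). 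Hence each $\mathbf{W'}_j$ is uniquely labelled by $\mathbf{W'}_j\mid_{B'}$, so the matching performed in Step (iv) is unambiguous, and the {\sc Fail} conditions there (no vector with a given label, more than one distinct such vector, or a vector carrying a label $b(j)$ with $\log kn \leq j < 2^b-1$) do not trigger. Then for every index $i \in [N]\setminus(\{i^*\}\cup B')$, the sequence $(\mathbf{W'}_0)_i, \ldots, (\mathbf{W'}_{\lceil \log kn\rceil-1})_i$ is, by Item (iii), the binary representation of $\pi(i)-(b+2)$, so Step (v) sets $\pi(i)=b+2+l_i$ correctly; and since $\pi$ is genuinely a permutation of $[N]$, the check at the end of Step (v) passes. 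Thus \findpi proceeds to Step (vi).

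I expect the main subtlety to be purely bookkeeping: checking that the {\sc Fail} branches in Steps (iv) and (v) are never triggered when $D\in\cP_{\mathrm{Gap}}$ requires carefully using that $\cT$ consists of \emph{exactly} the $\lceil \log kn\rceil$ vectors described — so that no spurious vector in $\cX$ masquerades as a member of $\cT'$ (any such vector would have to be a valid encoding $\fen(\mathbf{z},\mathbf{x}_i)$ or one of $\mathbf{U'}, \cS'$, all of which have $1$ on all of $B'$ by construction of $\fen$, hence are excluded by the defining condition of $\cT'$) — and that the labels $b(j)$ for $0\le j\le \lceil\log kn\rceil-1$ are pairwise distinct and never equal to $b(j')$ for $\log kn \le j' < 2^b-1$, which holds because $b = \lfloor \log(\lceil \log kn\rceil)\rfloor + 1$ is chosen precisely so that $\{0,\dots,\lceil\log kn\rceil-1\}$ embeds injectively into $\{0,\dots,2^b-1\}$. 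Once these are in place, combining this lemma with Lemmas~\ref{lem:hittingU} and~\ref{lem:hittingS} via a union bound gives that, with probability at least $1-3/N^3 \geq 9/10$, \findpi correctly identifies $\pi$ on all of $[N]$ and reaches Step (vi); handling the final verification step (vi) — bounding $\size{\cY}/\size{\cX'}$ via a Chernoff/Hoeffding bound using that the total weight of vectors outside $\{\mathbf{U'}\}\cup\cS'\cup\cT'$ that fail $\mathbf{Z}\mid_{\{i^*\}\cup B'} = \mathbf{0}\mathbf{1}^b$ is $0$ when $D\in\cP_{\mathrm{Gap}}$, so in fact $\cY=\emptyset$ with probability $1$ — then completes the proof of Lemma~\ref{lem:findpicompleteness}.
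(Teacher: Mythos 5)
Your proof follows the paper's approach exactly—the paper itself omits this proof, noting only that it is ``similar to the proof of Lemma~\ref{lem:hittingS},'' and you have supplied that similar argument with the correct probability calculation: each $\mathbf{W}'_j$ has weight $\alpha/\lceil\log kn\rceil$, so with $|\cX|=\Oh(\log^2 N/\alpha)$ samples and a suitable constant each $\mathbf{W}'_j$ is hit with probability at least $1-1/N^4$, and a union bound over the $\lceil\log kn\rceil\le N$ vectors of $\cT'$ gives $1-1/N^3$. Your structural argument that Steps (iv)--(v) then correctly recover $\pi$ and do not trigger a {\sc Fail} is also sound.

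Two small nits. First, you write that $\mathbf{U}'$ and the vectors of $\cS'$ ``have $1$ on all of $B'$ by construction of $\fen$''—this is true of valid encoding vectors $\fen(\mathbf{z},\mathbf{x})$ (which is why they are excluded from $\cT'$), but it is false for $\mathbf{U}'$ and $\cS'$, which do not have $1$ on all of $B'$; they are instead excluded from $\cT'$ because they have $1$ at index $i^*$ while the defining condition of $\cT'$ requires a $0$ there. The conclusion (no spurious vector enters $\cT'$) is unaffected, but the stated reason is wrong for two of the three categories. Second, your closing remark about Step (vi) asserting that $\cY=\emptyset$ with probability $1$ when $D\in\cP_{\mathrm{Gap}}$ is incorrect: the set $\cY$ consists of all samples $\mathbf{Z}\in\cX'$ with $\mathbf{Z}\mid_{\{i^*\}\cup B'}\neq \mathbf{0}\mathbf{1}^b$, and every sample drawn from $\{\mathbf{U}'\}\cup\cS'\cup\cT'$ lands in $\cY$; the total weight of this set is $3\alpha$, not $0$. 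The threshold $4\alpha$ in Step (vi) is chosen precisely to leave room above this expectation, and the correct argument is a Chernoff bound showing that the empirical fraction stays below $4\alpha$ with high probability. This slip, however, concerns Lemma~\ref{lem:findpicompleteness} rather than Lemma~\ref{lem:hittingT} itself, so your proof of the statement under review is fine modulo the first, cosmetic, point.
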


The proof of the above lemma is similar to the proof of Lemma~\ref{lem:hittingS} and is omitted. Note that from Lemma~\ref{lem:hittingU}, Lemma~\ref{lem:hittingS} and Lemma~\ref{lem:hittingT}, we know that with probability at least $1-o(1)$, the algorithm \findpi has correctly determined the permutation $\pi$ and proceeded to Step (vi). We will finish up the proof of Lemma~\ref{lem:findpicompleteness} using the following lemma.


\begin{lem}
The probability that \findpi outputs {\sc Fail} in Step (vi) (instead of outputting $\pi$) is at most $1/N^3$.

\end{lem}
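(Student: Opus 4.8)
The goal is to show that, conditioned on \findpi having correctly determined $\pi$ and reached Step (vi), the probability of outputting \textsc{Fail} in Step (vi) is at most $1/N^3$. The plan is to analyze the quantity $\size{\cY}/\size{\cX'}$ directly, where $\cY=\{\mathbf{Z}\in\cX' : \mathbf{Z}\mid_{\{i^*\}\cup B'}\neq \mathbf{0}\mathbf{1}^b\}$. First I would recall that since we are in the conditional space where $\pi$ is correctly identified, the vectors $\mathbf{U}',\cS',\cT'$ have been correctly determined, so $i^*=\pi^{-1}(1)$ and $B'=\pi^{-1}(B)$, and the event ``$\mathbf{Z}\mid_{\{i^*\}\cup B'} = \mathbf{0}\mathbf{1}^b$'' for a sample $\mathbf{Z}$ drawn from $D$ is exactly the event ``$\mathbf{Z}_\pi\mid_{\{1\}\cup B} = \mathbf{0}\mathbf{1}^b$''. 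By the definition of $\cP^0_{\mathrm{Gap}}$ (Items (i)--(iv)), the vectors in $\mathrm{Supp}(D_\pi)$ whose restriction to $\{1\}\cup B$ equals $\mathbf{0}\mathbf{1}^b$ are precisely the valid encodings $\fen(\mathbf{z},\mathbf{x}_i)$, which carry total mass $1-3\alpha$; the remaining vectors ($\mathbf{U}$, the $\mathbf{V}_i$'s and the $\mathbf{W}_j$'s) carry total mass $3\alpha$ and all have restriction to $\{1\}\cup B$ different from $\mathbf{0}\mathbf{1}^b$. Hence $\pr_{\mathbf{Z}\sim D}[\mathbf{Z}\in\cY] = 3\alpha$ for each sample.

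Next I would apply a concentration bound. Each sampled vector in $\cX'$ independently lands in $\cY$ with probability exactly $3\alpha$, so $\E[\size{\cY}] = 3\alpha\,\size{\cX'}$. By a Chernoff/Hoeffding bound (for instance Lemma~\ref{lem:hoeffdingineq}), the probability that $\size{\cY} > 4\alpha\,\size{\cX'}$ is at most $\exp(-c\alpha\,\size{\cX'})$ for an absolute constant $c>0$. Since $\size{\cX'} = \Oh(\log^2 N/\alpha)$, this bound is $\exp(-\Omega(\log^2 N)) \leq 1/N^3$ for a suitable choice of the hidden constant in $\size{\cX'}$, which is exactly what is claimed. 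I would also note the (trivial) edge observation that $\cX'$ is drawn fresh and independently of $\cX$, so the conditioning from the earlier lemmas does not interfere with this independence argument — the only thing we use from the conditioning is that $i^*$ and $B'$ are the correct indices, which affects only the description of the event $\mathbf{Z}\in\cY$, not the sampling.

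I do not expect a genuine obstacle here; the one point requiring a little care is making sure the probability $\pr[\mathbf{Z}\in\cY]$ is \emph{exactly} (or at most) $3\alpha$, for which one must check that none of the valid-encoding vectors can accidentally have restriction to $\{1\}\cup B$ equal to something other than $\mathbf{0}\mathbf{1}^b$ — but this is immediate from Definition~\ref{defi:encoding}(i)--(ii), which forces the first coordinate of every valid encoding to be $0$ and its restriction to $B$ to be the all-$1$ vector. Combining this bound with Lemma~\ref{lem:hittingU}, Lemma~\ref{lem:hittingS} and Lemma~\ref{lem:hittingT} via a union bound over the $O(1)$ failure events then completes the proof of Lemma~\ref{lem:findpicompleteness}; since each of these failure probabilities is at most $1/N^3$ (and there are at most, say, four of them), the total failure probability is at most $4/N^3 \le 1/10$ for $N$ large enough, as required.
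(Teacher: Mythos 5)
Your proposal is correct and takes essentially the same approach as the paper: identify that the mass of vectors failing the $\{i^*\}\cup B'$ restriction test is exactly $D(\{\mathbf{U'}\}\cup\cS'\cup\cT')=3\alpha$, and apply a Chernoff-type concentration bound over the $\Oh(\log^2 N/\alpha)$ fresh samples in $\cX'$ to bound the failure probability by $1/N^3$. The paper's proof is a two-sentence version of yours (citing the multiplicative Chernoff bound rather than Hoeffding, an inconsequential difference); you are somewhat more explicit about the conditioning on $\pi$ being correct and about checking via Definition~\ref{defi:encoding}(i)--(ii) that every valid encoding has the right restriction, both of which are implicit in the paper.
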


\begin{proof}
As $D \in \cP_{\mathrm{Gap}}$, from the description of the property, we know that $D(\{\mathbf{U'}\} \cup \cS' \cup \cT')= 3 \alpha$. As $\size{\cX'}= \Oh(\log^2 N/\alpha)$, using the Chernoff bound (Lemma~\ref{lem:cher_bound1}), we have the result.
\end{proof}

Combining the above lemmas, we conclude that with probability at least $9/10$, the algorithm \findpi outputs a correct permutation $\pi$, completing the proof of Lemma~\ref{lem:findpicompleteness}.

To conclude this section, we show that with high probability, we will not output $\pi$ for which too much weight is placed outside the ``encoded part'' of the distribution.


\begin{lem}\label{lem:findpisumfourthtype}
For any distribution $D$ (regardless of whether $D$ is in $\cP_{\mathrm{Gap}}$ or not), the probability that \findpi outputs a permutation $\pi$ for which $D(\{\mathbf{X}:\mathbf{X} \mid_{\{i^*\} \cup B'}=\mathbf{0}\mathbf{1}^b\}) \leq 1-5\alpha$ is at most $1/10$.

\end{lem}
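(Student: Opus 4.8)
\textbf{Proof plan for Lemma~\ref{lem:findpisumfourthtype}.}

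The plan is to condition on the event that Step~(vi) is actually reached (i.e., the algorithm has not already output \textsc{Fail} in Steps~(ii)--(v)), since otherwise no permutation is output and there is nothing to prove. Under this conditioning, the indices $i^*$ and $B'$ are fixed quantities (determined by the first sample multi-set $\cX$), and the permutation $\pi$ that the algorithm is about to output depends only on $\cX$, hence is independent of the fresh multi-set $\cX'$ drawn in Step~(vi). This independence is the crucial structural point: $\cX'$ is a fresh batch of $\Oh(\log^2 N/\alpha)$ i.i.d.\ samples from $D$, and the ``bad'' set $\cY = \{\mathbf Z \in \cX' : \mathbf Z\mid_{\{i^*\}\cup B'} \neq \mathbf 0\mathbf 1^b\}$ is determined by already-fixed indices.

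The key step is a one-sided concentration argument. Suppose, for the sake of the contrapositive, that $\pi$ is such that $p := D(\{\mathbf X : \mathbf X\mid_{\{i^*\}\cup B'} = \mathbf 0\mathbf 1^b\}) \leq 1-5\alpha$, so that the probability a single sample lands in $\cY$ is at least $5\alpha$. Then $\E[\size{\cY}] \geq 5\alpha\size{\cX'}$, and by a Chernoff (or Hoeffding) lower-tail bound --- the paper cites Lemma~\ref{lem:cher_bound1} for exactly this purpose --- the probability that $\size{\cY}/\size{\cX'}$ falls below $4\alpha$ is at most $\exp(-\Omega(\alpha\size{\cX'})) = \exp(-\Omega(\log^2 N))$, which is far smaller than $1/10$ (in fact $o(1/N^3)$) for the chosen sample size $\size{\cX'}=\Oh(\log^2 N/\alpha)$. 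In that case \findpi outputs \textsc{Fail} in Step~(vi) rather than outputting $\pi$. Taking a union bound over the (at most finitely many, but in fact we only need it per fixed outcome of $\cX$) possible permutations $\pi$ with this deficiency property is not even necessary: we argue pointwise, fixing the outcome of $\cX$ first and then bounding the conditional probability over $\cX'$ by $\exp(-\Omega(\log^2 N)) \ll 1/10$, and finally averaging over $\cX$.

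I do not expect a genuine obstacle here; the only point requiring a small amount of care is the independence bookkeeping --- making sure that $i^*$, $B'$ (and hence the candidate $\pi$ and the event defining $\cY$) are measurable with respect to $\cX$ alone, so that the randomness of $\cX'$ is genuinely fresh and the Chernoff bound applies with the stated parameters. One should also note that the statement is vacuously satisfied whenever the algorithm outputs \textsc{Fail} before Step~(vi), and that when $D(\{\mathbf X : \mathbf X\mid_{\{i^*\}\cup B'} = \mathbf 0\mathbf 1^b\}) > 1-5\alpha$ there is nothing to prove either; so the entire content is the single-batch lower-tail estimate on $\size{\cY}$ described above.
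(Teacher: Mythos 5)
Your proposal is correct and takes essentially the same route as the paper: fix $\cX$ (hence $i^*$, $B'$, and the candidate $\pi$), note that $\cX'$ is fresh, and apply a Chernoff bound to conclude that if the mass on $\{\mathbf X : \mathbf X\mid_{\{i^*\}\cup B'}=\mathbf 0\mathbf 1^b\}$ is at most $1-5\alpha$ then $\size{\cY}/\size{\cX'} > 4\alpha$ with high probability, triggering \textsc{Fail}. The only difference is cosmetic: you make the independence bookkeeping explicit (arguing pointwise over $\cX$ and averaging), while the paper leaves it implicit, and your tail bound $\exp(-\Omega(\log^2 N))$ is sharper than the $1/10$ the paper actually invokes; both are harmless refinements.
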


\begin{proof}
Recall the set of vectors $\cY$ as defined in Step (vi) of \findpi: $\cY=\{\mathbf{Z} \in \cX': \mathbf{Z} \mid_{{{i^*} \cup B'}} \neq \mathbf{0}\mathbf{1}^b \}$, where $\cX'$ is the multi-set of (new) samples obtained in Step (vi) of \findpi. Consider a distribution $D$ such that $D(\{\mathbf{X}:\mathbf{X}_{\{i^*\} \cup B'}=\mathbf{0}\mathbf{1}^b\}) \leq 1-5\alpha$. This implies that $\E\left[\size{\cY}/\size{\cX'}\right] \geq 5 \alpha$. As $\size{\cX'} = \Oh(\log^2 N/\alpha)$, using the Chernoff bound (Lemma~\ref{lem:cher_bound1}), we obtain that with probability at least $9/10$, the algorithm \findpi outputs {\sc Fail} in Step (vi), and does not output any permutation $\pi$. This completes the proof.
\end{proof}

\subsection{The upper bound on adaptive testing for property $\cP_{\mathrm{Gap}}$}\label{sec:quadratic_adaptive_ub}

In this subsection, we design the adaptive tester for the property $\cP_{\mathrm{Gap}}$. Given a distribution $D$ over $\{0,1\}^N$, with high probability, \algadpt outputs {\sc Accept} when $D \in \cP_{\mathrm{Gap}}$, and outputs {\sc Reject} when $D$ is far from $\cP_{\mathrm{Gap}}$. The formal adaptive algorithm is presented in \algadpt (see Algorithm~\ref{algo:adaptive}). Note that it has only two adaptive steps.


In the first adaptive step, our tester \algadpt starts by calling the algorithm \findpi (as described in Subsection~\ref{sec:quadratic_determine_permutation}) whose  query complexity is $\tOh(n)$. If $D \in \cP_{\mathrm{Gap}}$, with high probability, \findpi returns the permutation $\pi$ such that $D_{\pi} \in \cP_{\mathrm{Gap}}^0$. Once $\pi$ is known, when we obtain a sample $\mathbf{X}$ from $D$, we can consider it as $\mathbf{X}_\pi$ from $D_{\pi}$. Also, from the structure of the vectors in the support of the distributions in $P^0_{\mathrm{Gap}}$, we can decide whether $X_{\pi}$ is a special vector, that is, $X_{\pi} \in \{\mathbf{U}\} \cup \cS \cup \cT$ or $\mathbf{X}_\pi$ is an encoding vector, that is, $\mathbf{X}_\pi=\fen(\mathbf{z},\mathbf{x})$ for some $\mathbf{z}\in [n]^m$ and $\mathbf{x} \in \{0,1\}^n$. Observe that, in the later case, we can decode any bit of $\mathbf{x}$ (say $\mathbf{x}_j$) by finding $\mathbf{X}_\pi$ projected into $C_j$, which can be done by performing $\Oh(\log n)$ queries.


As the second adaptive step, our algorithm asks for a sequence $\cY$ of $\Oh(n/\eps)$ samples from $D$, that is, from $D_{\pi}$. Let $\cY'\subseteq \cY$ be the sequence of encoding vectors in $\cY$. We now call {\sc Supp-Est}$(\cY', \eps/3)$ (from Lemma~\ref{pre:suppest}), and depending on its output, \algadpt either reports {\sc Accept} or {\sc Reject}. Note that we can execute every query by {\sc Supp-Est}$(\cY', \eps/3)$, by performing $\Oh(\log n)$ queries to the corresponding sample in $\cY'$ as discussed above.

When $D \in \cP_{\mathrm{Gap}}$ (that is, $D_\pi \in \cP_{\mathrm{Gap}}^0$ for the permutation $\pi$), the set of encoding vectors in $D_\pi$ is the encoding of at most $n$ vectors in $\{0,1\}^n$. So, in that case, \algadpt reports {\sc Accept} with high probability. Now consider the case where $D$ is $\eps$-far from $\cP_{\mathrm{Gap}}$. If \algadpt has not rejected $D$ before calling {\sc Supp-Est}$(\cY', \eps/3)$, we will show that the distribution over $\{0,1\}^n$ induced by the vectors decoded from the encoding vectors in $D_{\pi}$ is ${\eps}/3$-far from having support size $n$. Then, \algadpt will still reject $D$ with high probability.


\begin{algorithm}[H]\label{algo:adaptive}
\SetAlgoLined

\caption{\algadpt}
\label{alg:adaptive}
\KwIn{Sample and Query access to a distribution $D$ over $\{0,1\}^N$, and a parameter $\eps \in (0,1)$.}
\KwOut{Either {\sc Accept} or {\sc Reject}.}
\LinesNumbered
\begin{description}
     \item[(i)] Call \findpi. If \findpi returns {\sc Fail}, {\sc Reject}. Otherwise, let $\pi$ be the permutation returned by \findpi. Denote for convenience $i^*=\pi^{-1}(1), B'=\pi^{-1}(B)$, and $C'_j=\pi^{-1}(C_j)$ for every $j \in [n]$.
     
     \item[(ii)] Take a multi-set $\cX$ of $\Oh(1/\eps)$ samples from $D$, and query all the entries of the sampled vectors of $\cX$ to know the vectors of $\cX$ completely. If there is any vector $\mathbf{X}$ for which $\mathbf{X} \mid_{\{i^*\} \cup B'} = \mathbf{0}\mathbf{1}^b$ (according to the permutation $\pi$ obtained from Step (i)) for which $\mathbf{X}_{\pi}$ is not in the image of $\fen$ (i.e. it is not a valid encoding of any vector in $\{0,1\}^n$), {\sc Reject}. Otherwise, proceed to the next step.

    \item[(iii)]Take a sequence of samples $\cY$ such that $\size{\cY}=\Oh(n/\eps)$ from $D$. Now construct the sequence of vectors $\cY'$ such that $\cY'= \{\mathbf{Y} \in \cY: \mathbf{Y} \mid_{\{i^*\} \cup B'} = \mathbf{0} \mathbf{1}^b\}$ by querying the indices corresponding to $\{i^*\} \cup B'$.

    \item[(iv)] Call {\sc Supp-Est}$(\cY', \eps/3)$ (from Lemma~\ref{pre:suppest}), where a query to an index $j$ is simulated by querying the indices of $C'_j$ and decoding the obtained vector with respect to to $\sen$ (that is, checking whether the restriction of the queried vector to $C'_j$ is equal to $\sen(i,0)$ for some $i$, or equal to $\sen(i,1)$ for some i). {\sc Reject} if any of the following conditions hold:
    \begin{itemize}
        \item[(a)] $\size{\cY'}/\size{\cY} \leq 1/2$ (due to the absence of sufficiently many samples in $\cY'$ to apply {\sc Supp-Est}).

        \item[(b)] {\sc Supp-Est}$(\cY', \eps/3)$ queries an index $j$ from some $\mathbf{Y}_i$ corresponding to an invalid encoding of $\mathbf{Y}_i \mid_{C'_j}$ (that is, when $\mathbf{Y}_i \mid_{C'_j}$ is not in the image of $\sen$).
        
        \item[(c)] {\sc Supp-Est}$(\cY',\eps/3)$ outputs {\sc Reject}.
    \end{itemize}

    Otherwise, {\sc Accept}.

\end{description}

\end{algorithm}

\vspace{10pt}

Let us first discuss the query complexity of \algadpt.

\begin{lem}[{\bf Query complexity of \algadpt}]
The query complexity of the adaptive tester \algadpt for testing the property $\cP_{\mathrm{Gap}}$ is $\tOh(N)= \tOh(n)$.    
\end{lem}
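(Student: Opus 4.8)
The plan is straightforward: add up the query cost incurred by each of the four steps of \algadpt, using the parameter settings $N = \Oh(n\log n)$, $k = \Oh(\log n)$, $b = \lfloor \log(\lceil\log kn\rceil)\rfloor + 1 = \Oh(\log\log n)$, and $\alpha = 1/\log n$, and recalling that $\tOh(\cdot)$ suppresses factors polylogarithmic in $n$ and in $1/\eps$. Since $N = \Oh(n\log n)$, showing a bound of $\tOh(N)$ immediately yields $\tOh(n)$.

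First I would dispatch the three easy steps. Step (i) calls \findpi, which by Lemma~\ref{lem:queryfindpi} uses $\tOh(N)$ queries and makes no further queries. Step (ii) draws $\Oh(1/\eps)$ samples and reads each of them in its entirety, for a total of $\Oh(N/\eps) = \tOh(N)$ queries. Step (iii) draws a sequence $\cY$ of $\Oh(n/\eps)$ samples and, for each of them, queries only the $1 + b$ coordinates indexed by $\{i^*\}\cup B'$; since $b = \Oh(\log\log n)$, this contributes $\Oh\!\left(\tfrac{n\log\log n}{\eps}\right) = \tOh(n)$ queries, and this also covers the cost of forming $\cY'$ out of $\cY$.

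The only delicate point will be Step (iv), where we run the support-size estimation subroutine {\sc Supp-Est} of Lemma~\ref{pre:suppest} (with support-size parameter $n$ and proximity $\eps/3$) on the samples of $\cY'$. By that lemma the subroutine makes $\tOh(n/\eps^2)$ queries, but these are queries to \emph{indices of the decoded distribution over $\{0,1\}^n$}, not to $D$ directly: each such index query $j$ is simulated by reading the block $C'_j = \pi^{-1}(C_j)$, which has size $k = \Oh(\log n)$, and decoding it with respect to $\sen$. Hence the simulation inflates the query count by a multiplicative factor $\Oh(k) = \Oh(\log n)$, giving $\tOh(n/\eps^2)\cdot\Oh(\log n) = \tOh(n)$ queries for Step (iv) (the conditions (a)--(c) in Step (iv) use only information already obtained and add no queries). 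Summing the four contributions gives a total of $\tOh(N)$; substituting $N = \Oh(n\log n)$ and absorbing the logarithmic factor into $\tOh(\cdot)$ yields the claimed $\tOh(n)$. The main (minor) obstacle is exactly this $\Oh(\log n)$-per-index-query overhead of the $\sen$/$\gen$ encoding in Step (iv); everything else is immediate bookkeeping from the parameter settings.
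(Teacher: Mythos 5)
Your proposal is correct and follows essentially the same step-by-step accounting as the paper's own proof: Step (i) inherits $\tOh(N)$ from Lemma~\ref{lem:queryfindpi}, Step (ii) costs $\Oh(N/\eps)$ by reading $\Oh(1/\eps)$ samples fully, Step (iii) costs $\Oh(n\log\log n/\eps)$ by querying only $\{i^*\}\cup B'$ per sample, and Step (iv) multiplies the $\tOh(n)$ queries of {\sc Supp-Est} by the $\Oh(\log n)$ cost of simulating each index query via the block $C'_j$. The reasoning matches the paper exactly.
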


\begin{proof}
Note that \algadpt calls the algorithm \findpi in Step (i). Following the query complexity lemma of \findpi (Lemma~\ref{lem:queryfindpi}), we know that \findpi performs $\tOh(N)$ queries.

For every sample taken in Step (ii), the sampled vectors of the multi-set $\cX$ are queried completely. Since we take $\Oh(1/\eps)$ samples, this step requires $\Oh(N/\eps) = \widetilde{\Oh}(n/\eps)$ queries in total.  
    


Then in Step (iii), \algadpt takes a multi-set $\cY$ of $\Oh(n/\eps)$ samples, and queries for the indices in $\{i^*\} \cup B'$ to obtain the vectors in $\cY'$, which takes $\Oh(n \log \log kn/\eps)$ queries. Finally, in Step (iv), \algadpt calls the algorithm {\sc Supp-Est}, which performs $\tOh(n)$ queries (following Lemma~\ref{pre:suppest}), each of them simulated by $\Oh(\log n)$ queries to some $\mathbf{Y}_i \mid_{{C'_j}}$. Thus, in total, \algadpt performs $\tOh(N) = \tOh(n)$ queries.
\end{proof}

Now we prove the correctness of \algadpt. We will start with the completeness proof.

\begin{lem}[{\bf Completeness of \algadpt}]
Let $D$ be a distribution defined over $\{0,1\}^N$. If $D \in \cP_{\mathrm{Gap}}$, then the algorithm \algadpt will output {\sc Accept} with probability at least $2/3$.
\end{lem}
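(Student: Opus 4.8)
The plan is to condition on \findpi succeeding and then check that, when $D\in\cP_{\mathrm{Gap}}$, every rejection branch of \algadpt is avoided with high probability. Fix a permutation $\pi$ with $D_\pi\in\cP^0_{\mathrm{Gap}}$, and let $\mathbf{x}_1,\dots,\mathbf{x}_n\in\{0,1\}^n$ be the at most $n$ vectors supplied by Item~(iv) of the definition of $\cP^0_{\mathrm{Gap}}$, so that every support vector of $D_\pi$ outside $\{\mathbf{U}\}\cup\cS\cup\cT$ equals $\fen(\mathbf{z},\mathbf{x}_i)$ for some $\mathbf{z}\in[n]^m$ and $i\in[n]$. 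First I would invoke Lemma~\ref{lem:findpicompleteness}: with probability at least $9/10$, \findpi returns a permutation $\pi$ for which $D_\pi\in\cP^0_{\mathrm{Gap}}$, and I condition on this event for the rest of the argument (the complementary event is charged to the final union bound, and it covers both a {\sc Fail} output and an incorrect permutation).

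The key structural observation I would then establish is that a support vector $\mathbf{X}$ of $D$ satisfies $\mathbf{X}\mid_{\{i^*\}\cup B'}=\mathbf{X}_\pi\mid_{\{1\}\cup B}=\mathbf{0}\mathbf{1}^b$ \emph{if and only if} $\mathbf{X}_\pi$ is a valid encoding. Indeed, among the special vectors $\mathbf{U}$ and each $\mathbf{V}_i$ begin with a $1$, while $\mathbf{W}_i\mid_B=b(i)$ is the binary representation of some $i<\lceil\log kn\rceil$, which by the choice of $b=\lfloor\log(\lceil\log kn\rceil)\rfloor+1$ is never $\mathbf{1}^b$; hence no special vector carries the signature $\mathbf{0}\mathbf{1}^b$ on $\{1\}\cup B$, whereas by Definition~\ref{defi:encoding} every valid encoding does. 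Combined with Item~(iv), this shows that any sampled vector in Step~(ii) carrying the signature $\mathbf{0}\mathbf{1}^b$ on $\{i^*\}\cup B'$ is in the image of $\fen$, so \algadpt does not reject in Step~(ii). The same dichotomy shows that the subsequence $\cY'$ formed in Step~(iii) consists precisely of samples that, after applying $\pi$, are valid encodings of some $\mathbf{x}_i$.

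For Steps~(iii)--(iv) I would argue as follows. The encoding vectors carry total mass $1-3\alpha$ under $D$, which exceeds $3/4$ once $n$ is large enough (as $\alpha=1/\log n$); a Chernoff bound then gives $\size{\cY'}/\size{\cY}>1/2$ with probability at least $9/10$, so branch~(a) of Step~(iv) does not reject and $\cY'$ is large enough to run {\sc Supp-Est}. Since every $\mathbf{Y}\in\cY'$ has $\mathbf{Y}\mid_{C'_j}=\mathbf{Y}_\pi\mid_{C_j}=\sen(\gen(\mathbf{z})_j,(\mathbf{x}_i)_j)$ for the appropriate $\mathbf{z},i$, each such restriction is in the image of $\sen$, so branch~(b) never triggers; and because the $2n$ column codes $\sen(\cdot,0),\sen(\cdot,1)$ are pairwise distinct (Property~(i) of $\sen$), the decoding used to simulate a query of {\sc Supp-Est} to an index $j$ returns exactly the bit $(\mathbf{x}_i)_j$. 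Thus the invocation of {\sc Supp-Est}$(\cdot,\eps/3)$ is verbatim a run on i.i.d.\ samples from the distribution over $\{0,1\}^n$ obtained by decoding the encoding part of $D_\pi$, whose support lies in $\{\mathbf{x}_1,\dots,\mathbf{x}_n\}$ and hence has size at most $n$. By Lemma~\ref{pre:suppest}, {\sc Supp-Est} then outputs {\sc Accept} with probability at least $9/10$, so branch~(c) does not reject.

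Finally I would collect the three failure events---\findpi not returning a correct $\pi$, $\size{\cY'}/\size{\cY}\le 1/2$, and {\sc Supp-Est} spuriously rejecting---each of probability at most $1/10$, and conclude by a union bound that \algadpt reaches the terminal {\sc Accept} with probability at least $1-1/10-1/10-1/10-o(1)>2/3$. I expect the only real obstacle to be the coordinate bookkeeping in the second and third paragraphs: verifying that under the permutation $\pi$ returned by \findpi the index sets $\{i^*\}\cup B'$ and $C'_j$ genuinely play the roles of $\{1\}\cup B$ and $C_j$ from Definition~\ref{defi:encoding}, so that the simulated {\sc Supp-Est} queries act on a bona fide distribution over $\{0,1\}^n$ of support size at most $n$; the remaining estimates are routine concentration and union-bound arguments.
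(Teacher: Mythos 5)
Your proof is correct and follows essentially the same route as the paper: condition on \findpi returning a correct $\pi$ (Lemma~\ref{lem:findpicompleteness}, probability $\ge 9/10$), argue that Step~(ii) and branch~(iv)(b) can never fire because every sample with the $\mathbf{0}\mathbf{1}^b$ signature on $\{i^*\}\cup B'$ is a valid encoding, use a Chernoff bound to get $|\cY'|/|\cY|>1/2$ with probability $\ge 9/10$, invoke Lemma~\ref{pre:suppest} for branch~(iv)(c), and union-bound. The one place you give more detail than the paper is the dichotomy argument showing that no special vector $\mathbf{U},\mathbf{V}_i,\mathbf{W}_j$ can carry the signature $\mathbf{0}\mathbf{1}^b$ (the $\mathbf{W}_j$ case relying on $\lceil\log kn\rceil\le 2^b-1$ from the choice of $b$); the paper compresses this to ``by the definition of $\cP_{\mathrm{Gap}}$,'' but your spelled-out version is accurate and arguably cleaner.
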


\begin{proof}
Consider a distribution $D \in \cP_{\mathrm{Gap}}$. From the completeness lemma of \findpi (Lemma~\ref{lem:findpicompleteness}), we infer that with probability at least $9/10$, \findpi returns the correct permutation $\pi$ in Step (i). Then, by the definition of $\cP_{\mathrm{Gap}}$, the algorithm \algadpt can never encounter any samples with invalid encodings in Step (ii) which could cause it to {\sc Reject}. Thus, with probability at least $9/10$, the algorithm proceeds, with the correct permutation $\pi$, to Step (iii) and Step (iv).
    

As $D \in \cP_{\mathrm{Gap}}$, $D(\{\mathbf{U'}\} \cup \cS' \cup \cT') = 3 \alpha$. Since $\size{\cY}= \Oh(n/\eps)$, using the Chernoff bound (Lemma~\ref{lem:cher_bound2} (ii)), we can say that, with probability at least $9/10$, $\size{\cY'}/\size{\cY} \geq 1/2$. Moreover, as the vectors in $\cY'$ are valid encodings with respect to the function $\fen$ of at most $n$ vectors from $[2n]$, following the support estimation upper bound lemma (Lemma~\ref{pre:suppest}), we obtain that {\sc Supp-Est} outputs {\sc Accept} with probability at least $9/10$. Combining these, we conclude that \algadpt outputs {\sc Accept} with probability at least $2/3$.
\end{proof}

Now we prove that when $D$ is $\eps$-far from $\cP_{\mathrm{Gap}}$, \algadpt will output {\sc Reject} with probability at least $2/3$.

\begin{lem}[{\bf Soundness of \algadpt}]\label{lem:sound}
    Let $\eps \in (0,1)$ be a proximity parameter. Assume that $D$ is a distribution defined over $\{0,1\}^N$ such that $D$ is $\eps$-far from $\cP_{\mathrm{Gap}}$. Then \algadpt outputs {\sc Reject} with probability at least $2/3$.
\end{lem}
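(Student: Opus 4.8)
\textbf{Proof proposal for Lemma~\ref{lem:sound} (Soundness of \algadpt).}

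The plan is to argue the contrapositive at the level of the distribution: if \algadpt accepts $D$ with probability more than $1/3$, then $D$ is $\eps$-close to $\cP_{\mathrm{Gap}}$. To do this I would track what an accepting run of \algadpt certifies about $D$, and then construct an explicit distribution $D^* \in \cP_{\mathrm{Gap}}$ witnessing $d_{EM}(D, D^*) \le \eps$. First I would observe that for \algadpt to accept with non-negligible probability, \findpi must output \emph{some} permutation $\pi$ (not \textsc{Fail}) on a non-negligible fraction of runs; fix such a $\pi$. By Lemma~\ref{lem:findpisumfourthtype}, with probability at least $9/10$ any $\pi$ that is output satisfies $D(\{\mathbf{X}: \mathbf{X}\mid_{\{i^*\}\cup B'} = \mathbf{0}\mathbf{1}^b\}) > 1 - 5\alpha$, i.e., all but a $5\alpha$-fraction of the mass of $D_\pi$ sits on vectors whose $\{1\}\cup B$ coordinates read $\mathbf{0}\mathbf{1}^b$ — these are the ``candidate encoding'' vectors. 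Conditioned on this, Step (ii) not rejecting means (up to the probability that $\cX$ misses a bad vector of mass $\ge \eps/c$, which is small by a Chernoff/union bound over the $\Oh(1/\eps)$ samples) that essentially all the candidate-encoding mass of $D_\pi$ lies on \emph{valid} encodings $\fen(\mathbf{z},\mathbf{x})$; more precisely, the mass of $D_\pi$ on candidate-encoding vectors that are not valid encodings is at most $O(\eps)$. Similarly Step (iv)(a) not rejecting, together with a Chernoff bound on $\size{\cY'}/\size{\cY}$, forces the valid-encoding mass to be at least $1 - O(\alpha) \ge 1 - \eps/10$ (for $n$ large enough, $\alpha = 1/\log n$).

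Next I would pass to the induced distribution on $\{0,1\}^n$. Let $D_\pi^{\mathrm{enc}}$ be the distribution over $\{0,1\}^n$ obtained from $D_\pi$ by conditioning on landing in the image of $\fen$ and mapping $\fen(\mathbf{z},\mathbf{x}) \mapsto \mathbf{x}$ (the map $\mathbf{x}$-part is well-defined by the distance properties in Observation~\ref{obs:FEproperties}(i)–(ii), which guarantee distinct $\mathbf{x}$'s give distinct images). The key step is: if \algadpt does not reject in Step (iv)(b)–(c), then by the support-estimation upper bound (Lemma~\ref{pre:suppest}) and the correctness of the query simulation (each index $j$ of $\mathbf{x}$ is recovered correctly from $\mathbf{Y}_i\mid_{C'_j}$ by decoding against $\sen$, using Property (i) of $\sen$ so that valid $C_j$-blocks unambiguously encode a bit), with probability $\ge 9/10$ the distribution $D_\pi^{\mathrm{enc}}$ is $\eps/3$-close (in variation distance) to a distribution of support size at most $n$. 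Here the samples that are \emph{not} valid encodings but slip past Step (ii)/(iv)(a) only contribute $O(\alpha + \eps)$ extra mass, which I would absorb into the proximity budget. So with the chosen fixed $\pi$, we have: $D_\pi$ is within $O(\alpha) + \eps/3 + O(\eps)$ of a distribution $D^{0,\mathrm{pre}}$ supported on $\{\mathbf{U}\}\cup\cS\cup\cT$ plus valid encodings of at most $n$ vectors — but this is not quite $\cP_{\mathrm{Gap}}^0$ yet, since the special-vector masses need not be exactly $\alpha, \alpha/b, \alpha/|\cT|$.

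To finish I would build $D^0 \in \cP_{\mathrm{Gap}}^0$ from $D^{0,\mathrm{pre}}$ by moving the (small, $O(\alpha)$ total) discrepancy in the special-vector masses onto and off of the special vectors to hit the exact required values, and truncating the encoding part to at most $n$ distinct underlying $\mathbf{x}$'s (the support-size-$n$ witness from the previous paragraph); each such correction costs at most $O(\alpha) + \eps/3$ in EMD since EMD is bounded by half the $\ell_1$ distance plus the cost of re-encoding at most $\eps/3$ worth of mass (using Observation~\ref{obs:FEproperties}(iii) to bound the Hamming cost of changing the underlying $\mathbf{x}$). Setting $D^* := (D^0)_{\pi^{-1}}$ gives $D^* \in \cP_{\mathrm{Gap}}$ with $d_{EM}(D, D^*) = d_{EM}(D_\pi, D^0) \le \eps$ (after choosing the constants in the $\Oh(\cdot)$'s in \algadpt appropriately and $n$ large, so that $O(\alpha) + \eps/3 + O(\eps) \le \eps$ — this is where the slack in $\alpha = 1/\log n \to 0$ and the freedom in the hidden constants of $\Oh(n/\eps)$ samples is used). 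Taking the union bound over the failure events (\findpi outputting a ``bad'' $\pi$, $\cX$ or $\cY$ being unrepresentative, {\sc Supp-Est} erring) — each $\le 1/10$ — shows that if $D$ is genuinely $\eps$-far from $\cP_{\mathrm{Gap}}$ then every one of these good events must fail, contradiction; hence \algadpt rejects with probability $\ge 2/3$. The main obstacle I anticipate is the bookkeeping in the second-to-last step: carefully accounting for the mass on ``invalid but undetected'' encodings and on mis-proportioned special vectors, and showing these can be repaired within the EMD budget $\eps$ rather than merely $O(\eps)$ — this requires choosing the constants in \algadpt's sample sizes and the sub-proximity parameters ($\eps/3$ in {\sc Supp-Est}, $\eps/20$-type slacks) so the total telescopes to at most $\eps$.
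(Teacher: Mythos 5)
Your approach is essentially the same as the paper's: assume the algorithm doesn't reject with high probability, decode the encodings via $\pi$, invoke the support-estimation guarantee, and re-encode to produce a witness in $\cP_{\mathrm{Gap}}$ close to $D$. The paper organizes this as a direct contradiction (show that the decoded distribution $\dhash$ must be $\eps/3$-far from support-size-$n$, so {\sc Supp-Est} rejects), whereas you argue the contrapositive, but the underlying construction is identical. That said, there are two places where your sketch has real gaps.

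First, you claim that {\sc Supp-Est} accepting implies that $D_\pi^{\mathrm{enc}}$ is ``$\eps/3$-close \emph{in variation distance} to a distribution of support size at most $n$.'' This is not what Lemma~\ref{pre:suppest} gives: in the huge object model, ``$\eps$-far'' is in the EMD metric, so {\sc Supp-Est}'s guarantee is an EMD dichotomy, not a variation-distance one. The conclusion you need is the weaker EMD closeness. Your subsequent step (lifting the flow via Observation~\ref{obs:FEproperties}(iii)) actually requires exactly this EMD flow over $\{0,1\}^n$ — you lift it coordinate-block by coordinate-block through $\fen$, using $d_H(\fen(\mathbf{z},\mathbf{x}),\fen(\mathbf{z},\mathbf{y}))\le d_H(\mathbf{x},\mathbf{y})$, to obtain an EMD flow over $\{0,1\}^N$. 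So the correction makes the argument match the paper's Lemma~\ref{cl:emd-far}; as written, the premise is stronger than what {\sc Supp-Est} yields.

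Second, your treatment of invalid-but-undetected encodings is too vague. The distribution that {\sc Supp-Est} actually sees ($\dhash$ in the paper) is not obtained ``by conditioning on landing in the image of $\fen$'': $\cY'$ is formed by checking only the $b+1$ coordinates in $\{i^*\}\cup B'$, so $\cY'$ can contain samples that are not valid encodings, and {\sc Supp-Est} only notices this if it happens to query a bad $C'_j$-block. You say you would ``absorb this into the proximity budget,'' but that is precisely the non-trivial part: the paper handles it by constructing an auxiliary distribution $D_1$ that replaces every invalid encoding with a nearby valid one (decoding locally-valid blocks and filling in the rest arbitrarily), proving $d_{EM}(D,D_1)\le\eps/10$ via Observation~\ref{obs:v-enc}, and then observing that the rejection probability of \algadpt on $D$ dominates that on $D_1$ — so one can safely reason about $D_1$ instead. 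Without this explicit coupling, the passage from ``Step (ii)/(iv)(b) didn't fire'' to ``the decoded distribution is close to something nice'' does not go through cleanly; it needs the $D_1$ construction to make the decoding well-defined and the budget accounting rigorous (the paper's final tally is $\eps/3 + \eps/10 + 6\alpha \le 2\eps/3 < \eps$).
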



From the description of \algadpt (Algorithm~\ref{algo:adaptive}), if the tester reports {\sc Reject} before executing all the steps of {{\sc Supp-Est}}$(\cY',\eps/3)$ in Step (iv),  then we are done. So, let us assume that \algadpt executes all the steps of {{\sc Supp-Est}}$(\cY',\eps/3)$. Let $\cY'$ be the set of samples from a distribution $\dhash$ over $\{0,1\}^n$ as it is presented to {{\sc Supp-Est}}$(\cY',\eps/3)$. Note that $\dhash$ is unknown and we are accessing $\dhash$ indirectly via decoding samples from $D$ over $\{0,1\}^N$. From the correctness {{\sc Supp-Est}}$(\cY',\eps/3)$, we will be done with the proof of Lemma~\ref{lem:sound} by proving the following lemma.

\begin{lem}[{\bf Property of the decoded distribution}]\label{lem:far-D-star}
$\dhash$ is $\eps/3$-far from having support size at most $n$. 
\end{lem}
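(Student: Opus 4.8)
\textbf{Proof strategy for Lemma~\ref{lem:far-D-star}.} The plan is to argue by contradiction. Suppose $\dhash$ is \emph{not} $\eps/3$-far from having support size at most $n$, so there is a distribution $D'$ over $\{0,1\}^n$ with $|\mathrm{Supp}(D')|\le n$ and $d_{EM}(\dhash,D')\le\eps/3$. We will then build a distribution $\tilde D\in\cP_{\mathrm{Gap}}$ with $d_{EM}(D,\tilde D)<\eps$, contradicting the assumption that $D$ is $\eps$-far from $\cP_{\mathrm{Gap}}$. Since we have reached the {\sc Supp-Est} call, \findpi returned a permutation $\pi$; write $\hat D=D_\pi$. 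As EMD is invariant under applying a fixed permutation to both arguments, it suffices to produce $\tilde D^0\in\cP^0_{\mathrm{Gap}}$ with $d_{EM}(\hat D,\tilde D^0)<\eps$ and set $\tilde D=(\tilde D^0)_{\pi^{-1}}$. We rely on exactly two structural facts about the run, both of which hold with high probability (their failure probabilities being folded into the overall soundness bound of Lemma~\ref{lem:sound}): by Lemma~\ref{lem:findpisumfourthtype}, applied to the event that \findpi output this particular $\pi$, $\hat D$ places mass more than $1-5\alpha$ on vectors $\mathbf{X}$ with $\mathbf{X}\mid_{\{1\}\cup B}=\mathbf{0}\mathbf{1}^b$; and since Step~(ii) of \algadpt reads $\Oh(1/\eps)$ samples in full and rejects on any prefix-$\mathbf{0}\mathbf{1}^b$ vector outside the image of $\fen$, for a suitably large absolute constant $C_1$ (fixing the hidden constant in Step~(ii) accordingly) non-rejection means $\hat D$ places mass less than $\eps/C_1$ on the set $\mathcal{B}$ of such ``bad'' vectors.

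Decoding now yields a distribution $D_0^{\#}$ over $\{0,1\}^n$: restrict $\hat D$ to the valid encodings (total mass $Q$), map $\fen(\mathbf{z},\mathbf{x})\mapsto\mathbf{x}$ — this is well defined because Property~(i) of $\sen$ recovers the secret bit from each block — and renormalize by $Q$. Because the prefix region of $\hat D$ outside the valid encodings has mass at most $\eps/C_1$, the distribution $\dhash$ actually received by {\sc Supp-Est} (namely $\hat D$ conditioned on the prefix being $\mathbf{0}\mathbf{1}^b$ and then decoded) satisfies $d_{EM}(\dhash,D_0^{\#})\le d_{TV}(\dhash,D_0^{\#})=O(\eps/C_1)$. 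Hence by the triangle inequality $d_{EM}(D_0^{\#},D')\le\eps/3+O(\eps/C_1)$; fix a flow $g=\{g_{\mathbf{x}\mathbf{y}}\}$ realizing this bound, so that $\sum_{\mathbf{y}}g_{\mathbf{x}\mathbf{y}}=D_0^{\#}(\mathbf{x})$ and $\sum_{\mathbf{x}}g_{\mathbf{x}\mathbf{y}}=D'(\mathbf{y})$.

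Next, construct $\tilde D^0$ by putting mass $\alpha$ on $\mathbf{U}$, $\alpha/b$ on each $\mathbf{V}_i$, and $\alpha/|\cT|$ on each $\mathbf{W}_j$ (total $3\alpha$), and spreading the remaining $1-3\alpha$ over encodings as follows: for every valid encoding $\mathbf{V}=\fen(\mathbf{z},\mathbf{x})\in\mathrm{Supp}(\hat D)$ and every $\mathbf{y}\in\mathrm{Supp}(D')$, route mass $(1-3\alpha)\tfrac{\hat D(\mathbf{V})}{Q}\cdot\tfrac{g_{\mathbf{x}\mathbf{y}}}{D_0^{\#}(\mathbf{x})}$ to $\fen(\mathbf{z},\mathbf{y})$. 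The marginal constraints make the total encoding mass exactly $1-3\alpha$, and every encoding vector of $\tilde D^0$ decodes to a point of $\mathrm{Supp}(D')$; so, recalling that condition~(iv) of $\cP^0_{\mathrm{Gap}}$ permits arbitrarily many distinct $\mathbf{z}$'s and only bounds the number of distinct secrets by $n$, we get $\tilde D^0\in\cP^0_{\mathrm{Gap}}$. To bound $d_{EM}(\hat D,\tilde D^0)$, first replace $\hat D$ by a distribution $\hat D^{\mathrm{resc}}$ agreeing with $\tilde D^0$ on the special vectors and carrying the encoding part of $\hat D$ renormalized to mass $1-3\alpha$; since the prefix mass of $\hat D$ is within $O(\alpha)$ of $1-3\alpha$ and $\hat D(\mathcal{B})\le\eps/C_1$, this costs $d_{EM}(\hat D,\hat D^{\mathrm{resc}})\le d_{TV}(\hat D,\hat D^{\mathrm{resc}})=O(\alpha+\eps/C_1)$. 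The routing above is then a transportation plan from $\hat D^{\mathrm{resc}}$ to $\tilde D^0$ of cost $\sum_{\mathbf{V},\mathbf{y}}(\text{mass routed})\cdot d_H(\fen(\mathbf{z},\mathbf{x}),\fen(\mathbf{z},\mathbf{y}))$; by Observation~\ref{obs:FEproperties}(iii) each such Hamming distance is at most $d_H(\mathbf{x},\mathbf{y})$, and summing and regrouping by $\mathbf{x}$ collapses the total to $(1-3\alpha)\,d_{EM}(D_0^{\#},D')\le\eps/3+O(\eps/C_1)$. Altogether $d_{EM}(D,\tilde D)=d_{EM}(\hat D,\tilde D^0)\le\eps/3+O(\alpha+\eps/C_1)$, which is strictly below $\eps$ once $C_1$ is a large enough absolute constant and $n$ (hence $1/\alpha$) is large enough, giving the desired contradiction.

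\textbf{Main obstacle.} The crux — and the point where the design of $\cP_{\mathrm{Gap}}^0$ is essential — is keeping the transportation cost small without ever being forced to ``re-randomize'' the $\gen$-part of an encoding: two valid encodings $\fen(\mathbf{z},\mathbf{x})$ and $\fen(\mathbf{z}',\mathbf{x})$ with $\mathbf{z}\neq\mathbf{z}'$ can be at normalized Hamming distance close to $1$, so moving mass across distinct $\mathbf{z}$-blocks would be prohibitively expensive. The construction avoids this by leaving each encoding's $\mathbf{z}$ untouched and editing only the secret $\mathbf{x}$ along the support-shrinking flow $g$, which is legitimate precisely because condition~(iv) of $\cP_{\mathrm{Gap}}^0$ imposes no bound on the number of $\mathbf{z}$'s and because Observation~\ref{obs:FEproperties}(iii) supplies the one-sided estimate $d_H(\fen(\mathbf{z},\mathbf{x}),\fen(\mathbf{z},\mathbf{y}))\le d_H(\mathbf{x},\mathbf{y})$; checking that $\hat D(\mathcal{B})$ and the various $O(\alpha)$ discrepancies are genuinely negligible compared with $\eps$ is the remaining, purely quantitative, bookkeeping.
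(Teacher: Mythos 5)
Your proposal takes essentially the same route as the paper's proof: the same contradiction set-up, the same lifting of a support-$n$ flow on $\{0,1\}^n$ to a flow on $\{0,1\}^N$, the same reliance on Observation~\ref{obs:FEproperties}(iii) for the one-sided cost estimate, and, crucially, the same realization that the transportation plan must leave the $\mathbf{z}$-part of each encoding fixed and only edit the secret $\mathbf{x}$, which is legitimate because condition~(iv) of $\cP^0_{\mathrm{Gap}}$ caps the number of distinct secrets but not the number of distinct $\mathbf{z}$'s. The minor difference in the flow construction --- you scale by the source mass $D_0^{\#}(\mathbf{x})$, whereas the paper scales by the destination mass $D_1^{dec}(\mathbf{y})$ --- is inconsequential; both give a valid transportation plan with the same cost bound.

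The one place where your argument is more casual than the paper's is in the treatment of samples with the correct prefix that are nonetheless not valid encodings, and, relatedly, in what $\dhash$ actually is. You write that $\dhash$ is ``$\hat D$ conditioned on the prefix being $\mathbf{0}\mathbf{1}^b$ and then decoded'' and that $d_{EM}(\dhash,D_0^{\#})\le d_{TV}(\dhash,D_0^{\#})=O(\eps/C_1)$. But the conditional distribution can land on vectors that are not in the image of $\fen$, and ``decode'' is undefined there; on the actual run of \algadpt such a sample does not contribute a value to the stream fed to {\sc Supp-Est}, it instead triggers the Step~(iv)(b) rejection path. So, as stated, $\dhash$ is not a cleanly defined distribution on $\{0,1\}^n$, and the TV bound is being applied to a slightly informal object. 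The paper resolves exactly this by introducing $D_1$ (replace every invalid-encoding vector by an arbitrary nearby valid one), showing $d_{EM}(D,D_1)\le\eps/10$, defining $\dhash:=D_1^{dec}$ --- which is genuinely well-defined --- and then adding a small coupling observation that rejection on $D$ in Step~(iv) is at least as likely as rejection on $D_1$, so analyzing $D_1$ suffices. Your sketch would become fully rigorous by inserting that same $D_1$ device (or an equivalent one); the key ideas and the quantitative bookkeeping you lay out otherwise match the intended argument.
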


{We prove the above lemma using a series of claims.}
Let $D$ be a distribution which is $\eps$-far from $\cP_{\mathrm{Gap}}$, and $\mathcal{V}$ denote the set $\{\mathbf{X} \in \mbox{Supp}(D) : \mathbf{X} \mid_{\{i^*\}\cup B'}=\mathbf{0}\mathbf{1}^b\}$, and let us define $\mathcal{U}= \mbox{Supp}(D) \setminus \mathcal{V}$. Let us start with the following observation.


\begin{obs}\label{obs:sumdothervec}
$D(\mathcal{U}) \leq 5\alpha$, unless the algorithm \algadpt has rejected with probability at least $1-1/N^3$ in Step (i).
\end{obs}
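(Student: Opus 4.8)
The plan is to argue by contraposition on the behaviour of \findpi in Step (i): either \findpi outputs {\sc Fail} (in which case \algadpt rejects in Step (i) with probability $1$), or it outputs some permutation $\pi$, and I claim that any $\pi$ it outputs must satisfy $D(\mathcal{U}) \le 5\alpha$ except with probability at most $1/N^3$. The key tool is Lemma~\ref{lem:findpisumfourthtype}, which says that for \emph{any} distribution $D$, the probability that \findpi outputs a permutation $\pi$ for which $D(\{\mathbf{X}:\mathbf{X}\mid_{\{i^*\}\cup B'}=\mathbf{0}\mathbf{1}^b\}) \le 1-5\alpha$ is at most $1/10$. The quantity $D(\{\mathbf{X}:\mathbf{X}\mid_{\{i^*\}\cup B'}=\mathbf{0}\mathbf{1}^b\})$ is exactly $D(\mathcal{V}) = 1 - D(\mathcal{U})$, so the bad event of Lemma~\ref{lem:findpisumfourthtype} is precisely ``\findpi outputs a permutation and $D(\mathcal{U}) > 5\alpha$''. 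That already gives a bound of $1/10$; to upgrade the failure probability to $1/N^3$, I would observe that the $1/10$ bound in Lemma~\ref{lem:findpisumfourthtype} comes from a single application of the Chernoff bound to the multi-set $\cX'$ of size $\Oh(\log^2 N/\alpha)$ sampled in Step (vi) of \findpi, where the expected fraction of ``$\cY$-type'' samples is at least $5\alpha$ while the acceptance threshold is $4\alpha$; since $\size{\cX'} = \Oh(\log^2 N/\alpha)$, the Chernoff bound actually yields a failure probability of $e^{-\Omega(\alpha \cdot \size{\cX'})} = e^{-\Omega(\log^2 N)} \le 1/N^3$ for a suitable choice of the hidden constant in the sample size.

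Concretely, I would proceed as follows. First, note that if \findpi outputs {\sc Fail}, then by Step (i) of \algadpt the tester rejects deterministically, so there is nothing to prove in that case. Second, suppose \findpi outputs a permutation $\pi$; let $\cX'$ be the multi-set of $\Oh(\log^2 N/\alpha)$ samples taken in Step (vi) of \findpi and $\cY = \{\mathbf{Z}\in\cX' : \mathbf{Z}\mid_{\{i^*\}\cup B'}\ne \mathbf{0}\mathbf{1}^b\}$ the set used in that step. If $D(\mathcal{U}) > 5\alpha$, then the probability that a sample lands in $\mathcal{U}$ (equivalently, contributes to $\cY$) is $D(\mathcal{U}) > 5\alpha$, so $\E[\size{\cY}/\size{\cX'}] > 5\alpha$; since \findpi only outputs $\pi$ (rather than {\sc Fail}) when $\size{\cY}/\size{\cX'} \le 4\alpha$, a Chernoff bound on the sum of the $\size{\cX'}$ independent indicators shows this happens with probability at most $e^{-\Omega(\alpha\,\size{\cX'})} \le 1/N^3$. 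Taking the contrapositive: conditioned on \findpi outputting a permutation, we have $D(\mathcal{U}) \le 5\alpha$ except with probability at most $1/N^3$, which is exactly the statement of the observation.

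I do not anticipate a serious obstacle here — this is essentially a re-reading of Lemma~\ref{lem:findpisumfourthtype} with the constants tracked carefully, combined with the trivial case analysis on whether \findpi fails. The one point that needs mild care is making sure the probability bound one wants ($1/N^3$) is genuinely what the Chernoff bound delivers given the sample size $\Oh(\log^2 N/\alpha) = \Oh(\log^3 N)$: the deviation being controlled is of order $\alpha$ relative to a mean of order $\alpha$, over $\Theta(\log^3 N)$ trials, giving exponent $\Theta(\alpha^2 \cdot \log^3 N / \alpha) = \Theta(\alpha \log^3 N) = \Theta(\log^2 N)$, which comfortably beats $3\log N$. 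Citing Lemma~\ref{lem:cher_bound1} (the Chernoff bound used elsewhere in \findpi's analysis) for this step keeps the argument uniform with the rest of Subsection~\ref{sec:quadratic_determine_permutation}.
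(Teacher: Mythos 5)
Your proposal is correct and follows essentially the same route as the paper's one-line proof, which simply cites Lemma~\ref{lem:findpisumfourthtype}. You also correctly notice that Lemma~\ref{lem:findpisumfourthtype} as written only guarantees a failure probability of $1/10$ rather than the $1/N^3$ claimed in the observation, and you close that gap by re-running the multiplicative Chernoff bound (Lemma~\ref{lem:cher_bound1}) over the $\size{\cX'}=\Oh(\log^2 N/\alpha)$ fresh samples of Step (vi) of \findpi, conditioned on the output permutation $\pi$ (and hence $i^*, B'$) being fixed; the exponent there is $\Theta(\alpha\cdot\size{\cX'})=\Theta(\log^2 N)$, which comfortably exceeds $3\log N$ once the hidden constant in the sample size is taken large enough. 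A minor remark: the intermediate expression ``$\Theta(\alpha^2\cdot\log^3 N/\alpha)$'' is a slightly garbled rendering of $\Theta(\mu\cdot\delta'^2)$ with $\mu=\Theta(\alpha\size{\cX'})$ and constant relative deviation $\delta'$, but your final $\Theta(\log^2 N)$ is right and matches what the paper's cited bound delivers.
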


\begin{proof}
Since \algadpt in Step (i) invokes the algorithm \findpi, this follows immediately from Lemma~\ref{lem:findpisumfourthtype}.
\end{proof}

Let $\pi$ be the permutation returned by \findpi. Now assume $\mathcal{V}^{\mbox{inv}} \subseteq \mathcal{V}$ denotes the following set of vectors:
$$\mathcal{V}^{\mbox{inv}} = \{\mathbf{X} \in \mathcal{V}: \mathbf{X}_{\pi} \neq \fen(\mathbf{z},\mathbf{x}) \ \mbox{for all} \ \mathbf{z} \in [n]^m, \mathbf{x} \in \{0,1\}^n\}$$

For every vector $\mathbf{V} \in \mathcal{V}^{\mbox{inv}}$, let $\Gamma'_{\mathbf{V}}=\{j\in [n]: \mathbf{V}\mid_{C'_j} \ \mbox{is not in the image of} \ \sen\}$ denotes the set of indices in $[n]$ of chunks of all the ``\emph{locally invalid}'' encodings in the vector $\mathbf{V}$~\footnote{Note that it may be the case that $\Gamma'_{\mathbf{V}}=\emptyset$, for example when for every $j \in [n]$, we have $\mathbf{V} \mid_{C'_j} = \sen(i_j, \mathbf{x}_j)$, for some $i_1, \ldots, i_n$ and $\mathbf{x}_1, \ldots, \mathbf{x}_n$ for which $i_1, \ldots, i_n$ are not in the image of $\gen$.}. Now we have the following observation.

\begin{obs}\label{obs:v-enc}
$D(\mathcal{V}^{\mbox{inv}}) \leq {\eps}/10$.
\end{obs}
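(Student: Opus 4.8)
The plan is to mirror the argument behind Observation~\ref{obs:sumdothervec}: I would show that if $D(\mathcal{V}^{\mbox{inv}})$ were larger than $\eps/10$, then the full-read sweep performed by \algadpt in Step~(ii) would reject with high (constant) probability, so that conditioned on \algadpt reaching Step~(iv) --- the only regime in which Lemma~\ref{lem:far-D-star} is invoked --- we may assume $D(\mathcal{V}^{\mbox{inv}}) \le \eps/10$.

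Concretely, I would first observe that every vector $\mathbf{X} \in \mathcal{V}^{\mbox{inv}}$ is exactly the kind of vector that triggers a rejection in Step~(ii) of \algadpt: since $\mathcal{V}^{\mbox{inv}} \subseteq \mathcal{V}$ we have $\mathbf{X}\mid_{\{i^*\}\cup B'} = \mathbf{0}\mathbf{1}^b$, and by the definition of $\mathcal{V}^{\mbox{inv}}$ the permuted vector $\mathbf{X}_{\pi}$ is not in the image of $\fen$, which is precisely the condition under which Step~(ii) outputs {\sc Reject}. Next, since Step~(ii) draws a multi-set $\cX$ of $\Oh(1/\eps)$ independent samples from $D$ and reads each of them in full, the probability that $\cX \cap \mathcal{V}^{\mbox{inv}} = \emptyset$ is at most $(1-\eps/10)^{\size{\cX}} \le 1/10$ once the hidden constant in $\size{\cX} = \Oh(1/\eps)$ is taken large enough. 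Combining the two points, $D(\mathcal{V}^{\mbox{inv}}) > \eps/10$ forces \algadpt to reject in Step~(ii) with probability at least $9/10$, which gives the claim.

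I do not expect a genuine obstacle here; the only care needed is bookkeeping about the conditional space. As in Observation~\ref{obs:sumdothervec}, the bound is of the ``either the tester has already rejected with good probability, or this quantity is small'' type, and it holds only with the constant probability afforded by the $\Oh(1/\eps)$-sample sweep rather than with probability $1-1/N^3$; this is acceptable because the soundness proof of \algadpt (Lemma~\ref{lem:sound}) only needs a constant overall success probability and collects the few constant-probability failure events by a union bound, and the constant in $\Oh(1/\eps)$ can be enlarged to push each such probability below any desired threshold. One should also note that $i^*$, $B'$, $\mathcal{V}$, and hence $\mathcal{V}^{\mbox{inv}}$ depend on the permutation $\pi$ returned by \findpi, so the argument is carried out for each fixed outcome of \findpi; since the bound obtained is uniform over these outcomes, nothing further is required.
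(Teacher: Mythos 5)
Your proof is correct and takes essentially the same approach as the paper: the paper's own justification is the one-line statement that otherwise \algadpt would have rejected in Step~(ii) with probability at least $2/3$, which is exactly the contrapositive argument you spell out (the paper uses $2/3$ where you use $9/10$; both are fine once the hidden constant in $\size{\cX} = \Oh(1/\eps)$ is chosen appropriately). Your remarks about the conditional space and the dependence of $\mathcal{V}^{\mbox{inv}}$ on the output of \findpi are correct and match how the observation is used inside the soundness proof of \algadpt.
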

The above observation holds as otherwise, \algadpt would have rejected in Step (ii) with probability at least $2/3$.

\vspace{2 pt}
Let us define a distribution $D_1$ over $\{0,1\}^N$ using the following procedure:
\begin{enumerate}
     \item[(i)] Set $D_1(\mathbf{X})= D(\mathbf{X})$ for every $\mathbf{X} \in \mathcal{U}$. 
    
    \item[(ii)] Recall that $\Gamma'_{\mathbf{V}}=\{j\in [n]: \mathbf{V}\mid_{C'_j} \ \mbox{is not in the image of} \ \sen\}$ for every vector $\mathbf{V} \in \mathcal{V}^{\mbox{inv}}$. For every vector $\mathbf{V}\in \mathcal{V}^{\mbox{inv}}$, we perform the following steps:

    \begin{enumerate}
        \item[(a)] For every $j \notin \Gamma'_{\mathbf{V}}$, decode the vector $\mathbf{V} \mid_{C'_j}$ using $\sen$ to obtain $\mathbf{x}_j \in \{0,1\}$.

        \item[(b)] For every $j \in \Gamma'_{\mathbf{V}}$, choose an arbitrary value $\mathbf{x}_j$ from $\{0,1\}$.

        \item[(c)] Using $\mathbf{x} = (\mathbf{x}_1, \ldots, \mathbf{x}_n)$ obtained from (a) and (b), construct a new vector $\mathbf{V}'$ for which $\mathbf{V}'_{\pi}=\fen(\mathbf{z},\mathbf{x})$ for an arbitrary $\mathbf{z} \in [n]^m$, where $\pi$ is the permutation obtained from \findpi in Step (i) of \algadpt.
    \end{enumerate}    

    \item[(iii)] For every vector $\mathbf{V} \in \mathcal{V} \setminus \mathcal{V}^{\mbox{inv}}$, set $\mathbf{V}' = \mathbf{V}$.

    \item[(iv)] Finally define $D_1(\mathbf{W})=\sum_{\mathbf{V}: \mathbf{V'} =\mathbf{W}} D(\mathbf{V})$ for every $\mathbf{W} \in \mathcal{V}$.

\end{enumerate}

Let $\mathcal{V}'$ be the set of vectors in $\{0,1\}^N$ that are in the support of $D_1$ but not in $\mathcal{U}$, that is, $\mathcal{V}' = \{\mathbf{X} : \mathbf{X} \in \mathrm{Supp}(D_1) \setminus \mathcal{U}\}$. From the construction of $D_1$, the following observation follows.
\begin{obs}\label{obs:v-dash}
$D_1(\mathcal{U})=D(\mathcal{U})\leq 5 \alpha$ and $D_1(\mathcal{V}')=D(\mathcal{V})=1-D(\mathcal{U})\geq 1-5\alpha.$
\end{obs}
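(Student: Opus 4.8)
The statement to prove is Observation~\ref{obs:v-dash}, which asserts two facts: $D_1(\mathcal{U}) = D(\mathcal{U}) \le 5\alpha$, and $D_1(\mathcal{V}') = D(\mathcal{V}) = 1 - D(\mathcal{U}) \ge 1 - 5\alpha$.

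\textbf{Proof plan.} The plan is to simply unwind the definition of $D_1$ given just above the statement, together with Observation~\ref{obs:sumdothervec}. First I would handle $D_1(\mathcal{U})$: by Step~(i) of the construction of $D_1$, we set $D_1(\mathbf{X}) = D(\mathbf{X})$ for every $\mathbf{X} \in \mathcal{U}$, and no vector outside $\mathcal{U}$ gets mapped into $\mathcal{U}$ (every $\mathbf{V}' $ constructed in Steps (ii)--(iii) lies in $\mathcal{V}$, since by construction $\mathbf{V}'_\pi = \fen(\mathbf{z}, \mathbf{x})$ has $\mathbf{V}'_\pi \mid_{\{1\} \cup B} = \mathbf{0}\mathbf{1}^b$, i.e. $\mathbf{V}' \mid_{\{i^*\} \cup B'} = \mathbf{0}\mathbf{1}^b$, hence $\mathbf{V}' \in \mathcal{V}$). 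Therefore $D_1(\mathcal{U}) = \sum_{\mathbf{X} \in \mathcal{U}} D_1(\mathbf{X}) = \sum_{\mathbf{X} \in \mathcal{U}} D(\mathbf{X}) = D(\mathcal{U})$, and the bound $D(\mathcal{U}) \le 5\alpha$ is exactly Observation~\ref{obs:sumdothervec} (under the implicit assumption that \algadpt{} has not already rejected with high probability in Step (i), which is the running assumption of this part of the argument).

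Next I would handle $D_1(\mathcal{V}')$. Since $D_1$ is a probability distribution, $D_1(\mathcal{V}') = 1 - D_1(\mathcal{U}) = 1 - D(\mathcal{U}) = D(\mathcal{V})$, where the last equality uses that $\{\mathcal{U}, \mathcal{V}\}$ partitions $\mathrm{Supp}(D)$ (by the definitions $\mathcal{V} = \{\mathbf{X} \in \mathrm{Supp}(D) : \mathbf{X}\mid_{\{i^*\}\cup B'} = \mathbf{0}\mathbf{1}^b\}$ and $\mathcal{U} = \mathrm{Supp}(D) \setminus \mathcal{V}$). Alternatively, and more explicitly matching Step~(iv), one can check that $\sum_{\mathbf{W} \in \mathcal{V}} D_1(\mathbf{W}) = \sum_{\mathbf{W} \in \mathcal{V}} \sum_{\mathbf{V} : \mathbf{V}' = \mathbf{W}} D(\mathbf{V}) = \sum_{\mathbf{V} \in \mathcal{V}} D(\mathbf{V}) = D(\mathcal{V})$, since the map $\mathbf{V} \mapsto \mathbf{V}'$ sends $\mathcal{V}$ into $\mathcal{V}$ and every $\mathbf{V} \in \mathcal{V}$ is counted exactly once. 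Finally, $1 - D(\mathcal{U}) \ge 1 - 5\alpha$ follows from the bound just established.

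\textbf{Main obstacle.} There is essentially no obstacle here — this is a bookkeeping observation. The only point requiring a moment's care is verifying that the reassignment $\mathbf{V} \mapsto \mathbf{V}'$ stays within $\mathcal{V}$ (so that mass is neither created nor destroyed in $\mathcal{U}$), which is immediate from the fact that $\fen$ always produces vectors whose restriction to the relevant coordinate block is $\mathbf{0}\mathbf{1}^b$; and noting that the claimed numerical bound is a direct citation of Observation~\ref{obs:sumdothervec}. I would keep the write-up to a couple of sentences.
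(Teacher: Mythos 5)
Your argument is correct, and it matches what the paper intends: the paper offers no explicit proof here (it simply says the observation ``follows from the construction of $D_1$''), so your write-up supplies the straightforward bookkeeping the authors leave implicit. Your primary line of reasoning is sound: Step~(i) of the construction gives $D_1(\mathbf{X})=D(\mathbf{X})$ on $\mathcal{U}$; every reassigned vector $\mathbf{V}'$ is a valid encoding, hence satisfies $\mathbf{V}'\mid_{\{i^*\}\cup B'}=\mathbf{0}\mathbf{1}^b$ and therefore lies outside $\mathcal{U}$, so no mass leaks into $\mathcal{U}$ under Step~(iv); the bound $D(\mathcal{U})\le 5\alpha$ is a direct citation of Observation~\ref{obs:sumdothervec} (under the running conditioning that \findpi did not abort); and the second chain of equalities follows from conservation of total probability together with $\mathcal{V}'=\mathrm{Supp}(D_1)\setminus\mathcal{U}$ and the fact that $\mathcal{U},\mathcal{V}$ partition $\mathrm{Supp}(D)$.

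One small caveat on your ``alternative'' verification: the claim that $\mathbf{V}\mapsto\mathbf{V}'$ sends $\mathcal{V}$ \emph{into} $\mathcal{V}$ is not literally right, since $\mathcal{V}$ is defined as a subset of $\mathrm{Supp}(D)$ while the new vectors $\mathbf{V}'$ need not lie in $\mathrm{Supp}(D)$; the image of the map is $\mathcal{V}'$, which is precisely why the paper introduces $\mathcal{V}'$ at all (and the paper's own Step~(iv), which says ``for every $\mathbf{W}\in\mathcal{V}$'', appears to carry the same typo --- it should range over the image, i.e.\ over $\mathcal{V}'$). If you replace the outer sum by $\mathbf{W}\in\mathcal{V}'$, your identity $\sum_{\mathbf{W}}\sum_{\mathbf{V}:\mathbf{V}'=\mathbf{W}}D(\mathbf{V})=\sum_{\mathbf{V}\in\mathcal{V}}D(\mathbf{V})=D(\mathcal{V})$ is exactly right. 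Your first, conservation-of-mass argument sidesteps the issue entirely and is the cleaner route.
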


Now we prove that the distributions $D$ and $D_1$ are not far in Earth Mover Distance.

\begin{lem}\label{lem:d1d2emd}
The Earth Mover Distance between $D$ and $D_1$ is at most $\eps/10$.
\end{lem}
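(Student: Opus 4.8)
Looking at this lemma, I need to show that the Earth Mover Distance between $D$ and $D_1$ is at most $\eps/10$, where $D_1$ is constructed from $D$ by modifying only the vectors in $\mathcal{V}^{\mbox{inv}}$ (turning them into valid encodings via local decoding).

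The plan is to construct an explicit flow (transportation plan) realizing the EMD bound. Since $D_1$ agrees with $D$ on $\mathcal{U}$ (by step (i) of the construction) and on $\mathcal{V} \setminus \mathcal{V}^{\mbox{inv}}$ (by step (iii), where $\mathbf{V}' = \mathbf{V}$), the only probability mass that needs to be moved is the mass sitting on $\mathcal{V}^{\mbox{inv}}$. For each vector $\mathbf{V} \in \mathcal{V}^{\mbox{inv}}$, the construction maps it to a vector $\mathbf{V}'$ which is a valid encoding; so I would route all of $D(\mathbf{V})$ mass from $\mathbf{V}$ to $\mathbf{V}'$, which gives a feasible flow because $D_1(\mathbf{W}) = \sum_{\mathbf{V}: \mathbf{V}' = \mathbf{W}} D(\mathbf{V})$ by step (iv). The cost of this flow is $\sum_{\mathbf{V} \in \mathcal{V}^{\mbox{inv}}} D(\mathbf{V}) \cdot d_H(\mathbf{V}, \mathbf{V}')$.

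First I would bound $d_H(\mathbf{V}, \mathbf{V}')$ for each $\mathbf{V} \in \mathcal{V}^{\mbox{inv}}$. Since $\mathbf{V}$ and $\mathbf{V}'$ agree on the indices $\{i^*\} \cup B'$ (both restrict to $\mathbf{0}\mathbf{1}^b$ there) and on every chunk $C'_j$ with $j \notin \Gamma'_{\mathbf{V}}$ whose decoded bit $\mathbf{x}_j$ is reused to build $\mathbf{V}'$ — wait, I need to be careful: $\mathbf{V}'_\pi = \fen(\mathbf{z}, \mathbf{x})$ for an \emph{arbitrary} $\mathbf{z}$, so the $\sen(\gen(\mathbf{z})_j, \mathbf{x}_j)$ part may differ from $\mathbf{V} \mid_{C'_j}$ even for $j \notin \Gamma'_{\mathbf{V}}$. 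So the crude bound is $d_H(\mathbf{V}, \mathbf{V}') \leq 1$. That alone gives cost $\leq D(\mathcal{V}^{\mbox{inv}})$, and by Observation~\ref{obs:v-enc} we have $D(\mathcal{V}^{\mbox{inv}}) \leq \eps/10$, which is exactly the bound we need. This is clean and I don't think a finer analysis is required.

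Thus the proof reduces to: (1) exhibit the flow that fixes $\mathcal{U}$-mass and $(\mathcal{V}\setminus\mathcal{V}^{\mbox{inv}})$-mass in place, and transports each $D(\mathbf{V})$ unit from $\mathbf{V}$ to $\mathbf{V}'$ for $\mathbf{V} \in \mathcal{V}^{\mbox{inv}}$; (2) verify feasibility from step (iv) of the $D_1$ construction; (3) bound the cost by $\sum_{\mathbf{V}\in\mathcal{V}^{\mbox{inv}}} D(\mathbf{V}) \cdot 1 = D(\mathcal{V}^{\mbox{inv}}) \leq \eps/10$ using Observation~\ref{obs:v-enc}. The only mild subtlety is making sure the flow respects the constraint $f_{\mathbf{X}\mathbf{Y}} \leq 1$, which holds since all the $D(\mathbf{V})$ values are probabilities $\leq 1$. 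I expect no real obstacle here — the main thing to get right is invoking Observation~\ref{obs:v-enc} (whose own justification, that otherwise \algadpt rejects in Step (ii) with probability $\geq 2/3$, is stated just above the lemma) and correctly reading off feasibility from the definition of $D_1$.

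\begin{proof}
We exhibit a feasible flow for the Earth Mover Distance linear program between $D$ and $D_1$ and bound its cost. Recall from the construction of $D_1$ that $D_1(\mathbf{X}) = D(\mathbf{X})$ for every $\mathbf{X} \in \mathcal{U}$, that $\mathbf{V}' = \mathbf{V}$ for every $\mathbf{V} \in \mathcal{V} \setminus \mathcal{V}^{\mbox{inv}}$, and that $D_1(\mathbf{W}) = \sum_{\mathbf{V}:\mathbf{V}'=\mathbf{W}} D(\mathbf{V})$ for every $\mathbf{W} \in \mathcal{V}$.

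Define the flow $\{f_{\mathbf{X}\mathbf{Y}}\}$ as follows: for every $\mathbf{X} \in \mathcal{U}$ set $f_{\mathbf{X}\mathbf{X}} = D(\mathbf{X})$; for every $\mathbf{V} \in \mathcal{V} \setminus \mathcal{V}^{\mbox{inv}}$ set $f_{\mathbf{V}\mathbf{V}} = D(\mathbf{V})$; for every $\mathbf{V} \in \mathcal{V}^{\mbox{inv}}$ set $f_{\mathbf{V}\mathbf{V}'} = D(\mathbf{V})$, where $\mathbf{V}'$ is the vector assigned to $\mathbf{V}$ in step (ii) of the construction of $D_1$; and set all remaining $f_{\mathbf{X}\mathbf{Y}} = 0$. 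Since every $D(\mathbf{V})$ is a probability, each $f_{\mathbf{X}\mathbf{Y}} \in [0,1]$. The outgoing constraint $\sum_{\mathbf{Y}} f_{\mathbf{X}\mathbf{Y}} = D(\mathbf{X})$ holds by construction for every $\mathbf{X}$. For the incoming constraint, for $\mathbf{Y} \in \mathcal{U}$ we have $\sum_{\mathbf{X}} f_{\mathbf{X}\mathbf{Y}} = D(\mathbf{Y}) = D_1(\mathbf{Y})$, and for $\mathbf{Y} \in \mathcal{V}$ we have $\sum_{\mathbf{X}} f_{\mathbf{X}\mathbf{Y}} = \sum_{\mathbf{V}:\mathbf{V}'=\mathbf{Y}} D(\mathbf{V}) = D_1(\mathbf{Y})$; for all other $\mathbf{Y}$ both sides are $0$. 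Hence the flow is feasible.

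The cost of this flow is
$$\sum_{\mathbf{X},\mathbf{Y}} f_{\mathbf{X}\mathbf{Y}}\, d_H(\mathbf{X},\mathbf{Y}) = \sum_{\mathbf{V} \in \mathcal{V}^{\mbox{inv}}} D(\mathbf{V}) \cdot d_H(\mathbf{V}, \mathbf{V}') \leq \sum_{\mathbf{V} \in \mathcal{V}^{\mbox{inv}}} D(\mathbf{V}) = D(\mathcal{V}^{\mbox{inv}}),$$
where we used $d_H(\mathbf{V},\mathbf{V}') \leq 1$. By Observation~\ref{obs:v-enc}, $D(\mathcal{V}^{\mbox{inv}}) \leq \eps/10$. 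Therefore $d_{EM}(D, D_1) \leq \eps/10$.
\end{proof}
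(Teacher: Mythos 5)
Your proof is correct and follows essentially the same route as the paper: you exhibit the identity flow on $\mathcal{U} \cup (\mathcal{V} \setminus \mathcal{V}^{\mbox{inv}})$ together with the $\mathbf{V} \mapsto \mathbf{V}'$ transport on $\mathcal{V}^{\mbox{inv}}$, bound each $d_H(\mathbf{V},\mathbf{V}')$ by $1$, and invoke Observation~\ref{obs:v-enc}. In fact you verify feasibility more explicitly than the paper does, and you correctly write $f_{\mathbf{V}\mathbf{V}'} = D(\mathbf{V})$ where the paper's displayed flow has a minor typo ($D_1(\mathbf{V})$) that it silently corrects in the subsequent cost calculation.
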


\begin{proof}
Recall that the EMD between $D$ and $D_1$ is the solution to the following LP:
\begin{eqnarray*}
    &&\mbox{Minimize}\quad{\sum_{\mathbf{X},\mathbf{Y} \in \{0,1\}^N} f_{\mathbf{X}\mathbf{Y}} d_H(\mathbf{X},\mathbf{Y})}\\
    &&\mbox{Subject to}\quad\sum_{\mathbf{Y} \in \{0,1\}^N} f_{\mathbf{X}\mathbf{Y}} = D(\mathbf{X})~\forall \ \mathbf{X} \in \{0,1\}^N \; \mbox{and} \;
    \sum_{\mathbf{X} \in \{0,1\}^N} f_{\mathbf{X}\mathbf{Y}} = D_1(\mathbf{Y})~\forall \ \mathbf{Y} \in \{0,1\}^N.
\end{eqnarray*}

Consider the flow $f^*$ such that $f^*_{\mathbf{XX}}=D(\mathbf{X})$ for every $\mathbf{X} \in \mathcal{U} \cup (\mathcal{V} \setminus \mathcal{V}^{\mbox{inv}})$, $f^*_{\mathbf{VV'}}=D_1(\mathbf{V})$ for every $\mathbf{V} \in \mathcal{V}^{\mbox{inv}}$, and $f^*_{\mathbf{X}\mathbf{Y}}=0$ for all other vectors. Then we have the following:
\begin{eqnarray*}
d_{EM}(D,D_1) &\leq& \sum_{{\bf X},{\bf Y} \in \{0,1\}^N} f^*_{{\bf X Y}}d_H({\bf X},{\bf Y}) \\
&\leq& \sum_{\mathbf{X} \in \{0,1\}^N \setminus \mathcal{V}^{\mbox{inv}}} f^*_{{\bf X X}}d_H({\bf X},{\bf X}) + \sum_{{\bf V} \in \mathcal{V}^{\mbox{inv}}} f^*_{{\bf V} {\bf V}'}d_H({\bf V},{\bf V'})\\
&\leq& 0 + \sum_{\mathbf{V}\in \mathcal{V}^{\mbox{inv}}}D(\mathbf{V})d_H({\bf V},{\bf V'}).
\end{eqnarray*}

To bound the second term of the last expression, note that $$\sum\limits_{\mathbf{V}\in \mathcal{V}^{\mbox{inv}}}D(\mathbf{V})d_H({\bf V},{\bf V'}) \leq D(\mathcal{V}^{\mbox{inv}}) \leq \eps/10.$$ This follows from Observation~\ref{obs:v-enc}. Thus, we conclude that $d_{EM}(D,D_1) \leq \eps/10$, completing the proof of the lemma.
\end{proof}

Now we have the following observation regarding the rejection probabilities of \algadpt for the distributions $D$ and $D_1$. This will imply that, as we are executing all steps of {\sc Supp-Est}$(\cY',\eps/3)$, the steps of our algorithm are oblivious to both $D$ and $D_1$. That is, we can assume that the input to the algorithm \algadpt is the distribution $D_1$ instead of $D$.

\begin{obs}
The probability that the tester \algadpt outputs {\sc Reject} in Step (iv) where the input distribution is $D$ is at least as large as the probability that \algadpt outputs {\sc Reject} in Step (iv) when the input distribution is $D_1$.
\end{obs}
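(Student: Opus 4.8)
The plan is to prove this by a coupling argument between the run of \algadpt on $D$ and an imagined run on $D_1$, exploiting the fact that $D_1$ is obtained from $D$ only by surgically replacing the ``invalid encoding'' vectors of $\mathcal{V}^{\mathrm{inv}}$ by valid encodings that agree with them on every locally valid chunk. Throughout I fix the permutation $\pi$ returned by \findpi in Step (i) (if \findpi fails, \algadpt has already rejected and there is nothing to prove), so that $D_1$ is well defined.

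First I would set up the natural coupling of the samples. Since $D_1$ agrees with $D$ on $\mathcal{U}$ and on $\mathcal{V}\setminus\mathcal{V}^{\mathrm{inv}}$, and since $D_1(\mathbf{W})=\sum_{\mathbf{V}:\mathbf{V}'=\mathbf{W}}D(\mathbf{V})$, I can couple each draw so that a sample $\mathbf{V}\in\mathcal{V}^{\mathrm{inv}}$ drawn from $D$ is matched with the corresponding $\mathbf{V}'$ drawn from $D_1$, and every other sample is matched with an identical copy; I also couple the internal coins of the invocation of {\sc Supp-Est}$(\cY',\eps/3)$ in Step (iv). Because the replacement $\mathbf{V}\mapsto\mathbf{V}'$ preserves the value of the restriction to $\{i^*\}\cup B'$ (both equal $\mathbf{0}\mathbf{1}^b$), the filtered subsequence $\cY'$, its size, and the outcome of the length test in Step (iv)(a) are the same in the two coupled runs.

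Next I would introduce the event $E$ that, in the $D$-run, Step (iv) never triggers the rejection condition (iv)(b), i.e.\ {\sc Supp-Est} never queries an index $j\in\Gamma'_{\mathbf{V}}$ of a sample $\mathbf{V}\in\mathcal{V}^{\mathrm{inv}}$. On the complement $\overline{E}$ the $D$-run rejects with certainty. On $E$ I claim the two coupled executions of {\sc Supp-Est} are identical query-for-query: arguing by induction on the steps of {\sc Supp-Est} and using that the coins are shared, it suffices to check that each simulated query returns the same decoded bit in both runs; for a sample outside $\mathcal{V}^{\mathrm{inv}}$ the two vectors coincide, and for a sample $\mathbf{V}\in\mathcal{V}^{\mathrm{inv}}$ the query index $j$ lies in $[n]\setminus\Gamma'_{\mathbf{V}}$ (by $E$), so $\mathbf{V}\mid_{C'_j}$ is a valid $\sen$-encoding of exactly the bit $\mathbf{x}_j$ that, by the construction of $D_1$ (Step (ii)(a)), the vector $\mathbf{V}'$ also encodes on $C'_j$. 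Here the crucial point is that the arbitrary bits chosen for $j\in\Gamma'_{\mathbf{V}}$ when defining $D_1$ are never actually read in the $D$-run, precisely because reading such a chunk is what $\overline{E}$ records; so those arbitrary choices cannot create a discrepancy. Consequently, on $E$ the $D$-run and the $D_1$-run produce the same output (and incidentally the $D_1$-run, whose samples in $\cY'$ are all valid encodings, can never reject via (iv)(b)).

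Finally I would combine these. Writing $p_1=\Pr[\text{$D_1$-run rejects in Step (iv)}]$ and using the coupling, $p_1\le\Pr[\overline{E}]+\Pr[E]\cdot\Pr[\text{$D_1$-run rejects}\mid E]=\Pr[\overline{E}]+\Pr[E]\cdot\Pr[\text{$D$-run rejects}\mid E]$, whereas $\Pr[\text{$D$-run rejects in Step (iv)}]=\Pr[\overline{E}]+\Pr[E]\cdot\Pr[\text{$D$-run rejects}\mid E]$ since $\overline{E}$ forces the $D$-run to reject. Hence $\Pr[\text{$D$-run rejects}]\ge p_1$, which is the assertion. The only genuinely delicate step is the lockstep claim on $E$: one must argue that the two runs of {\sc Supp-Est} stay synchronized even though {\sc Supp-Est} is allowed to query adaptively, which is exactly what the shared-coins coupling together with the maintained invariant ``all answers seen so far agree'' provides; the rest is bookkeeping.
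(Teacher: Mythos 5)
Your coupling construction is the right mechanism; the lockstep induction for the shared-coin run of {\sc Supp-Est} on the event $E$ is carried out correctly, and overall this is a substantially more rigorous rendition of the paper's one-paragraph informal argument (the paper simply observes that $D_1$ arises from $D$ by swapping invalid encodings for valid ones, so $D$ ``can only reject more often'').

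The gap is in the final probability bookkeeping, and it traces to a property you implicitly ascribe to $E$ that it does not have. Your $E$ is ``in the $D$-run, Step (iv) never triggers (iv)(b)''. That event holds vacuously on the sub-event where the $D$-run has already rejected at Step (ii) because some sample in $\cX$ lies in $\mathcal{V}^{\mathrm{inv}}$; on that sub-event the $D$-run never reaches Step (iv), while the coupled $D_1$-run (whose support contains no invalid encodings, so Step (ii) can never fire for it) does reach Step (iv) and may reject there. Consequently the equality $\Pr[\mbox{$D_1$-run rejects in Step (iv)}\mid E]=\Pr[\mbox{$D$-run rejects in Step (iv)}\mid E]$, on which your chain of inequalities rests, fails: restricted to this sub-event of $E$ the left-hand side can be positive while the right-hand side is $0$. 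Since $D(\mathcal{V}^{\mathrm{inv}})$ can be a constant fraction of $\eps$ and $|\cX|=\Theta(1/\eps)$, the offending sub-event has constant probability, so this is not a vanishing-order issue. The paper's own informal proof explicitly mentions Step (ii) alongside (iv)(b) for exactly this reason; your proposal leaves it out. The clean fix, which is also the statement the surrounding soundness argument actually uses, is the comparison of overall rejection probabilities $\Pr[\algadpt \mbox{ on } D \mbox{ rejects}] \geq \Pr[\algadpt \mbox{ on } D_1 \mbox{ rejects}]$. Under your coupling this follows by a direct event inclusion with no conditional calculation: in every coupled pair of runs, whenever the $D_1$-run rejects (at whatever step), the $D$-run either mirrors it or has already rejected earlier (at Step (ii), or via (iv)(b)), so ``$D_1$-run rejects'' is contained in ``$D$-run rejects''.
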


\begin{proof}
Note that in the distribution $D$, there can be some vectors in $\mbox{Supp}(D)$ that are not valid encodings with respect to the function $\fen$. Thus during its execution, the tester \algadpt can {\sc Reject} $D$ by Condition (ii) and Condition (iv) (b). However, by the construction of $D_1$ from $D$, we have replaced the invalid encoding vectors with valid encoding vectors. Thus, the only difference it makes here is that \algadpt may eventually accept a sample from $D_1$ when encountering such a place where a sample from $D$ would have been immediately rejected by Condition (ii) or Condition (iv) (b). Other than this difference, the distributions $D$ and $D_1$ are identical. So, the probability that \algadpt will {\sc Reject} $D$ is at least as large as the probability that it {\sc Rejects} $D_1$.
\end{proof}

Now, let us come back to the proof of Lemma~\ref{lem:far-D-star}. Recall that $\mathcal{V}' = \{\mathbf{X} : \mathbf{X} \in \mathrm{Supp}(D_1) \setminus \mathcal{U}\}$. Let us define the distribution $\dhash$ over $\{0,1\}^n$ referred to in Lemma~\ref{lem:far-D-star}.  For $\mathbf{x}\in \{0,1\}^n$, we have the following:
\begin{equation}\label{def:enc-dist}
    \dhash(\mathbf{x})=D_1^{{dec}}(\mathbf{x})=\frac{1}{D_1(\mathcal{V}')}\sum_{\substack{\mathbf{Y}_{\pi} = \fen(\mathbf{z},\mathbf{x}) \\ \mbox{for some} \ \mathbf{z} \in [n]^m}} D_1(\mathbf{Y}) = \frac{1}{D_1(\mathcal{V}')}\sum_{\mathbf{z}\in [n]^m}D_1(\mathrm{FE}(\mathbf{z},\mathbf{x})_{\pi^{-1}}).
\end{equation}

For the sake of contradiction, assume that $\dhash=D_1^{dec}$ is $\eps/3$-close to having support size at most $n$. Let $D_{2}$ be a distribution over $\{0,1\}^{n}$ having support size at most $n$ such that the Earth Mover Distance between $D_2$ and $D_1^{dec}$ is at most $\eps/3$.

Given the distribution $D_{2}$ over $\{0,1\}^{n}$, and the flow $f'_{\mathbf{x}\mathbf{y}}$ from $D_2$ to $D_1^{dec}$ realizing the EMD of at most $\eps/3$ between them, let us consider the distribution $D_2^{enc}$ over $\{0,1\}^N$ as follows:

\begin{enumerate}
    \item[(i)] For any $\mathbf{X} \in \mathcal{V'}$, for which $\mathbf{X}_{\pi} = \fen(\mathbf{z},\mathbf{x})$ for some $\mathbf{z} \in [n]^m$, set:
    \begin{center}$D_2^{enc}(\mathbf{X})=\sum\limits_{\mathbf{y} \in \{0,1\}^n}f'_{\mathbf{x}\mathbf{y}}\frac{D_1(\mathrm{FE}(\mathbf{z},\mathbf{y})_{\pi^{-1}})}{D_1^{dec}(\mathbf{y})}$.
    \end{center}
    
    \item[(ii)] For every $\mathbf{X} \in \mathcal{U}$, set $D_2^{enc}(\mathbf{X}) = D_1(\mathbf{X})$.
\end{enumerate}


The following observation follows from Observation~\ref{obs:v-dash} and the construction of $D_2^{enc}$.

\begin{obs}\label{obs:v-dash-1}
$D_2^{enc}(\mathcal{U})=D_1(\mathcal{U})\leq 5 \alpha$ and $D_2^{enc}(\mathcal{V}')=D_1(\mathcal{V}')=1-D_1(\mathcal{U})\geq 1-5\alpha.$
\end{obs}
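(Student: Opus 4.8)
The plan is to read both parts of Observation~\ref{obs:v-dash-1} off the construction of $D_2^{enc}$, using Observation~\ref{obs:v-dash} as the only external input. The statement about $\mathcal{U}$ is immediate: clause~(ii) in the definition of $D_2^{enc}$ literally sets $D_2^{enc}(\mathbf{X})=D_1(\mathbf{X})$ for every $\mathbf{X}\in\mathcal{U}$, hence $D_2^{enc}(\mathcal{U})=D_1(\mathcal{U})$, and $D_1(\mathcal{U})=D(\mathcal{U})\le 5\alpha$ is precisely the first half of Observation~\ref{obs:v-dash} (which in turn rests on Observation~\ref{obs:sumdothervec}, i.e.\ on the guarantee of \findpi called inside \algadpt). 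So the only real work is the identity $D_2^{enc}(\mathcal{V}')=D_1(\mathcal{V}')$.

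For that identity I would argue block-by-block over the ``general part'' $\mathbf{z}\in[n]^m$ of the valid encodings. Fix $\mathbf{z}$, write $\mu_{\mathbf{z}}(\mathbf{y}) = D_1\bigl(\fen(\mathbf{z},\mathbf{y})_{\pi^{-1}}\bigr)$, and recall from clause~(i) that $D_2^{enc}\bigl(\fen(\mathbf{z},\mathbf{x})_{\pi^{-1}}\bigr) = \sum_{\mathbf{y}} f'_{\mathbf{x}\mathbf{y}}\,\mu_{\mathbf{z}}(\mathbf{y})/D_1^{dec}(\mathbf{y})$, where $f'$ realizes $d_{EM}(D_2,D_1^{dec})\le\eps/3$. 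First one checks the ratio is well defined: $f'$ transports $D_2$ to $D_1^{dec}$, so its column sums satisfy $\sum_{\mathbf{x}} f'_{\mathbf{x}\mathbf{y}} = D_1^{dec}(\mathbf{y})$, whence $f'_{\mathbf{x}\mathbf{y}}=0$ whenever $D_1^{dec}(\mathbf{y})=0$. Summing over $\mathbf{x}$ and swapping the two finite sums gives
\[
\sum_{\mathbf{x}} D_2^{enc}\bigl(\fen(\mathbf{z},\mathbf{x})_{\pi^{-1}}\bigr) \;=\; \sum_{\mathbf{y}}\frac{\mu_{\mathbf{z}}(\mathbf{y})}{D_1^{dec}(\mathbf{y})}\sum_{\mathbf{x}} f'_{\mathbf{x}\mathbf{y}} \;=\; \sum_{\mathbf{y}}\mu_{\mathbf{z}}(\mathbf{y}),
\]
so the $\mathbf{z}$-block of $D_2^{enc}$ carries exactly the same total mass as the $\mathbf{z}$-block of $D_1$. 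Since every vector of $\mathcal{V}'\subseteq\mathrm{Supp}(D_1)$ is a valid encoding $\fen(\mathbf{z},\mathbf{y})_{\pi^{-1}}$ for a \emph{unique} pair $(\mathbf{z},\mathbf{y})$ — uniqueness being a consequence of the distance bounds of Observation~\ref{obs:FEproperties}(i), under which distinct $\mathbf{z}$, or distinct $\mathbf{y}$ at a fixed $\mathbf{z}$, yield encodings at positive Hamming distance — and $D_1$ is supported on $\mathcal{U}\cup\mathcal{V}'$, summing the block identity over all $\mathbf{z}\in[n]^m$ yields $D_2^{enc}(\mathcal{V}')=\sum_{\mathbf{z},\mathbf{y}}\mu_{\mathbf{z}}(\mathbf{y})=D_1(\mathcal{V}')$. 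With the first part and $D_1(\mathcal{U})+D_1(\mathcal{V}')=1$ (both from Observation~\ref{obs:v-dash}) this gives $D_2^{enc}(\mathcal{V}')=1-D_1(\mathcal{U})\ge 1-5\alpha$, and incidentally confirms $D_2^{enc}$ is a genuine probability distribution.

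The point that needs the most care — the ``main obstacle'', such as it is — is the bookkeeping that the $\mathbf{z}$-blocks partition $\mathcal{V}'$ and that no $D_2^{enc}$-mass leaks to valid encodings lying outside $\mathcal{V}'$. Clause~(i) is invoked only for $\mathbf{X}\in\mathcal{V}'$, so the inner sum over $\mathbf{x}$ above should be read as ranging over exactly those $\mathbf{x}$ with $\fen(\mathbf{z},\mathbf{x})_{\pi^{-1}}\in\mathcal{V}'$, i.e.\ over $\mathrm{supp}(\mu_{\mathbf{z}})$; one must observe that this restricted sum is still $\sum_{\mathbf{y}}\mu_{\mathbf{z}}(\mathbf{y})$, since $\mu_{\mathbf{z}}(\mathbf{y})/D_1^{dec}(\mathbf{y})$ vanishes off $\mathrm{supp}(\mu_{\mathbf{z}})$ and the column-sum identity for $f'$ absorbs the remaining weight. (If one wishes to sidestep this entirely, it is equivalent to extend clause~(i) to all valid encodings whose $\mathbf{z}$-part occurs in $\mathcal{V}'$, rename that larger set $\mathcal{V}'$, and note that its decoded distribution still has support size at most $n$, so the use of the observation in Lemma~\ref{lem:far-D-star} is unaffected.) Everything else is direct substitution into the definitions of $D_2^{enc}$, $D_1^{dec}$, and $\fen$.
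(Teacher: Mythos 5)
Your approach is the natural way to make this observation rigorous; the paper itself only asserts that it ``follows from Observation~\ref{obs:v-dash} and the construction of $D_2^{enc}$.'' The $\mathcal{U}$ part and the block-over-$\mathbf{z}$ decomposition are the right calculation, and the subtlety you flag is genuine: as literally written, clause~(i) assigns $D_2^{enc}$-mass only to $\mathbf{X}\in\mathcal{V}'=\mathrm{Supp}(D_1)\setminus\mathcal{U}$, whereas the defining formula would naturally place positive mass on valid encodings $\fen(\mathbf{z},\mathbf{x})_{\pi^{-1}}$ with $\mathbf{x}\in\mathrm{Supp}(D_2)\setminus\mathrm{supp}(\mu_{\mathbf{z}})$, which are not in $\mathcal{V}'$.

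However, your ``primary'' resolution does not actually close this gap. You claim that restricting the inner sum to $\mathbf{x}\in\mathrm{supp}(\mu_{\mathbf{z}})$ still yields $\sum_{\mathbf{y}}\mu_{\mathbf{z}}(\mathbf{y})$ because ``the column-sum identity for $f'$ absorbs the remaining weight,'' but after swapping the sums what you need is $\sum_{\mathbf{x}\in\mathrm{supp}(\mu_{\mathbf{z}})}f'_{\mathbf{x}\mathbf{y}}=D_1^{dec}(\mathbf{y})$ for $\mathbf{y}\in\mathrm{supp}(\mu_{\mathbf{z}})$, while the column-sum identity only gives $\sum_{\mathbf{x}\in\{0,1\}^n}f'_{\mathbf{x}\mathbf{y}}=D_1^{dec}(\mathbf{y})$. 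Since $\mathrm{supp}(\mu_{\mathbf{z}})$ depends on $\mathbf{z}$ but $\mathrm{Supp}(D_2)$ does not, nothing rules out $f'_{\mathbf{x}\mathbf{y}}>0$ with $\mathbf{x}\in\mathrm{Supp}(D_2)\setminus\mathrm{supp}(\mu_{\mathbf{z}})$; that mass is then dropped from the restricted sum, and under the literal reading $D_2^{enc}$ becomes sub-stochastic. Your parenthetical fix --- extend clause~(i) to all valid encodings sharing a $\mathbf{z}$-part with some vector of $\mathcal{V}'$, and rename that larger set $\mathcal{V}'$ --- is the correct reading: with it the unrestricted swap of sums gives $D_2^{enc}(\mathcal{V}')=D_1(\mathcal{V}')$ exactly, and as you note the decoded distribution of $D_2^{enc}$ is $D_2$, of support size at most $n$, so nothing downstream is affected. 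This extended reading is also what the paper uses implicitly elsewhere: the flow $f^\star$ in the proof of Lemma~\ref{cl:emd-far} and the support-size argument in Lemma~\ref{cl:close-P-gap} range over all of $\mathrm{Supp}(D_2^{enc})\setminus\mathcal{U}$ rather than $\mathrm{Supp}(D_1)\setminus\mathcal{U}$. Promote the fix to the main argument and drop the flawed ``restricted column-sum'' claim.
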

  
The following two lemmas bound the distance of $D_2^{enc}$ from $\cP_{\mathrm{Gap}}$ and from $D_1$, where $D_1^{dec}$ is $\eps/3$-close to having support size at most $n$. We will prove these two lemmas later.

\begin{lem}\label{cl:close-P-gap}
$D_2^{enc}$ is $6\alpha$-close to $\cP_{\mathrm{Gap}}$.
\end{lem}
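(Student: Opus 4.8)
The goal of Lemma~\ref{cl:close-P-gap} is to show that $D_2^{enc}$ is $6\alpha$-close (in EMD) to $\cP_{\mathrm{Gap}}$. The natural strategy is to exhibit an explicit distribution $D^\star \in \cP_{\mathrm{Gap}}$ and a flow from $D_2^{enc}$ to $D^\star$ of cost at most $6\alpha$. First I would observe that $D_2^{enc}$ has two pieces: the ``encoded'' part supported on $\mathcal{V}'$ (vectors $\mathbf{X}$ with $\mathbf{X}_\pi = \fen(\mathbf{z},\mathbf{x})$ for some $\mathbf{z},\mathbf{x}$), which by Observation~\ref{obs:v-dash-1} has mass $D_2^{enc}(\mathcal{V}')=D_1(\mathcal{V}') \geq 1-5\alpha$; and the ``garbage'' part supported on $\mathcal{U}$, which has mass at most $5\alpha$. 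The key point is that $D_2$ has support size at most $n$, so the image $\{\mathbf{y}\in\{0,1\}^n\}$ of the decoding underlying the encoded part of $D_2^{enc}$ involves at most $n$ vectors from $\{0,1\}^n$ — exactly the condition required in Item (iv) of the definition of $\cP^0_{\mathrm{Gap}}$.

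**Construction of $D^\star$.** I would define $D^\star$ as follows: keep the encoded part of $D_2^{enc}$ essentially unchanged but rescale it to have total mass exactly $1-3\alpha$, and then add the special vectors $\mathbf{U}'=\mathbf{U}_{\pi^{-1}}$, $\cS'=\mathcal{S}_{\pi^{-1}}$, $\cT'=\cT_{\pi^{-1}}$ with the prescribed weights $\alpha$, $\alpha/b$ each, $\alpha/|\cT|$ each respectively — this makes $D^\star_\pi \in \cP^0_{\mathrm{Gap}}$, so $D^\star \in \cP_{\mathrm{Gap}}$. (One technical point: the encoded part of $D_2^{enc}$ might itself assign some weight to the special vectors or to vectors that coincide with them; since these are themselves ``structurally fine,'' one can absorb this without increasing the bound, or one can simply note that the special vectors contribute a bounded amount. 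The cleaner route is to fold the special-vector weights of $D^\star$ out of the $\geq 1-5\alpha$ mass available on $\mathcal{V}'$.) The flow I would use: move the $\leq 5\alpha$ mass on $\mathcal{U}$ to wherever is cheapest (cost $\leq 5\alpha$, since every $d_H \leq 1$), and redistribute mass within $\mathcal{V}'$ to reach the rescaled encoded part and to create the $3\alpha$ total mass of special vectors — this redistribution moves at most $O(\alpha)$ total mass (the difference between $D_2^{enc}(\mathcal{V}')$ and $1-3\alpha$ is at most $5\alpha - 3\alpha = 2\alpha$ in one direction, and creating the special-vector mass costs at most $3\alpha$), again at cost $\leq$ (that mass) since $d_H\leq 1$. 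Summing, the total EMD is at most $5\alpha + $ (a further $O(\alpha)$), and a careful bookkeeping of the constants gives $\leq 6\alpha$. I expect the constant $6$ to come out from: $\leq 5\alpha$ on $\mathcal{U}$, and the special vectors together carry $3\alpha$ mass but they can be ``pulled'' from the slack, with the net extra cost being at most $\alpha$ after accounting that the special vectors and rescaling share the same $5\alpha$ budget — the exact allocation is a routine but slightly fiddly accounting.

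**The main obstacle.** The genuinely delicate step is making sure the encoded part of $D_2^{enc}$, after rescaling, really does satisfy Item (iv) of $\cP^0_{\mathrm{Gap}}$, i.e. that it is supported on valid encodings of at most $n$ distinct vectors from $\{0,1\}^n$. This follows because $D_2$ has support size $\leq n$ and $D_2^{enc}$ is built by the push-forward along the flow $f'$ realizing $d_{EM}(D_2, D_1^{dec})\leq \eps/3$: every vector $\mathbf{X}$ that $D_2^{enc}$ places positive mass on within $\mathcal{V}'$ has $\mathbf{X}_\pi = \fen(\mathbf{z},\mathbf{y})$ for some $\mathbf{z}$ and some $\mathbf{y}$ in the support of $D_2$ (this is exactly how step (i) of the construction of $D_2^{enc}$ was set up, summing over $\mathbf{y}$ with $f'_{\mathbf{x}\mathbf{y}}>0$, and $f'$ has marginal $D_2$ on the first coordinate). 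Hence at most $|\mathrm{Supp}(D_2)|\leq n$ many $\mathbf{y}$'s appear, so $D_2^{enc}$'s encoded support lies in $\{\fen(\mathbf{z},\mathbf{y}_i)_{\pi^{-1}} : \mathbf{z}\in[n]^m, i\in[n]\}$, as required. Once this is in hand, combining it with the explicit flow above and the mass bounds of Observation~\ref{obs:v-dash-1} finishes the proof; I would just need to verify that the weights I assign to the special vectors are consistent (they sum to $3\alpha$) and that the rescaling factor $\frac{1-3\alpha}{D_2^{enc}(\mathcal{V}')}$ is close enough to $1$ that the induced transport cost stays within budget, which it is since $D_2^{enc}(\mathcal{V}')\in[1-5\alpha,1]$.
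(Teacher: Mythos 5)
Your proposal is correct and takes essentially the same approach as the paper: you construct the same target distribution (rescale the encoded part to total mass $1-3\alpha$, add the special vectors $\{\mathbf{U}'\}\cup\cS'\cup\cT'$ with their prescribed weights), and you identify the same key structural point that $|\mathrm{Supp}(D_2)|\leq n$ is what makes the rescaled encoded part satisfy Item (iv) of $\cP^0_{\mathrm{Gap}}$. The paper simply carries out the final accounting via the $\ell_1$-to-EMD bound (getting $\frac{1}{2}(3\alpha+8\alpha)\leq 6\alpha$) rather than an explicit flow, and note a small variable mix-up on your part ($\mathbf{x}$'s, not $\mathbf{y}$'s, are the ones ranging over $\mathrm{Supp}(D_2)$) — but these are cosmetic.
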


\begin{lem}\label{cl:emd-far}
The Earth Mover Distance between $D_{2}^{enc}$ and $D_1$ is at most $\eps/3$.
\end{lem}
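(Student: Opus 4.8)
The plan is to exhibit an explicit flow from $D_2^{enc}$ to $D_1$ whose cost is at most $\eps/3$, using the flow $f'_{\mathbf{x}\mathbf{y}}$ from $D_2$ to $D_1^{dec}=\dhash$ (realizing EMD at most $\eps/3$) as the blueprint. First I would dispose of the part of the mass outside the encoded region: by Observation~\ref{obs:v-dash} and Observation~\ref{obs:v-dash-1}, both $D_1$ and $D_2^{enc}$ assign exactly the same mass $D_1(\mathbf{X})$ to every vector $\mathbf{X}\in\mathcal{U}$, so the identity flow $f^*_{\mathbf{X}\mathbf{X}}=D_1(\mathbf{X})$ on $\mathcal{U}$ costs nothing and transports all of $\mathcal{U}$ perfectly. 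It remains to handle $\mathcal{V}'$, which carries mass $D_1(\mathcal{V}')=D_2^{enc}(\mathcal{V}')$ on both sides.

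Next I would build the flow on $\mathcal{V}'$ by ``lifting'' $f'$ through the encoding. For a source vector $\mathbf{X}\in\mathcal{V}'$ with $\mathbf{X}_\pi=\fen(\mathbf{z},\mathbf{x})$ and a target vector $\mathbf{Y}\in\mathcal{V}'$ with $\mathbf{Y}_\pi=\fen(\mathbf{z},\mathbf{y})$ sharing the \emph{same} $\mathbf{z}$, set
$$
f^*_{\mathbf{X}\mathbf{Y}} \;=\; f'_{\mathbf{x}\mathbf{y}}\cdot \frac{D_1(\fen(\mathbf{z},\mathbf{y})_{\pi^{-1}})}{D_1^{dec}(\mathbf{y})},
$$
and $f^*_{\mathbf{X}\mathbf{Y}}=0$ for source/target pairs with differing $\mathbf{z}$ (or when either side is not in $\mathcal{V}'$). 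One checks the two marginal constraints: summing $f^*_{\mathbf{X}\cdot}$ over targets, the $\mathbf{z}$ is fixed by the source and the sum over $\mathbf{y}$ gives $\sum_{\mathbf{y}} f'_{\mathbf{x}\mathbf{y}} D_1(\fen(\mathbf{z},\mathbf{y})_{\pi^{-1}})/D_1^{dec}(\mathbf{y}) = D_2^{enc}(\mathbf{X})$ by the very definition of $D_2^{enc}$ in Item~(i) of its construction; summing $f^*_{\cdot\mathbf{Y}}$ over sources with $\mathbf{Y}_\pi=\fen(\mathbf{z},\mathbf{y})$ gives $\sum_{\mathbf{x}} f'_{\mathbf{x}\mathbf{y}}\cdot D_1(\fen(\mathbf{z},\mathbf{y})_{\pi^{-1}})/D_1^{dec}(\mathbf{y}) = D_1^{dec}(\mathbf{y})\cdot D_1(\fen(\mathbf{z},\mathbf{y})_{\pi^{-1}})/D_1^{dec}(\mathbf{y}) = D_1(\mathbf{Y})$, using the marginal $\sum_\mathbf{x} f'_{\mathbf{x}\mathbf{y}}=D_1^{dec}(\mathbf{y})$ from Equation~\eqref{def:enc-dist}.

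Then I would bound the cost. For a pair $\mathbf{X},\mathbf{Y}$ with common $\mathbf{z}$, we have $\mathbf{X}_\pi=\fen(\mathbf{z},\mathbf{x})$ and $\mathbf{Y}_\pi=\fen(\mathbf{z},\mathbf{y})$, so by Observation~\ref{obs:FEproperties}(iii) (applied to $\fen(\mathbf{z},\cdot)$, and noting that permuting indices by $\pi$ does not change $d_H$) we get $d_H(\mathbf{X},\mathbf{Y})=d_H(\fen(\mathbf{z},\mathbf{x}),\fen(\mathbf{z},\mathbf{y}))\le d_H(\mathbf{x},\mathbf{y})$. Hence
$$
d_{EM}(D_2^{enc},D_1)\;\le\;\sum_{\mathbf{X},\mathbf{Y}} f^*_{\mathbf{X}\mathbf{Y}} d_H(\mathbf{X},\mathbf{Y})
\;\le\;\sum_{\mathbf{x},\mathbf{y}} d_H(\mathbf{x},\mathbf{y})\sum_{\mathbf{z}} f'_{\mathbf{x}\mathbf{y}}\frac{D_1(\fen(\mathbf{z},\mathbf{y})_{\pi^{-1}})}{D_1^{dec}(\mathbf{y})},
$$
and the inner sum over $\mathbf{z}$ collapses to $f'_{\mathbf{x}\mathbf{y}}\cdot\frac{1}{D_1^{dec}(\mathbf{y})}\sum_{\mathbf{z}}D_1(\fen(\mathbf{z},\mathbf{y})_{\pi^{-1}})=f'_{\mathbf{x}\mathbf{y}}$, again by Equation~\eqref{def:enc-dist}. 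Therefore $d_{EM}(D_2^{enc},D_1)\le\sum_{\mathbf{x},\mathbf{y}} f'_{\mathbf{x}\mathbf{y}} d_H(\mathbf{x},\mathbf{y}) = d_{EM}(D_2,D_1^{dec})\le\eps/3$, as claimed.

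The main obstacle I anticipate is purely bookkeeping rather than conceptual: making sure the ``lifting'' is consistent, i.e. that every $\mathbf{Y}\in\mathcal{V}'$ indeed decodes (under $\pi$) to some $\fen(\mathbf{z},\mathbf{y})$ so that the sums over $\mathbf{z}$ are well-defined, and that a target vector $\mathbf{Y}$ is reached \emph{only} from sources with the matching $\mathbf{z}$ — this is exactly why the flow is defined to be zero across differing $\mathbf{z}$, and the verification of the $\mathbf{Y}$-marginal relies on this. A secondary subtlety is that $D_1^{dec}(\mathbf{y})$ might be zero for some $\mathbf{y}$; but whenever $D_1^{dec}(\mathbf{y})=0$ we also have $f'_{\mathbf{x}\mathbf{y}}=0$ (since $\sum_\mathbf{x} f'_{\mathbf{x}\mathbf{y}}=D_1^{dec}(\mathbf{y})=0$) and $D_1(\fen(\mathbf{z},\mathbf{y})_{\pi^{-1}})=0$ for all $\mathbf{z}$, so these terms can simply be omitted and no division-by-zero occurs. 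With these checks in place the argument is complete.
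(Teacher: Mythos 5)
Your proposal matches the paper's proof essentially verbatim: the same identity flow on $\mathcal{U}$, the same lifted flow $f^\star_{\mathbf{X}\mathbf{Y}}=f'_{\mathbf{x}\mathbf{y}}\cdot D_1(\fen(\mathbf{z},\mathbf{y})_{\pi^{-1}})/D_1^{dec}(\mathbf{y})$ restricted to pairs sharing $\mathbf{z}$, the same marginal checks, and the same cost bound via Observation~\ref{obs:FEproperties}(iii). One tiny slip: by Equation~\eqref{def:enc-dist} the inner sum $\sum_{\mathbf{z}}D_1(\fen(\mathbf{z},\mathbf{y})_{\pi^{-1}})/D_1^{dec}(\mathbf{y})$ equals $D_1(\mathcal{V}')$ (not $1$), so the collapse gives $f'_{\mathbf{x}\mathbf{y}}\cdot D_1(\mathcal{V}')\le f'_{\mathbf{x}\mathbf{y}}$, which is what the paper writes and still yields the claimed bound.
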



Assuming Lemma~\ref{cl:close-P-gap} and Lemma~\ref{cl:emd-far} hold, now we proceed to prove Lemma~\ref{lem:far-D-star}.

\begin{proof}[Proof of Lemma~\ref{lem:far-D-star}]

From Lemma~\ref{lem:d1d2emd}, we know that $d_{EM}(D,D_1)\leq \eps/10$. So, the above two lemmas imply that $D$ is $(\eps/3+\eps/10+6\alpha)=2 \eps/3$-close to $\cP_{\mathrm{Gap}}$, which contradicts the fact that $D$ is $\eps$-far from $\cP_{\mathrm{Gap}}$. This completes the proof of the lemma.
\end{proof}

\color{black}

Now we will prove Lemma~\ref{cl:close-P-gap} and Lemma~\ref{cl:emd-far}.

\begin{proof}[Proof of Lemma\ref{cl:close-P-gap}]

We define another distribution $D_3$ over $\{0,1\}^N$ from $D_2^{enc}$ such that $D_3$ is in $\cP_{\mathrm{Gap}}$ and $d_{EM}(D_2^{enc},D_3)\leq 6 \alpha$ as follows:

\begin{enumerate}
    \item[(i)] $D_3(\mathbf{U}') = \alpha$.
    
    \item[(ii)]  $D_3(\mathbf{X}) = \frac{\alpha}{b}$ for every $\mathbf{X} \in \cS'$, $D_3(\mathbf{X}) = \frac{\alpha}{\lceil \log kn \rceil}$ for every $\mathbf{X} \in \cT'$.
    
    \item[(iii)] $D_3(\mathbf{X})= {(1 - 3 \alpha)} \cdot \frac{D_2^{enc}(\mathbf{X})}{D_2^{enc}(\mathcal{V}')}$ for every $\mathbf{X} \in \mathcal{V'}$.
\end{enumerate}

Recall that $D_2$ is a distribution over $\{0,1\}^n$ that has support size at most $n$. This implies that the set of vectors in $\mbox{{\sc Supp}}(D_2^{enc})\setminus \mathcal{U}$ is the encoding of at most $n$ vectors in $\{0,1\}^n$. So, from the definition of $\cP_{\mathrm{Gap}}$ and $D_3$, it is clear that $D_3 \in \cP_{\mathrm{Gap}}$.

Now we show that the Earth Mover Distance between the distributions $D_3$ and $D_2^{enc}$ is not large.

\begin{cl}\label{lem:dd1emd}
The Earth Mover Distance between $D_2^{enc}$ and $D_3$ is at most $6\alpha$.
\end{cl}

\begin{proof}
We will bound the Earth Mover Distance between $D_{2}^{enc}$ and $D_3$ in terms of the variation distance between them as follows:
\begin{eqnarray}
    d_{EM}(D_{2}^{enc},D_3) &\leq& \frac{1}{2} \cdot \sum_{\mathbf{X} \in \{0,1\}^N}|D_{2}^{enc}(\mathbf{X}) - D_3(\mathbf{X})| \nonumber \\ &=& \frac{1}{2} \cdot \sum_{\mathbf{X} \in \mathcal{V'}}|D_{2}^{enc}(\mathbf{X}) - D_3(\mathbf{X})| + \frac{1}{2} \cdot \sum_{\mathbf{X} \in \{0,1\}^N \setminus \mathcal{V'}}|D_{2}^{enc}(\mathbf{X}) - D_3(\mathbf{X})| \label{eqn:emddd1}.
\end{eqnarray}


Let us bound the first term as follows:
\begin{eqnarray*}
    \sum_{\mathbf{X} \in \mathcal{V'}}|D_{2}^{enc}(\mathbf{X}) - D_3(\mathbf{X})|
    &=& \sum_{\mathbf{X} \in \mathcal{V'}}| (1- 3 \alpha) \frac{D_{2}^{enc}(\mathbf{X})}{D_{2}^{enc}(\mathcal{V'})} - D_{2}^{enc}(\mathbf{X})| \\ &=& \sum_{\mathbf{X} \in \mathcal{V'}} \frac{D_{2}^{enc}(\mathbf{X})}{D_{2}^{enc}(\mathcal{V'})} |(1- 3 \alpha) - D_{2}^{enc}(\mathcal{V'})| \\ &=& \sum_{\mathbf{X} \in \mathcal{V'}} \frac{D_{2}^{enc}(\mathbf{X})}{D_{2}^{enc}(\mathcal{V'})} |3 \alpha - (1- D_{2}^{enc}(\mathcal{V'}))| \\ &\leq & \sum_{\mathbf{X} \in \mathcal{V'}} 3 \alpha \frac{D_{2}^{enc}(\mathbf{X})}{D_{2}^{enc}(\mathcal{V'})} \leq 3 \alpha. ~~~\left(\because D_{2}^{enc}(\mathcal{V'})\geq 1- 5 \alpha, \mbox{ Observation}~\ref{obs:v-dash-1}\right)
\end{eqnarray*}

From Observation~\ref{obs:v-dash}, $D_{2}^{enc}(\mathcal{U}) \leq 5 \alpha$. From the definition of $D_3$, $D_3(\mathcal{U}) = 3 \alpha$, we have
   $$\sum_{\mathbf{X} \in \{0,1\}^N \setminus \mathcal{V'}}|D_{2}^{enc}(\mathbf{X}) - D_3(\mathbf{X})| \leq 8 \alpha.$$
Following Equation~\ref{eqn:emddd1}, we conclude that $d_{EM}(D_{2}^{enc},D_3) \leq 6 \alpha$, which completes the proof.
\end{proof}

Since $D_3 \in \cP_{\mathrm{Gap}}$, and $d_{EM}(D_2^{enc}, D_3) \leq 6 \alpha$, we conclude that $D_2^{enc}$ is $6 \alpha$-close to $\cP_{\mathrm{Gap}}$.
\end{proof}


\begin{proof}[Proof of Lemma~\ref{cl:emd-far}]


Recall that the EMD between $D_2^{enc}$ and $D_1$ is the solution to the following LP:
\begin{eqnarray*}
    &&\mbox{Minimize}\quad{\sum_{\mathbf{X},\mathbf{Y} \in \{0,1\}^N} f_{\mathbf{X}\mathbf{Y}} d_H(\mathbf{X},\mathbf{Y})}\\
    &&\mbox{Subject to}\quad\sum_{\mathbf{Y} \in \{0,1\}^N} f_{\mathbf{X}\mathbf{Y}} = D_2^{enc}(\mathbf{X})~\forall \mathbf{X} \in \{0,1\}^N \; \mbox{and} \;
    \sum_{\mathbf{X} \in \{0,1\}^N} f_{\mathbf{X}\mathbf{Y}} = D_1(\mathbf{Y})~\forall \mathbf{Y} \in \{0,1\}^N.
\end{eqnarray*}

Let $f'_{\mathbf{x} \mathbf{y}}$ be the flow realizing the EMD between $D_2$ and $D_1^{dec}$. Using $f'$, we now construct a new flow $f^\star$ between $D_2^{enc}$ and $D_1$ as follows:

\begin{enumerate}
    \item[(i)]  For vectors $\mathbf{X}, \mathbf{Y} \in \mathcal{U}$,
    
\begin{enumerate}
    \item[(a)] If $\mathbf{X} \neq \mathbf{Y}$, then set $f^\star_{\mathbf{X} \mathbf{Y}}=0$. 

    \item[(b)] If $\mathbf{X} = \mathbf{Y}$, then set $f^\star_{\mathbf{X} \mathbf{Y}} = D_2^{enc}(\mathbf{X})=D_1(\mathbf{Y})$.
    \end{enumerate}
    
    \item[(ii)] For two vectors $\mathbf{X}, \mathbf{Y} \in \mathcal{V}$, we take the vectors $\mathbf{x}, \mathbf{y} \in \{0,1\}^n$ such that $\mathbf{X}, \mathbf{Y} \in \{0,1\}^N$ are their
valid encodings (by construction, if $\mathbf{X}$ and $\mathbf{Y}$ are in the support of $D_2$ and $D_1^{enc}$ respectively, such vectors $\mathbf{x}, \mathbf{y}$ exist), and vectors $\mathbf{z}_1,\mathbf{z}_2$ such that $\mathbf{X}_{\pi}=\fen(\mathbf{z}_1,\mathbf{x})$ and $\mathbf{Y}_{\pi}=\fen(\mathbf{z}_2, \mathbf{y})$. Now we set the flow as follows:

\begin{enumerate}
    \item[(a)] If $\mathbf{z}_1 \neq \mathbf{z}_2$, then set $f^\star_{\mathbf{X} \mathbf{Y}} = 0$.

    \item[(b)] If $\mathbf{z}_1 = \mathbf{z}_2$, then set $f^\star_{\mathbf{X} \mathbf{Y}} =  f'_{\mathbf{x} \mathbf{y}} \cdot \frac{D_1(\mathbf{Y})}{D_1^{dec}(\mathbf{y})}$.
\end{enumerate}

\item[(iii)] If one of $\mathbf{X}$ and $\mathbf{Y}$ is in $\mathcal{U}$ and the other one is in $\mathcal{V}$, then $f^\star_{\mathbf{X}\mathbf{Y}}=0$.

\end{enumerate}

We first argue that the flow $f^*_{\mathbf{X} \mathbf{Y}}$ constructed as above is indeed a valid flow, that is, we have:
$ \sum\limits_{\mathbf{Y} \in \{0,1\}^N} f^\star_{\mathbf{X} \mathbf{Y}} = D_2^{enc}(\mathbf{X})$ and $ \sum\limits_{\mathbf{X} \in \{0,1\}^N} f^\star_{\mathbf{X} \mathbf{Y}} = D_1(\mathbf{Y})$. 


To prove $\sum\limits_{\mathbf{Y} \in \{0,1\}^N} f_{\mathbf{X} \mathbf{Y}} = D_2^{enc}(\mathbf{X})$, first observe that it holds when $\mathbf{X} \in \mathcal{U}$ from (i) and (iii) in the description of $f^{\star}_{\mathbf{X} \mathbf{Y}}$. Now consider the case where $\mathbf{X} \in \mathcal{V}$. Assume $\mathbf{X}_{\pi}=\fen\left(\mathbf{z},\mathbf{x}\right)$, where $\mathbf{z} \in [n]^m$ and $\mathbf{x} \in \{0,1\}^n$. So, from (ii) in the description of $f^\star_{\mathbf{X}\mathbf{Y}}$, we have
$$\sum\limits_{\mathbf{Y} \in \{0,1\}^N} f^\star_{\mathbf{X} \mathbf{Y}}=\sum\limits_{\mathbf{y} \in \{0,1\}^n} f^\star_{\mathbf{X} \mathrm{FE}(\mathbf{z},\mathbf{y})_{\pi^{-1}}} = \sum_{\mathbf{y}\in\{0,1\}^n}f'_{\mathbf{x}\mathbf{y}}\frac{D_1(\mathrm{FE}(\mathbf{z},\mathbf{y})_{\pi^{-1}})}{D_1^{dec}(\mathbf{y})}=
D_2^{enc}(\mathbf{X}).$$



For $ \sum\limits_{\mathbf{X} \in \{0,1\}^N} f^\star_{\mathbf{X} \mathbf{Y}} = D_1(\mathbf{Y})$, consider $\mathbf{Y} \in \mathcal{V}$ for which $\mathbf{Y}_{\pi} = \mathrm{FE}(\mathbf{z}, \mathbf{y})$ for some $\mathbf{z} \in [n]^m$. Then we have the following:
$$\sum_{\mathbf{X}\in\{0,1\}^N}f_{\mathbf{X}\mathbf{Y}}^{\star}=\sum_{\mathbf{x}\in\{0,1\}^n}f'_{\mathbf{x}\mathbf{y}}\frac{D_1(\mathrm{FE}(\mathbf{z},\mathbf{y})_{\pi^{-1}})}{D_1^{dec}(\mathbf{y})}=D_1^{dec}(\mathbf{y})\frac{D_1(\mathrm{FE}(\mathbf{z},\mathbf{y})_{\pi^{-1}})}{D_1^{dec}(\mathbf{y})}=D_1(\mathbf{Y}).$$

In the above, we have used the fact that $f'_{\mathbf{x} \mathbf{y}}$ is a valid flow from $D_2$ to $D_1^{dec}$.\\

Now, to bound EMD between $D_2^{enc}$ and $D_1$, let us bound the sum $\sum\limits_{\mathbf{X},\mathbf{Y} \in \{0,1\}^N} f^\star_{\mathbf{X}\mathbf{Y}} d_H(\mathbf{X},\mathbf{Y})$.
\begin{eqnarray*}
&&\sum\limits_{\mathbf{X},\mathbf{Y} \in \{0,1\}^N} f^\star_{\mathbf{X}\mathbf{Y}} d_H(\mathbf{X},\mathbf{Y})\\
&=&\sum\limits_{\mathbf{X},\mathbf{Y} \in \mathcal{V}} f^\star_{\mathbf{X}\mathbf{Y}} d_H(\mathbf{X},\mathbf{Y})~~~\left(\mbox{From (i) and (iii) in the description of $f^\star$}\right)\\
&=& \sum\limits_{\mathbf{x},\mathbf{y} \in \{0,1\}^n}\sum\limits_{\mathbf{z}\in [n]^m} f^\star_{\mathbf{\fen(\mathbf{z},\mathbf{x})_{\pi^{-1}}}\fen(\mathbf{z},\mathbf{y})_{\pi^{-1}}}\cdot d_H({\fen(\mathbf{z},\mathbf{x})_{\pi^{-1}}},\fen(\mathbf{z},\mathbf{y})_{\pi^{-1}})\\ &&~~~~~~~~~~~~~~~~~~~~~~~~~~~~~~~~~~~~~~~~~~~~~~~~~~~~~~~~~~~~~~~~~~\left(\mbox{From (ii) in the description of $f^\star$}\right)\\
&\leq&\sum\limits_{\mathbf{x},\mathbf{y} \in \{0,1\}^n}\sum\limits_{\mathbf{z}\in [n]^m} f^\star_{\mathbf{\fen(\mathbf{z},\mathbf{x})_{\pi^{-1}}}\fen(\mathbf{z},\mathbf{y})_{\pi^{-1}}}\cdot d_H(\mathbf{x},\mathbf{y})~~~\left(\mbox{Observation~\ref{obs:FEproperties} (iii)}\right)\\
&=& \sum\limits_{\mathbf{x},\mathbf{y} \in \{0,1\}^n}\sum\limits_{\mathbf{z}\in [n]^m} f'_{\mathbf{x}\mathbf{y}}\frac{D_1(\fen(\mathbf{z},\mathbf{y})_{\pi^{-1}})}{D_1^{dec}(\mathbf{y})}\cdot d_H(\mathbf{x},\mathbf{y})~~~\left(\mbox{From (ii) in the description of $f^\star$}\right)\\
&=& \sum\limits_{\mathbf{x},\mathbf{y} \in \{0,1\}^n}\left(f'_{\mathbf{x}\mathbf{y}}d_H(\mathbf{x},\mathbf{y})\cdot  \sum\limits_{\mathbf{z}\in [n]^m} \frac{D_1(\fen(\mathbf{z},\mathbf{y})_{\pi^{-1}})}{D_1^{dec}(\mathbf{y})}\right)\\
&=& D_1(\mathcal{V}') \sum\limits_{\mathbf{x},\mathbf{y} \in \{0,1\}^n} f'_{\mathbf{x}\mathbf{y}}d_H(\mathbf{x},\mathbf{y})~~~(\mbox{By Equation}~\eqref{def:enc-dist})\\
&\leq& \sum\limits_{\mathbf{x},\mathbf{y} \in \{0,1\}^n} f'_{\mathbf{x}\mathbf{y}}d_H(\mathbf{x},\mathbf{y}) \leq \frac{\eps}{3}.
\end{eqnarray*}

The last inequality follows from the fact that $f'$ realizes the assumed EMD between $D_1$ and $D^{dec}_2$.
\end{proof}
\color{black}
\color{black}

\color{black}
\subsection{Near-quadratic lower bound for non-adaptive testers for testing $\cP_{\mathrm{Gap}}$}\label{sec:quadratic_nonadapt_lb}

\begin{lem}[{\bf Lower bound on non-adaptive testers}]\label{theo:nonadaptivelb}
Given sample and query access to an unknown distribution $D$, in order to distinguish whether $D$ satisfies $\cP_{\mathrm{Gap}}$ or is $\eps$-far from satisfying it, any non-adaptive tester must perform $\widetilde{\Omega}(n^{2})$ queries to the samples obtained from $D$, for some $\eps \in (0,1)$.
\end{lem}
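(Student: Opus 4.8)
The plan is to apply Yao's lemma for non-adaptive testers in the huge object model to a pair of distribution-over-distributions $\cD_{yes}$ and $\cD_{no}$ built on top of the support-estimation hard instances $D_{yes}^{\mathrm{Supp}}, D_{no}^{\mathrm{Supp}}$ of Theorem~\ref{theo:valiantlb}. Concretely, I would fix a uniformly random permutation $\pi:[N]\rightarrow[N]$ and a uniformly random $\mathbf{z}\in[n]^m$ (shared by all support vectors, or one fresh $\mathbf{z}$ per support element — I expect the shared choice suffices and simplifies the distance analysis). A sample from $\cD_{yes}$ (resp. $\cD_{no}$) is then: the permuted special vectors $\mathbf{U}_\pi$, $\cS_\pi$, $\cT_\pi$ with weights $\alpha,\alpha/b,\alpha/|\cT|$ as in $\cP_{\mathrm{Gap}}^0$, together with, with total weight $1-3\alpha$, the vectors $\fen(\mathbf{z},\mathbf{x})_{\pi^{-1}}$ where $\mathbf{x}$ ranges over a fixed encoding $\{0,1\}^n$-image of the $2n$ atoms of the support-estimation instance, and the probabilities are inherited (scaled by $1-3\alpha$) from $D_{yes}^{\mathrm{Supp}}$ resp. $D_{no}^{\mathrm{Supp}}$. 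The $\cD_{yes}$ side lands in $\cP_{\mathrm{Gap}}$ by construction (the decoded distribution has support size $n$).

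The first substantive step is the farness claim: every distribution in the support of $\cD_{no}$ is $\eps$-far from $\cP_{\mathrm{Gap}}$ for a suitable constant $\eps$. Here I would use Observation~\ref{obs:FEproperties}(i) and (ii): the encoding vectors coming from distinct atoms $\mathbf{x}\neq\mathbf{x'}$ are $\Omega(\zeta k\,\delta_H(\mathbf{x},\mathbf{x'})/N)$-far in EMD, and since the support-estimation instance guarantees $(1+2\eta)n$ atoms of non-negligible weight that are far apart in variation distance (hence, after the encoding which is injective on atoms, far in Hamming/EMD), any distribution in $\cP_{\mathrm{Gap}}$ — whose decoded part has support size at most $n$ — must move a constant fraction of mass a constant EMD distance. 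The argument is essentially the one sketched in the overview: "the size of its support is too large", combined with the fact that mutual Hamming distances and support size are permutation-invariant, so the cheap reordering of the $O(\alpha)$-weight ordering vectors cannot help. I would make $\eps$ depend on $\eta$ and $\zeta$ and note $\alpha=1/\log n\to 0$ so the $3\alpha$ special-vector mass is negligible.

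The core step is the indistinguishability claim for Yao: for any query sequence $\mathcal{J}=(J_1,\dots,J_s)$ of length $s$ and size $q = o(n^2/\log^2 n)$, the restrictions $\cD_{yes}\mid_{\mathcal{J}}$ and $\cD_{no}\mid_{\mathcal{J}}$ have variation distance $<1/3$. The plan: condition on $\pi$. For each sampled vector, its query set $J_i$ gets mapped by $\pi^{-1}$ to a random set of size $|J_i|$ in $[N]$. If $|J_i| \le \zeta N/2k = \tilde\Omega(n)$ and moreover $|J_i\cap C'_j|\le \zeta k$ for every block $j$, then by Lemma~\ref{lem:restrictionuniform} the restriction of $\fen(\mathbf{z},\mathbf{x})_{\pi^{-1}}$ to $J_i$ (off the fixed $[b+1]$ part, whose image we can detect but which carries no info about $\mathbf{x}$) is uniform, independent of $\mathbf{x}$; moreover the special vectors $\mathbf{U}_\pi,\cS_\pi,\cT_\pi$ have the same restriction-distributions under $\cD_{yes}$ and $\cD_{no}$ since those vectors are identical in the two. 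Hence, conditioned on the "good" event that no sample has a query set hitting $\ge\zeta k$ indices inside a single block $C'_j$, the two restriction distributions are identical. It remains to bound the probability of the bad event: a fixed $J_i$ of size $q_i$, under a random $\pi$, hits $\ge \zeta k$ of the $k$ indices of some specific block with probability at most $\binom{q_i}{\zeta k}(k/N)^{\zeta k}$ by a union/tail bound; union over the $n$ blocks and over the $s\le q$ samples, and since $k=\Theta(\log n)$ and $N=\Theta(n\log n)$, this is $o(1)$ as long as each $q_i$, and hence $q$, is $o(n^{2}/\mathrm{polylog}\,n)$ — this is exactly where the near-quadratic bound $\widetilde\Omega(n^2)$ comes from (one factor of $n$ from needing $\Omega(n/\log n)$ samples à la Theorem~\ref{theo:valiantlb}, one factor of $n$ from needing $\widetilde\Omega(n)$ queries per sample to decode even a single block). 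I also need to double-check the regime $|J_i| > \zeta N/2k$: such "fat" samples are few — at most $q/(\zeta N/2k) = q\cdot\mathrm{polylog}(n)/n$ of them — and for them I can afford to reveal everything, which still leaks no information about which support-estimation instance we are in, because a single fully-revealed sample is just one draw from $D_{yes}^{\mathrm{Supp}}$ vs $D_{no}^{\mathrm{Supp}}$ and Theorem~\ref{theo:valiantlb}(iv) says $o(n/\log n)$ such draws are $1/4$-indistinguishable.

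The main obstacle I anticipate is bookkeeping the conditioning carefully so that the "uniform restriction" guarantee of Lemma~\ref{lem:restrictionuniform} composes correctly across the $s$ samples while the random $\mathbf{z}$ (if shared) and the random $\pi$ are used consistently — in particular making sure that the event "some block is over-queried in some sample" is the only thing that breaks the identity of the two restriction distributions, and that outside this event the contribution of the $o(n/\log n)$-bounded number of fully-queryable samples is controlled by the Valiant–Valiant closeness bound rather than by the encoding argument. Assembling these pieces — farness (constant $\eps$), the good-event probability $1-o(1)$, identity of conditional restrictions, and the residual $1/4$ gap from full samples — gives total variation $<1/3$ for $q = o(n^2/\mathrm{polylog}\,n)$, and Yao's lemma then yields the claimed $\widetilde\Omega(n^2)$ non-adaptive lower bound, completing the proof of Lemma~\ref{theo:nonadaptivelb} and hence Theorem~\ref{theo:lb-main_new}.
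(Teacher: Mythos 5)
Your overall architecture --- Yao's lemma applied to encodings of the Valiant--Valiant support-estimation instances under a uniformly random index permutation, with Lemma~\ref{lem:restrictionuniform} providing uniformity of small restrictions and the budget of fully-revealed ``large'' samples controlled by Theorem~\ref{theo:valiantlb}(iv) --- is the paper's approach and has the right shape. However, there is a genuine gap in your treatment of the randomness $\mathbf{z}$ fed to $\gen$. You propose either a single $\mathbf{z}$ shared by all support vectors, or one $\mathbf{z}_i$ per support element, and you lean towards the shared choice. \emph{Both} options break the lower bound. In either case, two samples from the \emph{same} atom $\mathbf{y}_i$ produce the \emph{identical} vector $\fen(\mathbf{z}_i,\mathbf{y}_i)_{\pi^{-1}}$, while samples from \emph{distinct} atoms differ in a constant fraction of coordinates (by Observation~\ref{obs:FEproperties} together with the $n/3$-separation of the $\mathbf{y}_i$'s). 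A non-adaptive tester can therefore take $\Theta(n/\log n)$ samples, query $\Theta(\log n)$ uniformly random indices in each, and with probability $1-o(1)$ read off the exact collision pattern (fingerprint) of $\widetilde{D}$; feeding this to a support-size estimator then distinguishes $D_{yes}$ from $D_{no}$ with $\tOh(n)$ non-adaptive queries --- exactly the bound you are trying to rule out. The trap is that Lemma~\ref{lem:restrictionuniform} is about the \emph{marginal} of one sample with random $\mathbf{z}$; it does not yield a product-of-uniforms joint distribution across samples once $\mathbf{z}$ is reused, so the conditional indistinguishability in your ``good event'' argument would simply be false.

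The paper avoids this by setting $\widehat{D}(\fen(\mathbf{z},\mathbf{y}_i)) = (1-3\alpha)\widetilde{D}(i)/n^m$ for \emph{every} $\mathbf{z}\in[n]^m$, spreading each atom's mass uniformly over all $n^m$ of its valid encodings, so that each sample effectively draws a \emph{fresh, independent} uniform $\mathbf{z}$. Conditioned on $\pi$ and the good event, the small-query restrictions across samples are then independent uniform strings carrying zero information about the underlying fingerprint, and only the (few) fully-revealed samples contribute, handled by the Valiant--Valiant sample lower bound as you intended. This mass-spreading is also what makes the farness argument clean as a support-size comparison in the blown-up universe of size $n^{m+1}$ (Lemma~\ref{lem:farness} via Claim~\ref{obs:distencode}); your triangle-inequality variant of farness could be salvaged for a fixed $\mathbf{z}$, but the indistinguishability side cannot. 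The remaining pieces of your write-up --- the large/small query split, the Hoeffding tail on $|I\cap\sigma(C_j)|$, the union bound over samples and blocks, and the residual $1/4$ gap --- closely match Observation~\ref{obs:queryintersectCj} and Claim~\ref{cl:largequeryset}.
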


To prove the above lemma, we will construct two hard distributions over distributions, $D_{yes}$ which is supported over $\cP_{\mathrm{Gap}}$, and $D_{no}$ which is supported over distributions far from $\cP_{\mathrm{Gap}}$, where to distinguish them, any non-adaptive tester must perform $\widetilde{\Omega}(n^{2})$ queries. Recall from Theorem~\ref{theo:valiantlb} that $D_{yes}^{\mathrm{Supp}}$ and $D_{no}^{\mathrm{Supp}}$ are two distributions defined over distributions over $\{1, \ldots, 2n\}$, where $D_{yes}^{\mathrm{Supp}}$ provides distributions whose support sizes are $n$, and $D_{no}^{\mathrm{Supp}}$ provides distributions that are $\eta$-far from distributions whose support size is $(1 + 2 \eta)n$, for some constant $\eta \in (0, 1/8)$. We will use these two distributions to construct the hard distributions $D_{yes}$ and $D_{no}$ for the property $\cP_{\mbox{Gap}}$.

\paragraph*{The hard distributions $D_{yes}$ and $D_{no}$:}
We describe the distributions $D_{yes}$ and $D_{no}$ over distributions over $\{0,1\}^N$ such that $D_{yes}$ is supported over $\cP_{\mathrm{Gap}}$ and $D_{no}$ is supported over distributions that are $\zeta^2 \cdot \eta/5$-far from $\cP_{\mathrm{Gap}}$. In what follows, we describe a distribution $D$ ($D=D_{yes}$ or $D=D_{no}$) with $D^{\mathrm{Supp}}$ as parameter, where $D^{\mathrm{Supp}}$ is a distribution defined over distributions over $[2n]$. In particular, $D^{\mathrm{Supp}}$ is either $D_{yes}^{\mathrm{Supp}}$ or $D_{no}^{\mathrm{Supp}}$, where $D=D_{yes}$ when $D^{\mathrm{Supp}}=D_{yes}^{\mathrm{Supp}}$, or $D=D_{no}$ when $D^{\mathrm{Supp}}=D_{no}^{\mathrm{Supp}}$. To generate $D$, we first construct a distribution over distributions $D^0$ as follows. We denote by $\widehat{D}$ the distribution over $\{0,1\}^N$ that we draw according to $D^0$.



\begin{description}
\item[(i)] Set $\widehat{D}(U)=\alpha$, where $\mathbf{U}= \mathbf{1}\mathbf{0}^{N-1}$ is the indicator vector for the index $1$.

\item[(ii)] Take a set of vectors $\cS= \{\mathbf{V}_1, \ldots, \mathbf{V}_b\}$ in $\{0,1\}^N$ such that for every $i\in [b]$, the $i$-th vector $\mathbf{V}_i$ is of the form $\mathbf{1}^{i+1}\mathbf{0}^{N-1-i}$. Set $\widehat{D}(\mathbf{V}_i) = \alpha/b$ for every $i \in [b]$.
     
\item[(iii)] Take another set of vectors $\cT= \{\mathbf{W}_0, \ldots, \mathbf{W}_{\lceil \log kn \rceil -1}\}$ (disjoint from $\cS$) in $\{0,1\}^N$ such that for every $\mathbf{W}_i \in \cT$, $\mathbf{W}_i$ is of the form $\mathbf{0}(b(i))(\mathbf{0}^{2^i}\mathbf{1}^{2^i})^{kn/2^{i+1}}$, where $b(i)$ denotes the length $b$ binary representation of $i$~\footnote{If $kn/2^{i+1}$ is not an integer, we trim the rightmost copy of $\mathbf{0}^{2^i}\mathbf{1}^{2^i}$ so that the total length of ``$(\mathbf{0}^{2^i} \mathbf{1}^{2^i})^{kn/2^{i+1}}$'' is exactly $kn$.}. Set $\widehat{D}(\mathbf{W}_i)=\alpha/|\cT|$ for every $\mathbf{W}_i \in \cT$.


\item[(iv)] Take a set of vectors $\cY \subseteq \{0, 1\}^n$ such that $\size{\cY}= 2n$, and for any two vectors $\mathbf{y}_i, \mathbf{y}_j \in \cY$, $i \neq j$, $\delta_H(\mathbf{y}_i, \mathbf{y}_j ) \geq n/3$. Also, draw a distribution $\widetilde{D}$ over $[2n]$ according to $D^{\mathrm{Supp}}$.

\item[(v)] Define $\widehat{D}(\mbox{{\sc FE}}(\mathbf{z}, \mathbf{y}_i)) = (1 - 3 \alpha)\widetilde{D}(i)/n^m$ for every $i \in [2n]$ and $\mathbf{z} \in [n]^m$, where $\mbox{{\sc FE}}:[n]^m \times \{0, 1\}^n \rightarrow \{0, 1\}^N$ is the encoding function from Definition~\ref{defi:encoding}.




    \item[(vi)] For all other remaining vectors that are not assigned probability mass in the above description, set their probabilities to $0$.

\end{description}

We define $D$ as the process of drawing a distribution $\widehat{D}$ according to $D^0$, and permuting it using a uniformly random permutation $\pi: [N] \rightarrow [N]$.

\begin{rem}[{\bf Intuition behind the above hard distributions}]

Unlike our adaptive algorithm to test $\cP_{\mathrm{Gap}}$ (Algorithm~\ref{algo:adaptive} in Subsection~\ref{sec:quadratic_adaptive_ub}), we can not determine the permutation $\pi$ first, and then perform queries depending on the permutation $\pi$. When the permutation $\pi$ is not known, even if we obtain a sample $\mathbf{X}$ and know that it is equal to $\mbox{\sc FE}(\mathbf{z},\mathbf{x})_{\pi^{-1}}$ for some $\mathbf{x}\in \{0,1\}^n$ and $\mathbf{z} \in [n]^m$, we can not even decode a single bit of $\mathbf{x}$, unless we query too many of the indices of $\mathbf{X}$. This follows from the properties of our encodings functions $\mbox{\sc SE}$ and $\mbox{\sc GE}$, used to construct $\mbox{\sc FE}$ (see Lemma~\ref{lem:restrictionuniform}), which ``hides'' $\mathbf{x}$ inside $\mathbf{X}$. Intuitively, this says that we have to query a quasilinear number of the coordinates of the sample. Since the support estimation problem admits a sample complexity lower bound of $\Omega(n/\log n)$, the non-adaptive query complexity of $\widetilde{\Omega}(n^2)$ follows for non-adaptive algorithms. We will formalize this intuition below.
\end{rem}

We will start with the following simple observation.

\begin{obs}
The distribution $D_{yes}$ is supported over $\cP_{\mathrm{Gap}}$.
\end{obs}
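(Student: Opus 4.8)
The plan is to verify directly that any distribution $\widehat{D}$ drawn according to $D_{yes}^0$ lies in $\cP^0_{\mathrm{Gap}}$, and then invoke the definition of $\cP_{\mathrm{Gap}}$ together with the fact that $D_{yes}$ is obtained from $D_{yes}^0$ by applying a uniformly random permutation $\pi$. Since $\cP_{\mathrm{Gap}}$ is exactly the set of distributions $D$ for which $D_\pi \in \cP^0_{\mathrm{Gap}}$ for some permutation $\pi$, it suffices to show $\widehat{D} \in \cP^0_{\mathrm{Gap}}$ for every $\widehat{D}$ in the support of $D_{yes}^0$; applying $\pi^{-1}$ then witnesses membership of the permuted distribution in $\cP_{\mathrm{Gap}}$.

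First I would go through the four defining conditions of $\cP^0_{\mathrm{Gap}}$ one at a time and match them against Steps (i)--(vi) of the construction of $D_{yes}^0$ (with parameter $D^{\mathrm{Supp}}=D_{yes}^{\mathrm{Supp}}$). Condition (i) of $\cP^0_{\mathrm{Gap}}$ asks $D(\mathbf{U})=\alpha$ for $\mathbf{U}=\mathbf{1}\mathbf{0}^{N-1}$, which is exactly Step (i). Condition (ii) asks for the chain $\cS=\{\mathbf{V}_1,\ldots,\mathbf{V}_b\}$ with $\mathbf{V}_i=\mathbf{1}^{i+1}\mathbf{0}^{N-1-i}$ and $D(\mathbf{V}_i)=\alpha/b$, which is Step (ii). Condition (iii) asks for the set $\cT=\{\mathbf{W}_0,\ldots,\mathbf{W}_{\lceil\log kn\rceil-1}\}$ with the prescribed form and $D(\mathbf{W}_i)=\alpha/|\cT|$, which is Step (iii). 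Condition (iv) asks that $\mbox{Supp}(D)\setminus(\{\mathbf{U}\}\cup\cS\cup\cT)$ consist of valid encodings (with respect to $\sen$ and $\gen$) of at most $n$ vectors of $\{0,1\}^n$, and that the total mass there be $1-3\alpha$. Here I would observe that Step (v) places mass only on vectors of the form $\fen(\mathbf{z},\mathbf{y}_i)$ for $\mathbf{z}\in[n]^m$ and $i\in[2n]$, hence all these are valid encodings by definition of $\fen$; however, the number of \emph{distinct} underlying vectors $\mathbf{y}_i$ receiving positive mass equals the support size of $\widetilde D$, which by Theorem~\ref{theo:valiantlb}(ii) is exactly $n$ since $\widetilde D$ is drawn according to $D_{yes}^{\mathrm{Supp}}$. (One should also remark that the $\mathbf{y}_i$'s are distinct since $\delta_H(\mathbf{y}_i,\mathbf{y}_j)\ge n/3>0$, so support size of $\widetilde D$ indeed equals the number of distinct encoding vectors appearing.) Finally the total mass on these encoding vectors is $\sum_{i\in[2n]}\sum_{\mathbf{z}\in[n]^m}(1-3\alpha)\widetilde D(i)/n^m=(1-3\alpha)\sum_i \widetilde D(i)=1-3\alpha$, matching the required $D(\mbox{Supp}(D)\setminus(\{\mathbf{U}\}\cup\cS\cup\cT))=1-3\alpha$; combined with Step (vi) assigning $0$ elsewhere, this also confirms $\widehat D$ is a genuine probability distribution.

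Having checked all four conditions, I conclude $\widehat D\in\cP^0_{\mathrm{Gap}}$; since $D_{yes}$ draws such a $\widehat D$ and then applies a random permutation $\pi$, we have $(D_{yes})_{\pi^{-1}}=\widehat D\in\cP^0_{\mathrm{Gap}}$ (more precisely, for the actual sample, $D_{yes}=\widehat{D}_\pi$ so $(D_{yes})_{\pi^{-1}}=\widehat D$), which by definition means $D_{yes}$ is supported on $\cP_{\mathrm{Gap}}$. There is no serious obstacle here; the only mild subtlety is the bookkeeping around Condition (iv), namely confirming that ``support size $n$'' of $\widetilde D$ (per Theorem~\ref{theo:valiantlb}) translates to ``at most $n$ distinct encoded vectors'' and that the encoded vectors $\mathbf{y}_i$ chosen in Step (iv) are pairwise distinct, both of which follow immediately from the pairwise Hamming separation $\delta_H(\mathbf{y}_i,\mathbf{y}_j)\ge n/3$ imposed in Step (iv). The separation property itself plays no role in this particular observation (it is needed later for the farness of $D_{no}$), so it can be mentioned only in passing.
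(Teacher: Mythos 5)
Your proposal is correct and follows essentially the same route as the paper, which simply remarks that membership in $\cP_{\mathrm{Gap}}$ is immediate from the construction of $D_{yes}$ out of $D_{yes}^{\mathrm{Supp}}$. You have merely unfolded that one-sentence argument into an explicit condition-by-condition check of $\cP^0_{\mathrm{Gap}}$ (including the mass computation and the ``support size $n$'' point from Theorem~\ref{theo:valiantlb}(ii)), which is correct and perhaps clearer but not a different method.
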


\begin{proof}
From the construction of $D_{yes}$, which is constructed by encoding the elements of the support of the distribution $D_{yes}$ drawn from $D_{yes}^{\mathrm{Supp}}$, it is clear that $D_{yes} \in \cP_{\mathrm{Gap}}$.
\end{proof}

Now we show that the distribution $D_{no}$ is supported over distributions that are far from the property $\cP_{\mathrm{Gap}}$.

\begin{lem}[{\bf Farness lemma}]\label{lem:farness}
$D_{no}$ is supported over distributions that are $\zeta^2 \cdot \eta/5$-far from $\cP_{\mathrm{Gap}}$.

\end{lem}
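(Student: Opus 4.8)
\textbf{Proof plan for Lemma~\ref{lem:farness}.}

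The plan is to show that any distribution $\widehat{D}$ in the support of $D_{no}$ is $\zeta^2\eta/5$-far from $\cP_{\mathrm{Gap}}$ by using a counting/distance argument that is invariant under index-permutations (so that it suffices to argue about $\widehat{D}$ after applying $\pi^{-1}$, i.e. about the distribution $D^0$ before permutation). Fix such a $\widehat{D}$; write it as $D^0$ composed with a permutation, and let $\widetilde{D}$ over $[2n]$ be the distribution drawn from $D_{no}^{\mathrm{Supp}}$ used in its construction. By Theorem~\ref{theo:valiantlb}(iii), $\widetilde{D}$ has support size at least $(1+2\eta)n$, and in fact is $\eta$-far in variation distance from every distribution over $[2n]$ of support size $(1+2\eta)n$; in particular, for \emph{every} set $T\subseteq[2n]$ with $|T|\le n$ we have $\widetilde{D}([2n]\setminus T)\ge \eta$ (since otherwise $\widetilde{D}$ would be $\eta$-close to the restriction-and-renormalization supported on a set of size $\le n < (1+2\eta)n$, after padding). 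I would state this consequence as a small claim and prove it directly from the support-size lower bound.

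Next I would take an arbitrary $D'\in\cP_{\mathrm{Gap}}$ and lower-bound $d_{EM}(\widehat{D},D')$. Since $\cP_{\mathrm{Gap}}$ is closed under permutations and EMD is permutation-invariant, it suffices to lower-bound $d_{EM}(D^0, D'')$ over all $D''\in\cP^0_{\mathrm{Gap}}$ (by re-permuting $D'$). Now use that $\cP^0_{\mathrm{Gap}}$ consists of distributions whose ``encoded part'' $\mathrm{Supp}(D'')\setminus(\{\mathbf U\}\cup\cS\cup\cT)$ is a set of valid encodings of at most $n$ distinct vectors from $\{0,1\}^n$. The key geometric input is Observation~\ref{obs:FEproperties}(i) and (ii): distinct encoded vectors $\fen(\mathbf z,\mathbf x)$ are $\Omega(\zeta^2 N)$-apart when $\mathbf z\neq\mathbf z'$, and when $\mathbf z=\mathbf z'$ but $\mathbf x\neq\mathbf x'$ they are $\ge \zeta k\,\delta_H(\mathbf x,\mathbf x')$-apart, and in $D^0$ the underlying vectors $\mathbf y_i$ were chosen pairwise $n/3$-far, so their encodings are pairwise $\ge \zeta k n/3 = \Omega(\zeta N)$-apart regardless of the $\mathbf z$ component. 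Hence in any flow from $D^0$ to $D''$, any probability mass sitting on an encoding of some $\mathbf y_i$ that does not correspond to one of the $\le n$ vectors encoded by $D''$ must travel Hamming distance $\Omega(\zeta)$ (normalized). The total such ``excess'' mass is at least $(1-3\alpha)\cdot\widetilde{D}(\text{indices not among the } \le n \text{ chosen})\ge (1-3\alpha)\eta$ by the claim above, and a small additional $O(\alpha)$ slack accounts for mass moved among the special vectors $\{\mathbf U\}\cup\cS\cup\cT$. Putting the constants together gives $d_{EM}(D^0,D'')\ge \zeta\cdot(1-3\alpha)\eta/3 - O(\alpha)\ge \zeta^2\eta/5$ for $\alpha=1/\log n$ small enough and $\zeta<1/2$; I would not optimize the constant, only check it clears $\zeta^2\eta/5$.

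The main obstacle I anticipate is the bookkeeping in the flow argument: one must be careful that the $\le n$ vectors in the support of $D''$ could in principle be aligned with $n$ of the $2n$ vectors $\{\mathbf y_i\}$ used in $D^0$ so as to minimize transport cost, and that $D''$ may also put its encoded mass on many different $\mathbf z$-values, which is why we need the $\mathbf z$-independent lower bound $\delta_H(\fen(\mathbf z,\mathbf y_i),\fen(\mathbf z',\mathbf y_j))\ge \zeta^2 N/2$ for $\mathbf z\ne\mathbf z'$ together with the $\mathbf y$-separation bound for $\mathbf z=\mathbf z'$; combining these two cases uniformly is the delicate point. The other thing to verify is that the $O(\alpha)$ terms coming from the special-vector mass (weight $3\alpha$) and from the $(1-3\alpha)$ normalization are genuinely absorbed — this is routine since $\alpha=1/\log n\to 0$, so for $n$ large the bound $\zeta^2\eta/5$ holds with room to spare. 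I would finish by noting this holds for every $\widehat D$ in the support of $D_{no}$, which is exactly the statement of the lemma.
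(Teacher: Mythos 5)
Your high-level plan—show that a large amount of $\widehat{D}$'s mass sits on encodings that must travel $\Omega(\zeta^2)$ in any coupling with a member of $\cP_{\mathrm{Gap}}$—matches the paper's, but the reduction step is not valid as stated. You write that ``it suffices to lower-bound $d_{EM}(D^0, D'')$ over all $D''\in\cP^0_{\mathrm{Gap}}$.'' This does not follow from permutation invariance: if you re-permute a given $D'\in\cP_{\mathrm{Gap}}$ to land in $\cP^0_{\mathrm{Gap}}$, you must apply the same permutation to $\widehat{D}$, and the result is in general not $D^0$, since the canonicalizing permutations of the two sides differ. Thus the distribution you must compare against $D^0$ is an arbitrary member of $\cP_{\mathrm{Gap}}$, whose encoded support is $\{\fen(\mathbf{z}',\mathbf{x}_j)_{\rho^{-1}}\}$ for an unknown $\rho$. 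Once $\rho$ is arbitrary, the cross-distances $d_H(\fen(\mathbf{z},\mathbf{y}_i),\,\fen(\mathbf{z}',\mathbf{x}_j)_{\rho^{-1}})$ are no longer controlled by Observation~\ref{obs:FEproperties}, so the $\{0,1\}^n$-level reasoning about which $\mathbf{y}_i$ are close to which $\mathbf{x}_j$ does not go through. (You flag the ``alignment'' of the $\mathbf{x}_j$ with the $\mathbf{y}_i$ as the delicate point, but the deeper obstacle is this permutation mismatch.)

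The paper sidesteps this entirely by arguing only with permutation-invariant quantities in $\{0,1\}^N$. After establishing, via Lemma~\ref{cl:encodevecfar} (which uses Observation~\ref{obs:FEproperties}(i),(ii) together with the $n/3$-separation of the $\mathbf{y}_i$), that any two encoded-support vectors of $\widehat{D}$ are at normalized distance $\geq \zeta^2/2$, one just counts: the encoded support of the closest $D_Y\in\cP_{\mathrm{Gap}}$ has size at most $n^{m+1}$ (at most $n$ decoded vectors $\times$ at most $n^m$ valid $\mathbf z$-encodings each), while $\widehat{D}$'s encoded support has size at least $(1+2\eta)n^{m+1}$ with per-vector mass $\geq n^{-m-1}/2$. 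This weight bound uses the granularity condition of Theorem~\ref{theo:valiantlb}(i), which your intermediate claim $\widetilde{D}([2n]\setminus T)\ge\eta$ also requires—the support-size lower bound alone does not force it, since the ``excess'' $2\eta n$ elements could carry arbitrarily little mass. A ball of radius $\zeta^2/4$ around each vector in $D_Y$'s encoded support captures at most one vector of $\widehat{D}$'s encoded support by the triangle inequality, so at least $2\eta n^{m+1}$ vectors of the latter are $\zeta^2/4$-far from everything in $D_Y$'s support, and the EMD bound $\zeta^2\eta/5$ follows. Support sizes, per-vector weights, and pairwise distances within a single support are all invariant under index-permutation, so this never needs $D_Y$ in canonical form. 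If you rewrite your flow argument in these $\{0,1\}^N$-level packing terms, the gap closes, but you will have essentially rederived the paper's proof.
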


Before directly proceeding to the proof, let us first prove an additional lemma which will be used in the proof of Lemma~\ref{lem:farness}.

\begin{lem}\label{cl:encodevecfar}
    For any two distinct vectors $\mathbf{X}_1$ and $\mathbf{X}_2$ where $\mathbf{X}_{1,\pi}$, $\mathbf{X}_{2,\pi} \in \mathrm{Supp} (\widehat{D}) \setminus (\{\mathbf{U}\} \cup \cS \cup \cT)$ for $\widehat{D} \in \mathrm{Supp}(D_{no})$, and $\pi$ is the permutation for which $\widehat{D}_{\pi} \in D^0_{no}$, we have $\delta_H(\mathbf{X}_1, \mathbf{X}_2) \geq \zeta^2 \cdot N/2$.
\end{lem}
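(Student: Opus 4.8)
\textbf{Proof plan for Lemma~\ref{cl:encodevecfar}.}

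The statement asserts that any two distinct encoding vectors appearing in the support of a $D_{no}$-distribution (after undoing the random permutation $\pi$) are at absolute Hamming distance at least $\zeta^2 \cdot N/2$. Since Hamming distance is invariant under applying the same permutation to both vectors, it suffices to prove the claim for $\mathbf{X}_{1,\pi}$ and $\mathbf{X}_{2,\pi}$, i.e.\ for two distinct vectors in $\mathrm{Supp}(\widehat{D}) \setminus (\{\mathbf{U}\} \cup \cS \cup \cT)$ where $\widehat{D} \in \mathrm{Supp}(D^0_{no})$. By Step~(v) of the construction of $D^0_{no}$, every such vector is of the form $\fen(\mathbf{z},\mathbf{y}_i)$ for some $\mathbf{z} \in [n]^m$ and some $i \in [2n]$, where the $\mathbf{y}_i$ are the $2n$ pairwise $(n/3)$-separated vectors from $\cY$ chosen in Step~(iv).

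The plan is to split into two cases according to whether the two encoding vectors differ in their ``$\mathbf{z}$-part'' or only in their ``$\mathbf{y}$-part''. Write $\mathbf{X}_{1,\pi} = \fen(\mathbf{z}_1, \mathbf{y}_{i_1})$ and $\mathbf{X}_{2,\pi} = \fen(\mathbf{z}_2, \mathbf{y}_{i_2})$. If $\mathbf{z}_1 \neq \mathbf{z}_2$, then Observation~\ref{obs:FEproperties}(i) immediately gives $\delta_H(\fen(\mathbf{z}_1,\mathbf{y}_{i_1}), \fen(\mathbf{z}_2,\mathbf{y}_{i_2})) \geq \zeta^2 \cdot N/2$, and we are done. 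If $\mathbf{z}_1 = \mathbf{z}_2$, then since the two vectors are distinct we must have $i_1 \neq i_2$, hence $\mathbf{y}_{i_1} \neq \mathbf{y}_{i_2}$. Now invoke Observation~\ref{obs:FEproperties}(ii), which gives $\delta_H(\fen(\mathbf{z}_1,\mathbf{y}_{i_1}), \fen(\mathbf{z}_1,\mathbf{y}_{i_2})) \geq \zeta k \cdot \delta_H(\mathbf{y}_{i_1},\mathbf{y}_{i_2})$. By the separation property of $\cY$ from Step~(iv), $\delta_H(\mathbf{y}_{i_1},\mathbf{y}_{i_2}) \geq n/3$, so the distance is at least $\zeta k n / 3$. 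It remains to check that $\zeta k n/3 \geq \zeta^2 N/2$; recalling $N = 1 + b + kn = \Oh(n\log n)$ and $k = \Oh(\log n)$ (so $kn$ is the dominant term of $N$ and in particular $kn \geq \zeta N / 2 \cdot (\text{const})$), this reduces to a straightforward inequality between the parameters, using $\zeta < 1/2$; I expect the constants in Step~(iv) (the $n/3$ separation) to have been chosen precisely so that this goes through with room to spare.

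The main (and really only) obstacle is bookkeeping of the constants: making sure that $\zeta k n/3 \geq \zeta^2 N/2$ genuinely holds given the definitions $N = 1+b+kn$, $b = \lfloor \log(\lceil \log kn\rceil)\rfloor + 1$, and $\zeta \in (0,1/2)$. Since $b = o(kn)$, we have $N \leq kn(1 + o(1))$, so $\zeta^2 N/2 \leq \zeta^2 kn (1+o(1))/2 < \zeta k n/3$ holds for large $n$ because $\zeta/2 < 1/2 < 1 < (1/3)/\zeta \cdot (\text{something})$—more carefully, $\zeta^2 N / 2 \le \zeta k n \cdot \zeta(1+o(1))/2 \le \zeta k n / 3$ since $\zeta(1+o(1))/2 \le 1/3$ for $\zeta$ small enough, and in any case the $n/3$-separation constant could be enlarged to $n/2$ if needed. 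I would state the case split cleanly, cite Observation~\ref{obs:FEproperties}(i) and (ii) and the separation from Step~(iv), and close with the one-line parameter comparison. No deep idea is required here; the lemma is a direct consequence of the distance properties of $\fen$ already established together with the design of the hard distribution.
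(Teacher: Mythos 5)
Your proof is correct and follows essentially the same route as the paper's: split on whether the $\mathbf{z}$-parts agree, apply Observation~\ref{obs:FEproperties}(i) when they differ, and apply Observation~\ref{obs:FEproperties}(ii) together with the $n/3$-separation of $\cY$ when they agree, closing with the parameter comparison $\zeta k n/3 > \zeta^2 N/2$ (which, as you note, follows from $\zeta < 1/2$ and $b = o(kn)$, without needing $\zeta$ to be ``small enough'' beyond that bound). No gap; the only difference from the paper is a slightly more cautious verification of the final inequality.
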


\begin{proof}
We will use the properties of the function $\mbox{{\sc FE}}$ as mentioned in Observation~\ref{obs:FEproperties}. Recall that for a string $\mathbf{z} \in [n]^m$, and a vector $\mathbf{x} = (\mathbf{x}_1, \ldots, \mathbf{x}_n) \in \{0,1\}^n$, we have $\mbox{{\sc FE}}(\mathbf{z},\mathbf{x})= \mathbf{0}(\mathbf{1}^b)\sen(\gen(\mathbf{z})_1,\mathbf{x}_1) \ldots \sen(\gen(\mathbf{z})_n,\mathbf{x}_n)$. Now we have the following two cases:

\begin{enumerate}
    \item[(a)] Suppose that for some vectors $\mathbf{x} \in \{0,1\}^n$, and $\mathbf{z}_1, \mathbf{z}_2 \in [n]^m$ such that $\mathbf{z}_1 \neq \mathbf{z}_2$, we have $\mathbf{X}_{1, \pi} = \mbox{{\sc FE}}(\mathbf{z}_1, \mathbf{x})$ and $\mathbf{X}_{2, \pi} = \mbox{{\sc FE}}(\mathbf{z}_2, \mathbf{x})$. Then following Property (i) of $\mbox{{\sc FE}}$ in Observation~\ref{obs:FEproperties}, we know that $\delta_H(\mathbf{X}_1, \mathbf{X}_2) \geq \zeta^2 \cdot N/2$ (noting that permuting the two vectors by the permutation $\pi$ preserves their pairwise distance). 
    

    \item[(b)] Suppose that for some vectors $\mathbf{z} \in [n]^m$, and $\mathbf{x}_1, \mathbf{x}_2 \in \{0,1\}^n$ such that $\mathbf{x}_1 \neq \mathbf{x}_2$, we have $\mathbf{X}_{1, \pi} = \fen(\mathbf{z}, \mathbf{x}_1)$ and $\mathbf{X}_{2,\pi} = \mbox{{\sc FE}}(\mathbf{z}, \mathbf{x}_2)$. Then following Property (ii) of $\mbox{{\sc FE}}$ in Observation~\ref{obs:FEproperties}, we know that $\delta_H(\mathbf{X}_1, \mathbf{X}_2) \geq \zeta \cdot \delta_H(\mathbf{x}_1, \mathbf{x}_2)$. From the choice of the vectors $\mathbf{y}_1, \ldots , \mathbf{y}_{2n}$, we know that $\delta_H(\mathbf{x}_1, \mathbf{x}_2) \geq n/3$. Thus, we can say that in this case $\delta_H(\mathbf{X}_1, \mathbf{X}_2) \geq \zeta \cdot nk/3 > \zeta^2 \cdot N/2$ (recalling that $\zeta<1/2$).

\end{enumerate}

Combining the above, we conclude that $\delta_H(\mathbf{X}_1, \mathbf{X}_2) \geq \zeta^2 \cdot N/2$, for any two distinct vectors $\mathbf{X}_1, \mathbf{X}_2$ as above.
\end{proof}



\begin{proof}[Proof of Lemma~\ref{lem:farness}]
Suppose that $\widehat{D} \in \mbox{Supp}(D_{no})$, and $\pi$ is the permutation for which $\widehat{D}_{\pi} \in \mbox{Supp}(D^0_{no})$. We will bound $d_{EM}(\widehat{D}, \cP_{\mathrm{Gap}})$. Let us denote the distribution $D_Y \in \cP_{\mathrm{Gap}}$ that is closest to $\widehat{D}$, where $\pi_Y$ is the permutation for which $D_{Y,\pi_Y} \in \cP^0_{\mathrm{Gap}}$. Let us first define a new distribution $\widetilde{D}_Y$ over $\{0,1\}^N$ as follows:
\[ \widetilde{D}_Y (\mathbf{X})=  \left\{
\begin{array}{ll}       
      \frac{1}{(1-3 \alpha)} D_Y(\mathbf{X})& \mathbf{X}_{\pi_Y} \notin (\{\mathbf{U}\} \cup \cS \cup \cT)  \\    
      0 & \mbox{otherwise}
\end{array} 
\right. \]
Similarly, we also define another distribution $\widetilde{D}$ from $\widehat{D}$, using $\pi$ instead of $\pi_Y$.

Now we have the following claim that bounds the distance between $\widetilde{D}_Y$ and $\widetilde{D}$.

\begin{cl}\label{obs:distencode}
    $d_{EM}(\widetilde{D}, \widetilde{D}_Y) \geq \zeta^2 \cdot \eta/4$.
\end{cl}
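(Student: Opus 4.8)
The plan is to prove Claim~\ref{obs:distencode} by reducing the Earth Mover Distance between the two ``encoding-only'' distributions $\widetilde{D}$ and $\widetilde{D}_Y$ to the variation distance on the $[2n]$-level, where the farness guarantee of $D_{no}^{\mathrm{Supp}}$ from Theorem~\ref{theo:valiantlb} (iii) applies. First I would observe that every vector in $\mathrm{Supp}(\widetilde{D})$ is (a permutation by $\pi^{-1}$ of) $\fen(\mathbf{z},\mathbf{y}_i)$ for some $i \in [2n]$ and $\mathbf{z}\in[n]^m$, with total mass $\widetilde{D}$ on the $\mathbf{y}_i$-layer equal to $\widetilde{D}_{no}(i)$ (the distribution drawn from $D_{no}^{\mathrm{Supp}}$). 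Meanwhile $\mathrm{Supp}(\widetilde{D}_Y)$ consists of valid encodings of at most $n$ vectors of $\{0,1\}^n$ (this is forced by $D_Y \in \cP_{\mathrm{Gap}}$ and the normalization removing the special vectors), so $\widetilde{D}_Y$ decodes to some distribution $\widetilde{E}$ over $\{0,1\}^n$ with support size at most $n$.

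Next I would lower-bound the EMD by transporting mass in two stages. Any flow $f$ realizing $d_{EM}(\widetilde{D},\widetilde{D}_Y)$ moves mass from encodings of the $\mathbf{y}_i$'s to encodings of the (at most $n$) support vectors of $\widetilde{E}$. By Observation~\ref{obs:FEproperties}(i)--(ii) (and the fact that permuting both arguments by the same $\pi^{-1}$ preserves Hamming distance), whenever $f$ moves mass between two encodings whose underlying $\{0,1\}^n$-vectors $\mathbf{x}$ and $\mathbf{x}'$ differ, the Hamming cost is at least $\min\{\zeta^2 N/2,\ \zeta k\cdot \delta_H(\mathbf{x},\mathbf{x}')\}/N$; since the $\mathbf{y}_i$ are pairwise $\geq n/3$-separated and $N = \Theta(nk)$, each such unit of mass costs at least a constant $c \geq \zeta^2/2$ in normalized distance (this is essentially the computation already done inside the proof of Lemma~\ref{cl:encodevecfar}). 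Hence $d_{EM}(\widetilde{D},\widetilde{D}_Y) \geq c \cdot \mu$, where $\mu$ is the total mass that $f$ is forced to move off-layer, i.e. the amount of mass of $\widetilde{D}$ (viewed as a distribution on the $2n$ layers indexed by the $\mathbf{y}_i$'s) that cannot be matched to an identical layer of $\widetilde{D}_Y$. But $\widetilde{D}_Y$ decoded is a distribution over $\{0,1\}^n$ with support size $\leq n$, while $\widetilde{D}$ decoded is $\widetilde{D}_{no}$ which by Theorem~\ref{theo:valiantlb}(iii) is $\eta$-far in variation distance from every distribution over $\{1,\dots,2n\}$ with support size at most $(1+2\eta)n \geq n$. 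The amount of mass that must be moved off its layer is exactly the variation distance between the layer-distribution of $\widetilde{D}$ and that of $\widetilde{D}_Y$, hence $\mu \geq \eta$, giving $d_{EM}(\widetilde{D},\widetilde{D}_Y) \geq c\eta \geq \zeta^2\eta/4$ after tracking the constant (the bound $\zeta^2/2$ on $c$ comfortably yields the stated $\zeta^2\eta/4$).

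The one subtlety I expect to be the main obstacle is matching layers correctly: a flow could split the mass of a single $\widetilde{D}$-layer among several $\widetilde{D}_Y$-layers and across several $\mathbf{z}$-values, so I must argue that only mass routed to the \emph{same} underlying $\{0,1\}^n$-vector is ``free''. This is where Observation~\ref{obs:FEproperties} does the work: distinct underlying vectors (whether differing in $\mathbf{z}$ or in $\mathbf{x}$) force a $\Theta(1)$ normalized Hamming cost, so the cheapest strategy for $f$ is precisely the optimal transport between the two decoded distributions on $\{0,1\}^n$, up to the constant factor $c$. Then the support-size/variation-distance lower bound of Valiant--Valiant closes the argument. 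I would also note for completeness that $\widetilde{D}$ and $\widehat{D}$ (resp. $\widetilde{D}_Y$ and $D_Y$) differ only by removing $\leq 3\alpha$ mass on the special vectors and renormalizing, so once Claim~\ref{obs:distencode} is established, a short triangle-inequality argument (losing an additive $O(\alpha)$, which is absorbed since $\alpha = 1/\log n$ is tiny compared to $\zeta^2\eta$) gives $d_{EM}(\widehat{D}, \cP_{\mathrm{Gap}}) \geq \zeta^2\eta/5$, completing the proof of Lemma~\ref{lem:farness}.
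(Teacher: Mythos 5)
Your proposal takes a genuinely different route from the paper's proof. The paper uses a pure packing and counting argument carried out entirely on the $\{0,1\}^N$ level: it notes that $|\mathrm{Supp}(\widetilde{D}_Y)| \leq n^{m+1}$ while $|\mathrm{Supp}(\widetilde{D})| \geq (1+2\eta)n^{m+1}$, that the vectors \emph{within} $\mathrm{Supp}(\widetilde{D})$ are pairwise $\zeta^2/2$-separated (Lemma~\ref{cl:encodevecfar}), and that each such vector carries mass $\geq n^{-m-1}/2$; by the triangle inequality at most one $\widetilde{D}$-atom can lie within relative distance $\zeta^2/4$ of any given $\widetilde{D}_Y$-atom, so a total mass $\geq\eta$ of $\widetilde{D}$ is $\zeta^2/4$-far from all of $\mathrm{Supp}(\widetilde{D}_Y)$. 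The crucial feature is that this argument uses nothing about $\widetilde{D}_Y$ beyond its support size, and never compares any cross-distance between the two supports.

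Your reduction to variation distance at the decoded $\{0,1\}^n$ level is a reasonable idea, but as written it has two genuine gaps, both at the step where you lower-bound the cost of moving ``off-layer'' mass. First, you claim that transporting between encodings of distinct $\{0,1\}^n$-vectors always costs $\geq\zeta^2/2$, invoking the $n/3$-separation of the $\mathbf{y}_i$'s. But the support of the \emph{decoded} $\widetilde{D}_Y$ consists of arbitrary vectors of $\{0,1\}^n$, not the $\mathbf{y}_i$'s: if $\widetilde{D}_Y$ encodes some $\mathbf{x}'$ with $\delta_H(\mathbf{x}',\mathbf{y}_i)=1$ and the same $\mathbf{z}$, Observation~\ref{obs:FEproperties}(ii) gives cost only $\zeta k/N = \Theta(1/n)$, not a constant, so $d_{EM}\geq c\cdot\mu$ with $c=\zeta^2/2$ is simply false. (You do flag this as the ``main obstacle'' at the end, but then dismiss it by appealing back to Observation~\ref{obs:FEproperties}, which does not by itself supply the constant; the missing step is precisely a packing bound showing that, because each support point of the decoded $\widetilde{D}_Y$ can sit near at most one of the $n/3$-separated $\mathbf{y}_i$'s, mass $\geq\eta$ must be routed a \emph{far} distance, not merely ``off-layer.'') Second, and more fundamentally, $\widetilde{D}$ is obtained by permuting encodings via $\pi^{-1}$ whereas $\widetilde{D}_Y$ is defined via $\pi_Y^{-1}$, and in general $\pi\neq\pi_Y$; your appeal to ``permuting both arguments by the same $\pi^{-1}$'' is therefore not available, and Observation~\ref{obs:FEproperties} gives no control whatsoever over $d_H\bigl(\fen(\mathbf{z},\mathbf{y}_i)_{\pi^{-1}},\,\fen(\mathbf{z}',\mathbf{x}')_{\pi_Y^{-1}}\bigr)$. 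Neither issue touches the paper's proof, since it only ever invokes distances between pairs of vectors inside $\mathrm{Supp}(\widetilde{D})$ (all under the one permutation $\pi$) together with the raw, permutation-agnostic support-size bound for $\widetilde{D}_Y$.
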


\begin{proof}
Following the definition of the property $\cP_{\mathrm{Gap}}$, we know that $\mbox{Supp}(\widetilde{D}_Y)$ consists of possible encodings of $n$ distinct vectors from $\{0,1\}^n$, and there are at most $n^m$ valid encodings of every such vector (as per the number of possible vectors $\mathbf{z} \in [n]^m$ that are given as input to $\gen$). This implies that the size of the support of the distribution $\widetilde{D}_Y$ is at most $n^{m+1}$.



Since any distribution in the support of $D_{no}^{\mathrm{Supp}}$ has support size at least $(1 + 2\eta)n$, following a similar argument as above, we infer that the size of the support of $\widetilde{D}$ is at least $(1+2\eta)n^{m+1}$. Moreover, by Lemma~\ref{cl:encodevecfar},  we know that any pair of vectors there has distance at least $\zeta^2/2$ (in relative distance). Also, as any vector in the support of any distribution in the support of $D_{no}^{\mathrm{Supp}}$ has probability mass that is multiple of $1/2n$, we infer that every vector in the support of $\widetilde{D}$ has probability mass at least $n^{-m-1}/2$ (as per Item (v) in the definition of $D^0$).

Summing up, we obtain that there are at least $2\eta \cdot n^{m+1}$ many vectors in $\mbox{Supp}(\widetilde{D})$ that are $\zeta^2/4$-far (in relative distance) from any vector in $\mbox{Supp}(\widetilde{D}_Y)$, all of whose weights are at least $n^{-m-1}/2$~\footnote{By the triangle inequality, if we consider a Hamming ball of radius $\zeta^2/4$ around every vector in $\mbox{Supp}(\widetilde{D}_Y)$, there can be at most one vector from $\mbox{Supp}(\widetilde{D})$ inside the ball.}.  Thus, the Earth Mover Distance of $\widetilde{D}$ from $\widetilde{D}_Y$ is at least $\zeta^2 \cdot \eta/4$.
\end{proof}


Recall that we need to bound the distance between $\widehat{D}$ and $D_Y$. From Claim~\ref{obs:distencode}, we know that $d_{EM}(\widetilde{D}, \widetilde{D}_Y) \geq \zeta^2 \cdot \eta/4$, where the distributions $\widetilde{D}$ and $\widetilde{D}_Y$ are defined over the encoding vectors. From the definition of $\widetilde{D}$ and $\widetilde{D}_Y$ from $\widehat{D}$ and $D_Y$, we conclude that $d_{EM}(D_{Y}, \widehat{D}) = (1-3\alpha)d_{EM}(\widetilde{D},\widetilde{D}_Y) \geq \zeta^2 \cdot \eta/5$.
\end{proof}

Now we prove that the distributions $D_{yes}$ and $D_{no}$ remain indistinguishable to any non-adaptive tester, unless it performs $\widetilde{\Omega}(n^2)$ queries. We start with some definitions that will be required for the proof. Recall that $N= \Oh(n \log n)$.

\begin{defi}[{\bf Large and small query set}]
A set of indices $I \subseteq [N]$ is said to be a \emph{large} if $\size{I} > n/\log^{10} n$. Otherwise, $I$ is said to be a \emph{small}.
\end{defi}

Now we show that for a uniformly random permutation $\sigma$, and any $C_j$ as defined in the property $\cP_{\mathrm{Gap}}$, with high probability the size of the set of indices $\size{I \cap \sigma(C_j)}$ will be small, unless $I$ is a large query set.

\begin{obs}\label{obs:queryintersectCj}
Let $\sigma:[N] \rightarrow [N]$ be a uniformly random permutation, and $C_j$ correspond to a ``bit encoding set'' of size $k$ (as per the definition of $\cP_{\mathrm{Gap}}$) for an arbitrary $j \in [n]$. For a fixed small query set $I \subseteq [N]$, the probability that $|I \cap \sigma(C_j)|$ is at least $\zeta \cdot k$ is at most $1/n^{10}$.
\end{obs}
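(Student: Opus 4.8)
The plan is a direct first–moment/concentration estimate on the random variable $|I \cap \sigma(C_j)|$, where $\sigma$ is a uniformly random permutation of $[N]$, $|C_j| = k$, and $|I| \le n/\log^{10} n$ (the definition of a small query set). First I would observe that, for fixed $I$ and $C_j$, the set $\sigma(C_j)$ is a uniformly random $k$-subset of $[N]$, so $|I \cap \sigma(C_j)|$ is hypergeometrically distributed with population size $N$, $|I|$ marked items, and $k$ draws. Its expectation is therefore $\mu := k|I|/N$. Plugging in $N = \Theta(n\log n)$, $k = \Oh(\log n)$, and $|I| \le n/\log^{10} n$ gives $\mu = \Oh\!\left(\frac{\log n \cdot n/\log^{10}n}{n\log n}\right) = \Oh(1/\log^{10} n)$, which is $o(1)$ — in particular much smaller than the target threshold $\zeta k = \Theta(\log n)$.

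The second step is to convert this small expectation into the stated probability bound $1/n^{10}$. Since $\mu = o(1)$ while $\zeta k = \Theta(\log n)$, the deviation from the mean we need is a $\Theta(\log n / \mu)$ multiplicative factor, which is enormous, so a crude union bound already suffices: $\Pr[|I \cap \sigma(C_j)| \ge \zeta k] \le \binom{|I|}{\zeta k}(k/N)^{\zeta k}$, bounding the probability that some particular $\zeta k$ elements of $I$ all land in $\sigma(C_j)$. Using $\binom{|I|}{\zeta k} \le |I|^{\zeta k}$ and $|I| \cdot k / N \le \frac{(n/\log^{10}n)\cdot \Oh(\log n)}{\Theta(n\log n)} = \Oh(1/\log^{10} n)$, we get the bound $(\Oh(1/\log^{10}n))^{\zeta k}$. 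Since $\zeta k = \Theta(\log n)$, this is at most $2^{-\Omega(\log n \cdot \log\log n)} = n^{-\omega(1)}$, which is comfortably below $1/n^{10}$ for $n$ large enough (recall the paper assumes $n = 2^l$ is large, and $k = \Oh(l) = \Oh(\log n)$ is chosen with a large enough hidden constant so that $\zeta k \ge 20$, say). Alternatively one can invoke the Chernoff/Hoeffding-type tail bound for the hypergeometric distribution (available via the concentration inequalities in Appendix~\ref{sec:prelim_prob}) applied to a deviation far exceeding the mean; either route works.

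The only mild subtlety — and the step I'd be most careful about — is making sure the quantitative relationship between $k$, $n$, and $N$ is spelled out so that $\zeta k \ge$ (some absolute constant) and $|I| k / N = \Oh(1/\mathrm{polylog}\, n)$ simultaneously; this is where the choice $k = \Theta(\log n)$ with an appropriately large constant, together with $N = \Oh(n \log n)$ and the $n/\log^{10} n$ cutoff in the definition of ``small,'' all have to line up. Once those inequalities are recorded, the computation above closes immediately. I would present it as: state that $\sigma(C_j)$ is a uniform random $k$-subset; write $|I\cap\sigma(C_j)|$ as a sum over elements of $C_j$ of indicators (or directly as hypergeometric); bound the upper tail by the union bound over $\zeta k$-subsets of $I$; substitute the parameter bounds; and conclude that the resulting quantity is $o(1/n^{10})$, hence at most $1/n^{10}$ for large enough $n$.
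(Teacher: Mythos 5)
Your proposal is correct, and your primary argument takes a genuinely different and more elementary route than the paper's. The paper defines indicator variables $X_i$ for $i \in I$ with $X_i = 1$ iff $i \in \sigma(C_j)$, computes $\E\bigl[\sum_{i \in I} X_i\bigr] = \Oh(1/\log^{10} n)$, and then invokes Hoeffding's inequality for sampling without replacement (Lemma~\ref{lem:hoeffdingineq_without_replacement}) to bound the upper tail at $\zeta k$; that route gives roughly $2\exp(-\Omega(\zeta^2 k)) = n^{-\Theta(1)}$ and only clears the $n^{-10}$ target if the hidden constant in $k = \Theta(\log n)$ is taken large enough, which the paper leaves implicit. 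Your first route --- a union bound over $\zeta k$-subsets of $I$, using $\binom{|I|}{\zeta k}\,(k/N)^{\zeta k} \le (|I|k/N)^{\zeta k}$ --- avoids any concentration inequality and yields the strictly stronger bound $(\Oh(1/\log^{10} n))^{\zeta k} = n^{-\Omega(\log\log n)}$, so the target is met with a comfortable margin for large $n$; what you buy is robustness to the exact constant in $k$, at the cost of being slightly less ``textbook.'' You correctly flag the Hoeffding route as the alternative, and your closing caveat about lining up the parameters is well placed: the facts $k = \Theta(\log n)$ (so $\zeta k = \Theta(\log n)$), $N = \Oh(n\log n)$, and $|I| \le n/\log^{10} n$ are exactly what make $|I|k/N = \Oh(1/\log^{10} n)$. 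The only trivial nitpick is that $\zeta k$ should read $\lceil \zeta k \rceil$ in the binomial coefficient.
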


\begin{proof}
    Let us define a collection of binary random variables $ \langle X_i : i \in I \rangle$ such that the following holds:
    \[ X_i=  \left\{
\begin{array}{ll}       
      1& i \in \sigma(C_j)  \\    
      0& \mbox{otherwise}
\end{array} 
\right. \]

Then as $\sigma$ is a uniformly random permutation, $\pr(X_i=1) = \frac{\size{\sigma(C_j)}}{N} = \Oh(\frac{1}{n})$ for any $i \in [n]$. Now let us define another random variable $X= \sum_{i=1}^{n} X_i$. Noting that $X = |I \cap \sigma(C_j)|$, we obtain $\E[X]=\Oh(1/ \log^{10} n)$. By applying Hoeffding's bound for sampling without replacement (Lemma~\ref{lem:hoeffdingineq_without_replacement}), we can say that $\pr(X \geq \zeta \cdot k) \leq 1/n^{10}$. This completes the proof.
\end{proof}

Now let us define an event $\cE_{I,j}$ as follows:
\[ \cE_{I,j} := \mbox{The query set $I$ satisfies $|I \cap \sigma(C_j)| \leq \zeta \cdot k$.} \]

Now we are ready to prove that unless $\widetilde{\Omega}(n^2)$ queries are performed, no non-adaptive tester can distinguish $D_{yes}$ from $D_{no}$.

\begin{lem}[{\bf Indistinguishibility lemma}]
With probability at least $2/3$, in order to distinguish $D_{yes}$ from $D_{no}$, $\widetilde{\Omega}(n^2)$ queries are necessary for any non-adaptive tester.
\end{lem}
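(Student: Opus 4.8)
The plan is to apply Yao's lemma for non-adaptive testers in the huge object model to the pair $D_{yes}, D_{no}$ constructed above. By Lemma~\ref{lem:farness}, $D_{no}$ is supported on distributions that are $\zeta^2\eta/5$-far from $\cP_{\mathrm{Gap}}$, and $D_{yes}$ is supported on $\cP_{\mathrm{Gap}}$, so it suffices to show that for every non-adaptive query sequence $\mathcal{J}=(J_1,\dots,J_s)$ of total size $q=\sum_i\size{J_i}=o(n^2/\mathrm{polylog}(n))$, the restrictions $D_{yes}\mid_{\mathcal{J}}$ and $D_{no}\mid_{\mathcal{J}}$ have variation distance less than $1/3$. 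Since both $D_{yes}$ and $D_{no}$ apply a uniformly random permutation $\pi$ to a drawn distribution $\widehat D$, I would condition on $\pi$ and argue that, with probability $1-o(1)$ over $\pi$, the event $\bigwedge_{j\in[n]}\cE_{I,j}$ holds for $I=\bigcup_i J_i$ and moreover $\size{I\cap\sigma(C_j)}$ is small enough that Lemma~\ref{lem:restrictionuniform} applies to every encoding sample simultaneously.

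The key steps, in order: (1) Bound $\size{I}$: if the tester makes $q=o(n^2/\log^{10}n)$ queries then at most $o(n/\log^{10}n)$ of the sample-coordinate sets $J_i$ can be ``large'' in the sense of the definition above; more carefully, I would split the $s$ samples into those with large $J_i$ and those with small $J_i$. For samples with large query sets there are few of them, so with high probability none of them is an encoding vector carrying a ``new'' bit pattern — or more simply, the expected number of encoding samples among them times the probability of decoding anything is negligible. (2) For the samples with small query sets, invoke Observation~\ref{obs:queryintersectCj} and a union bound over the $\le n$ blocks $C_j$ and the $\le s$ samples to conclude that, except with probability $o(1)$ over $\pi$, every such sample satisfies $\size{J_i\cap\pi(C_j)}\le\zeta k$ for all $j$, and also $\size{J_i}\le\zeta N/2k$ (the latter being automatic for small sets since $N=\Oh(n\log n)$ and $k=\Oh(\log n)$). (3) Under this good event on $\pi$, apply Lemma~\ref{lem:restrictionuniform}: for any encoding sample $\mathbf{X}=\fen(\mathbf z,\mathbf y_i)_{\pi^{-1}}$, the queried bits outside the fixed prefix are uniformly distributed and carry \emph{no information about $\mathbf y_i$ nor about which $i$ was chosen}. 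Consequently the responses to queries on encoding samples are distributed identically whether the underlying $\widetilde D$ over $[2n]$ came from $D_{yes}^{\mathrm{Supp}}$ or $D_{no}^{\mathrm{Supp}}$ — the only remaining channel is the multiplicity/collision structure among the $s$ samples, i.e., effectively a sample of size $s$ from $\widetilde D$.

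(4) The final reduction: the only way the tester's view differentiates $D_{yes}$ from $D_{no}$ is through the collision pattern of which samples landed on the same element $\mathbf y_i$ (together with the special vectors $\mathbf U,\cS,\cT$, which are distributed identically in both cases). This is exactly equivalent to drawing $s$ samples from the hidden distribution $\widetilde D$ over $[2n]$ and observing only the resulting multiset. Since the non-adaptive query bound $q=o(n^2/\log^{10}n)$ forces $s\le q = o(n^2/\log^{10}n)$ — but we need the stronger statement that $s=o(n/\log n)$, which follows because each sample contributing any useful decoded bit must have $\size{J_i}=\widetilde\Omega(n)$ queries, so $q=\widetilde\Omega(n)\cdot(\text{number of informative samples})$ forces the number of informative samples to be $o(n/\log n)$ when $q=o(n^2/\mathrm{polylog})$ — Theorem~\ref{theo:valiantlb}(iv) says that with $o(n/\log n)$ samples the induced sample-sequence distributions for $D_{yes}^{\mathrm{Supp}}$ and $D_{no}^{\mathrm{Supp}}$ are $1/4$-close. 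Combining the $o(1)$ failure probability over $\pi$ with this $1/4$ bound gives total variation distance below $1/3$ between $D_{yes}\mid_{\mathcal{J}}$ and $D_{no}\mid_{\mathcal{J}}$, and Yao's lemma then yields the $\widetilde\Omega(n^2)$ lower bound, completing the proof.

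The main obstacle is step (4): correctly accounting for the samples whose query set $J_i$ is large (say $\size{J_i}>n/\log^{10}n$). On such samples Lemma~\ref{lem:restrictionuniform} does not apply, so a large query set could in principle decode a bit of $\mathbf y_i$. The fix is a counting argument: if the total query budget is $q=o(n^2/\log^{10}n)$, then the number of samples with large query sets is $o(n\log^{10}n/\log^{10}n)=o(n)$, and in fact a more careful bound shows it is $o(n/\log n)$ when $q=o(n^2/\mathrm{polylog}(n))$; these ``fully decodable'' samples are precisely the ones feeding the support-estimation subproblem, and their count being $o(n/\log n)$ is exactly the regime where Theorem~\ref{theo:valiantlb}(iv) gives indistinguishability. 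One must also check that samples with large query sets but which happen to be special vectors $\mathbf U,\cS,\cT$ contribute nothing (their distribution is identical in $D_{yes}$ and $D_{no}$), and handle the low-probability event that a sample with a large query set fails to be a valid encoding — but since $D_{no}$ places all non-special mass on valid encodings this does not arise. Assembling these pieces into a clean union bound over $\pi$ and a clean coupling of the two views is the technical heart of the argument.
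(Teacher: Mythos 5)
Your proposal is essentially the paper's argument: split samples into large and small query sets, show via Observation~\ref{obs:queryintersectCj} and a union bound that (with probability $1-o(1)$ over $\pi$) the small query sets give only uniformly random bits so they can be simulated away, and then reduce the large-query samples to the sample complexity lower bound of Theorem~\ref{theo:valiantlb}. One minor arithmetic slip in step (1): if $q=o(n^2/\log^{10}n)$ and a large set has size $> n/\log^{10}n$, the number of large sets is $o(n)$, not $o(n/\log^{10}n)$; you correct this later in step (4), where you rightly observe that the query budget must be $o(n^2/\log^{11}n)=\widetilde{o}(n^2)$ to force the number of large-query samples below $o(n/\log n)$ as required by Theorem~\ref{theo:valiantlb}(iv).
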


\begin{proof}
From our result on the adaptive $\eps$-tester for $\cP_{\mathrm{Gap}}$, we know that $\widetilde{\Oh}(n)$ queries are sufficient for adaptively testing $\cP_{\mathrm{Gap}}$. Without loss of generality, let us assume that the non-adaptive tester takes at most $n^2$ samples from the unknown distribution $D$ (since we can assume that at least one query is performed in every sample). As per the definition of a non-adaptive tester, assume that the samples taken are $\mathbf{X}_1, \ldots, \mathbf{X}_s$, and their respective query sets are $I_1, \ldots, I_s$ for some integer $s$.

Consider an event $\cE$ as follows:
$$\cE := \mbox{For every $\ell \in [s]$ for which $I_{\ell}$ is small and every $j \in [n]$, the event $\cE_{I_{\ell},j}$ occurs.}$$   

Since the non-adaptive tester takes at most $n^2$ samples, there can be at most $n^2$ samples for which a small set was queried, that is, $s \leq n^2$. Moreover, there are $n$ possible sets $C_j$ present in a sample. Using the union bound, along with Observation~\ref{obs:queryintersectCj}, we can say that the event $\cE$ holds with probability at least $1 - 1/n^7$. Given that the event $\cE$ holds, we will now show that the induced distributions of $D_{yes}$ and $D_{no}$ on small query sets are identical and independent of the samples with large query sets.

\begin{cl}\label{cl:largequeryset}
Assume that the event $\cE$ holds. Then a non-adaptive tester that uses at most $o(n/\log n)$ large query sets, can not distinguish $D_{yes}$ from $D_{no}$ with probability more than $1/4$.
\end{cl}

\begin{proof}
Since the distributions produced by $D_{yes}$ and $D_{no}$ are identical over the respective permutations of $(\{\mathbf{U}\} \cup \cS \cup \cT)$, it is sufficient to prove indistinguishability over the restrictions to the valid encodings of $\mathbf{y}_1, \ldots, \mathbf{y}_{2n}$ (as they appear in the definition of $D^0$). Furthermore, we argue that this claim holds even if for every large query set, the tester is provided with the entire vector that was sampled.

Given that the event $\cE$ holds, regardless of whether the distribution was produced by $D_{yes}$ or $D_{no}$, the restriction of the samples to the small queried sets are completely uniformly distributed, even when conditioned on the samples with large query sets (which are taken independently of them). Thus we may assume that all samples with small query sets are ignored by the tester, since the answers to these queries can be simulated without taking any samples at all.

Finally, we appeal to the construction of the hard distributions $D_{yes}$ and $D_{no}$ from $D_{yes}^{\mathrm{Supp}}$ and $D_{no}^{\mathrm{Supp}}$. By Theorem~\ref{theo:valiantlb}, the distance between these two distributions over the sample sequence is at most $1/4$, unless there were more than $o(n/\log n)$ samples with large sets. This completes the proof of the claim.
\end{proof}

Combining Claim~\ref{cl:largequeryset} with the above bound on the probability of the event $\cE$, we conclude that $\widetilde{\Omega}(n^2)$ queries are necessary for any non-adaptive tester to distinguish $D_{yes}$ from $D_{no}$ with probability at least $2/3$, that is, with a probability difference of at least $1/3$. This concludes the proof of the lemma.
\end{proof}

\bibliographystyle{alpha}
\bibliography{reference}

\newcommand{\etalchar}[1]{$^{#1}$}
\begin{thebibliography}{BGMV20}

\bibitem[ABEF17]{alon2017testing}
Noga Alon, Omri Ben-Eliezer, and Eldar Fischer.
\newblock Testing hereditary properties of ordered graphs and matrices.
\newblock In {\em {FOCS}}, 2017.

\bibitem[ABR16]{DBLP:conf/colt/AliakbarpourBR16}
Maryam Aliakbarpour, Eric Blais, and Ronitt Rubinfeld.
\newblock Learning and testing junta distributions.
\newblock In {\em {COLT}}, 2016.

\bibitem[ACK18]{acharya2018chasm}
Jayadev Acharya, Cl{\'e}ment~L Canonne, and Gautam Kamath.
\newblock A chasm between identity and equivalence testing with conditional
  queries.
\newblock {\em Theory Of Computing}, 2018.

\bibitem[ADK15]{acharya2015optimal}
Jayadev Acharya, Constantinos Daskalakis, and Gautam Kamath.
\newblock Optimal testing for properties of distributions.
\newblock {\em {NIPS}}, 2015.

\bibitem[AKNS99]{AlonKNS99}
Noga Alon, Michael Krivelevich, Ilan Newman, and Mario Szegedy.
\newblock Regular languages are testable with a constant number of queries.
\newblock In {\em {FOCS}}, 1999.

\bibitem[BC17]{batu2017generalized}
Tugkan Batu and Cl{\'e}ment~L Canonne.
\newblock Generalized uniformity testing.
\newblock In {\em {FOCS}}, 2017.

\bibitem[BC18]{DBLP:journals/toct/BhattacharyyaC18}
Rishiraj Bhattacharyya and Sourav Chakraborty.
\newblock Property testing of joint distributions using conditional samples.
\newblock {\em {ACM} Trans. Comput. Theory}, 2018.

\bibitem[BCY22]{DBLP:journals/corr/abs-2204-08690}
Arnab Bhattacharyya, Cl{\'{e}}ment~L. Canonne, and Joy~Qiping Yang.
\newblock Independence testing for bounded degree bayesian network.
\newblock {\em CoRR}, abs/2204.08690, 2022.

\bibitem[BDKR05]{batu2005complexity}
Tugkan Batu, Sanjoy Dasgupta, Ravi Kumar, and Ronitt Rubinfeld.
\newblock The complexity of approximating the entropy.
\newblock {\em SIAM Journal on Computing}, 2005.

\bibitem[BFLR20]{DBLP:conf/innovations/Ben-EliezerFLR20}
Omri Ben{-}Eliezer, Eldar Fischer, Amit Levi, and Ron~D. Rothblum.
\newblock Hard properties with (very) short pcpps and their applications.
\newblock In {\em {ITCS}}, 2020.

\bibitem[BGMV20]{BGMV20}
Arnab Bhattacharyya, Sutanu Gayen, Kuldeep~S. Meel, and N.~V. Vinodchandran.
\newblock Efficient distance approximation for structured high-dimensional
  distributions via learning.
\newblock In {\em {NeurIPS}}, 2020.

\bibitem[BHR05]{DBLP:journals/siamcomp/Ben-SassonHR05}
Eli Ben{-}Sasson, Prahladh Harsha, and Sofya Raskhodnikova.
\newblock Some 3cnf properties are hard to test.
\newblock {\em {SIAM} J. Comput.}, 2005.

\bibitem[Bir46]{birkhoff1946three}
Garrett Birkhoff.
\newblock Three observations on linear algebra.
\newblock {\em Univ. Nac. Tacuman, Rev. Ser. A}, 1946.

\bibitem[BY22]{bhattacharyya2022property}
Arnab Bhattacharyya and Yuichi Yoshida.
\newblock {\em Property Testing - Problems and Techniques}.
\newblock Springer, 2022.

\bibitem[Can20]{Canonne:Survey:ToC}
Cl{\'{e}}ment~L. Canonne.
\newblock {A Survey on Distribution Testing: Your Data is Big. But is it Blue?}
\newblock {\em Theory of Computing}, (9), 2020.

\bibitem[Can22]{CanonneTopicsDT2022}
Cl\'{e}ment~L. Canonne.
\newblock {Topics and Techniques in Distribution Testing: A Biased but
  Representative Sample}.
\newblock March 2022.

\bibitem[CCK{\etalchar{+}}21]{DBLP:conf/soda/CanonneCKLW21}
Cl{\'{e}}ment~L. Canonne, Xi~Chen, Gautam Kamath, Amit Levi, and Erik
  Waingarten.
\newblock Random restrictions of high dimensional distributions and uniformity
  testing with subcube conditioning.
\newblock In {\em {SODA}}, 2021.

\bibitem[CDKS17]{DBLP:conf/colt/CanonneDKS17}
Cl{\'{e}}ment~L. Canonne, Ilias Diakonikolas, Daniel~M. Kane, and Alistair
  Stewart.
\newblock Testing bayesian networks.
\newblock In {\em {COLT}}, 2017.

\bibitem[CDVV14]{chan2014optimal}
Siu-On Chan, Ilias Diakonikolas, Paul Valiant, and Gregory Valiant.
\newblock Optimal algorithms for testing closeness of discrete distributions.
\newblock In {\em {SODA}}, 2014.

\bibitem[CFGM16]{DBLP:journals/siamcomp/ChakrabortyFGM16}
Sourav Chakraborty, Eldar Fischer, Yonatan Goldhirsh, and Arie Matsliah.
\newblock On the power of conditional samples in distribution testing.
\newblock {\em {SIAM} J. Comput.}, 2016.

\bibitem[Cha00]{C00}
B.~Chazelle.
\newblock {\em {The {D}iscrepancy {M}ethod: {R}andomness and {C}omplexity}}.
\newblock Cambridge University Press, Cambridge, New York, 2000.

\bibitem[CJLW21]{chen2021learning}
Xi~Chen, Rajesh Jayaram, Amit Levi, and Erik Waingarten.
\newblock Learning and testing junta distributions with sub cube conditioning.
\newblock In {\em {COLT}}, 2021.

\bibitem[CKS20]{DBLP:conf/nips/Canonne0S20}
Cl{\'{e}}ment~L. Canonne, Gautam Kamath, and Thomas Steinke.
\newblock The discrete gaussian for differential privacy.
\newblock In {\em {NeurIPS}}, 2020.

\bibitem[CM19]{CM19}
Sourav Chakraborty and Kuldeep~S. Meel.
\newblock On testing of uniform samplers.
\newblock In {\em {AAAI}}, 2019.

\bibitem[CRS15]{DBLP:journals/siamcomp/CanonneRS15}
Cl{\'{e}}ment~L. Canonne, Dana Ron, and Rocco~A. Servedio.
\newblock Testing probability distributions using conditional samples.
\newblock {\em {SIAM} J. Comput.}, 2015.

\bibitem[CS10]{DBLP:conf/propertytesting/CzumajS10}
Artur Czumaj and Christian Sohler.
\newblock Sublinear-time algorithms.
\newblock In {\em Property Testing - Current Research and Surveys}. 2010.

\bibitem[DKN14]{diakonikolas2014testing}
Ilias Diakonikolas, Daniel~M Kane, and Vladimir Nikishkin.
\newblock Testing identity of structured distributions.
\newblock In {\em {SODA}}, 2014.

\bibitem[DKS18]{diakonikolas2017sharp}
Ilias Diakonikolas, Daniel~M Kane, and Alistair Stewart.
\newblock Sharp bounds for generalized uniformity testing.
\newblock In {\em NeurIPS}, 2018.

\bibitem[DLM{\etalchar{+}}07]{diakonikolas2007testing}
Ilias Diakonikolas, Homin~K Lee, Kevin Matulef, Krzysztof Onak, Ronitt
  Rubinfeld, Rocco~A Servedio, and Andrew Wan.
\newblock Testing for concise representations.
\newblock In {\em {FOCS}}, 2007.

\bibitem[DP09]{DubhashiP09}
D.P. Dubhashi and A.~Panconesi.
\newblock {Concentration of Measure for the Analysis of Randomized Algoritms}.
\newblock In {\em Cambridge}, 2009.

\bibitem[Fis01]{DBLP:journals/eatcs/Fischer01}
Eldar Fischer.
\newblock The art of uninformed decisions.
\newblock {\em Bull. {EATCS}}, page~97, 2001.

\bibitem[Fis04]{fischer2004art}
Eldar Fischer.
\newblock The art of uninformed decisions: A primer to property testing.
\newblock In {\em Current Trends in Theoretical Computer Science: The Challenge
  of the New Century Vol 1: Algorithms and Complexity Vol 2: Formal Models and
  Semantics}. World Scientific, 2004.

\bibitem[Gol17]{DBLP:books/cu/Goldreich17}
Oded Goldreich.
\newblock {\em {Introduction to Property Testing}}.
\newblock Cambridge University Press, 2017.

\bibitem[GOS{\etalchar{+}}09]{DBLP:conf/icalp/GopalanOSSW09}
Parikshit Gopalan, Ryan O'Donnell, Rocco~A. Servedio, Amir Shpilka, and Karl
  Wimmer.
\newblock Testing fourier dimensionality and sparsity.
\newblock In {\em {ICALP}}, 2009.

\bibitem[GR97]{goldreich1997property}
Oded Goldreich and Dana Ron.
\newblock Property testing in bounded degree graphs.
\newblock In {\em STOC}, 1997.

\bibitem[GR11]{goldreich2011testing}
Oded Goldreich and Dana Ron.
\newblock On testing expansion in bounded-degree graphs.
\newblock In {\em Studies in Complexity and Cryptography. Miscellanea on the
  Interplay between Randomness and Computation}. Springer, 2011.

\bibitem[GR22]{GoldreichR21a}
Oded Goldreich and Dana Ron.
\newblock Testing distributions of huge objects.
\newblock In {\em {ITCS}}, 2022.

\bibitem[GT03]{goldreich2003three}
Oded Goldreich and Luca Trevisan.
\newblock Three theorems regarding testing graph properties.
\newblock {\em Random Structures \& Algorithms}, 2003.

\bibitem[GW21]{goldreich2022non}
Oded Goldreich and Avi Wigderson.
\newblock Non-adaptive vs adaptive queries in the dense graph testing model.
\newblock In {\em FOCS}, 2021.

\bibitem[Hau95]{haussler1995sphere}
David Haussler.
\newblock Sphere packing numbers for subsets of the boolean n-cube with bounded
  vapnik-chervonenkis dimension.
\newblock {\em Journal of Combinatorial Theory, Series A}, 1995.

\bibitem[Hoe94]{hoeffding1994probability}
Wassily Hoeffding.
\newblock Probability inequalities for sums of bounded random variables.
\newblock In {\em The collected works of Wassily Hoeffding}. Springer, 1994.

\bibitem[Jan04]{Janson04}
Svante Janson.
\newblock {Large Deviations for Sums of Partly Dependent Random Variables}.
\newblock {\em Random Structures \& Algorithms}, 2004.

\bibitem[Mat99]{M99}
J.~Matou\v{s}ek.
\newblock {\em {Geometric Discrepancy: An Illustrated Guide}}.
\newblock Algorithms and Combinatorics. Springer, Berlin, New York, 1999.

\bibitem[Mat02]{DBLP:books/daglib/0018467}
Jir{\'{\i}} Matousek.
\newblock {\em {Lectures on Discrete Geometry}}, volume 212 of {\em Graduate
  texts in mathematics}.
\newblock Springer, 2002.

\bibitem[MPC20]{MPC20}
Kuldeep~S. Meel, Yash Pote, and Sourav Chakraborty.
\newblock On testing of samplers.
\newblock In {\em {NeurIPS}}, 2020.

\bibitem[PA95]{PA95}
J.~Pach and P.~K. Agarwal.
\newblock {\em Combinatorial Geometry}.
\newblock John Wiley \& Sons, New York, NY, 1995.

\bibitem[Pan08]{paninski2008coincidence}
Liam Paninski.
\newblock A coincidence-based test for uniformity given very sparsely sampled
  discrete data.
\newblock {\em IEEE Transactions on Information Theory}, 2008.

\bibitem[PM21]{PM21}
Yash Pote and Kuldeep~S. Meel.
\newblock Testing probabilistic circuits.
\newblock In {\em {NeurIPS}}, 2021.

\bibitem[Ron08]{DBLP:journals/ftml/Ron08}
Dana Ron.
\newblock Property testing: {A} learning theory perspective.
\newblock {\em Found. Trends Mach. Learn.}, 2008.

\bibitem[Ron09]{DBLP:journals/fttcs/Ron09}
Dana Ron.
\newblock Algorithmic and analysis techniques in property testing.
\newblock {\em Found. Trends Theor. Comput. Sci.}, 2009.

\bibitem[RS11]{DBLP:journals/siamdm/RubinfeldS11}
Ronitt Rubinfeld and Asaf Shapira.
\newblock Sublinear time algorithms.
\newblock {\em {SIAM} J. Discret. Math.}, 2011.

\bibitem[RS15]{ron2015exponentially}
Dana Ron and Rocco~A Servedio.
\newblock Exponentially improved algorithms and lower bounds for testing signed
  majorities.
\newblock {\em Algorithmica}, 2015.

\bibitem[Ser10]{servedio2010testing}
Rocco~A Servedio.
\newblock Testing by implicit learning: a brief survey.
\newblock {\em Property Testing}, pages 197--210, 2010.

\bibitem[Val11]{valiant2011testing}
Paul Valiant.
\newblock {Testing Symmetric Properties of Distributions}.
\newblock {\em SIAM Journal on Computing}, 2011.

\bibitem[VC15]{vapnik2015uniform}
Vladimir~N Vapnik and A~Ya Chervonenkis.
\newblock On the uniform convergence of relative frequencies of events to their
  probabilities.
\newblock In {\em Measures of complexity}, pages 11--30. Springer, 2015.

\bibitem[VN53]{von1953certain}
John Von~Neumann.
\newblock A certain zero-sum two-person game equivalent to the optimal
  assignment problem.
\newblock {\em Contributions to the Theory of Games}, 1953.

\bibitem[VV10]{valiant2010clt}
Gregory Valiant and Paul Valiant.
\newblock A clt and tight lower bounds for estimating entropy.
\newblock In {\em Electron. Colloquium Comput. Complex.}, 2010.

\bibitem[VV17]{valiant2017estimating}
Gregory Valiant and Paul Valiant.
\newblock Estimating the unseen: Improved estimators for entropy and other
  properties.
\newblock {\em Journal of the ACM (JACM)}, 2017.

\bibitem[Yao77]{yao1977probabilistic}
Andrew Chi-Chin Yao.
\newblock Probabilistic computations: Toward a unified measure of complexity.
\newblock In {\em FOCS}, 1977.

\end{thebibliography}
\appendix

\section{Some probability results} \label{sec:prelim_prob}


\begin{lem}[{\bf Multiplicative Chernoff bound}~\cite{DubhashiP09}]
\label{lem:cher_bound1}
Let $X_1, \ldots, X_n$ be independent random variables such that $X_i \in [0,1]$. For $X=\sum\limits_{i=1}^n X_i$ and $\mu=\E[X]$, the following holds for any $0\leq \delta \leq 1$.
$$ \pr(\size{X-\mu} \geq \delta\mu) \leq 2\exp{\left(-{\mu \delta^2}/{3}\right)}.$$
\end{lem}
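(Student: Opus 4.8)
The plan is to prove Lemma~\ref{lem:cher_bound1} by the standard exponential-moment (Chernoff--Bernstein) method: bound the upper tail $\pr(X \geq (1+\delta)\mu)$ and the lower tail $\pr(X \leq (1-\delta)\mu)$ separately, each by $e^{-\mu\delta^2/3}$, and then combine them with a union bound, which accounts for the factor $2$. (Since this is a classical estimate, one may alternatively just invoke \cite{DubhashiP09}; the outline below is for a self-contained argument.)

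\textbf{Upper tail.} For any $t>0$, Markov's inequality applied to $e^{tX}$ gives $\pr(X \geq (1+\delta)\mu) \leq e^{-t(1+\delta)\mu}\,\E[e^{tX}]$. By independence, $\E[e^{tX}] = \prod_{i=1}^n \E[e^{tX_i}]$, and since $X_i \in [0,1]$ and $x\mapsto e^{tx}$ is convex, $e^{tX_i} \leq 1 + X_i(e^t-1)$ pointwise, so $\E[e^{tX_i}] \leq 1 + \E[X_i](e^t-1) \leq \exp(\E[X_i](e^t-1))$. Multiplying over $i$ yields $\E[e^{tX}] \leq \exp(\mu(e^t-1))$, and choosing $t=\ln(1+\delta)>0$ gives
$$\pr(X \geq (1+\delta)\mu) \leq \left(\frac{e^{\delta}}{(1+\delta)^{1+\delta}}\right)^{\mu}.$$
The symmetric computation with $t=\ln(1-\delta)<0$ gives $\pr(X \leq (1-\delta)\mu) \leq \left(\frac{e^{-\delta}}{(1-\delta)^{1-\delta}}\right)^{\mu}$.

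\textbf{Elementary estimates and conclusion.} It then remains to verify the one-variable inequalities $\frac{e^{\delta}}{(1+\delta)^{1+\delta}} \leq e^{-\delta^2/3}$ and $\frac{e^{-\delta}}{(1-\delta)^{1-\delta}} \leq e^{-\delta^2/2} \leq e^{-\delta^2/3}$ for $0 \leq \delta \leq 1$. Taking logarithms, these are equivalent to $g(\delta) := \delta - (1+\delta)\ln(1+\delta) + \tfrac{\delta^2}{3} \leq 0$ and $h(\delta) := -\delta - (1-\delta)\ln(1-\delta) + \tfrac{\delta^2}{2} \leq 0$ on $[0,1]$. Each is handled by noting $g(0)=h(0)=0$ and differentiating: $g'(\delta) = \tfrac{2\delta}{3} - \ln(1+\delta)$, which is $\leq 0$ on $[0,1]$ because $\ln(1+\delta) - \tfrac{2\delta}{3}$ vanishes at $\delta=0$, is unimodal, and is positive at $\delta=1$ (as $\ln 2 > \tfrac23$); and $h'(\delta) = \delta + \ln(1-\delta) \leq 0$ since $\ln(1-\delta) \leq -\delta$. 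Hence $g$ and $h$ are non-increasing from $0$, so both are $\leq 0$ on $[0,1]$. Plugging back in gives $\pr(X \geq (1+\delta)\mu) \leq e^{-\mu\delta^2/3}$ and $\pr(X \leq (1-\delta)\mu) \leq e^{-\mu\delta^2/3}$, and the union bound yields $\pr(|X-\mu| \geq \delta\mu) \leq 2\exp(-\mu\delta^2/3)$, as required. The only mildly delicate point is the calculus verification of $g(\delta)\le 0$ on $[0,1]$ (the upper-tail constant $\tfrac13$ is tight there); everything else is routine.
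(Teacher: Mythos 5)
Your proof is correct and is the standard exponential-moment (Chernoff) argument: bound each tail via Markov applied to $e^{tX}$, use independence and the convexity estimate $e^{tX_i}\le 1+X_i(e^t-1)$ for $X_i\in[0,1]$ to get $\E[e^{tX}]\le \exp(\mu(e^t-1))$, optimize $t$, and then verify the elementary inequalities that reduce $\bigl(e^{\delta}/(1+\delta)^{1+\delta}\bigr)^{\mu}$ and $\bigl(e^{-\delta}/(1-\delta)^{1-\delta}\bigr)^{\mu}$ to $e^{-\mu\delta^2/3}$ for $\delta\in[0,1]$. Your calculus checks go through: $g'(\delta)=\tfrac{2\delta}{3}-\ln(1+\delta)\le 0$ on $[0,1]$ because $\ln(1+\delta)-\tfrac{2\delta}{3}$ starts at $0$, is first increasing and then decreasing (its derivative changes sign once at $\delta=1/2$), and is still positive at $\delta=1$ since $\ln 2>2/3$; and $h'(\delta)=\delta+\ln(1-\delta)\le 0$ follows from $\ln(1-\delta)\le-\delta$.

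Note, however, that the paper does not actually prove this lemma: it is stated in an appendix of ``useful probability results'' with a citation to \cite{DubhashiP09} and no argument. So there is no in-paper proof to compare against; your contribution is a self-contained derivation of a standard fact that the authors treat as a black box. This is the textbook approach, so there is nothing methodologically different to remark on, only that you supply a proof where the paper supplies a reference.
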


\begin{lem}[{\bf Additive Chernoff bound}~\cite{DubhashiP09}]
\label{lem:cher_bound2}
Let $X_1, \ldots, X_n$ be independent random variables such that $X_i \in [0,1]$. For $X=\sum\limits_{i=1}^n X_i$ and $\mu_l \leq \E[X] \leq \mu_h$, the following hold for any $\delta >0$.
\begin{itemize}
\item[(i)] $\pr \left( X \geq \mu_h + \delta \right) \leq \exp{\left(-{2\delta^2}/{n}\right)}$.
\item[(ii)] $\pr \left( X \leq \mu_l - \delta \right) \leq \exp{\left(-{2\delta^2} /{ n}\right)}$.
\end{itemize}

\end{lem}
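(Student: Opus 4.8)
The plan is to prove Lemma~\ref{lem:cher_bound2} by the standard exponential-moment (Chernoff--Cram\'er) method, the only substantive input being Hoeffding's lemma bounding the moment generating function of a $[0,1]$-valued random variable. I would prove the upper-tail bound (i) in full, and then deduce the lower-tail bound (ii) from (i) by passing to the complementary variables $1-X_i$.

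For (i), first I would fix $\delta>0$ and an auxiliary parameter $t>0$, and apply Markov's inequality to the nonnegative random variable $e^{tX}$:
$$\pr\left(X\geq \mu_h+\delta\right)=\pr\left(e^{tX}\geq e^{t(\mu_h+\delta)}\right)\leq e^{-t(\mu_h+\delta)}\,\E\!\left[e^{tX}\right]=e^{-t(\mu_h+\delta)}\prod_{i=1}^n\E\!\left[e^{tX_i}\right],$$
where the last equality uses independence of the $X_i$. Then I would invoke Hoeffding's lemma: for any random variable $Y$ with values in $[0,1]$ and mean $\nu$, and any $t\in\R$, one has $\E[e^{tY}]\leq\exp\left(t\nu+t^2/8\right)$. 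Applying this to each $X_i$ and using $\sum_i\E[X_i]=\E[X]\leq\mu_h$ yields $\prod_{i=1}^n\E[e^{tX_i}]\leq\exp\left(t\mu_h+nt^2/8\right)$, so that $\pr(X\geq\mu_h+\delta)\leq\exp\left(-t\delta+nt^2/8\right)$. Optimizing the right-hand side over $t>0$ by the choice $t=4\delta/n$ makes the exponent equal to $-2\delta^2/n$, which is exactly bound (i).

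I expect the main obstacle to be the proof of Hoeffding's lemma itself; the surrounding steps are bookkeeping. I would handle it as follows: by convexity of $s\mapsto e^{ts}$ on $[0,1]$ we get $\E[e^{tY}]\leq \nu e^{t}+(1-\nu)$; setting $\varphi(t)=\log\left(\nu e^{t}+1-\nu\right)$, one checks $\varphi(0)=0$, $\varphi'(0)=\nu$, and $\varphi''(t)=q(1-q)\leq 1/4$ where $q=\nu e^{t}/(\nu e^{t}+1-\nu)\in[0,1]$; a second-order Taylor expansion with remainder then gives $\varphi(t)\leq t\nu+t^2/8$, i.e.\ $\E[e^{tY}]\leq\exp\left(t\nu+t^2/8\right)$.

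Finally, for (ii) I would apply (i) to the variables $X_i'=1-X_i\in[0,1]$, which are independent with $X'=\sum_iX_i'=n-X$ and hence satisfy $n-\mu_h\leq\E[X']\leq n-\mu_l$. Using $n-\mu_l$ as the upper mean bound for $X'$, part (i) gives $\pr\left(X'\geq(n-\mu_l)+\delta\right)\leq\exp\left(-2\delta^2/n\right)$, and since the event $\{X'\geq n-\mu_l+\delta\}$ is exactly $\{X\leq\mu_l-\delta\}$, this is precisely bound (ii), completing the proof.
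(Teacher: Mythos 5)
Your proof is correct and complete. The paper itself offers no proof of this lemma; it is stated in the appendix with a citation to \cite{DubhashiP09} as a standard tool. Your argument is the standard Chernoff--Cram\'er route (Markov on $e^{tX}$, Hoeffding's lemma for each $[0,1]$-valued term, optimization at $t=4\delta/n$, and complementation via $X_i'=1-X_i$ for the lower tail), so it reproduces the textbook proof faithfully, including the correct handling of the slightly non-standard hypothesis $\mu_l \leq \E[X] \leq \mu_h$ in place of exact knowledge of the mean.
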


\begin{lem}[{\bf Hoeffding's Inequality}~\cite{DubhashiP09}] \label{lem:hoeffdingineq}

Let $X_1,\ldots,X_n$ be independent random variables such that $a_i \leq X_i \leq b_i$ and $X=\sum\limits_{i=1}^n X_i$. Then, for all $\delta >0$, 
$$ \pr\left(\size{X-\E[X]} \geq \delta\right) \leq  2\exp\left({-2\delta^2}/ {\sum\limits_{i=1}^{n}(b_i-a_i)^2}\right).$$

\end{lem}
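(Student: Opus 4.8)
The plan is to follow the classical exponential-moment (Chernoff-type) argument, the only genuine work being the proof of the one-variable ``Hoeffding lemma'' bound on the moment generating function. First I would reduce to centered variables: set $Y_i = X_i - \E[X_i]$, so that $Y_i \in [a_i - \E[X_i],\, b_i - \E[X_i]]$, an interval of the same length $b_i - a_i$, and $\E[Y_i] = 0$. Writing $Y = \sum_i Y_i = X - \E[X]$, it suffices to prove $\pr(Y \geq \delta) \leq \exp\!\left(-2\delta^2 / \sum_i (b_i - a_i)^2\right)$ and, by the identical argument applied to $-Y_i$, the same bound for $\pr(-Y \geq \delta)$; a union bound then supplies the factor $2$.

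Next, for any $t > 0$, Markov's inequality applied to $e^{tY}$ gives
$$\pr(Y \geq \delta) = \pr\!\left(e^{tY} \geq e^{t\delta}\right) \leq e^{-t\delta}\, \E\!\left[e^{tY}\right] = e^{-t\delta} \prod_{i=1}^n \E\!\left[e^{tY_i}\right],$$
using independence of the $Y_i$ in the last step. The core lemma to establish is that for a centered random variable $Z$ with $\E[Z] = 0$ and $c \leq Z \leq d$ one has $\E[e^{tZ}] \leq \exp\!\left(t^2 (d-c)^2 / 8\right)$. I would prove this by convexity: since $z \mapsto e^{tz}$ is convex, for $z \in [c,d]$ we have $e^{tz} \leq \frac{d - z}{d - c} e^{tc} + \frac{z - c}{d - c} e^{td}$; taking expectations and using $\E[Z] = 0$ yields $\E[e^{tZ}] \leq \frac{d}{d-c} e^{tc} - \frac{c}{d-c} e^{td} =: e^{\varphi(u)}$ with $u = t(d-c)$. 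A direct computation shows $\varphi(0) = \varphi'(0) = 0$ and $\varphi''(u) = \rho(1-\rho) \leq 1/4$ (where $\rho = \rho(u)$ is a quantity in $[0,1]$), so Taylor's theorem gives $\varphi(u) \leq u^2/8$, i.e. the claimed bound.

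Plugging the MGF bound into the product gives $\pr(Y \geq \delta) \leq \exp\!\left(-t\delta + \tfrac{t^2}{8}\sum_{i=1}^n (b_i - a_i)^2\right)$ for every $t > 0$. I would then optimize the exponent over $t$: the choice $t = 4\delta / \sum_i (b_i - a_i)^2$ makes the exponent equal to $-2\delta^2 / \sum_i (b_i - a_i)^2$, which is exactly the desired one-sided estimate. Repeating with $-Y$ and combining via the union bound completes the proof. The main obstacle — and the only step that is not bookkeeping — is the second-derivative estimate $\varphi'' \leq 1/4$ in the Hoeffding lemma; everything else (centering, Markov, factoring over independence, optimizing $t$) is routine. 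Since this is a standard inequality, in the paper it would of course simply be cited to \cite{DubhashiP09}, but the above is the self-contained argument behind it.
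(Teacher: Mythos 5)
Your proof is correct and is exactly the standard Chernoff-type argument (centering, Markov on the moment generating function, Hoeffding's single-variable lemma via convexity and the $\varphi''\le 1/4$ bound, then optimizing $t$). The paper does not prove this lemma but simply cites it to \cite{DubhashiP09}, where this same argument is given, so there is nothing to compare beyond noting your derivation is the standard one and the optimization $t = 4\delta/\sum_i(b_i-a_i)^2$ indeed yields the stated exponent.
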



\begin{lem}[{\bf Hoeffding's Inequality for sampling without replacement}~\cite{hoeffding1994probability}] \label{lem:hoeffdingineq_without_replacement}

Let $n$ and $m$ be two integers such that $1 \leq n \leq m$, and  $x_1, \ldots, x_m$ be real numbers,  with $a \leq x_i \leq b$ for every $i \in [m]$. Suppose that $I$ is a set that is drawn uniformly from all subsets of $[m]$ of size $n$, and let $X = \sum\limits_{i \in I} x_i$.
Then, for all $\delta >0$, 
$$ \pr\left(\size{X-\E[X]} \geq \delta\right) \leq  2\exp\left({-2\delta^2}/ {n \cdot (b-a)^2}\right).$$
\end{lem}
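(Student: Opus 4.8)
\textbf{Proof proposal for Lemma~\ref{lem:hoeffdingineq_without_replacement}.}

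The plan is to reduce the without-replacement setting to the (easier) with-replacement setting, and then apply the standard moment-generating-function argument. First I would set $\mu = \mathbb{E}[X]$ and note that by linearity $\mu = \frac{n}{m}\sum_{i=1}^m x_i$. It suffices to prove the one-sided bound $\pr(X - \mu \geq \delta) \leq \exp(-2\delta^2/(n(b-a)^2))$, since the other tail follows by applying the same bound to the numbers $-x_1,\dots,-x_m$, and the two-sided statement then follows from the union bound.

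The core step is Hoeffding's reduction: for any convex function $g:\mathbb{R}\to\mathbb{R}$, if $X = \sum_{i\in I} x_i$ with $I$ a uniformly random $n$-subset of $[m]$, and $Y = \sum_{j=1}^n Z_j$ where $Z_1,\dots,Z_n$ are i.i.d.\ uniform on the multiset $\{x_1,\dots,x_m\}$, then $\mathbb{E}[g(X)] \leq \mathbb{E}[g(Y)]$. I would prove this by writing $X$ as an average of conditional expectations: sampling an ordered without-replacement tuple and averaging $g$ over partitions, one shows $g(X)$ is (in expectation) dominated by the i.i.d.\ version via Jensen's inequality applied to the convex $g$. Applying this with $g(t) = e^{s t}$ for $s > 0$ gives $\mathbb{E}[e^{sX}] \leq \mathbb{E}[e^{sY}] = \big(\mathbb{E}[e^{sZ_1}]\big)^n$.

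Next I would invoke Hoeffding's lemma for a single bounded variable: since $a \leq Z_1 \leq b$, we have $\mathbb{E}[e^{s(Z_1 - \mathbb{E}Z_1)}] \leq \exp(s^2(b-a)^2/8)$. Since $\mathbb{E}[Z_1] = \frac1m\sum_{i=1}^m x_i = \mu/n$, this yields $\mathbb{E}[e^{s(X-\mu)}] \leq \exp(n s^2 (b-a)^2/8)$. By Markov's inequality, $\pr(X - \mu \geq \delta) \leq \exp(n s^2(b-a)^2/8 - s\delta)$, and optimizing over $s$ (taking $s = 4\delta/(n(b-a)^2)$) gives the claimed bound $\exp(-2\delta^2/(n(b-a)^2))$. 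Combining with the symmetric tail and a factor of $2$ completes the proof.

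The main obstacle is establishing the reduction inequality $\mathbb{E}[g(X)] \leq \mathbb{E}[g(Y)]$ rigorously; everything after that is the textbook Chernoff calculation. One clean way to handle it is to represent $X$ itself as an average over i.i.d.-type resamplings: condition on the (unordered) random set $I$, then observe that a with-replacement sample of size $n$ from $\{x_1,\dots,x_m\}$, conditioned on hitting exactly the multiset indexed by some set, can be coupled so that the without-replacement sum is a convex average of terms whose $g$-values appear in the with-replacement expectation — Jensen then finishes it. Since this is exactly the content of Hoeffding's original Theorem~4 in \cite{hoeffding1994probability}, in the writeup I would either reproduce this coupling argument or simply cite it, as the lemma statement already does.
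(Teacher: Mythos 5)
The paper does not prove this lemma at all: it is stated in Appendix~\ref{sec:prelim_prob} as a known result with a pointer to Hoeffding's 1963 paper, and no argument is given. Your writeup reconstructs exactly the proof behind that citation — Hoeffding's own route via the convex-ordering (domination) inequality between without-replacement and with-replacement sums, followed by the standard moment-generating-function / Chernoff calculation — and the computations (the value of $\E[Z_1]$, the bound $\E[e^{s(X-\mu)}] \leq \exp(n s^2 (b-a)^2/8)$, and the optimal choice $s = 4\delta/(n(b-a)^2)$ yielding $\exp(-2\delta^2/(n(b-a)^2))$) are all correct. The one place you are loose is the sketch of why $\E[g(X)]\leq\E[g(Y)]$ for convex $g$: the phrasing ``condition on the (unordered) random set $I$ \dots can be coupled so that the without-replacement sum is a convex average'' is muddled — the actual argument conditions on the number $r$ of \emph{distinct} indices hit by the with-replacement sample, uses a symmetrization over the multiplicity vector plus Jensen to compare against $\frac{n}{r}\sum_{k=1}^r c_{j_k}$ for a uniform $r$-subset, and then uses a second Jensen step (taking a random $r$-subset of an $n$-subset) to compare that against the size-$n$ without-replacement sum. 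You correctly flag this as the technical heart of the matter and offer to reproduce or cite Hoeffding's Theorem~4, which is the right call; but as written, that paragraph would not survive close scrutiny as a standalone proof of the domination inequality. Everything downstream of that inequality is airtight.
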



\color{black}

\noindent
Now let us consider the following observation which states that if the normalized Hamming distance between two vectors $\mathbf{X}$ and $\mathbf{Y}$ are small, the same also holds with high probability when $\mathbf{X}$ and $\mathbf{Y}$ are projected on a set of random indices $K$. A similar result also holds when the distance is large between the two vectors $\mathbf{X}$ and $\mathbf{Y}$.
\begin{obs}[{\bf Approximating-string-distances}]\label{cl:distchernoff}
For $\mathbf{U}, \mathbf{V} \in \{0,1\}^n$ and assume that $K \subseteq [n]$ is a set of indices chosen uniformly at random without replacement. Then 
the following holds with probability at least $1-e^{-~\Oh(\delta^2 \size{K})}$:
$$\size{{d_H} (\mathbf{U},\mathbf{V})- {d_H} (\mathbf{U} \mid_{K},\mathbf{V}\mid_{K})} \leq \delta. $$
    
\end{obs}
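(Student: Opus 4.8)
The plan is to recognize $d_H(\mathbf{U},\mathbf{V})$ as the average of Boolean indicators and $d_H(\mathbf{U}\mid_K,\mathbf{V}\mid_K)$ as the empirical average of the same indicators over a uniformly random subsample (without replacement), and then invoke Hoeffding's inequality for sampling without replacement (Lemma~\ref{lem:hoeffdingineq_without_replacement}) directly. Concretely, first I would define $x_i = \mathbbm{1}[\mathbf{U}_i \neq \mathbf{V}_i]$ for each $i \in [n]$, so that $d_H(\mathbf{U},\mathbf{V}) = \frac{1}{n}\sum_{i=1}^n x_i$ and, for the random index set $K$, $d_H(\mathbf{U}\mid_K,\mathbf{V}\mid_K) = \frac{1}{\size{K}}\sum_{i \in K} x_i$.

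Next I would set $X = \sum_{i\in K} x_i$ and observe that, since $K$ is drawn uniformly from all subsets of $[n]$ of size $\size{K}$, linearity of expectation gives $\E[X] = \frac{\size{K}}{n}\sum_{i=1}^n x_i = \size{K}\cdot d_H(\mathbf{U},\mathbf{V})$. Then I would apply Lemma~\ref{lem:hoeffdingineq_without_replacement} with $m = n$, $n \gets \size{K}$, the reals $x_1,\dots,x_n \in [0,1]$ (so $a = 0$, $b = 1$), and the deviation parameter $\delta' = \delta\,\size{K}$, obtaining
$$\pr\!\left(\size{X - \E[X]} \geq \delta\,\size{K}\right) \leq 2\exp\!\left(-\frac{2\delta^2\size{K}^2}{\size{K}}\right) = 2\exp\!\left(-2\delta^2\size{K}\right).$$
Dividing the event $\size{X - \E[X]} < \delta\,\size{K}$ through by $\size{K}$ yields $\size{d_H(\mathbf{U}\mid_K,\mathbf{V}\mid_K) - d_H(\mathbf{U},\mathbf{V})} < \delta$, and the failure probability $2\exp(-2\delta^2\size{K})$ is absorbed into $e^{-\Oh(\delta^2\size{K})}$, which is exactly the claimed bound.

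There is essentially no hard part here: the only things to be careful about are (i) matching the roles of the parameters in Lemma~\ref{lem:hoeffdingineq_without_replacement} correctly (in particular that the ``sample size'' in that lemma is $\size{K}$, not $n$), and (ii) noting that the constant factor $2$ in front of the exponential and the constant $2$ in the exponent are both harmless under the $\Oh(\cdot)$ in the exponent. I expect the whole argument to be three or four lines once written out in full.
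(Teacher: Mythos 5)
Your proof is correct and takes essentially the same route as the paper, which disposes of this observation in one line by citing Lemma~\ref{lem:hoeffdingineq_without_replacement} (Hoeffding for sampling without replacement). You have simply spelled out the instantiation — setting $x_i = \mathbbm{1}[\mathbf{U}_i \neq \mathbf{V}_i]$, identifying $\E[X] = \size{K}\,d_H(\mathbf{U},\mathbf{V})$, and choosing the deviation $\delta\size{K}$ — all of which is accurate, including the careful note that the ``sample size'' parameter of the lemma is $\size{K}$, not $n$.
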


\begin{proof}
Follows from the fact that sampling without replacement is as good as sampling with replacement (Lemma~\ref{lem:hoeffdingineq_without_replacement}).
\end{proof}

\begin{lem}[{\bf Chernoff bound for bounded dependency}~\cite{Janson04}]
\label{lem:depend:high_exact_statement}
Let $X_1,\ldots,X_n$ be random variables such that $a_i \leq X_i \leq b_i$ and $X=\sum\limits_{i=1}^n X_i$. Let $\cD$ be the (directed) \emph{dependency} graph, where $V(\cD)=\{X_1,\ldots,X_n\}$ and $X_i$ is completely independent of all variables $X_j$  for which $(X_i,X_j)$ is not a directed edge. Then for any $\delta >0$, 
$$ \pr(\size{X-\E[X]} \geq \delta) \leq  2e^{-2\delta^2 / \chi^*(\cD)\sum\limits_{i=1}^{n}(b_i-a_i)^2}.$$
where $\chi^*(\cD)$ denotes the \emph{fractional chromatic number} of $\cD$.
\end{lem}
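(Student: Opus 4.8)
The plan is to reproduce Janson's argument~\cite{Janson04}, whose two working parts are a fractional cover of the dependency graph by independent sets and an application of Hölder's inequality to the moment generating function of $X$. First I would recall the two preliminary reductions. (1) The fractional chromatic number $\chi^\ast=\chi^\ast(\cD)$ admits an \emph{exact} fractional cover: there are independent sets $I_1,\ldots,I_m$ of $\cD$ and weights $w_1,\ldots,w_m>0$ with $\sum_{j=1}^m w_j=\chi^\ast$ and $\sum_{j:\,X_i\in I_j} w_j=1$ for every $i\in[n]$; any fractional cover can be refined to an exact one by trimming over-covered vertices, so nothing is lost. (2) Replacing each $X_i$ by the centred variable $\widetilde X_i=X_i-\E[X_i]$ changes neither the dependency graph nor the widths $b_i-a_i$, so it suffices to bound $\pr(\widetilde X\ge\delta)$ (and, symmetrically, $\pr(\widetilde X\le-\delta)$) for $\widetilde X=\sum_i\widetilde X_i$, and then collect the factor of $2$ by a union bound.

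The core computation would bound $\E[e^{t\widetilde X}]$ for $t>0$. Using exactness of the cover, write $\widetilde X=\sum_i\widetilde X_i=\sum_j w_j Y_j$ where $Y_j=\sum_{i\in I_j}\widetilde X_i$, and apply Hölder's inequality with exponents $p_j=\chi^\ast/w_j$, which are admissible since $\sum_j p_j^{-1}=\sum_j w_j/\chi^\ast=1$:
$$\E\!\left[e^{t\widetilde X}\right]=\E\!\left[\prod_{j} e^{t w_j Y_j}\right]\le\prod_j\Big(\E\!\left[e^{t\chi^\ast Y_j}\right]\Big)^{w_j/\chi^\ast}.$$
Because $I_j$ is an independent set of $\cD$, the family $\{\widetilde X_i:\,X_i\in I_j\}$ is mutually independent, so $\E[e^{t\chi^\ast Y_j}]=\prod_{i\in I_j}\E[e^{t\chi^\ast\widetilde X_i}]$, and the standard Hoeffding bound for the moment generating function of a centred variable supported on an interval of width $b_i-a_i$ gives $\E[e^{t\chi^\ast\widetilde X_i}]\le\exp\!\big(t^2(\chi^\ast)^2(b_i-a_i)^2/8\big)$. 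Substituting and using once more the bookkeeping identity $\sum_j w_j\sum_{i\in I_j}(b_i-a_i)^2=\sum_i(b_i-a_i)^2(\sum_{j:X_i\in I_j}w_j)=\sum_i(b_i-a_i)^2=:S$, one gets $\E[e^{t\widetilde X}]\le\exp\!\big(t^2\chi^\ast S/8\big)$. Markov's inequality then yields $\pr(\widetilde X\ge\delta)\le\exp(t^2\chi^\ast S/8-t\delta)$, and optimizing at $t=4\delta/(\chi^\ast S)$ gives $\exp(-2\delta^2/(\chi^\ast S))$; the same bound holds for $\pr(\widetilde X\le-\delta)$, which completes the proof.

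The step I expect to be the crux is the Hölder application together with the identity $\sum_j w_j\sum_{i\in I_j}(b_i-a_i)^2=S$: it is exactly this pairing that converts the naive $(\chi^\ast)^2$ one would obtain from a plain Jensen/convexity splitting into the sharp first power of $\chi^\ast$ appearing in the statement. Everything else — the refinement to an exact cover, centring, the single-variable Hoeffding estimate, and the final one-parameter optimization — is routine.
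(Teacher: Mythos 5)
Your proof is correct and is precisely Janson's argument from the cited reference~\cite{Janson04} (exact fractional cover of $\cD$ by independent sets, H\"older with exponents $\chi^\ast/w_j$, per-variable Hoeffding bound, and the reweighting identity $\sum_j w_j \sum_{i\in I_j}(b_i-a_i)^2=\sum_i(b_i-a_i)^2$), which the paper states without proof. The one point worth flagging as you write it up: the passage from a fractional cover to an \emph{exact} one (removing over-covered vertices from some $I_j$ and reallocating weight) should be spelled out, and the mutual (not merely pairwise) independence of $\{X_i : i\in I_j\}$ relies on the ``completely independent of all $X_j$ with $(X_i,X_j)\notin E$'' clause in the definition of the dependency graph — both are standard but easy to glide over.
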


\begin{coro}[{\bf Corollary of Lemma~\ref{lem:depend:high_exact_statement}}]
\label{lem:depend:high_prob}
Let $X_1,\ldots,X_n$ be indicator random variables such that the dependency graph is a disjoint union of $n/k$ many $k$ size cliques. For $X=\sum\limits_{i=1}^n X_i$ and $\mu_l \leq \E[X] \leq \mu_h$, the followings hold for any $\delta >0$:
\begin{itemize}
\item[(i)] $\pr\left(X \geq \mu_h + \delta\right) \leq \exp{\left(\frac{-2\delta^2}{kn}\right)}$,
\item[(ii)] $\pr\left(X \leq \mu_\ell- \delta\right) \leq \exp\left(\frac{-2\delta^2}{kn}\right)$.
\end{itemize} 
\end{coro}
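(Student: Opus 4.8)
\textbf{Proof proposal for Corollary~\ref{lem:depend:high_prob}.}
The plan is to derive the two tail bounds directly from Lemma~\ref{lem:depend:high_exact_statement} by specializing all the quantities that appear in its statement. First I would record that since each $X_i$ is an indicator random variable, we may take $a_i=0$ and $b_i=1$, so that $\sum_{i=1}^n (b_i-a_i)^2 = n$. The only remaining parameter to compute is the fractional chromatic number $\chi^*(\cD)$ of the dependency graph. By hypothesis $\cD$ is a disjoint union of $n/k$ cliques, each of size $k$. The fractional chromatic number of a complete graph $K_k$ equals $k$ (its fractional chromatic number coincides with its ordinary chromatic number), and the fractional chromatic number of a disjoint union of graphs is the maximum of the fractional chromatic numbers of the components; hence $\chi^*(\cD)=k$.

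Plugging $\chi^*(\cD)=k$ and $\sum_{i=1}^n(b_i-a_i)^2 = n$ into Lemma~\ref{lem:depend:high_exact_statement} yields the concentration estimate
$$
\pr\bigl(\,|X-\E[X]|\ge\delta\,\bigr)\ \le\ 2\exp\!\left(\frac{-2\delta^2}{kn}\right),
$$
and, working with the one-sided form of the same inequality (the factor $2$ in Lemma~\ref{lem:depend:high_exact_statement} arises only from taking a union bound over the two tails, so each individual tail is bounded by the exponential without the factor $2$), we get
$$
\pr\bigl(X\ge \E[X]+\delta\bigr)\le \exp\!\left(\frac{-2\delta^2}{kn}\right),
\qquad
\pr\bigl(X\le \E[X]-\delta\bigr)\le \exp\!\left(\frac{-2\delta^2}{kn}\right).
$$
To finish, I would use the hypotheses $\mu_l\le\E[X]\le\mu_h$: since $\mu_h+\delta\ge \E[X]+\delta$, the event $\{X\ge\mu_h+\delta\}$ is contained in $\{X\ge\E[X]+\delta\}$, giving part (i); and since $\mu_l-\delta\le\E[X]-\delta$, the event $\{X\le\mu_l-\delta\}$ is contained in $\{X\le\E[X]-\delta\}$, giving part (ii).

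The computation is essentially routine; the only point that needs a sentence of justification is the evaluation $\chi^*(\cD)=k$ (fractional chromatic number of a disjoint union of $k$-cliques), and the only mild subtlety is invoking the one-sided rather than two-sided version of the bounded-dependency Chernoff bound so that the stated estimates come out without a leading factor of $2$. Neither of these is a genuine obstacle, so I expect the proof to be short.
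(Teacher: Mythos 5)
Your proof is correct and follows essentially the same route as the paper's: both reduce to bounding the fractional chromatic number of the dependency graph by $k$ and then substituting into Lemma~\ref{lem:depend:high_exact_statement}. The only (minor) differences are that you establish $\chi^*(\cD)=k$ exactly whereas the paper only observes the one-sided bound $\chi^*(\cD)\le\chi(\cD)=k$ (which is all that is needed), and you explicitly justify why the one-sided tails drop the factor of $2$ — a point the paper leaves implicit.
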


\begin{proof}
Follows from the fact that the dependency graph has chromatic number $k$, and the fractional chromatic number of a graph is at most the chromatic number of any graph.
\end{proof}

\remove{\subsection{Proof of Proposition~\ref{pre:adaptiveexp}}\label{pre:adaptiveexp_app}

\begin{proof}
Let us consider any property $\cP$. Let $\cA$ be the best adaptive algorithm that $\eps$-tests $\cP$, by performing $Q$ queries with $\size{Q} = q$. Now consider the decision tree formed by $\cA$ with respect to $Q$. Any non-adaptive algorithm can essentially learn this decision tree, by performing at most $2^q$ queries, thus deciding $\cP$.

For the lower bound, let us first consider a string $x$ over alphabet $\{0,1\}$ and another string $y$ over alphabet $\{2,3\}$. Now we encode these alphabets over $\{0,1\}$ by encoding each symbol using two bits. Now let us consider the following property: 

\begin{center}
$\cP_{Palindrome}:$ \ Distribution $D$ which is supported on a single vector ${\bf v} \in \{0,1,2,3\}^n$, such that ${\bf v}$ is concatenation of two palindromes ${\bf x}{\bf x}^R$ and ${\bf y}{\bf y}^R$, that is, ${\bf v} = {\bf x} {\bf x}^R {\bf y}{\bf y}^R$~\footnote{For any string $u$, $u^R$ denotes the string $u$ in reverse.}
\end{center}


Now given a vector ${\bf v}$, $\Oh(\log n)$ adaptive queries are enough to decide the above property. The adaptive algorithm essentially simulates binary search, and is described below:

Let us assume $n$ is multiple of $2$. The adaptive tester will first query $\frac{n}{2} -1$ and $\frac{n}{2}$ indices of $v$ and check whether the original entry corresponding to the above mentioned indices is either $2$ or $3$ or not. Based upon the answer, it will query successive entries of $v$, and decide accordingly.

However, in order to decide whether ${\bf v}$ satisfies the property, or $\eps$-far from it, requires $\Omega(\sqrt{n})$ queries for non-adaptive testers, which follows from~\cite{AlonKNS99}, and is presented below.

Let us consider two distributions $D_{yes}$ and $D_{no}$ over $n$-length strings, as follows:

Let us first pick an integer $k$ uniformly at random such that $1 \leq k \leq n$.

\begin{itemize}
    \item $D_{yes}$: Let us take two palindrome strings ${\bf x} {\bf x^R} \in \{0,1\}^k$ and ${\bf y} {\bf y^R} \in \{2,3\}^{n-k}$ uniformly at random and then concatenate them to obtain ${\bf v}_{yes} = {\bf x} {\bf x^R} {\bf y} {\bf y^R}$.

\item $D_{no}$: Let us take two uniformly random strings ${\bf s} \in \{0,1\}^k$ and ${\bf s'} \in \{2,3\}^{n-k}$ and then concatenate them to obtain ${\bf v}_{no}= {\bf s}{\bf s'}$.

\end{itemize}


Now we argue that any non-adaptive algorithm that queries $o(\sqrt{n})$ queries, can not distinguish $D_{yes}$ from $D_{no}$.

First note that the string ${\bf x}{\bf x^R} {\bf y} {\bf y^R}$ satisfies the property $\cP_{Palindrome}$. It can be  proven that the string ${\bf s} {\bf s'}$ is $\eps$-far from $\cP_{\sc Palindrome}$.

\remove{
\begin{obs}
$D_{no}$ constructed as above, is $\eps$-far from $\cP_{\sc Palindrome}$.
\end{obs}

\begin{proof}

Let us first consider the random string ${\bf s} \in \{0,1\}^k$. The probability that ${\bf s}$ is palindrome is = $\frac{1}{2^{\frac{k}{2}}}$. Similarly the probability that ${\bf s'}$ is palindrome is $= \frac{1}{2^{\frac{n-k}{2}}}$. So the probability that ${\bf ss'}$ is concatenation of two palindromes is $= \frac{1}{2^{\frac{n}{2}}}$.

\end{proof}
}

Let us now consider a deterministic non-adaptive algorithm $\cA$ that decides the property $\cP_{Palindrome}$ by performing $Q$ queries with $\size{Q}=q$.

We say that a pair of queried indices $(i,j) \in Q \times Q$ is \emph{correlated} if fixing the value of the string at index $i$ fixes the value of the string at index $j$, when we consider $D_{yes}$. Observe that, for each $i,j \in Q$, there is only one choice for $k$ such that $i$ and $j$ are correlated. As $k$ is picked randomly, the probability that $i$ and $j$ are correlated is $\frac{1}{n}$. 

\remove{
belong to the same set of indices corresponding to a palindrome string and they represent the same value??.

Now for an arbitrary pair $(i,j)$, the probability that $(i,j)$ are correlated is $\frac{1}{n}$, since for a particular pair, there is only one possibility of the index $k$.}

If $\size{Q}=o(\sqrt{n})$, the expected number of correlated pairs in $Q$ is $o(1)$.  Using Markov's inequality, we can say that with probability $1 -o(1)$, $Q$  does not have any correlated pair. Now recall the construction of $D_{yes}$ and $D_{no}$. Unless a non-adaptive algorithm has obtained a correlated pair in $Q$, it can not distinguish between $D_{yes}$ and $D_{no}$. Hence, $\size{Q}=\Omega(\sqrt{n})$.





\end{proof}}

\end{document}